\theoremstyle{plain}
\newtheorem{theorem}{Theorem}
\newtheorem{proposition}{Proposition}
\newtheorem{lemma}{Lemma}
\theoremstyle{remark}
\newtheorem{assumption}{Assumption}
\newtheorem{remark}{Remark}
\newtheorem{example}{Example}
\renewcommand{\cal}[1]{\mathcal{#1}}
\renewcommand{\r}{\mathbb{R}}
\newcommand{\rf}{\mathfrak{r}}
\newcommand{\mmag}[1]{\left|#1\right|}
\newcommand{\norm}[1]{\left|\left|{#1}\right|\right|}
\newcommand{\Ebb}[1]{\mathbb{E}\left[#1\right]}
\newcommand{\Pbb}[1]{\mathbb{P}\left(#1\right)}
\DeclareMathOperator*{\arginf}{arginf}
\newcommand{\eno}{\eta^N_1}
\newcommand{\ent}{\eta^N_2}
\newcommand{\enot}{\eta^N_1\times \eta^N_2}
\newcommand{\eo}{\eta_1}
\newcommand{\et}{\eta_2}
\newcommand{\eot}{\eta_1\times \eta_2}
\begin{document}

\begin{frontmatter}
\title{The divide-and-conquer sequential Monte Carlo algorithm: theoretical properties and limit theorems}
\runtitle{DaC-SMC: theoretical properties and limit theorems}
\begin{aug}
\author[A]{\fnms{Juan} \snm{Kuntz}\ead[label=ea]{juan.kuntz-nussio@warwick.ac.uk}},
\author[B]{\fnms{Francesca R.} \snm{Crucinio}\ead[label=eb]{Francesca.crucinio@ensae.fr}}
\and
\author[A]{\fnms{Adam M.} \snm{Johansen}\ead[label=ec]{a.m.johansen@warwick.ac.uk}}
\address[A]{Department of Statistics, University of Warwick, Coventry, U.K.,\\%
   \printead{ea,ec}.}
   \address[B]{CREST, ENSAE, Institut Polytechnique de Paris, Palaiseau, France,\\%
   \printead{eb}.}
\end{aug}

\begin{abstract}
We provide a comprehensive characterisation of the theoretical properties of the divide-and-conquer sequential Monte Carlo (DaC-SMC) algorithm. We firmly establish it as a well-founded method by showing that it possesses the same basic properties as conventional sequential Monte Carlo (SMC) algorithms do. In particular, we derive pertinent laws of large numbers, $L^p$ inequalities, and central limit theorems; and we characterize the bias in the normalized estimates produced by the algorithm and argue the absence thereof in the unnormalized ones. We further consider its practical implementation and several interesting variants; obtain expressions for its globally and locally optimal intermediate targets, auxiliary measures, and proposal kernels; and show that, in comparable conditions, DaC-SMC proves more statistically efficient than its direct SMC analogue.  We close the paper with a discussion of our results, open questions, and future research directions.
\end{abstract}

\begin{keyword}[class=MSC]
\kwd[Primary ]{65C05}
\kwd[; secondary ]{60F05}
\kwd{60F15}
\kwd{62F15}
\kwd{68W15}
\end{keyword}

\begin{keyword}
\kwd{strong law of large numbers}
\kwd{central limit theorem}
\kwd{interacting particle systems}
\kwd{product-form estimators}
\kwd{distributed computing}
\kwd{Bayesian inference}
\end{keyword}

\end{frontmatter}
\tableofcontents


%
\section{Introduction}\label{sec:intro}
Distributed inference algorithms that scale efficiently to large models and data sets are of widespread interest at the present time. Divide-and-conquer methods, which recursively decompose an inference problem into smaller problems whose solutions can be combined in a principled manner (e.g.\ see \cite{mattson2004patterns}), provide a promising approach in this setting. In computational statistics, many methods have been developed which allow for distributed inference either by performing inference for a subset of the data at each processor and ultimately combining their results~\cite{Dai2019,Minsker2014,Neiswanger2014,Scott2016,Wang2013} or by carrying out part of the computation associated with each step of an algorithm at a separate processor and minimizing the necessary communication~\cite{Gurbuzbalaban2021,Parayil2020,Rendell2021,Vono2021}. Another strategy which is beginning to find applications ranging from general distributed Bayesian inference~\cite{chan2021} through inference in broad model classes such as  general state space hidden Markov models~\cite{Corneflos2021,ding2019,ding2018}  to inference in specific domains such as phylogenetics~\cite{jewell2015} is the \emph{divide-and-conquer sequential Monte Carlo} (\emph{DaC-SMC}) algorithm introduced in~\cite{lindsten2017divide}. The DaC-SMC algorithm recursively breaks down large inferential problems into smaller, more manageable ones and merges their solutions in a systematic manner. It is particularly amenable to distributed or parallelized implementations (e.g.~\cite{Corneflos2021} or~\cite[Section~5.3]{lindsten2017divide}).

The goal of this paper is to provide a comprehensive theoretical foundation for this algorithm. The key conceptual innovation that enables us to do so is re-interpreting the DaC-SMC algorithm as a careful combination of sequential Monte Carlo (SMC) and a little-known type of `product-form' Monte Carlo estimators (which, once identified, we studied separately in a preliminary paper~\cite{Kuntz2021}).

Product-form estimators encompass a  class of generalized U-statistics~\cite{Lee1990,korolyuk2013theory} that exploit conditional independences present in the measure targeted, or the proposal distribution employed, to achieve lower variances than standard Monte Carlo estimators~\cite{Kuntz2021}. 
The idea underpinning them is very simple: if $(X^1_1,\dots,X^N_1),\dots,(X^1_K,\dots,X^N_K)$ are $K$ independent sequences of samples respectively drawn  from probability distributions $\mu_1,\dots,\mu_K$, then every `permutation' of these samples has the same product law:
$$(X^{n_1}_1,\dots,X^{n_K}_K)\sim\mu:=\mu_1\times\cdots\times\mu_K \quad\forall (n_1,\dots,n_K) \in \{1,\ldots,N\}^K.$$
In other words, using only $N$ samples from each marginal $\mu_1,\dots,\mu_K$, we can construct $N^K$ samples from their product $\mu$. Product-form estimators approximate $\mu$, and averages thereof, using the empirical distribution of these $N^K$ samples (as opposed to the default choice involving the empirical distribution of the $N$ `unpermuted' samples $(X^n_1,\dots,X^n_K)_{n=1}^N$). This increase in sample number leads to a decrease in estimator variance, and we say that product-form estimators are more \emph{statistically efficient} than their conventional counterparts.  The empirical distribution $\mu_\times^N$ of the $N^K$ samples can be expressed as the product of the `marginal' empirical distributions $\mu_1^N:=N^{-1}\sum_{n=1}^N\delta_{X^n_1},\dots,\mu_K^N:=N^{-1}\sum_{n=1}^N\delta_{X^n_K}$   (hence, the term `product-form estimators'). For this reason, product-form estimators embody the fork-and-join model~\cite{conway1963multiprocessor}: they `fork' the problem of approximating $\mu$ into that of approximating its marginals, solve these smaller problems by computing $\mu_1^N,\dots,\mu_K^N$ separately, and then `join' their solutions by taking their product, so obtaining a solution ($\mu_\times^N$) for the original problem.

Product-form estimators find their greatest use not in isolation but embedded within more sophisticated Monte Carlo methodology to tackle the aspects of the problem exhibiting product structure or conditional independences (e.g.\ \cite{Tran2013,Aitchison2019,Schmon2021,Kuntz2021}). Here, we study such embeddings within SMC algorithms, a class of Monte Carlo methods that approximate sequences of distributions defined on state spaces of increasing dimension by iteratively mutating and selecting weighted populations of samples or \emph{particles} (see, for example, \cite{chopin2020} for a recent book-length treatment). These algorithms are very widely employed in the context of hidden Markov models~\cite{doucet2011,Kantas2015,fearnhead2018} and, following \cite{chopin2002,delmoral2006sequential} and related works, have been shown to be a useful alternative to Markov chain Monte Carlo for approximating general distributions.

The DaC-SMC algorithm itself estimates targets on medium-to-high dimensional spaces that are products of collections of lower dimensional spaces. 
In order to accomplish this, an auxiliary tree is defined. Each node of this tree has some subset of the variables within the original model assigned to it and each subtree of the tree  has associated with it an intermediate model over the  variables assigned to the nodes in the subtree.  
The algorithm  progressively evolves clouds of weighted particles up the tree (from leaves to root) so that the corresponding weighted empirical distributions target the intermediate distributions. At each node, the clouds corresponding to the node's children are merged using a product-form estimator. Hence, the algorithm exemplifies the divide-and-conquer paradigm: carrying computations in the partial order dictated by the indexing tree, DaC-SMC recursively constructs product-form estimators for each of the lower-dimensional intermediate targets and, ultimately, for the original target itself.

While some preliminary work was done in~\cite{lindsten2017divide}, the algorithm's theoretical properties remain underexplored. The main aim of this paper is to fill this gap and firmly establish DaC-SMC as a well-founded algorithm: we show that it possesses the same fundamental theoretical properties as conventional SMC  (Section~\ref{sec:contributions}). We also have several other, more minor, associated aims: 
(Section~\ref{sec:lcdacsmc}) exploring efficient implementations of the algorithm; 
(Section~\ref{sec:variants}) identifying several interesting variants thereof in addition to those already pointed out in~\cite[Section~4]{lindsten2017divide}; 
(Section~\ref{sec:optimalprop}) deriving the optimal and `locally optimal' intermediate targets, auxiliary measures, and  proposal kernels and giving   guidance on their choice in practice; and
(Section~\ref{sec:dacvssmc}) formally arguing that DaC-SMC estimators are  more statistically efficient than their direct SMC analogues whenever both algorithms are placed on an equal footing (this does not, however, necessarily mean that DaC-SMC is more computationally efficient than SMC as the former's cost is generally greater than the latter's; see Sections~\ref{sec:lcdacsmc}~and~\ref{sec:dacvssmc} for more on this).

\subsection{Relation to the literature and contributions}\label{sec:contributions}
The most relevant publication is~\cite{lindsten2017divide} where the DaC-SMC algorithm was introduced, various variants were considered, several numerical examples were given, and it was shown that some forms of DaC-SMC estimators for the normalized targets are consistent and those for the normalizing constants are unbiased (cf.\ Section~\ref{sec:probdef} for definitions the `unnormalized targets', `normalized targets', and `normalizing constants'). 
Both in the original paper~\cite{lindsten2017divide} and also in later work, e.g.~\cite{ding2019,Corneflos2021}, the algorithm was found to have good empirical properties.
Applications of the product-form estimators it builds on can be found peppered throughout the Monte Carlo literature~\cite{Tran2013,lindsten2017divide,Aitchison2019,Schmon2021,Kuntz2021}, almost always unnamed and specialized to particular contexts. Their theoretical properties, at least in any generality, were seemingly only studied recently in~\cite{Kuntz2021} which established their   consistency, unbiasedness, asymptotically normality, and statistical efficiency, examined their computational efficiency, and considered several variants and extensions. Here, we extend these theoretical properties to DaC-SMC in much the same way that the properties of basic importance sampling were extended to standard SMC over the course of a series of papers~(e.g.~\cite{DelMoral1996,DelMoral1998,DelMoral1999,Crisan2002,Chopin2004,Kunsch2005,Crisan1997}).

The contributions of this paper are several:
\begin{itemize}
\item\emph{DaC-SMC's theoretical characterization.} We obtain  a comprehensive description  of the algorithm's theoretical properties (Section~\ref{sec:theory}). In particular, we show that DaC-SMC estimators for both the unnormalized and normalized targets retain the key properties of their SMC analogues: they are consistent (Theorems~\ref{THRM:LLN},~\ref{THRM:WEAK}), asymptotically normal (Theorem~\ref{THRM:CLT}), and satisfy $L^p$ inequalities (Theorem~\ref{THRM:LP}). In the case of the unnormalized targets, we also show that the estimators are unbiased (Theorem~\ref{THRM:UNBIAS}), while in that of the normalized ones, we show that the estimators' bias decays linearly with the number of particles (Theorem~\ref{THRM:BIAS}).
\item \emph{Optimal intermediate targets, auxiliary measures, and proposal kernels.} Using the asymptotic variance expressions obtained in Theorem~\ref{THRM:CLT}, we derive  (Section~\ref{sec:optimalprop}) the intermediate targets, auxiliary measures, and proposal kernels (cf.\ Sections~\ref{sec:probdef}--\ref{sec:dacalg} for the definitions of these objects) that lead to  zero \emph{asymptotic} variance DaC-SMC estimators for the normalizing constants. A bit unexpectedly, and in contrast with the standard case, we find that these choices do not necessarily lead to these estimators possessing zero \emph{finite-sample} variance. 
We also generalize the well-known characterization of the `locally optimal' auxiliary measures and proposal kernels of standard SMC to the DaC-SMC setting (Theorem~\ref{THRM:LOCALLY}).
\item\emph{Statistical efficiency of DaC-SMC.} Also using the asymptotic variances expressions in Theorem~\ref{THRM:CLT}, we show that DaC-SMC estimators are, at least asymptotically, more statistically efficient than their SMC counterparts (Section~\ref{sec:dacvssmc}): the asymptotic variances of the former are bounded above by those of the latter (Theorem~\ref{THRM:SMCVSDAC}).
\item\emph{Mathematical contributions.}  Our proofs for DaC-SMC's theoretical characterization combine well-known arguments used to analyse SMC algorithms with novel techniques required to deal with the product-form aspects of the estimators.  
To be more specific about the former, for the laws of large numbers and $L^p$ inequalities we adapt results from~\cite{Crisan2002,miguez2013convergence,DelMoral2004}, for the asymptotic normality we borrow from~\cite{DelMoral2004,DelMoral2013}, for the unbiasedness of the unnormalized target estimators we emulate~\cite{andrieu2010particle}, and for the bias bound of the normalized target estimator we follow~\cite{olsson2004bootstrap}. As for the latter, the new techniques involve carefully exploiting conditional independences and lead to results (Theorem~\ref{THRM:CROSS} and Lemmas~\ref{lem:prodlp},~\ref{lem:lpcoa},~\ref{lem:prodbias},~and~\ref{lem:wdecomp}) that are, to the best of our knowledge, unprecedented in the SMC literature; see Section~\ref{sec:mop} for further details. The consequences of our characterization (Theorems~\ref{THRM:LOCALLY}~and~\ref{THRM:SMCVSDAC}) also necessitate novel proofs.
\end{itemize}

\subsection{Paper structure}
Section~\ref{sec:background} reviews the two building blocks underpinning the DaC-SMC algorithm:  product-form estimators (Section~\ref{sec:pfe}) and sequential Monte Carlo (Section~\ref{sec:smc}). 
Section~\ref{sec:dac} presents the DaC-SMC algorithm: Section~\ref{sec:probdef} describes the target measures the algorithm approximates, Section~\ref{sec:dacalg} the algorithm itself, Section~\ref{sec:lcdacsmc} efficient implementations thereof, and Section~\ref{sec:variants} several interesting variations (Appendix~\ref{app:example} contains an example illustrating concepts discussed in Sections~\ref{sec:lcdacsmc},~\ref{sec:variants}). Section~\ref{sec:theory} details the algorithm's theoretical properties: Section~\ref{sec:mop} describes the methods of proof we use to argue the properties (to ease the reading, we defer our proofs to Appendices~\ref{app:notation}--\ref{app:dacvssmcproofs}),  Section~\ref{sec:optimalprop} studies the choice of intermediate targets, auxiliary measures, and proposal kernels, and Section~\ref{sec:dacvssmc} compares DaC-SMC with standard SMC theoretically and computationally. We close the paper with a discussion in Section~\ref{sec:discussion}. 
\section{Background: Product-form estimators and sequential Monte Carlo}\label{sec:background}
The DaC-SMC algorithm combines ideas from two distinct Monte Carlo methods: product-form estimators and sequential Monte Carlo algorithms. Here, we provide a brief overview of each.
\subsection{Product-form estimators}\label{sec:pfe}
Product-form estimators~\cite{Kuntz2021} are Monte Carlo methods that exploit any conditional independences present in the measure they target, or the proposal distribution they use, to achieve higher statistical efficiency than normal.   In particular, consider the task of estimating
$$\mu(\varphi):=\int\varphi(x)\mu(dx)$$
for a given probability distribution $\mu$ on a measurable space $(E,\cal{E})$ and a regular enough function $\varphi$. Suppose that $\mu$ is product-form: it is the product of $K$ distributions $\mu_1, \dots, \mu_K$ on respective measurable spaces $(E_1,\cal{E}_1),\ldots, (E_K,\cal{E}_K)$ satisfying $(E,\cal{E})=\left(\prod_{k=1}^KE_k,\prod_{k=1}^K\cal{E}_k\right)$, where $\prod_{k=1}^K \cal{E}_k$ denotes the corresponding product $\sigma$-algebra. To accomplish this task, suppose that we are given $N$  samples $(\bm{X}^n)_{n=1}^N=(X^n_1,\dots,X^n_K)_{n=1}^N$ independently drawn from $\mu$. Instead of approximating $\mu(\varphi)$ with the standard Monte Carlo estimator,
$$\mu^N(\varphi):=\frac{1}{N}\sum_{n=1}^N\varphi(\bm{X}^n),$$
consider using the \emph{product-form estimator},
\begin{equation*}
\mu_\times^N(\varphi) :=\frac{1}{N^K}\sum_{\bm{n}\in[N]^K}\varphi(\bm{X}^{\bm{n}}),
\end{equation*}
obtained by averaging over all of the $N^{K}$ element-wise permutations of samples $(\bm{X}^n)_{n=1}^N$,
$$\bm{X}^{\bm{n}}:=(X^{n_1}_1,\dots,X^{n_K}_K)\quad \forall \bm{n}=(n_1,\dots,n_K)\in[N]^K,\quad\text{where}\quad [N]:=\{1,\dots,N\}.$$

It is a strongly consistent, unbiased, and asymptotically normal estimator for $\mu(\varphi)$ (\cite[Theorem~1]{Kuntz2021}) that achieves a smaller variance than $\mu^N(\varphi)$ (\cite[Corollary~1]{Kuntz2021}),
$$\text{Var}(\mu^N_\times(\varphi))\leq \text{Var}(\mu^N(\varphi))\quad\forall N>0,$$
for all square $\mu$-integrable test functions $\varphi$. (For that matter, $\mu_\times^N(\varphi)$ yields the best  unbiased estimates of $\mu(\varphi)$ that can be achieved using only the knowledge that $\mu$ is product-form and $N$ i.i.d. samples drawn from $\mu$, cf.\ \cite[Theorem~2]{Kuntz2021}.) In a nutshell, the standard estimator uses the (non-product-form) empirical distribution of the unpermuted samples,
$$\mu^N:=\frac{1}{N}\sum_{n=1}^N\delta_{\bm{X}^n}$$
as an approximation to $\mu$, while the product-form estimator uses the (product-form) empirical distribution obtained by taking the product of the marginal empirical distributions,
$$\mu^N_\times:=\left(\frac{1}{N}\sum_{n_1=1}^N\delta_{X^{n_1}_1}\right)\times\dots\times\left(\frac{1}{N}\sum_{n_K=1}^N\delta_{X^{n_K}_K}\right)=\frac{1}{N^K}\sum_{\bm{n}\in[N]^K}\delta_{\bm{X}^{\bm{n}}};$$
and the latter provides a better approximation to $\mu$ for the reasons outlined in the introduction (see \cite[Section~2.2]{Kuntz2021} for further details).

The above applies similarly to the case where $\mu$ is an importance sampling (IS) proposal rather than a target. In particular, if our target $\gamma$ is a change of measure from $\mu$ with Radon-Nikodym derivative $w:=d\gamma/d\mu$, then
$$\gamma_\times^N(\varphi):=\mu_\times^N(w\varphi)=\frac{1}{N^K}\sum_{\bm{n}\in[N]^K}w(\bm{X}^{\bm{n}})\varphi(\bm{X}^{\bm{n}}),$$
is a consistent, unbiased, and asymptotically normal estimator for $\gamma(\varphi)=\mu(w\varphi)$ whose variance is bounded above by that of the standard IS estimator $\gamma^N(\varphi):=\mu^N(w\varphi)$; cf.\ \cite[Section~3.1]{Kuntz2021}. Because this is the case for all sufficiently integrable $\varphi$, we regard
\begin{equation}\label{eq:isprod}\gamma^N_\times(d\bm{x}):=w(\bm{x})\mu_{\times}^N(d\bm{x})=\frac{1}{N^K}\sum_{\bm{n}\in[N]^K}w(\bm{X}^{\bm{n}})\delta_{\bm{X}^{\bm{n}}}(d\bm{x})\end{equation}
as an approximation to $\gamma$. Somewhat abusively, we also refer to this type of approximation as a `product-form estimator' (e.g.\ $\gamma^N_\times$ is a product-form estimator for $\gamma$).
\subsection{Sequential Monte Carlo}\label{sec:smc}
Sequential Monte Carlo (SMC) algorithms constitute a class of Monte Carlo methods that, by progressively evolving clouds of weighted particles, approximate sequences $\rho_{0},\dots,\rho_T$ of intractable unnormalized measures defined on measurable spaces $(\bm{E}_0, \bm{\cal{E}}_0), \dots,(\bm{E}_T,\mathcal{E}_T)$ of increasing dimension (in particular, $\bm{E}_t=E_0\times\cdots\times E_t$ for some low-dimensional $E_0,\dots,E_T$ and $\bm{\cal{E}_t} = \cal{E}_0\times\cdots\times \cal{E}_t$ is the appropriate product $\sigma$-algebra). 
They also yield weighted particle approximations of the corresponding normalized distributions $\mu_{0},\dots,\mu_T$ and normalization constants $Z_0,\dots,Z_T$.

The basic SMC approach, often referred to as sequential importance resampling~\cite[p.\ 15]{doucet2011}, is detailed in Algorithm~\ref{alg:smc} (we use slightly unconventional notation here to keep things consistent with Section~\ref{sec:dac}). 
It is a simple iterative procedure: given a weighted-particle approximation $\rho_{t-1}^N$ to $\rho_{t-1}$, normalize $\rho_{t-1}^N$ to obtain an approximation $\mu^N_{t-1}$ of $\mu_{t-1}$, \emph{resample} $\mu^N_{t-1}:=\rho_{t-1}^N/\rho_{t-1}^N(\bm{E}_t)$ to obtain unweighted particles $\bm{X}_{t_-}^{1,N},\ldots,\bm{X}_{t_-}^{N,N}$ (whose empirical measure also approximates $\mu_{t-1}$), and extend the path of each resampled particle $\bm{X}_{t_-}^{n,N}$ by drawing $X_{t}^{n, N}$ from a proposal kernel $K_{t}(\bm{X}_{t_-}^{n,N},dx_{t})$ and setting $\bm{X}_{t}^{n, N}:=(X^{n,N}_{t}, \bm{X}_{t_-}^{n,N})$. Then, re-weight the particles using the Radon-Nykodim derivative $w_t:=d\rho_t/d( \rho_{t-1}\times K_{t})$ of $\rho_t(d\bm{x}_t)$ w.r.t.  $(\rho_{t-1}\times K_{t})(d\bm{x}_t):=\rho_{t-1}(d\bm{x}_{t-1})K_t(\bm{x}_{t-1},dx_t)$ to obtain an approximation of $\rho_t$:
\begin{align}
\label{eq:smc}
   \rho_t^N:= \frac{{Z}_{t-1}^N}{N} \sum_{n=1}^N w_t(\bm{X}_{t}^{n, N})\delta_{\bm{X}_{t}^{n, N}},
\end{align}
where $Z_{t-1}^N:= \rho_{t-1}^N(\bm{E}_{t-1})$ denotes the estimate of the normalizing constant $Z_{t-1}= \rho_{t-1}(\bm{E}_{t-1})$ afforded by $\rho^N_{t-1}$. 
To initialize the algorithm, we use importance sampling with a proposal distribution $K_0$ and weight function $w_0:=d\rho_0/dK_0$ to obtain an approximation $\rho^N_0:=N^{-1}\sum_{n=1}^Nw_0(\bm{X}^{n}_0)\delta_{\bm{X}^{n}_0}$ to $\rho_0$.

\begin{algorithm}[t]
\begin{algorithmic}[1]
\STATE{\textbf{Input:} number of particles $N$, proposal kernels $(K_t)_{t=0}^T$, and unnormalized targets $(\rho_t)_{t=0}^T$.} 
\STATE{\textit{Propose:} for $n\leq N$, draw $\bm{X}^{n,N}_0$ independently from $K_0$. }
\STATE{\textit{Correct:} compute $\rho^N_0=N^{-1}\sum_{n=1}^Nw_0(\bm{X}^{n}_0)\delta_{\bm{X}^{n}_0}$, where $w_0=d\rho_0/dK_0$,  and  $\mu^N_0=\rho^N_0/\rho^N_0(\bm{E}_0)$.}
\FOR{$t=1,\dots, T$}
\STATE{\textit{Resample:} for $n\leq N$, draw   $\bm{X}_{t_-}^{n,N}$ independently from $\mu_{t-1}^N$.}
\STATE{\textit{Mutate:} for $n\leq N$, draw $X^{n,N}_{t}$ independently from $K_{t}(\bm{X}_{t_-}^{n,N},dx_t)$ and set $\bm{X}_{t}^{n,N}=(X^{n,N}_{t}, \bm{X}_{t_-}^{n,N})$.}
\STATE{\textit{Correct:} compute $\rho_t^N$  in~\eqref{eq:smc} and its normalization   $\mu^N_t=\rho_t^N/\rho_t^N(\bm{E}_t)$.}
\ENDFOR
\end{algorithmic}
 \caption{Sequential importance resampling}
 \label{alg:smc}
\end{algorithm}

Resampling is the process of stochastically replicating some particles within a weighted collection in such a way that the expected number of offspring of each particle is proportional to its weight. By doing so, we obtain an unweighted sample suitable for approximating the same target distribution as the original weighted sample. The simplest way to achieve this, known as multinomial resampling because the vector of offspring numbers follows a multinomial distribution, is to independently draw $N$ times from the weighed empirical measure of the original sample. To not complicate this paper's exposition and proofs, and to keep the focus on the differences between standard SMC and DaC-SMC, we restrict ourselves throughout to the case of multinomial resampling. However, 
using more sophisticated resampling schemes within the DaC-SMC setting is sensible, incurs no further complications than for SMC, and can in principle be understood using similar theoretical techniques to those employed in the conventional setting (see, in particular, \cite{Gerber2019}).

SMC algorithms (meaning iterative importance sampling schemes with resampling steps) emerged as heuristic schemes in a variety of applications including filtering (cf. \cite{Moral2014} for a list), and have been rigorously studied as interacting particle systems tied to Feynman-Kac formulae since~\cite{DelMoral1996}; see \cite[Section 1.1]{DelMoral2004} for a comprehensive account.  
The filtering problem has an intrinsic  sequential structure and the sequence $\rho_0,\dots,\rho_T$ arises naturally. For general distributions $\mu$ lacking this sort of structure, SMC can instead be applied by building an `artificial' sequence of distributions such that $\mu_T$ (or one of its marginals) coincides with $\mu$. This can be achieved by incrementally morphing an easy-to-sample-from   distribution into $\mu$, often by tempering~\cite{neal2001annealed} or sequentially introducing observations to transition from a prior to a posterior~\cite{chopin2002}. At first glance,  Algorithm~\ref{alg:smc} may not appear applicable in such settings because the artificial distributions are all defined on the same space. However, it is possible to construct auxiliary sequences of distributions on spaces of increasing dimension such that the $t^\textrm{th}$ distribution of interest coincides with the $t^\textrm{th}$ `coordinate marginal' of the corresponding auxiliary distribution, hence allowing \emph{exactly} this algorithm to be applied to these problems~\cite{delmoral2006sequential}. The same kind of constructions can be used in the DaC-SMC framework (see~\cite[Section~4.2]{lindsten2017divide} for a brief discussion) and we focus on analysing the fundamental algorithm (which, combined with appropriately extended target distributions, applies directly to these seemingly more complex settings).

We mention one variant of the SMC algorithm which is of particular relevance to what follows: the auxiliary SMC (ASMC) scheme originally proposed for filtering~\cite{pitt99filtering}. Although it was initially presented as an alternative proposal method involving an auxiliary variable construction, this scheme can be understood as the standard algorithm  with   an additional re-weighting prior to the resampling step~\cite{carpenter1999improved}---the motivation behind  the  additional re-weighting being to incorporate as much information about the next distribution in the sequence as possible  \emph{prior} to resampling. Simply put, it is an SMC  algorithm that targets a slightly different sequence of distributions and  employs an extra importance sampling correction to obtain estimates with respect to the distributions of interest~\cite{johansen2008note}. 

In particular, we introduce \emph{auxiliary measures} $\gamma_{1_-},\ldots,\gamma_{T_-}$ on $\bm{E}_0,\dots,\bm{E}_{T-1}$, and their normalized counterparts $\pi_{1_-},\ldots,\pi_{T_-}$, and run Algorithm~\ref{alg:smc} with these measures as the targets. That is, with the weights used for resampling at time $t$ set to $w_{t-}: = d\gamma_{t_-}/d\gamma_{t-1}$, where $\gamma_0:=K_0$ and $\gamma_t := \gamma_{t_-} \times K_t$ for $t=1,\dots,T$. This gives rise to the approximation 
\begin{align}
\label{eq:asmc}
   \gamma_{t_-}^N:= \frac{\cal{Z}_{t-1}^N}{N} \sum_{n=1}^N w_{t_-}(\bm{X}_{t-1}^{n, N})\delta_{\bm{X}_{t-1}^{n, N}}
\end{align}
of the auxiliary measure $\gamma_{t_-}$, where $\cal{Z}_{0}^N :=1$ and  $\cal{Z}_{t}^N := \gamma_{t_-}^N(\bm{E}_{t-1})$ for $t=1,\dots, T$. 
In the ASMC literature, it is common to find the correction step at the beginning of iteration $t$ instead of the end of iteration $t-1$ (cf.\ Algorithm~\ref{alg:asmc}). Such cyclical permutations of the resample, mutate, and correct steps have no impact beyond labeling; we adopt this ordering because it eases the presentation of the DaC-SMC algorithm in Section~\ref{sec:dac}.

\begin{algorithm}[t]
\begin{algorithmic}[1]
\STATE{\textbf{Input:} number of particles $N$, proposal kernels $(K_t)_{t=0}^T$, and auxiliary measures $(\gamma_{t_-})_{t=1}^T$.} 
\STATE{\textit{Propose:} for $n\leq N$, draw $\bm{X}^{n,N}_0$ independently from $K_0$. }
\STATE{\textit{Compute:} $\gamma_0^N=N^{-1}\sum_{n=1}^N\delta_{\bm{X}^{n,N}_0}$.}
\FOR{$t=1,\dots, T$}
\STATE{\textit{Correct:} re-weight $\gamma_{t-1}^N$ with $w_{t_-}$ to obtain $\gamma_{t_-}^N=w_{t_-}\gamma_{t-1}^N$ and  $\pi_{t_-}^N=\gamma_{t_-}^N/\gamma_{t_-}^N(\bm{E}_{t-1})$.}
\STATE{\textit{Resample:} for $n\leq N$,  draw $\bm{X}_{t_-}^{n,N}$ independently from $\pi_{t_-}^N$.}
\STATE{\textit{Mutate:} for $n\leq N$, draw $X^{n,N}_{t}$ independently from $K_{t}(\bm{X}_{t_-}^{n,N},dx_t)$ and set $\bm{X}_{t}^{n,N}=(X^{n,N}_{t}, \bm{X}_{t_-}^{n,N})$.}
\STATE{\textit{Compute:} $\gamma_{t}^N=N^{-1}\cal{Z}_{t}^N\sum_{n=1}^N\delta_{\bm{X}_{t}^{n,N}}$ where $\cal{Z}_{t}^N=\gamma_{t_-}^N(\bm{E}_{t-1})$.}
\ENDFOR
\newline {\textbf{Note:} At each step $t$, one can use~\eqref{eq:asmcests} to compute estimates of $\rho_t$, $Z_t$, and $\mu_t$.}
\end{algorithmic}
\caption{Auxiliary sequential importance resampling}
\label{alg:asmc}
\end{algorithm}
 
The \emph{extended auxiliary measure} $\gamma_t$ is then approximated by
$$  \gamma_{t}^N:= \frac{\cal{Z}_{t}^N}{N} \sum_{n=1}^N\delta_{\bm{X}_{t}^{n,N}},$$
where, in the case of multinomial resampling, $\bm{X}_t^{1,N},\dots,\bm{X}_t^{N,N}$ denote samples drawn independently  from the normalization of $\gamma_{t_-}^N \times K_t$. (We note in passing that $\gamma_{t-1}$ and $\gamma_{t_-}$ can be interpreted as the `predictive' and `updated' time marginals of a  Feynman-Kac flow in the sense of~\cite{DelMoral2004}.) The above is then corrected using importance sampling and the weight function  $w_t := d\rho_t/d\gamma_t
$ to obtain approximations of the objects of interest $\rho_t$, $Z_t$, and $\mu_t$:
\begin{align}
  \rho_t^N(d\bm{x}_t) := w_t(\bm{x}_t)\gamma_t^N(d\bm{x}_t),\quad {Z}_t^N := \rho_t^N(\bm{E}_t),\quad 
  \mu_t^N(d\bm{x}_t) := \frac{1}{{Z}_t^N} \rho_t^N(d\bm{x}_t)\label{eq:asmcests},
\end{align}
with $\gamma_0^N:=N^{-1}\sum_{n=1}^N\delta_{\bm{X}^{n,N}_0}$ denoting the empirical distribution of the samples drawn from $\gamma_0=K_0$. 
This construction gives rise to the  auxiliary SMC algorithm described in Algorithm~\ref{alg:asmc}. 
\section{The divide-and-conquer sequential Monte Carlo  algorithm} \label{sec:dac}
In this section, we define our problem setting (Section~\ref{sec:probdef}), give the DaC-SMC algorithm (Section~\ref{sec:dacalg}), consider its implementation (Section~\ref{sec:lcdacsmc}), and discuss various variants thereof (Section~\ref{sec:variants}).
\subsection{Problem definition}\label{sec:probdef} 
We tackle a generalization of the problem in Section~\ref{sec:smc}: obtaining weighted-particle approximations to a collection $(\rho_u)_{u\in\mathbb{T}}$ of intractable unnormalized (but finite) measures  indexed by the nodes of a finite rooted tree $\mathbb{T}$. In doing so, we also obtain tractable approximations to the corresponding collections of normalizing constants $(Z_u)_{u\in\mathbb{T}}$ and normalized measures $(\mu_u)_{u\in\mathbb{T}}$. Typically, our main interest will lie in approximating the target $\rho_{\mathfrak{r}}$ indexed by the tree's root $\mathfrak{r}$ and all other targets have been introduced to facilitate this process. Hence, we refer to $\rho_{\mathfrak{r}}$ and $\mu_{\mathfrak{r}}$ as the `final targets' and to all other $\rho_u$ and $\mu_u$ as the `intermediate targets'. 
\label{page:8}

Although  this may seem a somewhat specialized setting at first glance, a great many problems (including, for example, essentially any in which one wishes to approximate a probability distribution over $\mathbb{R}^d$) may be viewed within this framework. 
In some cases, the tree structure is intrinsic to the model at hand and there are natural candidates for the intermediate targets (e.g.\ \cite{jewell2015,paige2016inference,Gelman2006} or Example~\ref{ex:schools} below). In others, the tree decomposition and intermediate targets are entirely artificial constructs   introduced for the purpose of providing a computationally tractable approach to approximating the final targets (e.g.\ see \cite[Section 3.4]{lindsten2017divide} or \cite{ding2019}) and DaC-SMC serves as a generalization of the approaches in~\cite{delmoral2006sequential, chopin2002}.

In all of these cases, the intermediate targets transition incrementally from a collection of measures on low-dimensional spaces indexed by $\mathbb{T}$'s leaves to the final target. Hence, the measures are defined on spaces whose dimension grows as we progress up the tree. In particular, the underlying measurable space $(\bm{E}_u,\bm{\cal{E}}_u)$ for $\rho_u$ is defined by taking the corresponding partial product of a collection $(E_v,\cal{E}_v)_{v\in \mathbb{T}}$ of low-dimensional spaces indexed by  $\mathbb{T}$'s nodes:
$$\bm{E}_u=\prod_{v\in\mathbb{T}_u}E_v,\quad \bm{\cal{E}}_u=\prod_{v\in\mathbb{T}_u}\cal{E}_v,\quad\forall u\in\mathbb{T},$$
where $\mathbb{T}_u$ denotes  the sub-tree  of $\mathbb{T}$ rooted at $u$ (obtained by removing all nodes from $\mathbb{T}$ except for $u$ and its descendants).\label{page:subtree}

Similarly to Algorithm~\ref{alg:asmc}, we introduce a collection $(\gamma_{u_-})_{u\in\mathbb{T}^{\not \partial}}$ of \emph{auxiliary measures} \label{page:auxiliary}indexed by set $\mathbb{T}^{\not \partial}$ non-leaf nodes.  In particular, for any such node $u$ with set of children $\cal{C}_u$, $\gamma_{u_-}$ is a finite measure on the corresponding intermediate product space:\label{page:spaces}
\begin{align*}&\bm{E}_{\cal{C}_u}:=\prod_{v\in\mathbb{T}_u^{\not u}}E_v,\quad \bm{\cal{E}}_{\cal{C}_u}:=\prod_{v\in\mathbb{T}_u^{\not u}}\cal{E}_v,\end{align*}
where $\mathbb{T}_u^{\not u}:=\mathbb{T}_u\backslash \{u\}$. These auxiliary distributions fulfill essentially the same role as the auxiliary distributions in the ASMC  algorithm: introducing as much information as possible about the measure $\rho_u$ associated with the current node $u$   \emph{before} resampling. As we will see in the next section, this mechanism is particularly important in the divide and conquer context because it allows us to correlate the components of the resampled particles associated with $u$'s children.

\begin{remark}[A connection with SMC]\label{rem:smc1}If the tree  is a line (i.e.\ each node has a single child, except for the first which has none), then our setting here reduces to that of standard ASMC (Algorithm~\ref{alg:asmc}). If we choose auxiliary distributions $\gamma_{t_-} := \rho_{t-1}$, then it further reduces to a standard SMC (Algorithm~\ref{alg:smc}).
\end{remark}

For the reasons discussed in Section~\ref{sec:optimalprop}, 
we would like to choose the intermediate targets and auxiliary measures to be as close as possible to the corresponding marginals of the final target $\rho_{\mathfrak{r}}$. 
Of course, in general, this is more easily said than done and, in practice, $\rho_u$ and $\gamma_{u_-}$ will often be obtained by ignoring~\cite[Section 3.4]{lindsten2017divide}, rather than `integrating-out', the terms in $\rho_{\mathfrak{r}}$ featuring variables associated with $u$'s ancestors (and $u$ itself in the case of $\gamma_{u_-}$). To illustrate some of these concepts, consider the following simplified version of the example in \cite[Section~5.2]{lindsten2017divide}:

\begin{example}[Simplified  mathematics test dataset]\label{ex:schools}Suppose we have the following data: the number $M_{sy}$  of students who took a certain exam in school $s$ and year $y$ and the number $m_{sy}$ which passed the exam, for all years $y$ that the exam was taken in school $s$ and all schools $s$ in a given city (we denote the set of the former by $\cal{Y}_s$ and that of the latter by $\cal{S}$). A standard hierarchical model~\cite{Gelman2006} for inference\footnote{For instance, our aim may be to infer whether there are any significant differences in pass rates across the schools and, if so, which schools perform better. Or whether these rates are changing over time and whether the trends are city-wide or school-specific.} on such a dataset is as follows:  $m_{sy}$ is binomially distributed with success probability $\alpha(\theta_{sy})$ (i.e.\ $m_{sy}$ has law $\cal{B}(\cdot;M_{sy},\alpha(\theta_{sy}))$), where $\theta_{sy}$ denotes a latent parameter and $\alpha$ denotes the standard logistic function.  The parameters are related as follows:
$$\theta_{sy}=\theta_s+\varepsilon_{sy}\enskip\forall y\in\cal{Y}_s,\enskip s\in\cal{S},\quad \theta_{s}=\theta_{\mathfrak{r}}+\varepsilon_{s}\enskip \forall  s\in\cal{S},$$
where $(\theta_s)_{s\in\cal{S}}$ and $\theta_\mathfrak{r}$ denote further latent parameters and $((\varepsilon_{sy})_{y\in\cal{Y}_s},\varepsilon_{s})_{s\in\cal{S}}$ a collection of independent noises with $\varepsilon_{sy}\sim\cal{N}(0,\sigma_s^2)$ and $\varepsilon_{s}\sim\cal{N}(0,\sigma^2_\mathfrak{r})$ for some unknown variances $(\sigma_s^2)_{s\in\cal{S}}$ and $\sigma^2_\mathfrak{r}$. For each parameter variable, we choose as a prior the pullback $g(d\theta)=\alpha(\theta)(1-\alpha(\theta))d\theta$ of the uniform distribution on $(0,1)$ via $\alpha$; while for each variance variable, we choose a unit-mean exponential distribution $f(d\sigma^2;1)$. The full (unnormalized) posterior is then given by
\begin{align}f(d\sigma^2_{\mathfrak{r}};1)g(d\theta_{\mathfrak{r}})\prod_{s\in\cal{S}}\Big[&\cal{N}(\theta_{s}-\theta_{\mathfrak{r}};0,\sigma^2_{\mathfrak{r}})f(d\sigma^2_s;1)g(d\theta_s)\label{eq:schools}\\
&\times\prod_{y\in\cal{Y}_s}\cal{N}(\theta_{sy}-\theta_s;0,\sigma_s^2)\cal{B}(m_{sy};M_{sy},\alpha(\theta_{sy}))g(d\theta_{sy})\Big].\nonumber\end{align}

This distribution's dependence structure is encoded by a tree $\mathbb{T}$ whose paths from root to leaves have the following form: root, school, year. In other words, $\mathbb{T}$'s root node $\mathfrak{r}$ has a child for each school ($\cal{C}_{\mathfrak{r}}=\cal{S}$), each of these has a child for each year the exam took place in that school ($\cal{C}_s=\cal{Y}_s$ for all $s$ in $\cal{S}$), and the latter are all leaves (note that we treat elements in different $\cal{Y}_s$s as distinct nodes, even if they represent the same year). With each leaf node, we associate the corresponding latent variable (so that $E_u:=\r$ if $u$ is a leaf), and, with each non-leaf node, we associate the corresponding parameter variable and unknown variance (so that $E_u:=\r\times[0,\infty)$ if $u$ is not a leaf). 

We set the final target $\rho_{\mathfrak{r}}$ to be our measure of interest~\eqref{eq:schools}. Analytical expressions for its marginals are unavailable, so we instead pick the intermediate targets as follows: we obtain $\rho_u$ by  removing from~\eqref{eq:schools} all terms featuring variables not associated with $u$ or its descendants:
\begin{align}
\rho_{s}(d\theta_s,d\sigma^2_s,(d\theta_{sy})_{y\in\cal{Y}_s})&:=f(d\sigma^2_s;1)g(d\theta_s)\label{eq:schools2}\\
&\quad\times\prod_{y\in\cal{Y}_s}\cal{N}(\theta_{sy}-\theta_s;0,\sigma_s^2)\cal{B}(m_{sy};M_{sy},\alpha(\theta_{sy}))g(d\theta_{sy}),\nonumber\\
\rho_{sy}(d\theta_{sy})&:=\cal{B}(m_{sy};M_{sy},\alpha(\theta_{sy}))g(d\theta_{sy})\quad\forall y\in\cal{Y}_s,\nonumber
\end{align}
for all $s$ in $\cal{S}$. It is also not possible to analytically integrate over $(\theta_{\mathfrak{r}},\sigma^2_{\mathfrak{r}})$ in~\eqref{eq:schools} or $(\theta_s,\sigma^2_s)$ in~\eqref{eq:schools2}, so we instead pick the   auxiliary measures  by further  doing away with the terms in~(\ref{eq:schools},\ref{eq:schools2}) involving the variables associated with the indexing node:
$$\gamma_{\mathfrak{r}_-}:=\prod_{s\in\cal{S}}\rho_s,\quad\gamma_{s_-}:=\prod_{y\in\cal{Y}_s}\rho_{sy}\enskip\forall s\in\cal{S}. $$
\end{example}

\begin{table}
\begin{tabular}{{cllcl}}
Symbol &  Description & Value & Loc.& Approximation\\
\hline\noalign{\smallskip}
$\mathbb{T}$&Index tree (subtree rooted at $u$)&--&pp.~\pageref{page:8},\pageref{page:tree}&--\\
$\mathfrak{r}$&$\mathbb{T}$'s root&--&pp.~\pageref{page:8},\pageref{page:tree}&--\\
$\mathbb{T}_u$&Subtree rooted at $u$&--&p.~\pageref{page:subtree}&--\\
$\mathbb{T}^{\partial}\enskip (\mathbb{T}^{\not\partial})$&Set of leaves (its complement)&--$\enskip(\mathbb{T}\backslash \mathbb{T}^{\partial})$&p.~\pageref{page:tree}&--\\
$\cal{C}_u \enskip (c_u)$&Set of $u$'s children (its cardinality)&--&p.~\pageref{page:tree}&--\\
$\mathbb{T}_u^{\not v}\enskip (\cal{C}_u^{\not v})$&$\mathbb{T}_u$ excluding $v$ ($\cal{C}_u$ excluding $v$)&--&--&--\\
\noalign{\smallskip}
$E_u$&Low-dimensional space&--&p.~\pageref{page:spaces}&--\\
$\cal{E}_u$&Sigma-algebra on $E_u$&--&p.~\pageref{page:spaces}&--\\
$\bm{E}_u$&Intermediate space&$\prod_{v\in\mathbb{T}_u}E_v$&--&--\\
$\bm{\cal{E}}_u$&Sigma-algebra on $\bm{E}_u$&$\prod_{v\in\mathbb{T}_u}\cal{E}_v$&--&--\\
\noalign{\smallskip}
$\rho_{u}$ &  Target measure & -- & p.~\pageref{page:8}&$\rho_{u}^N$ in eq.~\eqref{eq:adacests}\\
$Z_{u}$ & $\rho_u$'s normalizing constant& $\rho_u(\bm{E}_u)$& p.~\pageref{page:8}&$Z_{u}^N=\rho_u^N(\bm{E}_u)$ in eq.~\eqref{eq:adacests}\\
$\mu_{u}$ & Normalized target measure & $\rho_u/Z_u$& p.~\pageref{page:8}&$\mu_{u}^N=\rho_u^N/Z_u^N$ in eq.~\eqref{eq:adacests}\\
$\gamma_{u_-}$ & Auxiliary measure & -- & p.~\pageref{page:auxiliary}&$\gamma_{u_-}^N=w_{u_-}\gamma_{\cal{C}_u}^N$ in eq.~\eqref{eq:gammauN}\\
$\mathcal{Z}_{u}$ & $\gamma_{u_-}$'s normalizing constant & $\gamma_{u_-}(\bm{E}_{\cal{C}_u})$ & eq.~\eqref{eq:piumn}&$\cal{Z}_u^N=\gamma_{u_-}^N(\bm{E}_{\cal{C}_u})$ in eq.~\eqref{eq:piumn}\\
$\pi_{u_-}$ & Normalized auxiliary measure & $\gamma_{u_-}/\mathcal{Z}_u$& eq.~\eqref{eq:piumn}& $\pi_{u_-}^N=\gamma_{u_-}^N/\mathcal{Z}_u^N$ in eq.~\eqref{eq:piumn}\\
$K_u$ & Proposal distribution or kernel & -- & p.~\pageref{page:10}&--\\
$\gamma_u$ & Extended auxiliary measure & $\gamma_{u_-}\times K_u$& eq.~\eqref{eq:otimes}&$\gamma_u^N=\gamma_{u_-}^N\times K_u$ in eq.~\eqref{eq:gammauN}\\
$w_{u_{-}}$ & Auxiliary weights & $d\gamma_{u-}/d\gamma_{\cal{C}_u}$& p.~\pageref{page:auxiliary_weights}&--\\
$w_u$ & Inferential weights& $d\rho_u/d\gamma_u$& p.~\pageref{page:inferential_weights}&--\\
\noalign{\smallskip}
$N$&Number of particles&--&&--\\
$\bm{X}_{\cal{C}_u}^{n,N}$&Tuple of mutated particles&$(\bm{X}_{v}^{n,N})_{v\in\cal{C}_u}$&--&--\\
$\bm{X}_{\cal{C}_u}^{\bm{n},N}$&Permuted tuple of mut. particles&$(\bm{X}_{v}^{n_v,N})_{v\in\cal{C}_u}$&--&--\\
$\bm{X}_{u_-}^{n,N}$&Resampled particle&$\sim\pi_{u_-}^N$&p.~\pageref{page:resammutapar}&--\\
$X_u^{n,N}$&Mutation&$\sim K_u(\bm{X}_{u_-}^{n,N},\cdot)$&p.~\pageref{page:resammutapar}&--\\
$\bm{X}_u^{n,N}$&Mutated particle&$(X_u^{n,N},\bm{X}_{u_-}^{n,N})$&p.~\pageref{page:resammutapar}&--\\
\end{tabular}
\caption{Notation introduced throughout Section~\ref{sec:probdef}--\ref{sec:dacalg}. Other commonly used symbols are obtained by combining the above with the product notation introduced in p.~\pageref{page:products} (e.g., $\bm{E}_{\cal{C}_u}=\prod_{v\in\cal{C}_u}\bm{E}_v$ and $\gamma_{\cal{C}_u}=\prod_{v\in\cal{C}_u}\gamma_u$).}
\label{tab:notation}
\end{table}

\paragraph*{Notation} We finish this section with some notation that we use throughout the paper:
\begin{itemize}
\item (Trees) \label{page:tree}By a `tree' $\mathbb{T}$ we mean a rooted tree: a connected directed acyclic graph in which no node has more than one parent and possessing a single root node, $\rf$, that has no parent. 
We denote the number of children a node $u$ has as  $c_u$, refer to them as $u1,u2,\dots,uc_{u}$, 
and denote the set $\{u1,\dots, uc_u\}$ of them by $\cal{C}_u$. For any subset $A$ of the tree $\mathbb{T}$, we use $\mmag{A}$ to denote its cardinality, $A^{\partial}:=\{v\in A:c_v=0\}$ to denote the set of all leaves in $A$, and $A^{\not\partial}:=A\backslash A^{\partial}$ that of all other nodes in $A$. In particular, $\mathbb{T}^\partial$ denotes the set of $\mathbb{T}$'s leaves and $\mathbb{T}^{\not \partial}$ that of  all non-leaf nodes.
\item (Products)\label{page:products}  We assign the subscript `$_A$' to a symbol, where $A$ is any  subset of  $\mathbb{T}$, if the object represented by the symbol depends on the nodes in $A$. If the symbol is bold, the object also depends on the descendants of the nodes in $A$.  With few exceptions, this dependence takes the form of a product. For example,
\begin{align*}
E_A:=&\prod_{u\in A}E_u,&
\bm{E}_{A}:=&E_{\cup_{u\in A}\mathbb{T}_u},&
\cal{E}_A:=&\prod_{u\in A}\cal{E}_u,&
\bm{\cal{E}}_{A}:=&\cal{E}_{\cup_{u\in A}\mathbb{T}_u}, \\
\rho_{A}:=&\prod_{u\in A}\rho_{u},&
\mu_{A}:=&\prod_{u\in A}\mu_{u},&
Z_A:=&\prod_{u\in A}Z_{u}.& &
\end{align*}
for any non-empty subset $A$ of $\mathbb{T}$. 
The exceptions are $x_A$ and $\bm{x}_A$  which denote elements in $E_A$  and $\bm{E}_A$ rather than any sort of product over $A$.
\end{itemize}
To facilitate the reading, we summarize in Table~\ref{tab:notation} notation frequently used throughout the paper (many of these objects will be introduced in the following section).
\subsection{The algorithm}\label{sec:dacalg}
To introduce the divide-and-conquer SMC algorithm, we require one last ingredient: the generalization of the proposal distributions and kernels used in SMC/ASMC (cf.\ Section~\ref{sec:smc}) to initialize and extend the particles paths. In our case, we assign to each leaf $u$ a distribution $K_u:\cal{E}_{u}\to[0,1]$ on the low-dimensional space $E_u$. To each non-leaf node $u$, we instead assign a Markov kernel $K_u:\bm{E}_{\cal{C}_u}\times\cal{E}_u\to[0,1]$ mapping from the product $\bm{E}_{\cal{C}_u}=\prod_{v\in\cal{C}_u}\bm{E}_v$ of the intermediate spaces $\bm{E}_v$ indexed by $u$'s children to the low-dimensional space $E_u$ indexed by $u$ itself. For the reasons discussed in Section~\ref{sec:optimalprop}, $K_u$ \label{page:10}should be chosen as close as possible to normalized target $\mu_u$, if $u$ is a leaf, or to the conditional distribution over $E_u$ under $\mu_u$ given values for its remaining coordinates, if $u$ is not a leaf.

DaC-SMC is a particle approximation to the measure-valued recursion in Algorithm~\ref{alg:adacsmcexact}. The recursion starts at leaves and works its way up the tree. At a leaf $u$ it just returns the corresponding proposal distribution $K_u$. At a non-leaf node $u$, it calls itself on each child $v$ of $u$ and recovers the corresponding \emph{extended auxiliary measure} $\gamma_{v}$ obtained by taking the outer product of the child's auxiliary measure $\gamma_{v_-}$ and it's proposal kernel $K_v$:
\begin{align}\label{eq:otimes}
\gamma_v(d\bm{x}_v) :=(\gamma_{v_-}\times K_{v})(d\bm{x}_v)=\gamma_{v_-}(d\bm{x}_{\cal{C}_v})K_{v}(\bm{x}_{\cal{C}_v},dx_{v})\quad\forall v\in\mathbb{T}^{\not\partial}.
\end{align}\label{page:extended_auxiliary}
(If $v$ is a leaf then the algorithm just retrieves the  proposal distribution $\gamma_v:=K_v$ index by $v$.) The recursion then takes the product $\gamma_{\cal{C}_u}=\prod_{v\in\cal{C}_u}\gamma_v$\label{page:product_gamma} of these and re-weights it using $w_{u_-} := d{\gamma_{u_-}}/d\gamma_{\cal{C}_u}$\label{page:auxiliary_weights} to obtain $u$'s auxiliary measure $\gamma_{u_-}=w_{u_-}\gamma_{\cal{C}_u}$. Lastly, it takes the product of $\gamma_{u_-}$ with the node's proposal kernel $K_u$ to obtain the node's extended auxiliary measure $\gamma_u$ and returns $\gamma_u$. At any node $u$, re-weighting $\gamma_u$ using $w_u:=d\rho_u/d\gamma_u$\label{page:inferential_weights} yields the target $\rho_u=w_u\gamma_u$, its normalization constant $Z_u=\rho_u(\bm{E}_u)$, and its normalization $\mu_u=\rho_u/Z_u$.

\begin{algorithm}[t]
\begin{algorithmic}[1]
\STATE{\textbf{Input:} proposal kernels $(K_u)_{u\in\mathbb{T}}$ and auxiliary measures $(\gamma_{u_-})_{u\in\mathbb{T}^{\not\partial}}$.} 
\IF{$u$ is a leaf (i.e.\ $u\in\mathbb{T}^\partial$)}
\STATE \emph{Return:} $\gamma_u = K_u$.
\ELSE
\FOR{$v$ in $\cal{C}_u$}
\STATE{\textit{Recurse:} set $\gamma_v =\text{rec}(v)$.}
\ENDFOR
\STATE{\textit{Product:} compute $\gamma_{\cal{C}_u} = \prod_{v\in\cal{C}_u} \gamma_v$.}
\STATE{\textit{Correct:} re-weight $\gamma_{\cal{C}_u}$ with $w_{u_-}= d{\gamma_{u_-}}/d\gamma_{\cal{C}_u}$ to obtain $\gamma_{u_-}= w_{u_-}\gamma_{\cal{C}_u}$.}
\STATE{\textit{Return:} $\gamma_u = \gamma_{u_-} \times K_u$.}
\ENDIF
\newline {\textbf{Note:} Re-weighting $\gamma_u$ with $w_u=d\rho_u/d\gamma_u$, we obtain the target $\rho_u=w_u\gamma_u$, its normalizing constant $Z_u=\int w_ud\gamma_u$, and its normalization $\mu_u=w_u\gamma_u/\int w_ud\gamma_u$.}
\end{algorithmic}
\caption{rec($u$): the measure-valued recursion approximated by dac\_smc$(u)$ for $u$ in $\mathbb{T}$.}\label{alg:adacsmcexact}
\end{algorithm}

The changes of measure in the recursion are typically intractable, so DaC-SMC\footnote{\label{footnote}Note that, in the original paper~\cite{lindsten2017divide}, Algorithm~\ref{alg:adacsmc} was referred to as `DaC-SMC with mixture resampling' while the term `DaC-SMC' was reserved for cases where the auxiliary measures are product-form and the algorithm's running time can be sped up through careful implementation (cf.\ Section~\ref{sec:lcdacsmc}). For all, at least  theoretical, purposes  the latter is a special case of the former and we simply refer to both as the `DaC-SMC Algorithm'.} (Algorithm~\ref{alg:adacsmc}) instead approximates each of these steps using particles. At a leaf $u$, the algorithm draws $N$ \emph{particles} $\bm{X}_u^{1,N},\dots,\bm{X}_u^{N,N}$ from the leaf's proposal distribution $K_u$ and returns the corresponding unweighted sample approximation  $\gamma_u^N:=N^{-1}\sum_{n=1}^N\delta_{\bm{X}_u^{n,N}}$ (with unit mass $\cal{Z}_u^N:=1$)  to $K_u$. At a non-leaf node $u$, it first calls itself on each child $v$ of $u$ to obtain an unweighted particle approximation $\gamma_{v}^N=\cal{Z}_v^NN^{-1}\sum_{n=1}^N\delta_{\bm{X}_v^{n,N}}$ (of mass $\cal{Z}_v^N$) to the child's extended auxiliary measure $\gamma_v$. It then takes the product $\gamma_{\cal{C}_u}^N := \prod_{v \in \cal{C}_u} \gamma_{v}^N$\label{page:product_gamma} of these and re-weighs it using $w_{u_-} = d{\gamma_{u_-}}/d\gamma_{\cal{C}_u}$\label{page:auxiliary_weights} to obtain a product-form estimator (cf.\ \eqref{eq:isprod}) for the auxiliary distribution $\gamma_{u_-}$ associated with node $u$:
\begin{equation}\label{eq:gammauN}\gamma_{u_-}^N:=w_{u_-}\gamma_{\cal{C}_u}^N=\frac{\cal{Z}^N_{\cal{C}_{u}}}{N^{c_u}}\sum_{\bm{n}\in[N]^{c_u}}w_{u_-}(\bm{X}_{\cal{C}_u}^{\bm{n},N})\delta_{\bm{X}_{\cal{C}_u}^{\bm{n},N}}\end{equation}\label{page:gammauN}
where $\cal{Z}^N_{\cal{C}_{u}}:=\prod_{v\in \cal{C}_u}\cal{Z}_{v}^N$ and, for each $\bm{n}=(n_1,\dots,n_{c_u})$ in $[N]^{c_u}$, $\bm{X}^{\bm{n},N}_{\cal{C}_u}$ denotes the `permuted particle' $(\bm{X}^{n_1}_{u1},\dots,\bm{X}^{n_{c_u}}_{uc_u})$.  Just as in the standard ASMC case (Algorithm~\ref{alg:asmc}), the algorithm proceeds by normalizing $\gamma_{u_-}^N$  to obtain an approximation $\pi_{u_-}^N$ of the  \emph{normalized auxiliary measure} $\pi_{u_-}$,\label{page:normalized_auxiliary}
\begin{equation}\label{eq:piumn}
\pi_{u_-}^N:=\frac{\gamma_{u_-}^N}{\cal{Z}^N_{u}}\approx \frac{\gamma_{u_-}}{\cal{Z}_{u}}=:\pi_{u_-}\enskip\text{where}\enskip\cal{Z}^N_{u}:=\gamma_{u_-}^N(\bm{E}_{\cal{C}_u}),\enskip \cal{Z}_{u}:=\gamma_{u_-}(\bm{E}_{\cal{C}_u}),
\end{equation}
and it draws  $N$ particles $\bm{X}_{u_-}^{1,N},\dots,\bm{X}_{u_-}^{N,N}$ from $\pi_{u_-}^N$ using (multinomial) resampling. The empirical distribution $N^{-1}\sum_{n=1}^N\delta_{\bm{X}_{u_-}^{n,N}}$ of these particles approximates $\pi_{u_-}$. 
Hence, by extending the  path of each resampled particle from $\bm{E}_{\cal{C}_u}$ to $\bm{E}_u$ using the kernel $K_u$,\label{page:resammutapar}
$$\bm{X}_u^{n,N}:=(X^{n,N}_u,\bm{X}_{u_-}^{n,N})\enskip\text{where}\enskip X^{n,N}_u\sim K_u(\bm{X}_{u_-}^{n,N},dx_u),$$
and defining
\begin{align}\label{eq:actualtargetdistributions}
\gamma_u^N :=& \frac{\cal{Z}_u^N}{N} \sum_{n=1}^N \delta_{\bm{X}_u^{n,N}} \quad \text{and} \quad \pi_u^N:=\frac{1}{N} \sum_{n=1}^N \delta_{\bm{X}_u^{n,N}},
\end{align}
Algorithm~\ref{alg:adacsmc} returns a finite-sample  approximation of the extended auxiliary measure $\gamma_u:=\gamma_{u_-}\times K_u$ indexed by $u$ itself.

\begin{algorithm}[t]
\begin{algorithmic}[1]
\STATE{\textbf{Input:} number of particles $N$, proposal kernels $(K_u)_{u\in\mathbb{T}}$, and auxiliary measures $(\gamma_{u_-})_{u\in\mathbb{T}^{\not\partial}}$.} 
\IF{$u$ is a leaf (i.e.\ $u\in\mathbb{T}^\partial$)}
\STATE{\textit{Propose:} for $n\leq N$, draw $\bm{X}_u^{n,N}$ independently from $K_u$. }
\STATE{\textit{Return:} $\gamma_u^N=N^{-1}\sum_{n=1}^N\delta_{\bm{X}_u^{n,N}}$.}
\ELSE
\FOR{$v$ in $\cal{C}_u$}
\STATE{\textit{Recurse:} set $\gamma_v^N=\text{dac\_smc}(v)$.}
\ENDFOR
\STATE{\textit{Product:} compute $\gamma_{\cal{C}_u}^N=\prod_{v\in\cal{C}_u}\gamma_v^N$.}
\STATE{\textit{Correct:} re-weight $\gamma_{\cal{C}_u}^N$ with $w_{u_-}$ to obtain $\gamma_{u_-}^N$ and its normalization $\pi^N_{u_-}=\gamma_{u_-}^N/\gamma_{u_-}^N(\bm{E}_{\cal{C}_u})$.}
\STATE{\textit{Resample:} for $n\leq N$, draw $\bm{X}_{u_-}^{n,N}$ independently from $\pi_{u_-}^N$.}
\STATE{\textit{Mutate:} for $n\leq N$, draw $X^{n,N}_u$ independently from $K_u(\bm{X}_{u_-}^{n,N},\cdot)$ and set $\bm{X}_u^{n,N}=(X^{n,N}_u,\bm{X}_{u_-}^{n,N})$.}
\STATE{\textit{Return:} $\gamma_u^N=N^{-1}\cal{Z}_{u}^N\sum_{n=1}^N \delta_{\bm{X}_{u}^{n,N}}$ where $\cal{Z}_{u}^N=\gamma_{u_-}^N(\bm{E}_{\cal{C}_u})$.} 
\ENDIF
\newline {\textbf{Note:} One can use~\eqref{eq:adacests} to compute estimates of $\rho_u$, $Z_u$, and $\mu_u$.}
\end{algorithmic}
\caption{dac\_smc$(u)$ for $u$ in $\mathbb{T}$.}\label{alg:adacsmc}
\end{algorithm}

Also similarly to the standard ASMC case (Section~\ref{sec:smc}; cf. Section~\ref{sec:dacvssmc} for a detailed comparison), we obtain approximations to the targets $\rho_u$ and $\mu_u$ by running dac\_smc($u$) in Algorithm~\ref{alg:adacsmc} and applying a simple importance sampling correction:
\begin{align}\label{eq:adacests}
\rho_u^N(d\bm{x}_u) = w_u(\bm{x}_u) \gamma_u^N(d\bm{x}_u),\quad  Z_u^N = \rho_u^N(\bm{E}_u),\quad  \mu_u^N(d\bm{x}_u) = \frac{\rho_u^N(d\bm{x}_u)}{Z_u^N},
\end{align}
where $w_u=d\rho_u/d\gamma_u$ \label{page:inferential_weights}denotes the appropriate weight function. As we will see in Section~\ref{sec:theory}, these approximations have appealing theoretical properties.  The computation time of Algorithm~\ref{alg:adacsmc} can be lowered by  parallelizing the for loop in lines 5--7.
Of course, for the approach to work, we must choose the auxiliary measures $(\gamma_{u_-})_{u\in\mathbb{T}^{\not \partial}}$ and proposal kernels $(K_u)_{u\in\mathbb{T}}$ such that the relevant Radon-Nikodym derivatives exist. To avoid the technical machinery necessary to allow for the possibility of all particles simultaneously being assigned zero weight (cf.\ \cite[Section 7.4]{DelMoral2004}), we further assume that these derivatives are positive everywhere:
\begin{assumption}\label{ass:abscont}For all $u$ in $\mathbb{T}$ and $v$ in $\mathbb{T}^{\not \partial}$,  $\rho_u$ is absolutely continuous w.r.t. $\gamma_u$, $\gamma_{v_-}$ is absolutely continuous w.r.t. $\gamma_{\cal{C}_v}$,  and the Radon-Nikodym derivatives  $w_u:=d\rho_u/d\gamma_u$ and $w_{v_-}:=d\gamma_{v_-}/d\gamma_{\cal{C}_v}$ are  positive everywhere. 
\end{assumption}
To finish this section, we illustrate how one may choose the proposal kernels and obtain the weight functions by revisiting the example of the previous section:
\begin{example}[Simplified  mathematics test dataset, kernels and weights]To apply DaC-SMC to Example~\ref{ex:schools},  we need to pick proposal kernels. In the case of the leaves, the choice is clear: for all $s,y,n$, we generate $\theta_{sy}^{n,N}$ by drawing a sample from a Beta distribution with parameters $m_{sy}+1$ and $M_{sy}-m_{sy}+1$ and mapping the sample through the logit function; so that 
$$K_{sy}(d\theta_{sy})=(M_{sy}+1)\binom{M_{sy}}{m_{sy}}\cal{B}(m_{sy};M_{sy},\alpha(\theta_{sy}))g(d\theta_{sy})$$ 
and the weight functions are constant (for all $s$ in $\cal{S}$, $w_{s_-}=\prod_{y\in\cal{Y}_s}M_{sy}^{-1}$ and, for all $y$ in $\cal{Y}_s$, $w_{sy}=M_{sy}^{-1}$). 
In the case of the other nodes, we would ideally set the kernel to be the appropriate conditional distribution (cf.~Section~\ref{sec:optimalprop}),
\begin{align*}H_s((\theta_{sy})_{y\in\cal{Y}_s},d\theta_s,d\sigma^2_s)&\propto f(d\sigma^2_s;1)g(d\theta_s)\prod_{y\in\cal{Y}_s}\cal{N}(\theta_{sy}-\theta_s;0,\sigma_s^2)\quad\forall s\in\cal{S};\\
H_{\mathfrak{r}}((\theta_s,\sigma^2_s)_{s\in\cal{S}},d\theta_{\mathfrak{r}},d\sigma^2_{\mathfrak{r}})&\propto f(d\sigma^2_{\mathfrak{r}};1)g(d\theta_{\mathfrak{r}})\prod_{s\in\cal{S}}\cal{N}(\theta_{s}-\theta_{\mathfrak{r}};0,\sigma^2_{\mathfrak{r}}).\end{align*}
These, however, are intractable. So, in an effort to approximate them, we note that
$$\prod_{s\in\cal{S}}\cal{N}(\theta_{s}-\theta_{\mathfrak{r}};0,\sigma^2_{\mathfrak{r}})=\frac{e^{\frac{1}{2\sigma^2_\mathfrak{r}}\left[\sum_{s\in\cal{S}}\theta_s\left(\theta_s-\frac{1}{\mmag{\cal{S}}}\sum_{s'\in\cal{S}}\theta_{s'}\right)\right]}}{\sqrt{\mmag{\cal{S}}}\sqrt{2\pi\sigma^2_{\mathfrak{r}}}^{\mmag{S}-1}}\cal{N}\left(\theta_{\mathfrak{r}};\frac{1}{\mmag{\cal{S}}}\sum_{s\in\cal{S}}\theta_{s},\frac{\sigma^2_{\mathfrak{r}}}{\mmag{\cal{S}}}\right)$$
and similarly for $\prod_{y\in\cal{Y}_s}\cal{N}(\theta_{sy}-\theta_s;0,\sigma_s^2)$ and all $s$ in $\cal{S}$; and we instead set
\begin{align*}
K_s((\theta_{sy})_{y\in\cal{Y}_s},d\theta_s,d\sigma^2_s)&:=f(d\sigma^2_s;1)\cal{N}\left(d\theta_s;\frac{1}{\mmag{\cal{Y}_s}}\sum_{y\in\cal{Y}_s}\theta_{sy},\frac{\sigma^2_s}{\mmag{\cal{Y}_s}}\right),\\
K_{\mathfrak{r}}((\theta_s,\sigma^2_s)_{s\in\cal{S}},d\theta_{\mathfrak{r}},d\sigma^2_{\mathfrak{r}})&:=f(d\sigma^2_{\mathfrak{r}};1)\cal{N}\left(d\theta_{\mathfrak{r}};\frac{1}{\mmag{\cal{S}}}\sum_{s\in\cal{S}}\theta_{s},\frac{\sigma^2_{\mathfrak{r}}}{\mmag{\cal{S}}}\right);
\end{align*}
in which case $ w_{\mathfrak{r}_-}=\prod_{s\in\cal{S}}w_s$,
\begin{align*}
&w_s(\theta_s,\sigma^2_s,(\theta_{sy})_{y\in\cal{Y}_s})=\frac{g(\theta_s)e^{\frac{1}{2\sigma^2_s}\left[\sum_{y\in\cal{Y}_s}\theta_{sy}\left(\theta_{sy}-\frac{1}{\mmag{\cal{Y}_s}}\sum_{y'\in\cal{Y}_s}\theta_{sy'}\right)\right]}}{\sqrt{\mmag{\cal{Y}_s}}\sqrt{2\pi\sigma^2_s}^{\mmag{\cal{Y}_s}-1}}\quad\forall s\in\cal{S},\\
&w_{\mathfrak{r}}(\theta_{\mathfrak{r}},\sigma^2_{\mathfrak{r}},(\theta_s,\sigma^2_s,(\theta_{sy})_{y\in\cal{Y}_s})_{s\in\cal{S}})=\frac{g(\theta_{\mathfrak{r}})e^{\frac{1}{2\sigma^2_{\mathfrak{r}}}\left[\sum_{s\in\cal{S}}\theta_s\left(\theta_s-\frac{1}{\mmag{\cal{S}}}\sum_{s'\in\cal{S}}\theta_{s'}\right)\right]}}{\sqrt{\mmag{\cal{S}}}\sqrt{2\pi\sigma^2_{\mathfrak{r}}}^{\mmag{\cal{S}}-1}}.
\end{align*}
To sample $(\theta_s,\sigma^2_s)$ from $K_s((\theta_{sy})_{y\in\cal{Y}_s},d\theta_s,d\sigma^2_s)$, first sample $\sigma^2_s$ from $f(d\sigma^2_s;1)$ and then sample $\theta_s$ from  $\cal{N}(d\theta_s;\mmag{\cal{Y}_s}^{-1}\sum_{y\in\cal{Y}_s}\theta_{sy},\mmag{\cal{Y}_s}^{-1}\sigma^2_s)$. Similarly for $K_{\mathfrak{r}}$.
\end{example}
\subsection{Efficient computation}\label{sec:lcdacsmc}
The main drawback of Algorithm~\ref{alg:adacsmc} is the $\cal{O}(N^{c_u})$ computational cost of resampling from the product-form $\pi_{u_-}^N$ in \eqref{eq:piumn} for a general auxiliary measure $\gamma_{u_-}$ (to draw $N$ samples from $\pi_{u_-}^N$ in $\cal{O}(N^{c_u})$ operations one can, for example, use the alias method~\cite{Walker1977}).  This results in a total algorithmic cost of   $\cal{O}(N^d)$, where $d$ denotes the tree's degree (i.e.\ the largest number of children that any of the tree's nodes possess). 
In some cases, $d$ is small enough that the cost is manageable. For instance,  when the collection of target distributions $(\rho_u)_{u\in\mathbb{T}_u}$ is an artificial construct introduced for computational purposes and the interest lies in the final target, it is common~\cite{ding2018, ding2019,Corneflos2021} to choose  the indexing tree to be binary, in which case  $d=2$   and the algorithm's cost is  $\cal{O}(N^2)$.

In cases where each weight function is bounded above by a known constant, one can lower the cost to $\cal{O}(N)$ using rejection sampling as proposed in~\cite[Section~4.2]{Corneflos2021}. 
Otherwise, one can avoid  the $\cal{O}(N^{c_u})$ cost by  choosing auxiliary measures that factorize. 
For instance, suppose that, as in Example~\ref{ex:schools},  they factorize fully:
$$\gamma_{u_-}:=\prod_{v\in \cal{C}_u}\bar{\gamma}_v\quad\forall u\in\mathbb{T}^{\not \partial},$$
where $(\bar{\gamma}_v)_{v\in\mathbb{T}}$ denotes a collection of measures on the respective spaces $(\bm{E}_v,\bm{\cal{E}}_v)_{v\in\mathbb{T}}$. In this case, the weight function $w_{u_-}$ decomposes into the product
\begin{align*}
w_{u_-}=\frac{d \gamma_{u_-}}{d\gamma_{\cal{C}_u}}=\prod_{v\in \cal{C}_u}\frac{d \bar{\gamma}_v}{d\gamma_v}=:\prod_{v\in \cal{C}_u}\bar{w}_v,
\end{align*}
and the correction step (line 8 in Algorithm~\ref{alg:adacsmc}) breaks down into the following: compute 
$$\gamma_{u_-}^N:=\prod_{v\in \cal{C}_u}\bar{\gamma}_v^N,\quad \pi_{u_-}^N:=\prod_{v\in \cal{C}_u}\bar{\pi}_v^N,$$
where, for all children $v$ of $u$,
\begin{equation}\label{eq:gammauNlc}\bar{\gamma}_v^N:=\bar{w}_v\gamma_v^N=\frac{\cal{Z}_v^N}{N}\sum_{n=1}^N\bar{w}_v(\bm{X}_v^{n,N})\delta_{\bm{X}_v^{n,N}}.\end{equation}
Resampling from $\pi_{u_-}^N:=\bar{\gamma}_v^N/\bar{\gamma}_v^N(\bm{E}_v)$ can then be achieved in $\cal{O}(N)$ operations by independently drawing $N$ samples  from $\bar{\pi}_{u1}^N,\dots,\bar{\pi}_{uc_u}^N$, e.g.\ by applying the alias method~\cite{Walker1977} to each approximation separately, and concatenating them.   In other words, we can replace lines 5--9 in Algorithm~\ref{alg:adacsmc} with those in Algorithm~\ref{alg:adacsmclc} (and obtain the algorithm referred to as `DaC-SMC' in~\cite{lindsten2017divide}, cf.\ Footnote~\ref{footnote}). The running time can then be  lowered further by parallelizing these operations across $u$'s children.

\begin{algorithm}[t]
\begin{algorithmic}[1]
\FOR{$v$ in $\cal{C}_v$}
\STATE{\textit{Recurse:} set $\gamma_v^N:=\text{dac\_smc}(v)$.}
\STATE{\textit{Correct:} compute $\bar{\gamma}_v^N$   in~\eqref{eq:gammauNlc} and $\bar{\pi}^N_{v}:=\bar{\gamma}_v^N/\bar{\gamma}_v^N(\bm{E}_v)$.}
\STATE{\textit{Resample:} for $n\leq N$, $\bm{\bar{X}}_{v}^{n,N}$ independently from $\bar{\pi}_{v}^N$.}
\ENDFOR
\STATE{\textit{Concatenate:} for $n\leq N$, set $\bm{X}_{u_-}^{n,N}:=(\bm{\bar{X}}_{v}^{n,N})_{v\in\cal{C}_u}$.}
\end{algorithmic}
\caption{$\cal{O}(N)$-cost replacement of lines 5--9 in Algorithm~\ref{alg:adacsmc} for fully-factorized $\gamma_{u_-}$.}\label{alg:adacsmclc}
\end{algorithm}

These choices result in DaC-SMC employing extended auxiliary measures with partial product structure:
$$\gamma_u(d\bm{x}_u)=K_u(\bm{x}_{\cal{C}_u},dx_u)\prod_{v\in\cal{C}_u}\bar{\gamma}_v(d\bm{x}_v)\quad\forall u\in\mathbb{T}^{\not \partial}.$$
(I.e.\ if $(Y_u,\bm{Y}_{u1},\dots,\bm{Y}_{uc_u})\sim\pi_u$, then $\bm{Y}_{u1},\dots,\bm{Y}_{uc_u}$ are independent.) If the targets $\rho_u$ and $\mu_u$ do not share the above structure,  then this $\cal{O}(N)$ approach will generally require large $\gamma_u^N\mapsto\rho_u^N$ corrections and, consequently, result in greater estimator variance (which may, or may not, be compensated by the computational gains).   
As suggested in~\cite[Section~4.2]{lindsten2017divide}, one way to mitigate this issue is via tempering~\cite{delmoral2006sequential}. 

Another way to incorporate non-product structure into the auxiliary measures without sacrificing the $\cal{O}(N)$ cost is setting them to be sums of fully-factorized measures:
$$\gamma_{u_-}:=\sum_{i=1}^{I}\prod_{v\in \cal{C}_u}\bar{\gamma}_v^i=:\sum_{i=1}^{I}\gamma_{u_-}^i\quad\forall u\in\mathbb{T}^{\not \partial},$$
where  $(\bar{\gamma}_v^1,\dots,\bar{\gamma}_v^{I})_{v\in\mathbb{T}}$ denote collections of measures on the respective spaces $(\bm{E}_v,\bm{\cal{E}}_v)_{v\in\mathbb{T}}$. In this case, the correction step becomes: return
$$\gamma_{u_-}^N:=\sum_{i=1}^{I}\gamma_{u_-}^{i,N},\quad \pi_{u_-}^N:=\sum_{i=1}^{I}\frac{\gamma_{u_-}^{i,N}(\bm{E}_{\cal{C}_u})}{\gamma_{u_-}^{N}(\bm{E}_{\cal{C}_u})}\pi_{u_-}^{i,N}=:\sum_{i=1}^{I}\omega_{u_-}^{i,N}\pi_{u_-}^{i,N},$$
where  $\gamma_{u_-}^{i,N}:=\prod_{v\in \cal{C}_u}\bar{\gamma}_v^{i,N}$ and $\pi_{u_-}^{i,N}:=\prod_{v\in \cal{C}_u}\bar{\pi}_v^{i,N}$, with 
$\bar{\gamma}_v^{i,N},\bar{\pi}_v^{i,N}$ defined analogously to $\bar{\gamma}_v^{N},\bar{\pi}_v^{N}$ in~\eqref{eq:gammauNlc}. We can then draw $N$ samples from $\pi_{u_-}^N$ in $\cal{O}(N)$ operations by, for instance, drawing $(m_1,\dots,m_I)$ from a multinomial distribution with weights $(\omega_{u_-}^{i,N})_{i=1}^I$ and, for each $i$, generating $m_i$ samples  from $\pi_{u_-}^{i,N}$ as described above for the $I=1$ case.

Similar considerations apply if the auxiliary measures are chosen to be sums of partially factorized functions rather than fully factorized ones. The main difference is that resampling from $\pi_{u_-}^{N}$ has a `conditional aspect' to it and costs  $\cal{O}(N^a)$ where the exponent $1<a<c_u$ depends on the amount of factorization, see Appendix~\ref{app:timevar} for an example.

\subsection{Variants}\label{sec:variants}
To simplify the algorithm's presentation and analysis, we focus throughout on the particular version given in Algorithm~\ref{alg:adacsmc}.
There are, however, many variants that one might wish to consider in different settings and whose analysis may be tackled using straightforward modifications of the arguments given in this paper. 

First off, we have  the low-cost variant referred to as `lightweight mixture resampling' in \cite[Section~4.1]{lindsten2017divide} (see `multiple matching' in \cite[Section 2.2]{LinZCC:2005} for similar ideas) where the product-form estimator $\gamma_{u_-}^N(d\bm{x}_{\cal{C}_u})$  in~\eqref{eq:gammauN} is replaced  with its `incomplete' version,
\begin{equation}\label{eq:incomplete}\frac{\cal{Z}^N_{\cal{C}_{u}}}{\mmag{\cal{M}}}\sum_{\bm{n}\in\cal{M}}w_{u_-}(\bm{X}_{\cal{C}_u}^{\bm{n},N})\delta_{\bm{X}_{\cal{C}_u}^{\bm{n},N}},\end{equation}
with $\cal{M}\subseteq[N]^{c_u}$ indexing a user-chosen subset of permuted particles $\bm{X}_{\cal{C}_u}^{\bm{n},N}$. In particular, by picking $\cal{M}$ carefully, it might be possible to substantially reduce the cost without sacrificing too much of the variance reduction (cf.\ \cite{kong2021design} and references therein for similar feats in the U-statistics literature). While one may opt for more sophisticated approaches (e.g.\ ones along the lines of those in~\cite{kong2021design}), we can offer two simple rules of thumb for choosing $\cal{M}$ that are motivated by the considerations in~\cite[Section~4]{Kuntz2021}: (a) set $\cal{M}$'s size to be such that the cost of resampling from~\eqref{eq:incomplete}'s normalization is comparable to that of generating the unpermuted particles, $(\bm{X}_{u1}^{\bm{n},N},\dots,\bm{X}_{uc_u}^{\bm{n},N})_{n=1}^N$; and (b) minimize the number of over-lapping components in the elements of $\cal{M}$. 
In terms of theoretical analysis for this variant, one can combine the methods in Appendices~\ref{app:lplln}--\ref{app:cross} with the techniques used to study incomplete U-statistics~\cite[Chapter~4.3]{Lee1990} and their generalizations.

We also need not resample at every step and may instead choose to do so adaptively (e.g.\ resample only when the effective sample size drops underneath a predetermined threshold~\cite{Kong1994}). To prevent the number of particles from blowing up, any product-form approximation $\pi_{u_-}^N$  to a normalized auxiliary distribution $\pi_{u_-}$ that is not resampled must be replaced with an $N$ sample one, e.g.\
\begin{equation}\label{eq:standard}\frac{\sum_{n=1}^Nw_{u_-}(\bm{X}_{u1}^{n,N},\dots,\bm{X}_{uc_u}^{n,N})\delta_{(\bm{X}_{u1}^{n,N},\dots,\bm{X}_{uc_u}^{n,N})}}{\sum_{n=1}^Nw_{u_-}(\bm{X}_{u1}^{n,N},\dots,\bm{X}_{uc_u}^{n,N})}\end{equation}
as in standard ASMC.  We should point out here that, for DaC-SMC, and in contrast with SMC/ASMC,  there are benefits to resampling beyond just mitigating weight degeneracy. In particular, by permuting and resampling the particles, we explore areas of the target space that would be  missed were we to only use the $N$ unpermuted particles~\cite[Section~3.1]{Kuntz2021}.

There is also flexibility in the resampling scheme employed: the theoretical analysis in Section~\ref{sec:theory} applies to Algorithm~\ref{alg:adacsmc} in which multinomial resampling is explicitly encoded. But one could extend our analysis along the lines of~\cite{Gerber2019} to cover a much broader class of resampling schemes, including substantially lower-variance ones.  An interesting question of practical importance is how best to efficiently implement low-variance resampling schemes when the auxiliary measures decompose into sums of partially-factorized measures (e.g.\ consider the example in Appendix~\ref{app:timevar}).

One may also wish to implement minor variations to Algorithm~\ref{alg:adacsmc} dictated by the particular target of interest.
For instance, in some cases, it might be much easier to specify the auxiliary measure $\gamma_{u-}$ over the variables taking values in $\bm{E}_u$ than over those taking values in $\bm{E}_{\cal{C}_u}$. 
In this case, mutation would need to be carried out before correction and resampling (line 10 in Algorithm~\ref{alg:adacsmc} before lines 8--9) leading to a slightly different algorithm with essentially the same properties. This approach allows one to specify an analogue of the extended auxiliary distribution independently of the proposal kernel and directly exploit the information it encodes in the resampling step. The cost of doing so is having to perform mutation $N^{c_u}$ times rather than $N$ times, although this does not increase the overall complexity of the basic algorithm (at least for unfactorized auxiliary measures).

Along similar lines, for targets with particular dependence structures, one may use generalizations of product-form estimators (e.g.\ \cite[Section~3.2~and~Appendix~F]{Kuntz2021}) that directly account for these structures; see Appendix~\ref{app:variantex} for an example. The idea is that, by doing so, one can ease the burden placed on the correction steps which, in Algorithm~\ref{alg:adacsmc}, are the sole responsible 
for introducing dependencies among children into the approximations.

Lastly, one need not generate the same number of particles at each node and may instead opt to allocate more computational power to the nodes that prove most problematic (e.g.\ the $u$s whose associated space $E_u$ have greatest dimension), or even take a more sophisticated approach of the sort in~\cite{Lee2018}. One would just have to replace   $\gamma_u^N$ in~\eqref{eq:gammauN} with
$$\left(\prod_{v\in\cal{C}}\frac{\cal{Z}^{N_v}_{v}}{N_v}\right)\sum_{n_1=1}^{N_{u1}}\dots\sum_{n_1=1}^{N_{uc_u}} w_{u_-}(\bm{X}_{u1}^{n_1,N_{u1}},\dots,\bm{X}_{uc_u}^{n_{c_u},N_{uc_u}})\delta_{(\bm{X}_{u1}^{n_1,N_{u1}},\dots,\bm{X}_{uc_u}^{n_{c_u},N_{uc_u}})}(d\bm{x}_{\cal{C}_u})$$
where, for each child $v$ of $u$, $N_v$ denotes the amount of particles resampled at node $v$. 
\section{Theoretical characterization}\label{sec:theory}
We now turn to the main results of the paper showing that Algorithm~\ref{alg:adacsmc} is well-founded. We give an overview of the proofs for these results in Section~\ref{sec:mop}, postponing the full details until Appendices~\ref{app:lplln}--\ref{app:cross}.  To simplify the exposition, we only state results for the particle approximations  to the targets; however, analogous statements for the approximations to the auxiliary measures and their extensions can be extracted from Appendices~\ref{app:lplln}--\ref{app:clt}. Also to keep the exposition simple,  we focus throughout on the case of bounded test functions, writing $\cal{B}_b(S)$ for the space of bounded measurable real-valued test functions on a measurable space $(S,\cal{S})$ and $\norm{\varphi}:=\sup_{x\in S}\mmag{\varphi(x)}<\infty$ for the supremum norm  on $\cal{B}_b(S)$, and we assume that the weight functions are bounded:
\begin{assumption}\label{ass:boundweights}For all $u$ in $\mathbb{T}^{\not \partial}$ and $v$ in $\mathbb{T}$, $w_{u_-}=d\gamma_{u_-}/d\gamma_{\cal{C}_u}$  and $w_v=d\rho_v/d\gamma_v$ are bounded: $||w_{u_-}||<\infty$ and $\norm{w_v}<\infty$.
\end{assumption}
To begin with, Algorithm~\ref{alg:adacsmc} produces strongly consistent estimators for the targets: 
\begin{theorem}[Strong laws of large numbers]\label{THRM:LLN}If Assumptions~\ref{ass:abscont}--\ref{ass:boundweights} are satisfied, $u$ belongs to $\mathbb{T}$, and $\varphi$ belongs to $\cal{B}_b(\bm{E}_u)$, then
\begin{align*}\lim_{N\to\infty}\rho^N_u(\varphi)=\rho_u(\varphi),\quad\lim_{N\to\infty}\mu^N_u(\varphi)=\mu(\varphi),\quad \lim_{N\to\infty}Z^N_u= Z_u,\quad\text{almost surely.}\end{align*}
\end{theorem}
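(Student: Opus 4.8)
The plan is to induct on the tree structure, proving the result from leaves to root. Since $\rho_u^N(\varphi) = \gamma_u^N(w_u\varphi)$, $Z_u^N = \gamma_u^N(w_u)$, and $\mu_u^N(\varphi) = \gamma_u^N(w_u\varphi)/\gamma_u^N(w_u)$, and $w_u$ is bounded (Assumption~\ref{ass:boundweights}), it suffices to establish the strong law $\gamma_u^N(\psi) \to \gamma_u(\psi)$ for all $\psi \in \cal{B}_b(\bm{E}_u)$; the statements for $\rho_u^N$, $Z_u^N$ then follow immediately (using $\gamma_u(w_u) = \rho_u(\bm{E}_u) = Z_u > 0$), and that for $\mu_u^N$ by the continuous mapping theorem applied to the ratio.

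\textbf{Base case (leaves).} For a leaf $u$, $\gamma_u^N = N^{-1}\sum_{n=1}^N \delta_{\bm{X}_u^{n,N}}$ with the $\bm{X}_u^{n,N}$ i.i.d.\ from $K_u = \gamma_u$, so $\gamma_u^N(\psi) \to \gamma_u(\psi)$ a.s.\ is the classical strong law of large numbers.

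\textbf{Inductive step.} Suppose $u$ is a non-leaf node and the result holds for every child $v \in \cal{C}_u$, i.e.\ $\gamma_v^N(\psi_v) \to \gamma_v(\psi_v)$ a.s.\ for bounded $\psi_v$. First I would argue that the product estimator converges: $\gamma_{\cal{C}_u}^N = \prod_{v\in\cal{C}_u}\gamma_v^N \to \gamma_{\cal{C}_u} = \prod_{v\in\cal{C}_u}\gamma_v$ in the sense that $\gamma_{\cal{C}_u}^N(\psi) \to \gamma_{\cal{C}_u}(\psi)$ a.s.\ for all bounded $\psi$ on $\bm{E}_{\cal{C}_u}$. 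This is the genuinely new (product-form) ingredient; one can reduce to product test functions $\psi = \bigotimes_v \psi_v$ by a density/approximation argument (bounded measurable functions on a product space are approximated by finite sums of products, using boundedness for uniform integrability), and for such $\psi$ the estimator factorizes as $\prod_v \gamma_v^N(\psi_v)$, which converges to $\prod_v \gamma_v(\psi_v)$ by the inductive hypothesis and the algebra of a.s.-convergent sequences. Then, since $\gamma_{u_-}^N(d\bm{x}_{\cal{C}_u}) = w_{u_-}(\bm{x}_{\cal{C}_u})\gamma_{\cal{C}_u}^N(d\bm{x}_{\cal{C}_u})$ and $w_{u_-}$ is bounded, $\gamma_{u_-}^N(\psi) = \gamma_{\cal{C}_u}^N(w_{u_-}\psi) \to \gamma_{\cal{C}_u}(w_{u_-}\psi) = \gamma_{u_-}(\psi)$ a.s.; in particular $\cal{Z}_u^N = \gamma_{u_-}^N(\bm{E}_{\cal{C}_u}) \to \cal{Z}_u > 0$, so $\pi_{u_-}^N(\psi) = \gamma_{u_-}^N(\psi)/\cal{Z}_u^N \to \pi_{u_-}(\psi)$ a.s. Next I would handle the resampling and mutation steps exactly as in standard SMC: conditionally on the particle system up to the correction step, $\bm{X}_{u_-}^{1,N},\dots,\bm{X}_{u_-}^{N,N}$ are i.i.d.\ from $\pi_{u_-}^N$, so for bounded $\psi$ on $\bm{E}_{\cal{C}_u}$ the conditional variance of $N^{-1}\sum_n \psi(\bm{X}_{u_-}^{n,N})$ is $O(N^{-1})$ and its conditional mean is $\pi_{u_-}^N(\psi)$; a Borel--Cantelli argument along a subsequence $N_k = k^2$ plus monotonicity/interpolation (or a direct fourth-moment / $L^4$ bound, which is available since $\psi$ is bounded) upgrades this to $N^{-1}\sum_n \psi(\bm{X}_{u_-}^{n,N}) \to \pi_{u_-}(\psi)$ a.s. Extending each path through $K_u$ and using $\gamma_u = \gamma_{u_-}\otimes K_u$ together with $\gamma_u^N = \cal{Z}_u^N \cdot N^{-1}\sum_n \delta_{\bm{X}_u^{n,N}}$, the same conditioning-plus-Borel--Cantelli argument (now with test function $\psi(\bm{x}_u)$, taking conditional expectation over $K_u$ to produce the map $\bm{x}_{u_-}\mapsto \int \psi(x_u,\bm{x}_{u_-})K_u(\bm{x}_{u_-},dx_u)$, which is bounded) gives $N^{-1}\sum_n\psi(\bm{X}_u^{n,N}) \to \pi_{u_-}\otimes K_u(\psi) = \gamma_u(\psi)/\cal{Z}_u$ a.s., hence $\gamma_u^N(\psi) \to \gamma_u(\psi)$ a.s., completing the induction.

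\textbf{Main obstacle.} The only step beyond routine SMC bookkeeping is the convergence of the product-form estimator $\prod_v \gamma_v^N$ to $\prod_v \gamma_v$ for \emph{non-product} test functions; the subtlety is that one cannot simply multiply limits, so the approximation of a general bounded $\psi$ on the product space by finite linear combinations of product functions must be done carefully — uniformly enough that boundedness of everything in sight controls the error, and that the exceptional null sets (one per child, per function in a countable approximating family) can be pooled into a single null set off which all the convergences hold simultaneously. This is presumably where the paper's "novel techniques required to deal with the product-form aspects" (Section~\ref{sec:mop}) come in, and I would expect the detailed argument to go through an $L^p$ or $L^4$ inequality for the product-form estimator (cf.\ Theorem~\ref{THRM:LP}) rather than a bare-hands approximation, since the $L^p$ route makes the Borel--Cantelli step clean and handles the propagation through the tree uniformly.
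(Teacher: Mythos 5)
Your overall architecture (induct from leaves to root, reduce everything to the extended auxiliary measures $\gamma_u^N$, handle correction/resampling/mutation as in standard SMC) matches the paper, and you correctly identify the product step as the crux. But the mechanism you propose for that crux has a genuine gap. Your reduction to product test functions by a ``density/approximation argument'' does not work in the generality of Theorem~\ref{THRM:LLN}: the spaces are arbitrary measurable spaces and $\psi\in\cal{B}_b(\bm{E}_{\cal{C}_u})$ is merely bounded and measurable, so finite sums of products $\sum_i c_i\prod_v\psi_{i,v}$ approximate $\psi$ only in measure-dependent senses (e.g.\ in $L^1(\gamma_{\cal{C}_u})$, or via a monotone-class argument), never uniformly. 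An $L^1(\gamma_{\cal{C}_u})$-small error is useless here because the approximation error must be evaluated under the \emph{random, atomic} measures $\gamma_{\cal{C}_u}^N$, which are typically singular with respect to $\gamma_{\cal{C}_u}$; and a monotone-class/limiting argument fails because a.s.\ convergence for each member of an approximating sequence $\psi_k$ gives no control of $\sup_N\gamma_{\cal{C}_u}^N(\mmag{\psi-\psi_k})$, i.e.\ the truncation error and the null sets are not uniform in $N$. Getting such uniform control is tantamount to proving a quantitative bound for the product-form estimator --- which is exactly the fallback you gesture at, and is how the paper proceeds: it proves the $L^p$ inequality of Theorem~\ref{THRM:LP} inductively through the product, correction, resampling and mutation steps (the product step being the new ingredient, Lemma~\ref{lem:prodlp}, handled by the decomposition $\eta_1^N\times\eta_2^N-\eta_1\times\eta_2=(\eta_1^N-\eta_1)\times(\eta_2^N-\eta_2)+(\eta_1^N-\eta_1)\times\eta_2+\eta_1\times(\eta_2^N-\eta_2)$ together with conditioning on one factor), and then extracts the almost sure limits for each fixed test function via Chebyshev and Borel--Cantelli (Lemma~\ref{lem:borelcantelli}). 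So the a.s.\ statement is obtained per test function from a moment bound, with no density argument at all.

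A secondary point: in your resampling/mutation step, the ``Borel--Cantelli along $N_k=k^2$ plus monotonicity/interpolation'' device is not available, because the particle systems for different $N$ are not nested --- $(\bm{X}_{u_-}^{n,N})_{n\le N}$ is redrawn for each $N$, so there is no partial-sum monotonicity to interpolate with. Your alternative (a conditional fourth-moment bound, giving summable $\cal{O}(N^{-2})$ deviations for each bounded $\psi$) is the correct fix and is essentially what the paper's $p\ge1$ inequalities combined with Lemma~\ref{lem:borelcantelli} (which requires $p>2$) deliver.
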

\begin{proof}
See Appendix~\ref{app:lplln}.
\end{proof}

If the underlying spaces possess nice enough topological properties, the above almost sure pointwise convergence can be strengthened to almost sure weak convergence:
\begin{theorem}[Almost sure weak convergence]\label{THRM:WEAK}If, in addition to Assumptions~\ref{ass:abscont}--\ref{ass:boundweights}, the spaces $(E_u)_{u\in\mathbb{T}}$ are Polish and $(\cal{E}_u)_{u\in\mathbb{T}}$ are the corresponding Borel sigma algebras, then
  \begin{align*}
    \rho_u^N \rightharpoonup \rho_u , \quad
   \mu_u^N \rightharpoonup \mu_u, \quad\text{almost surely},
  \end{align*}
for each $u$ in $\mathbb{T}$, where $\rightharpoonup$ denotes weak convergence as $N\to\infty$.\end{theorem}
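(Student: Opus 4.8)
The plan is to deduce almost sure weak convergence from the almost sure pointwise convergence already established in Theorem~\ref{THRM:LLN}, exploiting separability of the underlying Polish spaces. The key observation is a standard one: if $(S,d)$ is a Polish space, then there is a countable family of bounded continuous functions $(\varphi_k)_{k\in\n}$ that is \emph{convergence determining}, in the sense that a sequence of finite measures $(\nu^N)$ with uniformly bounded total mass converges weakly to $\nu$ if and only if $\nu^N(\varphi_k)\to\nu(\varphi_k)$ for every $k$ (see, e.g., the relevant results in Parthasarathy or Ethier--Kurtz; one can take the $\varphi_k$ to separate points and form an algebra, or use the metric-based construction $\varphi(x)=\max\{0,1-m\,d(x,x_j)\}$ over a countable dense set $\{x_j\}$ and rationals $m$). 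I would state this as a lemma, either citing it or including the short construction, and also record that since $\bm{E}_u=\prod_{v\in\mathbb{T}_u}E_v$ is a finite product of Polish spaces it is itself Polish, so such a countable family $(\varphi^u_k)_{k\in\n}$ exists on each $\bm{E}_u$.

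Next I would run the argument for $\mu_u^N$. For each fixed $k$, Theorem~\ref{THRM:LLN} (applied with the bounded test function $\varphi^u_k$) gives an event $\Omega^u_k$ of probability one on which $\mu_u^N(\varphi^u_k)\to\mu_u(\varphi^u_k)$. Intersecting over $k$, the event $\Omega^u:=\bigcap_{k\in\n}\Omega^u_k$ still has probability one (countable intersection), and on $\Omega^u$ we have $\mu_u^N(\varphi^u_k)\to\mu_u(\varphi^u_k)$ \emph{simultaneously} for all $k$. Since $(\varphi^u_k)$ is convergence determining and $\mu_u^N$, $\mu_u$ are probability measures (mass exactly $1$, so the uniform boundedness hypothesis is trivial), this forces $\mu_u^N\rightharpoonup\mu_u$ on $\Omega^u$, i.e.\ almost surely. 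For the unnormalized measures $\rho_u^N$ one argues identically, with one extra bookkeeping point: the total masses $\rho_u^N(\bm{E}_u)=Z_u^N$ must be uniformly bounded (in $N$) along the sequence for the convergence-determining criterion to apply. But Theorem~\ref{THRM:LLN} also gives $Z_u^N\to Z_u<\infty$ almost surely (and one may simply enlarge $\Omega^u$ to include this convergence event and the convergence $\rho_u^N(\varphi^u_k)\to\rho_u(\varphi^u_k)$ for all $k$), so on this probability-one event $\sup_N Z_u^N<\infty$ and the criterion applies, yielding $\rho_u^N\rightharpoonup\rho_u$. Finally, since $\mathbb{T}$ is finite, intersecting $\Omega^u$ over $u\in\mathbb{T}$ gives a single probability-one event on which all the stated convergences hold simultaneously.

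The main subtlety — rather than a genuine obstacle — is making sure the convergence-determining class is genuinely \emph{countable} and works for finite (not just probability) measures with bounded mass; this is where separability/Polishness is used and is the only place the extra topological hypothesis enters. One must also be a little careful that the null set is chosen \emph{before} quantifying over test functions in the full (uncountable) space $\cal{B}_b(\bm{E}_u)\cap C_b(\bm{E}_u)$: the point of the countable class is precisely that we only need to discard countably many null sets, after which weak convergence for \emph{all} bounded continuous $\varphi$ comes for free from the definition of a convergence-determining family. No new estimates are needed beyond Theorem~\ref{THRM:LLN}; the proof is essentially a measure-theoretic packaging argument, and I would keep it to a short paragraph in the appendix (this matches the paper's statement that it defers the details to Appendices~\ref{app:lplln}--\ref{app:clt}).
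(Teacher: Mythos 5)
Your argument is correct and is essentially the paper's: the paper simply outsources the key step to the equivalence in Berti et al.\ (and the SMC-specific write-up in Schmon et al.), which is proved by exactly the countable convergence-determining-class-plus-countable-intersection-of-null-sets packaging you spell out, together with the same observation that a finite product of Polish spaces is Polish with matching Borel product $\sigma$-algebra. Your handling of the unnormalized measures via $Z_u^N\to Z_u$ almost surely is likewise the intended route, so nothing is missing.
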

\begin{proof}See Appendix~\ref{app:lplln}.\end{proof}
The estimators for the unnormalized targets are unbiased:
\begin{theorem}[Unbiasedness of $(\rho_u^N)_{u\in\mathbb{T}}$]\label{THRM:UNBIAS}
If Assumptions~\ref{ass:abscont}--\ref{ass:boundweights} hold, then
\begin{align*}
\Ebb{\rho^N_u(\varphi)}=\rho_u(\varphi),\quad\Ebb{Z^N_u}=Z_u,\quad\forall N>0,\enskip\varphi\in\cal{B}_b(\bm{E}_u),\enskip u\in\mathbb{T}.
\end{align*}
\end{theorem}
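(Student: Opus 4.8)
The plan is to prove unbiasedness by induction on the tree, working from the leaves to the root, and mimicking the argument used for the unbiasedness of SMC normalizing constant estimators (as in \cite{andrieu2010particle}). The key quantity to track is the extended auxiliary measure estimator $\gamma_u^N$ together with its total mass $\cal{Z}_u^N$; I claim that $\Ebb{\gamma_u^N(\psi)} = \gamma_u(\psi)$ for all $\psi \in \cal{B}_b(\bm{E}_u)$ and all $u \in \mathbb{T}$, and the theorem then follows immediately by taking $\psi = w_u\varphi$ in~\eqref{eq:adacests}, since $\gamma_u(w_u\varphi) = \rho_u(\varphi)$ by definition of $w_u = d\rho_u/d\gamma_u$, and $Z_u^N = \rho_u^N(\bm{E}_u) = \gamma_u^N(w_u)$ with $\gamma_u(w_u) = \rho_u(\bm{E}_u) = Z_u$.

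First I would handle the base case: if $u$ is a leaf, then $\gamma_u^N = N^{-1}\sum_{n=1}^N \delta_{\bm{X}_u^{n,N}}$ with the $\bm{X}_u^{n,N}$ i.i.d.\ from $K_u = \gamma_u$, so $\Ebb{\gamma_u^N(\psi)} = K_u(\psi) = \gamma_u(\psi)$ trivially. For the inductive step at a non-leaf $u$ with children $\cal{C}_u$, I would condition on the particle systems returned by the recursive calls dac\_smc$(v)$, $v \in \cal{C}_u$. By the inductive hypothesis, $\Ebb{\gamma_v^N(\psi_v)} = \gamma_v(\psi_v)$ for each child. The crucial step is to compute $\Ebb{\gamma_u^N(\psi) \mid \cal{F}_{u_-}}$ where $\cal{F}_{u_-}$ is the sigma-algebra generated by the children's particle systems. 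Here the multinomial resampling makes the calculation clean: conditionally on $\gamma_{u_-}^N$ (equivalently on $\pi_{u_-}^N$ and $\cal{Z}_u^N$), the resampled particles $\bm{X}_{u_-}^{n,N}$ are i.i.d.\ from $\pi_{u_-}^N$, each is extended by an independent draw $X_u^{n,N}\sim K_u(\bm{X}_{u_-}^{n,N},dx_u)$, and since $\gamma_u^N = N^{-1}\cal{Z}_u^N\sum_n \delta_{\bm{X}_u^{n,N}}$, taking the conditional expectation over mutation and then over resampling gives $\Ebb{\gamma_u^N(\psi)\mid \cal{F}_{u_-}} = \cal{Z}_u^N \,\pi_{u_-}^N\!\big((K_u\psi)\big) = \gamma_{u_-}^N(K_u\psi) = \gamma_{\cal{C}_u}^N(w_{u_-}\, K_u\psi)$, where $K_u\psi(\bm{x}_{\cal{C}_u}) := \int \psi(x_u,\bm{x}_{\cal{C}_u})K_u(\bm{x}_{\cal{C}_u},dx_u)$ and we used~\eqref{eq:gammauN} and~\eqref{eq:piumn}. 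Note the cancellation of the random normalizer $\cal{Z}_u^N = \gamma_{u_-}^N(\bm{E}_{\cal{C}_u})$ between the mass factor and the normalized measure; this is exactly the mechanism behind unbiasedness.

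It then remains to show $\Ebb{\gamma_{\cal{C}_u}^N(\phi)} = \gamma_{\cal{C}_u}(\phi)$ for $\phi = w_{u_-} K_u\psi$ (which is bounded under Assumption~\ref{ass:boundweights}), i.e.\ that the product-form estimator $\gamma_{\cal{C}_u}^N = \prod_{v\in\cal{C}_u}\gamma_v^N$ is unbiased for $\gamma_{\cal{C}_u} = \prod_{v\in\cal{C}_u}\gamma_v$. Because the recursive calls dac\_smc$(v)$ for distinct children $v$ are run on independent randomness, the particle systems $(\gamma_v^N)_{v\in\cal{C}_u}$ are mutually independent; hence for a product test function $\phi = \otimes_{v}\phi_v$ we get $\Ebb{\gamma_{\cal{C}_u}^N(\otimes_v\phi_v)} = \prod_v \Ebb{\gamma_v^N(\phi_v)} = \prod_v \gamma_v(\phi_v) = \gamma_{\cal{C}_u}(\otimes_v\phi_v)$ by the inductive hypothesis, and this extends to general bounded measurable $\phi$ on the product space by a standard functional monotone class / $\pi$-$\lambda$ argument (boundedness and Assumption~\ref{ass:boundweights} guarantee integrability throughout, so all the expectations and Fubini interchanges are justified). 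Combining this with the conditional computation of the previous paragraph via the tower property closes the induction.

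The main obstacle, I expect, is bookkeeping rather than anything deep: one must be careful to set up the filtration so that, at node $u$, the children's systems are genuinely independent of each other and of the resampling/mutation randomness used at $u$, and one must verify that the inductive hypothesis is applied to the right (bounded, measurable) test functions --- in particular that $w_{u_-}K_u\psi$ is bounded, which is where Assumption~\ref{ass:boundweights} enters, and that $\gamma_v$ is a finite measure so that products and Fubini are legitimate. A secondary subtlety is the extension from product test functions to arbitrary bounded measurable ones on $\bm{E}_{\cal{C}_u}$, handled by the functional monotone class theorem; this is routine but should be stated explicitly.
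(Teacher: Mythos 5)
Your proposal is correct and takes essentially the same route as the paper's own proof: reduce the claim to unbiasedness of $\gamma_u^N$, condition on the children's particle systems so that the random normalizer $\cal{Z}_u^N$ cancels against $\pi_{u_-}^N$ to leave $\gamma_{\cal{C}_u}^N(w_{u_-}K_u\psi)$, and induct up the tree using the mutual independence of the children's systems. The only cosmetic difference is that you establish unbiasedness of the product $\gamma_{\cal{C}_u}^N$ via product test functions plus a functional monotone class extension, whereas the paper gets the same fact by a direct conditional computation over the permuted particles (applying the inductive hypothesis and the tower rule); both are valid.
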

\begin{proof}
See Appendix~\ref{app:unbias}.
\end{proof}
The nonlinearity in the correction step introduces a bias in the estimators for the normalized targets. However, the bias decays linearly with the number of particles $N$: 
\begin{theorem}[Bias estimates for $(\mu_u^N)_{u\in\mathbb{T}}$]\label{THRM:BIAS} If Assumptions~\ref{ass:abscont}--\ref{ass:boundweights} hold, the weight functions are bounded below (i.e., for all $u$ in $\mathbb{T}^{\not \partial}$ and $v$ in $\mathbb{T}$, $w_{u_-}\geq \beta_{u_-}$ and $w_v\geq\beta_v$ for some constants $\beta_{u_-},\beta_v>0$), and $u$ belongs to $\mathbb{T}$, then there exists a constant $C_u<\infty$ such that
\begin{equation*}
\label{eq:bias}
\mmag{\Ebb{\mu^N_u(\varphi)}-\mu_u(\varphi)}\leq\frac{C_u\norm{\varphi}}{N}\quad\forall N>0,\enskip \varphi\in\cal{B}_b(\bm{E}_u).
\end{equation*}
\end{theorem}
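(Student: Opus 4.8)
The plan is to follow the classical particle-filter bias argument (in the spirit of~\cite{olsson2004bootstrap}), exploiting the exact unbiasedness of the unnormalized estimators (Theorem~\ref{THRM:UNBIAS}) together with the $L^2$ bounds of Theorem~\ref{THRM:LP}. Fix $u\in\mathbb{T}$ and $\varphi\in\cal{B}_b(\bm{E}_u)$, and set $\bar\varphi:=\varphi-\mu_u(\varphi)$, so that $\norm{\bar\varphi}\leq 2\norm{\varphi}$ and, crucially, $\rho_u(\bar\varphi)=\rho_u(\varphi)-\mu_u(\varphi)Z_u=0$. Since $\mu_u^N=\rho_u^N/Z_u^N$ and $\rho_u^N(\bar\varphi)=\rho_u^N(\varphi)-\mu_u(\varphi)Z_u^N$, we have the identity $\mu_u^N(\varphi)-\mu_u(\varphi)=\rho_u^N(\bar\varphi)/Z_u^N$. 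Adding and subtracting $1/Z_u$,
\[
\mu_u^N(\varphi)-\mu_u(\varphi)=\frac{\rho_u^N(\bar\varphi)}{Z_u}-\frac{\rho_u^N(\bar\varphi)\,(Z_u^N-Z_u)}{Z_u\,Z_u^N}.
\]
Taking expectations, the first term on the right vanishes since $\Ebb{\rho_u^N(\bar\varphi)}=\rho_u(\bar\varphi)=0$ by Theorem~\ref{THRM:UNBIAS}, so the whole bias equals $-\Ebb{\rho_u^N(\bar\varphi)(Z_u^N-Z_u)/(Z_u Z_u^N)}$ and it remains only to bound this single cross-term.

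To control it, I would first show that the ``bounded below'' hypothesis on the weights propagates up the tree to a \emph{deterministic} lower bound $Z_u^N\geq\underline{Z}_u>0$. Indeed, from~\eqref{eq:adacests} and~\eqref{eq:actualtargetdistributions}, $Z_u^N=\cal{Z}_u^N N^{-1}\sum_{n=1}^N w_u(\bm{X}_u^{n,N})\geq\beta_u\,\cal{Z}_u^N$; from~\eqref{eq:gammauN}, $\cal{Z}_u^N=\gamma_{u_-}^N(\bm{E}_{\cal{C}_u})\geq\beta_{u_-}\,\cal{Z}_{\cal{C}_u}^N=\beta_{u_-}\prod_{v\in\cal{C}_u}\cal{Z}_v^N$; and $\cal{Z}_v^N=1$ whenever $v$ is a leaf. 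A downward induction over $\mathbb{T}_u$ thus gives $Z_u^N\geq\underline{Z}_u:=\beta_u\prod_{w\in\mathbb{T}_u^{\not\partial}}\beta_{w_-}>0$ almost surely. With this in hand, Cauchy--Schwarz yields
\[
\mmag{\Ebb{\mu_u^N(\varphi)}-\mu_u(\varphi)}\;\leq\;\frac{1}{Z_u\,\underline{Z}_u}\;\Ebb{\rho_u^N(\bar\varphi)^2}^{1/2}\;\Ebb{(Z_u^N-Z_u)^2}^{1/2}.
\]
Now $\Ebb{\rho_u^N(\bar\varphi)^2}^{1/2}=\Ebb{(\rho_u^N(\bar\varphi)-\rho_u(\bar\varphi))^2}^{1/2}\leq B_u\norm{\bar\varphi}/\sqrt{N}\leq 2B_u\norm{\varphi}/\sqrt{N}$ by the $p=2$ case of Theorem~\ref{THRM:LP} applied to $\rho_u^N$ (using $\rho_u(\bar\varphi)=0$), and likewise $\Ebb{(Z_u^N-Z_u)^2}^{1/2}=\Ebb{(\rho_u^N(\mathbf{1})-\rho_u(\mathbf{1}))^2}^{1/2}\leq B_u'/\sqrt{N}$. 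Multiplying the two estimates gives the claimed bound with $C_u:=2B_uB_u'/(Z_u\underline{Z}_u)<\infty$.

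The only genuinely new ingredient---and hence the main obstacle---is the deterministic lower bound on $Z_u^N$; this is where the extra ``bounded below'' hypothesis of the theorem enters, and although the product-form recursion makes the induction elementary, one must track how the per-node constants $\beta_w,\beta_{w_-}$ accumulate through the subtree $\mathbb{T}_u$ (and verify the base case at the leaves). Everything else is off-the-shelf: the cancellation of the leading term is Theorem~\ref{THRM:UNBIAS}, and the two $\cal{O}(N^{-1/2})$ factors are instances of Theorem~\ref{THRM:LP}. If one preferred not to assume the weights bounded below, essentially the same estimate follows by splitting the expectation over the events $\{Z_u^N\geq Z_u/2\}$ and $\{Z_u^N<Z_u/2\}$ and invoking a higher-moment bound from Theorem~\ref{THRM:LP} to show the latter event has probability $\cal{O}(N^{-1})$; this is longer but needs no new hypothesis.
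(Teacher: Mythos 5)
Your proof is correct, but it follows a genuinely different route from the paper's. The paper proves the bias bound in the style of Olsson and Ryd\'en: it works inductively up the tree, showing that each step of Algorithm~\ref{alg:adacsmc} inflates the bias of the normalized approximations $\pi_{u}^N$, $\pi_{u_-}^N$ by at most $\cal{O}(N^{-1})$ --- the new ingredient being a product-step lemma (the analogue of your Lemma~\ref{lem:prodlp} for biases), with the correction step handled by a second-order Taylor expansion of $(y_1,y_2)\mapsto y_1/y_2$ whose Lagrange remainder is controlled using the lower bound on $w_{u_-}$; the bound for $\mu_u^N$ is then extracted at the end by the standard ratio argument. You instead give a short global argument: writing $\mu_u^N(\varphi)-\mu_u(\varphi)=\rho_u^N(\bar\varphi)/Z_u^N$ with $\bar\varphi=\varphi-\mu_u(\varphi)$, the exact unbiasedness of $\rho_u^N$ (Theorem~\ref{THRM:UNBIAS}) kills the leading term, Cauchy--Schwarz combined with the $p=2$ case of Theorem~\ref{THRM:LP} (applied to $\rho_u^N(\bar\varphi)$, which has mean zero, and to $Z_u^N$) yields the $\cal{O}(N^{-1})$ cross term, and the lower-bounded weights enter only through the deterministic bound $Z_u^N\geq \beta_u\prod_{w\in\mathbb{T}_u^{\not\partial}}\beta_{w_-}$, which your recursion over the subtree establishes correctly (the leaf case $\cal{Z}_v^N=1$ included). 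Since Theorems~\ref{THRM:UNBIAS} and~\ref{THRM:LP} are proved independently of the bias result, there is no circularity, and your constant $C_u=2B_uB_u'/(Z_u\underline{Z}_u)$ is finite as claimed. What the two approaches buy: yours is considerably shorter and reuses already-established results (and would apply verbatim to $\pi_u^N=\gamma_u^N(\cdot)/\cal{Z}_u^N$ since $\gamma_u^N$ is also unbiased); the paper's inductive route is heavier but produces bias estimates for all the intermediate approximations $\pi_{u_-}^N,\pi_u^N$ along the way, which are of independent interest and mirror the structure of the other inductive proofs in the appendices. Your closing remark on dispensing with the lower-bound hypothesis via the split over $\{Z_u^N\geq Z_u/2\}$ is also plausible, though the good-event term then needs the extra observation that $\Ebb{\mathbf{1}_{\text{bad}}\,\rho_u^N(\bar\varphi)}$ is itself $\cal{O}(N^{-1})$ (e.g.\ by Cauchy--Schwarz with the $\cal{O}(N^{-1})$ bad-event probability), since the exact cancellation only holds for the unrestricted expectation.
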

\begin{proof}
See Appendix~\ref{app:bias}.
\end{proof}
All estimators converge at a rate proportional to the square root of the number of particles:
\begin{theorem}[$L^p$ inequalities]\label{THRM:LP}If Assumptions~\ref{ass:abscont}--\ref{ass:boundweights} hold, then, for each $p\geq 1$ and $u$ in $\mathbb{T}$, then there exist constants $C_u^\rho,C_u^\mu<\infty$ such that
\begin{align*}
\Ebb{\mmag{\rho^N_u(\varphi)-\rho(\varphi)}^p}^{\frac{1}{p}}\leq\frac{C_u^\rho\norm{\varphi}}{N^{1/2}},\quad \Ebb{\mmag{\mu^N_u(\varphi)-\mu_u(\varphi)}^p}^{\frac{1}{p}}\leq\frac{C_u^\mu\norm{\varphi}}{N^{1/2}},
\end{align*}
for all $N>0$ and $\varphi$ in $\cal{B}_b(\bm{E}_u)$. In particular, $\Ebb{\mmag{Z^N_u-Z_u}^p}^{1/p}\leq C_u^\rho/N^{1/2}$ for all $N>0$.
\end{theorem}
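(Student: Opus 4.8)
The plan is to prove the $L^p$ inequalities by an induction up the tree $\mathbb{T}$, exactly mirroring the recursive structure of Algorithm~\ref{alg:adacsmc}, and to handle the product-form aspect of the correction step with a decomposition into a "product of empirical measures" term plus error terms. Concretely, I would work with the (extended) auxiliary approximations $\gamma_u^N$ and $\pi_u^N$ rather than the targets directly: once I have an $L^p$ bound of the form $\Ebb{\mmag{\gamma_u^N(\varphi)-\gamma_u(\varphi)}^p}^{1/p}\leq C\norm{\varphi}N^{-1/2}$ (and similarly for $\pi_u^N$ and $\cal{Z}_u^N$), the bounds for $\rho_u^N(\varphi)=\gamma_u^N(w_u\varphi)$, $Z_u^N=\gamma_u^N(w_u)$ follow immediately from Assumption~\ref{ass:boundweights} since $\norm{w_u\varphi}\leq\norm{w_u}\norm{\varphi}$, and the bound for $\mu_u^N(\varphi)=\rho_u^N(\varphi)/Z_u^N$ follows from the standard ratio argument (write $\mu_u^N(\varphi)-\mu_u(\varphi)=Z_u^{-1}[\rho_u^N(\varphi-\mu_u(\varphi))]+\mu_u^N(\varphi)Z_u^{-1}[Z_u-Z_u^N]$ and use Minkowski together with the fact that $Z_u>0$ by Assumption~\ref{ass:abscont} and $\mmag{\mu_u^N(\varphi)}\leq\norm{\varphi}$).

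The inductive step has three sub-steps corresponding to correct, resample, mutate. \textbf{Base case (leaves).} For $u\in\mathbb{T}^\partial$, $\gamma_u^N$ is the empirical measure of $N$ i.i.d. draws from $K_u$, so the Marcinkiewicz--Zygmund inequality gives $\Ebb{\mmag{\gamma_u^N(\varphi)-\gamma_u(\varphi)}^p}^{1/p}\leq B_p\norm{\varphi}N^{-1/2}$ for a universal constant $B_p$. \textbf{Correction (the product-form step).} Given the inductive hypotheses on $(\gamma_v^N)_{v\in\cal{C}_u}$, I need a bound on $\gamma_{\cal{C}_u}^N(\psi)-\gamma_{\cal{C}_u}(\psi)=\prod_{v\in\cal{C}_u}\gamma_v^N(\psi)-\prod_{v\in\cal{C}_u}\gamma_v(\psi)$ for $\psi$ not necessarily of product form; here I expand the product telescopically, $\prod_v a_v-\prod_v b_v=\sum_v(\prod_{v'<v}b_{v'})(a_v-b_v)(\prod_{v'>v}a_{v'})$, but since $\psi$ is not a tensor product this is not literally valid, so instead I would use the representation of product-form empirical measures as in \cite{Kuntz2021}: condition on all the particle systems except that indexed by one child $v$, observe that $\gamma_{\cal{C}_u}^N$ is then an average over the $N$ draws for child $v$ of the (random) bounded function $\bm{x}_v\mapsto (\prod_{v'\neq v}\gamma_{v'}^N)(\psi(\cdot,\bm{x}_v))$, apply Marcinkiewicz--Zygmund conditionally, and iterate over the children (there are $c_u$ of them, a fixed finite number), using the inductive hypothesis to control each $\gamma_{v'}^N(\bm{E}_{v'})=\cal{Z}_{v'}^N$ and the boundedness of $\psi$. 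Then the reweighting by the bounded $w_{u_-}$ (Assumption~\ref{ass:boundweights}) turns this into a bound on $\gamma_{u_-}^N(\varphi)-\gamma_{u_-}(\varphi)$, and the normalization $\pi_{u_-}^N=\gamma_{u_-}^N/\cal{Z}_u^N$ is handled by the same ratio argument as above (using $\cal{Z}_u>0$). \textbf{Resampling and mutation.} These are the standard SMC steps: conditional on the particles before resampling, the resampled-then-mutated particles are i.i.d. from $\pi_{u_-}^N\otimes K_u$, so Marcinkiewicz--Zygmund conditionally gives $\Ebb{\mmag{\gamma_u^N(\varphi)-\cal{Z}_u^N(\pi_{u_-}^N\otimes K_u)(\varphi)}^p\mid\mathcal{F}}^{1/p}\leq B_p\cal{Z}_u^N\norm{\varphi}N^{-1/2}$, and Minkowski combines this with the correction-step bound on $\cal{Z}_u^N(\pi_{u_-}^N\otimes K_u)(\varphi)-\gamma_u(\varphi)=\gamma_{u_-}^N(K_u\varphi)-\gamma_{u_-}(K_u\varphi)$ after taking expectations (one needs $\Ebb{(\cal{Z}_u^N)^p}$ bounded, which follows from the inductive $L^p$ bound on $\cal{Z}_u^N$ and $\cal{Z}_u<\infty$).

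I expect the main obstacle to be the \emph{correction / product-form step}: unlike in standard SMC, the measure being reweighted is a $c_u$-fold product of independent empirical measures evaluated at a general (non-tensor) test function, so the usual single conditioning argument does not suffice and one must iterate the conditioning over children while keeping careful track of how the constants depend on the (random) normalizing constants $\cal{Z}_v^N$ and on $c_u$. Managing these constants so that they remain finite and uniform in $N$ — in particular bounding moments of products of the $\cal{Z}_v^N$ — is the technically delicate bookkeeping; everything else is a routine adaptation of the Marcinkiewicz--Zygmund-based arguments of \cite{Crisan2002,miguez2013convergence,DelMoral2004}. I would organize the induction so that the statement carried through the tree is the conjunction of $L^p$ bounds for $\gamma_u^N(\varphi)$, $\pi_u^N(\varphi)$, and $\cal{Z}_u^N$, with the bounds for $\rho_u^N$, $Z_u^N$, $\mu_u^N$ read off at the end via the weight-function boundedness and the ratio argument.
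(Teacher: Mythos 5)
Your plan is, in substance, the paper's own proof: an induction up the tree in which each of the product/correction, resampling and mutation steps is shown to preserve an $L^p$ bound (conditional Marcinkiewicz--Zygmund for resample/mutate, bounded weights plus a ratio argument for the reweighting and normalization), with the only genuinely new ingredient being the product step, handled by conditioning on all-but-one child and exploiting independence across children --- exactly the content of the paper's Lemma~\ref{lem:prodlp} and Lemmas~\ref{lem:lpcoa}--\ref{lem:lpmut}. The one organizational difference is that the paper carries the \emph{normalized} approximations $\pi_u^N$ through the tree (Theorem~\ref{thrm:lpapp}) and proves a separate $L^p$ bound for $\cal{Z}_u^N$ by a multinomial expansion and independence (see~\eqref{eq:lpz}), only reassembling $\gamma_u^N=\cal{Z}_u^N\pi_u^N$ and then $\rho_u^N,Z_u^N,\mu_u^N$ at the end; this makes the ``uniformly bound all but one factor'' step deterministic ($\norm{\pi_{v'}^N(\psi)}\leq\norm{\psi}$), whereas your version, carrying the unnormalized $\gamma_u^N$, requires the moment control of products of the random masses $\cal{Z}_{v'}^N$ that you correctly identify as the delicate bookkeeping.

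One concrete imprecision to fix: in the correction step you say you would ``apply Marcinkiewicz--Zygmund conditionally'' to the average over the $N$ draws of a non-leaf child $v$ after conditioning on its siblings' systems. That tool is not available there: conditioning on the siblings (which are independent of child $v$'s system) does not make $\bm{X}_v^{1,N},\dots,\bm{X}_v^{N,N}$ conditionally i.i.d., since their dependence comes from the resampling within $v$'s own subtree. The correct closing move --- and what the paper does in Lemma~\ref{lem:prodlp} --- is to treat the siblings' contribution as a fixed bounded test function (legitimate by independence) and apply the \emph{inductive $L^p$ hypothesis} for $\gamma_v^N$ (or $\pi_v^N$) to that random integrand, iterating pairwise over the children; since your induction already carries exactly these child-level bounds, this is a substitution of tools rather than a new idea. (Also, your displayed identity for $\mu_u^N(\varphi)-\mu_u(\varphi)$ is off by a term $\mu_u(\varphi)(1-Z_u^N/Z_u)$; the standard decomposition $\mu_u^N(\varphi)-\mu_u(\varphi)=Z_u^{-1}[\rho_u^N(\varphi)-\rho_u(\varphi)]+\mu_u^N(\varphi)Z_u^{-1}[Z_u-Z_u^N]$, or the exact identity $\mu_u^N(\varphi)-\mu_u(\varphi)=(Z_u^N)^{-1}\rho_u^N(\varphi-\mu_u(\varphi))$ used in the paper, gives what you want with the same ingredients.)
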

\begin{proof}
See Appendix~\ref{app:lplln}.
\end{proof}
To further characterize the $\sqrt{N}$ rate of convergence of $\rho_u^N$ and $\mu^N_u$, we obtain a central limit theorem for each (given in Theorem~\ref{THRM:CLT} below). To state these, we first need to introduce some notation: given two measurable spaces $(A,\cal{A})$ and $(B,\cal{B})$, a kernel $M:A\times\cal{B}\to[0,\infty)$ from the first to the second, and a test function $\varphi:B\rightarrow \r$, we use $M\varphi$ to denote the real-valued function on $A$ defined by
\begin{align*}
(M\varphi)(a):=\int M(a, db)\varphi(b)\quad \forall a \in A,
\end{align*}
assuming that the above integrals are well-defined. 

As we will see in Theorem~\ref{THRM:CLT}, the asymptotic variances of our estimators decompose into sums of $\mmag{\mathbb{T}_u}$ terms, each one accounting for the variance introduced by the computations carried out in a different descendant $v$ of $u$ (out of convenience, we include $u$ itself among $u$'s descendants). The term corresponding to a particular descendant $v$ involves a kernel   $\Gamma_{v,u}:\bm{E}_v\times\bm{\cal{E}}_{u}\to[0,\infty)$ mapping $\gamma_v$ to $\gamma_u$:
\begin{align}\label{eq:Pivu} \gamma_v(\Gamma_{v,u}\varphi)=\gamma_u(\varphi)\quad \forall \varphi\in\cal{B}_b(\bm{E}_{u}).\end{align}
The kernel $\Gamma_{v,u}$ encapsulates the relationship between a node, $u$, in the tree and one of its descendants, $v$, upon marginalizing out the other descendants of $u$. It is defined recursively: 
\begin{align}\Gamma_{u,u}(\bm{x}_u,d\bm{y}_u)&:=\delta_{\bm{x}_{u}}(d\bm{y}_u),\nonumber\\
\label{eq:Pivu1}\Gamma_{v,u}(\bm{x}_{v},d\bm{y}_{u})&:=\delta_{\bm{x}_{v}}(d\bm{y}_{v})\gamma_{\cal{C}_u}^{\not v}(d\bm{y}_{\cal{C}_u^{\not v}})w_{u_-}(\bm{y}_{\cal{C}_u})K_u(\bm{y}_{\cal{C}_u},dy_u)\quad\forall v\in \cal{C}_u,
\end{align}
where $\gamma_{\cal{C}_u}^{\not v}:=\prod_{r\in\cal{C}_u^{\not v}}\gamma_{r}$ with $\cal{C}_u^{\not v}:=\cal{C}_u\setminus\{v\}$.
For all other descendants $v\neq u$, we set
\begin{equation}\label{eq:Pivu2}\Gamma_{v,u}=\Gamma_{v,r_1}\Gamma_{r_1,r_2}\dots \Gamma_{r_l,u},\end{equation}
where $v,r_1,\dots,r_l,u$  denotes the branch of $\mathbb{T}_u$ connecting $v$ and $u$.   Because $\Gamma_{v,u}$ trivially satisfies~\eqref{eq:Pivu} if $v$ is a child of $u$, it follows from~\eqref{eq:Pivu2} that~\eqref{eq:Pivu} also holds for all other descendants $v$ of $u$. 
Note that the collection $(\Gamma_{u,v})_{u\in\mathbb{T},v\in\mathbb{T}_{u}}$ of all these kernels amounts to a generalization of the semigroup describing the propagation of local sampling errors in standard SMC that is central to much of its analysis  (e.g.\ see~\cite{DelMoral2004}).
\begin{theorem}[Central limit theorems]\label{THRM:CLT}If Assumptions~\ref{ass:abscont}--\ref{ass:boundweights} hold, then, as $N\to\infty$,
\begin{align*}N^{1/2}\left(\rho^N_u(\varphi)-\rho_u(\varphi)\right)&\Rightarrow\cal{N}(0,\sigma^2_{\rho_u}(\varphi)),\quad N^{1/2}\left(\mu^N_u(\varphi)-\mu_u(\varphi)\right)\Rightarrow\cal{N}(0,\sigma^2_{\mu_u}(\varphi)),\end{align*}
for any given $u$ in $\mathbb{T}$ and $\varphi$ in $\cal{B}_b(\bm{E}_u)$, where $\Rightarrow$ denotes convergence in distribution,
\begin{align*}%
\sigma^2_{\rho_u}(\varphi)&:=\sum_{v\in\mathbb{T}_u}\pi_v([\cal{Z}_{v}\Gamma_{v,u}[w_u\varphi]-\rho_u(\varphi)]^2),\\
\sigma^2_{\mu_u}(\varphi)&:=\sum_{v\in\mathbb{T}_u}\pi_v([\cal{Z}_{v}\Gamma_{v,u}[w_uZ_u^{-1}[\varphi-\mu_u(\varphi)]]]^2).
\end{align*}
In particular, $N^{1/2}\left( Z_u^N-Z_u\right)\Rightarrow\cal{N}(0,\sigma^2_{Z_u})$ as $N\to\infty$ with
\begin{equation}\label{eq:sigZ}\sigma^2_{Z_u} := Z_u^2\sum_{v\in\mathbb{T}_u}\pi_v\left(\left[\frac{d\mu_u^v}{d\pi_v}-1\right]^2\right),\end{equation}
where $\mu_u^v$ denotes the $\bm{E}_v$-marginal of $\mu_u$ (i.e.\ $\mu_u^v(A):=\mu_u(A\times E_{\mathbb{T}_u
\backslash \mathbb{T}_v})$ for all $A$ in $\bm{\cal{E}}_v$).
\end{theorem}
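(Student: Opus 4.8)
The plan is to prove the central limit theorem by an induction up the tree, mirroring the strategy used for standard SMC (as in~\cite{DelMoral2004,DelMoral2013}) but augmented with the extra ``product-form'' combination step at each non-leaf node. The natural quantity to track is the scaled error $N^{1/2}(\gamma_u^N(\psi)-\gamma_u(\psi))$ for bounded $\psi$ on $\bm{E}_u$; once a CLT is established for this, the CLTs for $\rho_u^N(\varphi)=\gamma_u^N(w_u\varphi)$, $Z_u^N=\gamma_u^N(w_u)$, and $\mu_u^N(\varphi)=\rho_u^N(\varphi)/Z_u^N$ all follow: the first two by taking $\psi=w_u\varphi$ and $\psi=w_u$ (which are bounded by Assumption~\ref{ass:boundweights}), and the third by the standard delta-method argument $N^{1/2}(\mu_u^N(\varphi)-\mu_u(\varphi))=N^{1/2}(\rho_u^N(\varphi-\mu_u(\varphi))/Z_u^N)$ together with the law of large numbers $Z_u^N\to Z_u$ from Theorem~\ref{THRM:LLN} and Slutsky's theorem. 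The expression~\eqref{eq:sigZ} for $\sigma^2_{Z_u}$ is then obtained by specializing $\sigma^2_{\rho_u}$ to $\varphi\equiv 1$ and rewriting $\cal{Z}_v\Gamma_{v,u}w_u/Z_u$ in terms of $d\mu_u^v/d\pi_v$ using the defining property~\eqref{eq:Pivu}.

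The induction itself proceeds as follows. \emph{Base case:} at a leaf $u$, $\gamma_u^N$ is the empirical measure of $N$ i.i.d.\ draws from $K_u=\gamma_u$, so the classical CLT gives $N^{1/2}(\gamma_u^N(\psi)-\gamma_u(\psi))\Rightarrow\cal{N}(0,\gamma_u([\psi-\gamma_u(\psi)]^2))$, which matches the single $v=u$ term in $\sigma^2_{\rho_u}$ since $\Gamma_{u,u}$ is the identity kernel, $\cal{Z}_u=1$, $\pi_u=\gamma_u$, and $w_u\varphi$ plays the role of $\psi$. \emph{Inductive step:} at a non-leaf $u$ with children $\cal{C}_u$, decompose the error $N^{1/2}(\gamma_u^N(\psi)-\gamma_u(\psi))$ into three parts corresponding to (i) the error inherited from the children, propagated through the correction-by-$w_{u_-}$ and mutation-by-$K_u$ operations; (ii) the fresh resampling error at node $u$; and (iii) the fresh mutation error from drawing the $X_u^{n,N}$. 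Part (i) is handled by the induction hypothesis applied to each child together with the independence of the children's particle systems: the joint vector of children errors is asymptotically a product of independent Gaussians, and one pushes this through the (continuous, affine-after-conditioning) maps given by $w_{u_-}$ and $K_u$ to get a Gaussian with variance $\sum_{v\in\cal{C}_u}\sum_{w\in\mathbb{T}_v}\pi_w([\cal{Z}_w\Gamma_{w,v}(\Gamma_{v,u}\psi)-\cdots]^2)$, which telescopes via~\eqref{eq:Pivu2} into the descendant terms $v\ne u$ of $\sigma^2_{\rho_u}$. Parts (ii) and (iii) are the standard conditional-CLT contributions for a resample step and a mutation step; conditionally on the children's systems they are sums of i.i.d.\ (given the past) mean-zero increments, so a triangular-array/martingale CLT (exactly as in~\cite{DelMoral2004}) applies, and their combined conditional variance converges in probability to $\pi_u([\cal{Z}_u\Gamma_{u,u}\psi - \gamma_u(\psi)]^2)$ --- wait, more precisely to the $v=u$ term of $\sigma^2_{\rho_u}$, after using that $\pi_{u_-}^N\to\pi_{u_-}$ and $\cal{Z}_u^N\to\cal{Z}_u$. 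One then assembles parts (i)--(iii) using asymptotic independence (parts (ii),(iii) are conditionally independent of part (i) given the past, and their limits are centered Gaussian) so the variances add, yielding exactly $\sigma^2_{\rho_u}(\varphi)$ when $\psi=w_u\varphi$.

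The main obstacle I expect is part (i): cleanly tracking how the \emph{product-form} combination $\gamma_{\cal{C}_u}^N=\prod_{v\in\cal{C}_u}\gamma_v^N$ transforms the children's CLTs into a CLT for $\gamma_{u_-}^N$. Unlike the line-graph SMC case, here one is taking a product (not a single marginal) of independent empirical measures and then applying a nonlinear functional $\psi\mapsto\gamma_{\cal{C}_u}^N(w_{u_-}\Gamma'\psi)$; the linearization of this map around the true product measure $\gamma_{\cal{C}_u}$ produces exactly one ``leave-one-child-out'' term $\gamma_{\cal{C}_u}^{\not v}$ per child, which is the source of the kernel structure in~\eqref{eq:Pivu1}, and one must verify the remainder is $o_P(N^{-1/2})$ using the $L^p$ bounds of Theorem~\ref{THRM:LP} applied to each child. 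Getting the bookkeeping right so that the cross-child covariances vanish (they do, by independence of the children's particle systems, once each child error is centered) and so that the per-child variances assemble through the semigroup identity~\eqref{eq:Pivu2} is the delicate part; this is presumably what Appendix~\ref{app:cross} (``cross'' terms) is devoted to. Everything else --- the conditional martingale CLT for resampling/mutation, the tightness/uniform-integrability needed to pass from $L^p$ bounds to convergence in distribution, and the final Slutsky assembly --- is routine adaptation of~\cite{DelMoral2004,DelMoral2013}.
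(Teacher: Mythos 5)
Your outline is correct and shares the paper's skeleton --- reduce everything to a CLT for $\gamma_u^N$ (the paper's Theorem~\ref{THRM:CLTun}), obtain $\rho_u^N$, $Z_u^N$ by taking $\psi=w_u\varphi$, $\psi=w_u$, get $\mu_u^N$ via the identity $\mu_u^N(\varphi)-\mu_u(\varphi)=\tfrac{Z_u}{Z_u^N}\rho_u^N\bigl(Z_u^{-1}[\varphi-\mu_u(\varphi)]\bigr)$ and Slutsky, and recover \eqref{eq:sigZ} from $\cal{Z}_v\Gamma_{v,u}w_u=Z_u\,d\mu_u^v/d\pi_v$ --- and you correctly identified both the leave-one-child-out linearization that generates the kernels in~\eqref{eq:Pivu1} and the fact that the product-of-errors remainder against non-factorizing test functions needs the product $L^2$ inequality of Theorem~\ref{THRM:CROSS} rather than Theorem~\ref{THRM:LP} alone. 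Where you differ from the paper is the assembly: you propose a node-by-node CLT induction (Chopin-style), combining each child's marginal CLT with conditional CLTs for the fresh resampling and mutation noise, whereas the paper first proves, in one stroke, the \emph{joint} asymptotic normality and independence of all local errors $V_v^N$ (Lemma~\ref{lem:Vs}) by ordering the tree's particles through a structure-preserving bijection and applying a single martingale-array CLT, then separately establishes a global decomposition of $\gamma_{\cal{C}_u}^N-\gamma_{\cal{C}_u}$ into propagated local errors plus an $o_P(N^{-1/2})$ remainder (Lemma~\ref{lem:wdecomp}, itself proved inductively with Theorem~\ref{THRM:CROSS}), and finishes with the continuous mapping theorem. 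The one step in your plan that needs more than you give it is precisely the point the paper's ordering trick is designed to dispatch: ``parts (ii),(iii) are conditionally independent of part (i), so the variances add'' requires a genuine joint-convergence argument (e.g.\ an induction on conditional characteristic functions, with the conditional variance of the fresh noise shown to converge in probability via the $L^p$ bounds), since marginal CLTs for the inherited error and conditional CLTs for the new noise do not by themselves yield convergence of the sum. This is fixable along standard lines, so your route would work, but the paper's single-martingale formulation avoids having to re-run that conditional argument at every level of the tree.
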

\begin{proof} See Appendix~\ref{app:clt}.
\end{proof}
\subsection{Methods of proof}\label{sec:mop}
The approximations to the targets are obtained from those to the extended auxiliary measures via a single importance sampling step, cf.\ \eqref{eq:adacests}. Hence, just as most results for ASMC are easily extracted from those for SMC~\cite{johansen2008note}, the fundamental objects we need to analyze here are the approximations to the extended auxiliary measures.

To this end, note that the main difference between DaC-SMC (Algorithm~\ref{alg:adacsmc}) and standard ASMC (Algorithm~\ref{alg:asmc}) is the correction step: in the former case, we employ the product-form estimator~\eqref{eq:gammauN} while, in the latter, we instead use the usual estimator~\eqref{eq:asmc}. Consequently, our proofs for the above results  combine well-known methods previously used to establish analogous results for standard SMC and ASMC with novel techniques that control the errors introduced by the product-form estimators embedded within DaC-SMC.

At a conceptual level, we expand the approximation error into a sum of products of `local errors', each accounting for the approximations made at a different node, and we control them separately (in particular, we show that, just as in the standard case, these errors are $\cal{O}(N^{-1/2})$). However, in contrast with standard SMC and ASMC whose local errors only get `propagated forward in time', those of DaC-SMC get multiplied together as we move up the tree. Consequently, we end up with a far greater number of terms in our expansion than normal.  Herein lies the main novelty of our analysis: dispatching these extra products-of-local-errors terms. Because the products are taken over nodes on separate branches, the errors in the products are independent of each other. Hence, the product of $k$ of these $\cal{O}(N^{-1/2})$ errors should intuitively be $\cal{O}(N^{-k/2})$.

More specifically, we begin by proving the $L^p$ inequalities (Theorem~\ref{THRM:LP}). To do so, we follow the iterative approach taken in \cite{Crisan2002,miguez2013convergence}: we call on  the Marcinkiewicz-Zygmund inequality (\cite[Lemma~7.3.3]{DelMoral2004}) to obtain $L^p$ inequalities for the empirical distributions of the particles indexed by the tree's leaves and we show, step-by-step, that the algorithm preserves these inequalities. The resampling and mutation steps then follow from arguments similar to those in~\cite{Crisan2002,miguez2013convergence}. For the correction step, we need to do some extra work and show that the product $\gamma_{\cal{C}_u}^N=\prod_{v\in\cal{C}_v}\gamma_v^N$ also maintains the inequalities. That is, we need to prove that the $L^p$ norm of the product of the errors is $\cal{O}(N^{-1/2})$; something we do roughly in Lemma~\ref{lem:prodlp} by  using the boundedness of the test and weight functions to uniformly bound all but one of the errors in the product, and exploiting the $\cal{O}(N^{-1/2})$ size of the remaining error. To complete the argument for the correction step, we then resume with the approach of~\cite{Crisan2002,miguez2013convergence} and show that the re-weighting in~\eqref{eq:gammauN} preserves the inequalities. Next, we extract the laws of large numbers (Theorem~\ref{THRM:LLN}) from the $L^p$ inequalities using the usual approach involving Chebyshev's inequality and the Borel-Cantelli lemma (e.g.\ \cite[p.~17]{boustati2020generalised}). The weak convergence (Theorem~\ref{THRM:WEAK}) then follows using standard techniques described in detail for SMC in \cite[Section S1]{Schmon2021}.

To prove the unbiasedness in Theorem~\ref{THRM:UNBIAS}, we generalize the arguments used in \cite[Section~16.4.1]{chopin2020} to establish the analogous result for standard SMC. The proof also works recursively: we show that $\gamma^N_{u_-}$ is an unbiased estimator for $\gamma_{u_-}$ if $\gamma^N_{v_-}$ is a unbiased estimator for $\gamma_{v_-}$ for each non-leaf child $v$ of $u$. The unbiasedness of $\rho_u^N$ then follows with a simple use of the tower property.

To obtain the bias bounds in Theorem~\ref{THRM:BIAS}, we follow the approach of~\cite{olsson2004bootstrap}. This argument also works its way  through the algorithm step by step, this time showing that each step increases the bias by at most $\cal{O}(N^{-1})$. Just as with the $L^p$ inequalities, our innovation  here (Lemma~\ref{lem:prodbias}) deals with the product $\gamma_{\cal{C}_u}^N=\prod_{v\in\cal{C}_v}\gamma_v^N$. As in that case, we obtain  the necessary bias bound for the product  by uniformly bounding the bias of all but one of the approximations in the product and using a previously derived $\cal{O}(N^{-1})$ bound for the remaining approximation.

Our proof for the central limit theorems (Theorem~\ref{THRM:CLT}) follows the conceptual approach outlined above and extends the arguments in~\cite[Chapter~9]{DelMoral2004}. Similarly as in that chapter, we obtain an expression for the approximation error in terms of local errors (and their `propagations up the tree'), the difference being that the expression is not just a  sum of the local errors, but of their products too. An argument similar to~\cite[Corollary~9.3.1]{DelMoral2004} then shows that the local errors are asymptotically normal and independent. Hence, if we are able to demonstrate that their products are $o(N^{-1/2})$, then the CLTs  follow as in~\cite[p.~301]{DelMoral2004} by applying Slutsky's theorem and the continuous mapping theorem, and exploiting the fact that a linear combination of independent normal random variables is normal. To this end, we prove the following product version of the $L^2$ inequality in Theorem~\ref{THRM:LP}:

\begin{theorem}[Product $L^2$ inequality]\label{THRM:CROSS} If, in addition to Assumptions~\ref{ass:abscont}--\ref{ass:boundweights}, $u$ and $v$ lie in separate branches (i.e.\ $u\not\in\mathbb{T}_v$ and $v\not\in\mathbb{T}_u$), then there exists a constant $C_{u,v}<\infty$ such that
$$\Ebb{(\gamma^N_u-\gamma_u)\times(\gamma_v^N-\gamma_v)(\varphi)^2}^{\frac{1}{2}}\leq \frac{C_{u,v}\norm{\varphi}}{N}\quad\forall N>0,\enskip \varphi\in\cal{B}_b(\bm{E}_u\times\bm{E}_v).$$
\end{theorem}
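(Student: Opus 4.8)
The plan is to reduce everything to the finite-sample error decomposition that underlies the proof of Theorem~\ref{THRM:CLT}. Write $\gamma_u^N-\gamma_u=\sum_{w\in\mathbb{T}_u}\Delta_w^N$ and $\gamma_v^N-\gamma_v=\sum_{w'\in\mathbb{T}_v}\Delta_{w'}^N$, where $\Delta_w^N$ is the push-forward to $\bm{E}_u$ of the local sampling error committed at node $w$ (and similarly for $w'$). I would lean on two features of this decomposition. First, each $\Delta_w^N$ is a function of the random variables drawn at the nodes of $\mathbb{T}_u$ alone; since $u$ and $v$ lie on separate branches, $\mathbb{T}_u\cap\mathbb{T}_v=\emptyset$, so the families $(\Delta_w^N)_{w\in\mathbb{T}_u}$ and $(\Delta_{w'}^N)_{w'\in\mathbb{T}_v}$ are independent. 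Second, conditionally on the $\sigma$-algebra $\mathcal{G}_{<w}$ generated before $w$ resamples and mutates, $\Delta_w^N(\psi)$ is an average of $N$ i.i.d., centred, bounded terms; crucially, I would carry the \emph{integrated} form of the resulting variance estimate, $\Ebb{\Delta_w^N(\psi)^2\mid\mathcal{G}_{<w}}\leq N^{-1}\eta_w^N\bigl((\Gamma^N_{w,u}\psi)^2\bigr)$, where $\eta_w^N$ is the ($\mathcal{G}_{<w}$-measurable) conditional law of the fresh samples at $w$ and $\Gamma_{w,u}^N$ the corresponding finite-sample propagation kernel, whose mass is a product of the uniformly bounded weight functions of Assumption~\ref{ass:boundweights} and normalising-constant estimates controlled in $L^p$ by Theorem~\ref{THRM:LP}.

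Granting this, I would expand $(\gamma_u^N-\gamma_u)\times(\gamma_v^N-\gamma_v)(\varphi)=\sum_{w\in\mathbb{T}_u}\sum_{w'\in\mathbb{T}_v}(\Delta_w^N\times\Delta_{w'}^N)(\varphi)$; by Minkowski's inequality it suffices to bound each of the at most $\mmag{\mathbb{T}}^2$ cross terms by $C_{w,w'}\norm{\varphi}/N$ in $L^2$. Fixing a pair, write $(\Delta_w^N\times\Delta_{w'}^N)(\varphi)=\Delta_w^N(g)$ with $g(\bm{x}_u):=\Delta_{w'}^N(\varphi(\bm{x}_u,\cdot))$. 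Conditioning first on everything except the fresh samples at $w$ (under which $g$, $\eta_w^N$ and $\Gamma_{w,u}^N$ are all measurable), the integrated bound gives $\Ebb{\Delta_w^N(g)^2\mid\cdot}\leq N^{-1}\eta_w^N((\Gamma_{w,u}^N g)^2)$. Taking expectations and then conditioning on all of the randomness internal to $\mathbb{T}_u$, the kernel $\Gamma_{w,u}^N$ and measure $\eta_w^N$ freeze while $g$, a function of the $\mathbb{T}_v$-randomness alone, stays independent of them, so Fubini replaces $(\Gamma_{w,u}^N g)(\bm{x}_w)^2$ inside the $\eta_w^N$-integral by its $\mathbb{T}_v$-expectation. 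For each frozen $\bm{x}_w$, $(\Gamma_{w,u}^N g)(\bm{x}_w)=\Delta_{w'}^N\bigl(\bm{x}_v\mapsto\int\Gamma_{w,u}^N(\bm{x}_w,d\bm{x}_u)\varphi(\bm{x}_u,\bm{x}_v)\bigr)$ is a single local error on $\mathbb{T}_v$ applied to a bounded function of supremum norm at most $\norm{\Gamma_{w,u}^N(\bm{x}_w,\cdot)}_{\mathrm{TV}}\norm{\varphi}$, so the $L^2$ estimate for local errors bounds its second moment by $N^{-1}C'\norm{\Gamma_{w,u}^N(\bm{x}_w,\cdot)}_{\mathrm{TV}}^2\norm{\varphi}^2$. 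Collecting the two $N^{-1}$ factors yields $\Ebb{\Delta_w^N(g)^2}\leq N^{-2}C'\norm{\varphi}^2\,\Ebb{\eta_w^N(\norm{\Gamma_{w,u}^N(\cdot,\cdot)}_{\mathrm{TV}}^2)}$, the last expectation being finite uniformly in $N$ by Assumption~\ref{ass:boundweights} and Theorem~\ref{THRM:LP}; summing over the finitely many pairs gives the claim.

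The main obstacle is precisely the second ingredient: the cruder sup-norm bound $\Ebb{\Delta_w^N(\psi)^2}\leq N^{-1}C\norm{\psi}^2$ -- all that is needed for Theorem~\ref{THRM:LP} -- does not suffice, because after conditioning on the $\mathbb{T}_v$-side it would leave one needing $\Ebb{\sup_{\bm{x}_u}\Delta_{w'}^N(\varphi(\bm{x}_u,\cdot))^2}=O(N^{-1})$, which is false in general (a worst-case $\bm{x}_u$ can align with the full $O(1)$ fluctuation of the empirical measure). One therefore has to propagate the $\eta_w^N$-integrated form of the variance estimate -- together with uniform-in-$N$ moment control on the masses $\norm{\Gamma_{w,u}^N(\bm{x}_w,\cdot)}_{\mathrm{TV}}$ -- through every correction, resampling and mutation step; this is a sharpening of the bookkeeping already carried out for Theorem~\ref{THRM:LP}, after which cross-branch independence and two applications of Fubini finish the proof. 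An equivalent route is a direct induction up $\mathbb{T}$: at a non-leaf $u$ one shows the correct/resample/mutate steps turn an $O(N^{-1})$ bound for the children's errors paired against $\gamma_v^N-\gamma_v$ into one for $\gamma_u^N-\gamma_u$, the base case being the leaf computation, i.e.\ the degenerate $V$-statistic estimate underlying \cite[Lemma~1]{Kuntz2021}.
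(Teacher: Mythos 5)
Your primary route is viable and genuinely different in organisation from the paper's. The paper never writes down a global local-error decomposition at finite $N$; instead it proves the product $L^2$ inequality by a simultaneous induction up both subtrees, showing that each algorithmic operation preserves it — a product step over children (Lemma~\ref{lem:crosscoal}, via a multinomial expansion and conditioning), a correction step (Lemma~\ref{lem:crosscorrect}), a resampling step (Lemma~\ref{lem:crossres}) and a mutation step (Lemma~\ref{lem:crossmut}) — with base case the i.i.d.\ bound of \cite[Lemma~1]{Kuntz2021} at pairs of leaves, and then transfers the inequality from the $\pi^N$'s to the $\gamma^N$'s using cross-branch independence and the $L^p$ control of the $\cal{Z}^N$'s. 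The mechanism you isolate as the crux — keeping the conditional variance in integrated form so that, once one branch is frozen, the other branch's error is applied to a fixed bounded test function and independence plus a one-branch $L^2$ bound can be invoked — is exactly what the paper's resampling and mutation lemmas achieve through their case-by-case conditional second-moment computations (e.g.\ the $i_1\neq i_2$, $j_1=j_2$ case in Lemma~\ref{lem:crossres}); your remark that a crude sup-norm bound on the other branch would not suffice is correct and well taken. The ``equivalent route'' you sketch at the end is, for all practical purposes, the paper's proof. What your route buys is a single unified cross-term estimate rather than four preservation lemmas; what the paper's buys is that it never has to construct or control random propagation kernels explicitly.

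One caution about your starting point: the identity $\gamma_u^N-\gamma_u=\sum_{w\in\mathbb{T}_u}\Delta_w^N$ is exact only if the $\Delta_w^N$ are defined by a hybrid telescoping in which the propagation is through the finite-sample (random) kernels you denote $\Gamma_{w,u}^N$, i.e.\ each term carries empirical measures of sibling branches as coefficients. If one instead propagates with the deterministic kernels $\Gamma_{w,u}$ of~\eqref{eq:Pivu1}, as in the CLT decomposition of Lemma~\ref{lem:wdecomp}, the expansion acquires a remainder consisting of products of errors — and controlling such products is precisely the content of Theorem~\ref{THRM:CROSS}, so that version of the argument would be circular. Since you do work with $\Gamma_{w,u}^N$, the plan stands, but constructing the exact telescoping, verifying the conditional centring of each $\Delta_w^N$, and establishing uniform-in-$N$ $L^2$ control of the random kernel masses (from Assumption~\ref{ass:boundweights} and the machinery behind Theorem~\ref{THRM:LP}) is where most of the work sits; this is the bookkeeping you defer, and it is comparable in length to the paper's inductive lemmas.
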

\begin{proof}See Appendix~\ref{app:cross}.\end{proof}
To the best of our knowledge, this theorem and its proof are unprecedented in the SMC literature. To argue it, we use an inductive approach reminiscent of those we take for the $L^p$ inequalities,  unbiasedness, and bias bounds: one by one, we show that each step of the algorithm preserves the above inequality. For the base case dealing with empirical distributions of the particles indexed by the tree's leaves, we call on~\cite[Lemma~1]{Kuntz2021}.
\subsection{Optimal intermediate targets, auxiliary measures, and proposal kernels}\label{sec:optimalprop} 

While the optimal choice of intermediate targets, auxiliary measures, and proposal kernels generally depends on the particular average we are interested in estimating, there is one choice that leads to zero variance estimates of the final target's normalizing constant $Z_{\mathfrak{r}}$. In particular, suppose that the final target $\rho_{\mathfrak{r}}$  and the underlying space $(\bm{E}_{\mathfrak{r}},\bm{\cal{E}}_{\mathfrak{r}})$ are nice enough (e.g.\ $(\bm{E}_{\mathfrak{r}},\bm{\cal{E}}_{\mathfrak{r}})$ is Borel~\cite[Theorem~8.5]{Kallenberg2021}, see also~\cite{Faden1985}) that, for each $u$ in $\mathbb{T}^{\not \partial}$, there exists a regular conditional probability distribution   mapping the $\bm{E}_{\cal{C}_u}$-marginal $\mu_{\mathfrak{r}}^{u_-}$ (i.e.\ $\mu_{\mathfrak{r}}^{u_-}(A):=\mu_{\mathfrak{r}}(A\times  \bm{E}_{\mathbb{T}\backslash(\mathbb{T}_u^{\not u})})$ for all $A$ in $\bm{\cal{E}}_{\cal{C}_u}$) to its $\bm{E}_u$-marginal $\mu_{\mathfrak{r}}^u$. 
That is, a Markov kernel $M_u:\bm{E}_{\cal{C}_u}\times \cal{E}_u\to[0,1]$ satisfying $\mu_{\mathfrak{r}}^{u_-}\times M_u= \mu_{\mathfrak{r}}^u$. In this case,
setting
$$\mu_{u}:=\mu_{\mathfrak{r}}^u,\quad K_u:=M_u\enskip\forall u\in\mathbb{T},\quad \pi_{u_-}:=\mu_{\mathfrak{r}}^{u_-}\enskip\forall u\in\mathbb{T}^{\not\partial},$$
we have that 
$$\pi_v=\pi_{v_-}\times M_v=\mu_{\mathfrak{r}}^{v_-}\times M_v=\mu_{\mathfrak{r}}^{v}=\mu_u^v\quad\forall v\in\mathbb{T}_u,\enskip u\in\mathbb{T}.$$
It follows from~\eqref{eq:sigZ} that our estimator $Z_{\mathfrak{r}}^N$ for $Z_{\mathfrak{r}}$  achieves zero (asymptotic) variance (and similarly for $Z_u^N$  for all other nodes $u$). The above choices generalize those for which ASMC yields zero-variance estimates of the marginal likelihood in the context of filtering, cf.\ \cite[Proposition~2]{Guarniero2017}. Note also that, just as in the ASMC case, the normalizing constants in the intermediate targets auxiliary measures are immaterial and we are free to choose them as we find convenient: something unsurprising given that they do not influence the resampling operation and ultimately cancel in the computation of the normalizing constant estimates. 
However, perhaps a bit unexpectedly, and in contrast with the ASMC case, these choices do not necessarily lead to zero \emph{finite-sample} variance estimates for the normalizing constants. In particular, given the above,
\begin{align*}&w_u=\frac{d\rho_u}{d\gamma_u}=\frac{Z_u}{\cal{Z}_u}\frac{d\mu_u}{d\pi_u}=\frac{Z_u}{\cal{Z}_u},\quad w_{u_-}=\frac{d\gamma_{u_-}}{d\gamma_{\cal{C}_u}}=\frac{\cal{Z}_u}{\cal{Z}_{\cal{C}_u}}\frac{d\pi_{u_-}}{d\pi_{\cal{C}_u}}=\frac{\cal{Z}_u}{\cal{Z}_{\cal{C}_u}}\frac{d\mu^{u_-}_{\mathfrak{r}}}{d\prod_{v\in\cal{C}_u} \mu^{v}_{\mathfrak{r}}},\\
&\Rightarrow Z^N_u=\gamma_u^N(w_u)=\frac{Z_u}{\cal{Z}_u}\cal{Z}_u^N=\frac{Z_u}{\cal{Z}_u}\gamma_{\cal{C}_u}^N(w_{u_-})=\frac{Z_u}{\cal{Z}_{\cal{C}_u}}\gamma_{\cal{C}_u}^N\left(\frac{d\mu^{u_-}_{\mathfrak{r}}}{d\prod_{v\in\cal{C}_u} \mu^{v}_{\mathfrak{r}}}\right),\end{align*}
and the rightmost term will not be constant unless $\mu^{u_-}_{\mathfrak{r}}=\prod_{v\in\cal{C}_u} \mu^{v}_{\mathfrak{r}}$. On the other hand, were this to be the case for all nodes $u$, we could iterate the above down the tree to find that $Z^N_u=Z_u$ for all $u$. 

In short, the above choices lead to zero-finite-sample-variance estimates for the normalizing constants if and only if $\mathbb{T}$ precisely matches the target's dependence structure. Some insight as to why this might be the case can be gleaned by comparing the asymptotic and finite variance expressions given in~\cite[Theorem~1]{Kuntz2021} for product-form estimators. In essence, these choices ensure that the `one-dimensional' marginals of the sampling distribution $\pi_u$ at node $u$ coincide with those of $\mu_u$ and 
 negate the corresponding $\cal{O}(N^{-1/2})$ terms in the finite-sample variance expansions (i.e.\ those also featuring in the asymptotic variance expansions). However, the sampling distribution factorizes over $u$'s children and, unless the target does so too, the higher-dimensional marginals are mismatched and the corresponding  $o(N^{-1/2})$ terms in the expansion persevere.

Just as is the case in standard ASMC~\cite{Guarniero2017}, the above choices can rarely be implemented in practice: with few exceptions, $\mu_{\mathfrak{r}}$'s marginals  cannot be computed explicitly  and the conditional distributions are unknown (much less can be sampled from). A more common situation in practice is that where the intermediate targets are fixed and we seek to choose only the proposal kernels and/or auxiliary measures. In standard ASMC (cf.\ \cite[Chapter~10]{chopin2020} and references therein), this is typically done by deriving the `locally optimal' proposal kernel and auxiliary measure for each step of the algorithm (i.e.\ those that minimize the finite-sample variance of the normalizing constant estimate  obtained at that step) and, if these prove intractable, approximating them as best as one can. To extend this approach to DaC-SMC, we provide the following generalization of the characterization for the locally optimal kernels and measures previously obtained for SMC and ASMC (e.g.\ see  \cite[Theorems~10.1--10.2]{chopin2020}).
\begin{theorem}\label{THRM:LOCALLY}Fix any $u$ in $\mathbb{T}^{\not\partial}$ and $N>0$, and let $\cal{S}$ denote the set of all $(\gamma_{u_-},K_u)$ pairs satisfying    Assumption~\ref{ass:abscont} and $\cal{S}(K_u)$ and $\cal{S}(\gamma_{u_-})$ be its slices:
\begin{align*}&\cal{S}:=\{(\gamma_{u_-},K_u):\gamma_{u_-}\sim \gamma_{\cal{C}_u},\enskip \rho_u\sim \gamma_{u_-}\times K_u\},\quad\cal{S}(K_u):=\{\gamma_{u_-}:(\gamma_{u_-},K_u)\in\cal{S}\}\enskip\forall K_u,\\
&\cal{S}(\gamma_{u_-}):=\{K_u:(\gamma_{u_-},K_u)\in\cal{S}\}\enskip\forall \gamma_{u_-}.
\end{align*}
If $f(\gamma_{u_-},K_u)$ denotes $\textrm{Var}(Z_u^N)$ written explicitly as a function of $(\gamma_{u_-},K_u)$, then, for all $(\gamma_{u_-},K_u)$~in~$\cal{S}$ and $C>0$,
\begin{align}
C\sqrt{K_u\omega_u(K_u)^2}\rho_{\cal{C}_u}&\in\arginf_{\gamma\in\cal{S}(K_u)}f(\gamma,K_u),\label{eq:locally11}\\
M_u&\in\arginf_{K\in\cal{S}(\gamma_{u_-})}f(\gamma_{u_-},K),\label{eq:locally12}\\
(C\mu_u^{u_-},M_u)&\in\arginf_{(\gamma,K)\in\cal{S}}f(\gamma,K),\label{eq:locally2}
\end{align}
where, assuming that it exists, $M_u$ denotes the regular conditional probability distribution mapping $\mu_u^{u_-}$ to $\mu_u$ (i.e.\ $\mu_u^{u_-}\times M_u=\mu_u$), $\omega_u(K_u):=d\rho_u/d\rho_{\cal{C}_u}\times K_u$, and
$$(K_u\omega_u(K_u)^2)(\bm{x}_{\cal{C}_u})=\int K_u(\bm{x}_{\cal{C}_u},dx_u)\omega_u(K_u)(\bm{x}_{\cal{C}_u},x_u)^2\quad\forall \bm{x}_{\cal{C}_u}\in\bm{E}_{\cal{C}_u}.$$
\end{theorem}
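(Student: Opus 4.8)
The plan is to reduce $f(\gamma_{u_-},K_u)=\mathrm{Var}(Z_u^N)$ to a single functional of $(\gamma_{u_-},K_u)$ that can be minimized by hand. Write $\cal{G}$ for the $\sigma$-algebra generated by the children's approximations, equivalently by $\gamma_{\cal{C}_u}^N=\prod_{v\in\cal{C}_u}\gamma_v^N$; since these are produced by the recursions on the subtrees rooted at $u$'s children, neither $\cal{G}$ nor the law of $\gamma_{\cal{C}_u}^N$ depends on $(\gamma_{u_-},K_u)$. Conditionally on $\cal{G}$, the particles $(\bm{X}_{u_-}^{n,N},X_u^{n,N})_{n\le N}$ are i.i.d.\ with law $\pi_{u_-}^N\otimes K_u$ and $Z_u^N=\tfrac{\cal{Z}_u^N}{N}\sum_{n\le N} w_u(\bm{X}_u^{n,N})$, so
\begin{gather*}
\mathbb{E}[Z_u^N\mid\cal{G}]=(\gamma_{u_-}^N\otimes K_u)(w_u),\\
\mathrm{Var}(Z_u^N\mid\cal{G})=\tfrac1N\big[(\cal{Z}_u^N)^2(\pi_{u_-}^N\otimes K_u)(w_u^2)-((\gamma_{u_-}^N\otimes K_u)(w_u))^2\big].
\end{gather*}

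The crux is then the identity $w_{u_-}\,(K_u w_u)=\tfrac{d\gamma_{u_-}}{d\gamma_{\cal{C}_u}}\tfrac{d\rho_u^{u_-}}{d\gamma_{u_-}}=\tfrac{d\rho_u^{u_-}}{d\gamma_{\cal{C}_u}}=:h$, where $\rho_u^{u_-}$ is the $\bm{E}_{\cal{C}_u}$-marginal of $\rho_u$ and $K_u w_u=d\rho_u^{u_-}/d\gamma_{u_-}$ is obtained by marginalizing $\rho_u=w_u\cdot(\gamma_{u_-}\otimes K_u)$. This shows $(\gamma_{u_-}^N\otimes K_u)(w_u)=\gamma_{\cal{C}_u}^N(h)$ with $h$ \emph{independent} of $(\gamma_{u_-},K_u)$; combined with $(\cal{Z}_u^N)^2(\pi_{u_-}^N\otimes K_u)(w_u^2)=\cal{Z}_u^N(\gamma_{u_-}^N\otimes K_u)(w_u^2)=\gamma_{\cal{C}_u}^N(w_{u_-})\,\gamma_{\cal{C}_u}^N(g)$ with $g:=w_{u_-}K_u(w_u^2)$, the law of total variance yields
\begin{equation*}
f(\gamma_{u_-},K_u)=\tfrac1N\,\Phi(\gamma_{u_-},K_u)+(1-\tfrac1N)\,\mathbb{E}[\gamma_{\cal{C}_u}^N(h)^2]-Z_u^2,\quad \Phi(\gamma_{u_-},K_u):=\mathbb{E}[\gamma_{\cal{C}_u}^N(w_{u_-})\,\gamma_{\cal{C}_u}^N(g)].
\end{equation*}
Only $\Phi$ depends on the design, so each claim reduces to minimizing $\Phi$ over the relevant (slice of the) set $\cal{S}$.

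For \eqref{eq:locally12} I would fix $\gamma_{u_-}$: then $K_u w_u=d\rho_u^{u_-}/d\gamma_{u_-}$ is constant over $\cal{S}(\gamma_{u_-})$, Jensen's inequality gives $K_u(w_u^2)\ge(K_u w_u)^2$ pointwise with equality at $K_u=M_u$ (there $w_u=Z_u\,d\mu_u^{u_-}/d\gamma_{u_-}$ does not depend on $x_u$), hence $g\ge w_{u_-}(K_u w_u)^2$ pointwise and, $\gamma_{\cal{C}_u}^N$ being a nonnegative measure and $w_{u_-}\ge0$, $\Phi(\gamma_{u_-},K_u)\ge\Phi(\gamma_{u_-},M_u)$. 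For \eqref{eq:locally11} I would fix $K_u$ and write $w_u=\tilde w_u/w_{u_-}$ with $\tilde w_u:=d\rho_u/d(\gamma_{\cal{C}_u}\otimes K_u)$ independent of $\gamma_{u_-}$, so $g=K_u(\tilde w_u^2)/w_{u_-}$; the Cauchy--Schwarz inequality for the positive measure $\gamma_{\cal{C}_u}^N$ gives $\gamma_{\cal{C}_u}^N(w_{u_-})\gamma_{\cal{C}_u}^N(g)\ge(\gamma_{\cal{C}_u}^N(\sqrt{K_u(\tilde w_u^2)}))^2$, with equality when $w_{u_-}\propto\sqrt{K_u(\tilde w_u^2)}$, and using $\tilde w_u=\omega_u(K_u)\,d\rho_{\cal{C}_u}/d\gamma_{\cal{C}_u}$ turns the minimizing $\gamma_{u_-}=C\sqrt{K_u(\tilde w_u^2)}\,\gamma_{\cal{C}_u}$ into $C\sqrt{K_u\omega_u(K_u)^2}\,\rho_{\cal{C}_u}$. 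Finally \eqref{eq:locally2} follows by chaining: $\inf_{(\gamma,K)\in\cal{S}}\Phi\ge\inf_{\gamma\in\cal{S}(M_u)}\Phi(\gamma,M_u)$ (since $M_u\in\cal{S}(\gamma)$ for every $\gamma$ appearing in $\cal{S}$), which \eqref{eq:locally11} minimizes at $C\sqrt{M_u\omega_u(M_u)^2}\rho_{\cal{C}_u}$, and since $\omega_u(M_u)=Z_u\,d\mu_u^{u_-}/d\rho_{\cal{C}_u}$ is constant in $x_u$ when $K_u=M_u$, this measure equals $CZ_u\mu_u^{u_-}$, i.e.\ the pair $(C\mu_u^{u_-},M_u)$ up to the immaterial constant. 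Along the way I would record that these candidates satisfy $\gamma_{u_-}\sim\gamma_{\cal{C}_u}$ and $\rho_u\sim\gamma_{u_-}\otimes K_u$, hence belong to $\cal{S}$, which follows from strict positivity of $\tilde w_u$ and of $d\mu_u^{u_-}/d\gamma_{u_-}$.

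The main obstacle is the reduction step: recognizing that $\mathrm{Var}(\mathbb{E}[Z_u^N\mid\cal{G}])$ and the $-((\gamma_{u_-}^N\otimes K_u)(w_u))^2$ part of the inner variance are both controlled by the design-invariant quantity $\gamma_{\cal{C}_u}^N(h)$, so that $f$ depends on $(\gamma_{u_-},K_u)$ only through $\Phi$. After that the three minimizations are one application each of Jensen's inequality and of the measure-theoretic Cauchy--Schwarz inequality; the remaining care is the membership checks above and, if one does not assume Assumption~\ref{ass:boundweights} so $f$ may be $+\infty$, reading the identities in $[0,\infty]$.
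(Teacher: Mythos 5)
Your proposal is correct and follows essentially the same route as the paper's Appendix H argument: condition on $\cal{F}_{\cal{C}_u}^N$, apply the law of total variance, use the identity $w_{u_-}K_uw_u=d\rho_u^{u_-}/d\gamma_{\cal{C}_u}$ (and $\cal{Z}_u^N=\gamma_{\cal{C}_u}^N(w_{u_-})$) to isolate the single design-dependent term, then Jensen's inequality for the optimal kernel and chaining plus $\sqrt{M_u\omega_u(M_u)^2}\rho_{\cal{C}_u}=Z_u\mu_u^{u_-}$ for the joint optimum, together with the same membership checks. The only cosmetic difference is in \eqref{eq:locally11}: you minimize over $\gamma_{u_-}$ via a pathwise Cauchy--Schwarz bound for the random measure $\gamma_{\cal{C}_u}^N$, whereas the paper runs a Lagrange-multiplier argument over the normalized particle weights $\bar{w}_{u_-}$ --- the same optimization in substance.
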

\begin{proof}See Appendix~\ref{app:locally}.
\end{proof}
\subsection{Comparison with ASMC}\label{sec:dacvssmc}Theorems~\ref{THRM:LLN}--\ref{THRM:CLT} demonstrate that DaC-SMC is a well-founded algorithm: it leads to estimators for the unnormalized  and normalized targets with the same basic properties as those possessed by  SMC and ASMC estimators~\cite{DelMoral2004}. It is  only natural to now ask whether it performs better than its  ASMC analogue (Algorithm~\ref{alg:asmc}).

To answer this question, suppose we are interested in approximating the final target $\rho_{\mathfrak{r}}$ and its normalization $\mu_{\mathfrak{r}}$. We first need to figure out how to approximate $\rho_{\mathfrak{r}}$ and $\mu_{\mathfrak{r}}$ using ASMC. This requires somehow lumping the nodes in $\mathbb{T}$ to obtain a line $\mathbb{L}$ (Remark~\ref{rem:smc1}). While there are many ways that it could be done, we focus on perhaps the most obvious: merge together all nodes in each level of the tree to obtain a line $\cal{L}_0\to\cal{L}_1\to\dots\to\cal{L}_T=\{\rf\}$ whose length  equals $\mathbb{T}$'s depth.  
Taking the corresponding products of the auxiliary measures,
$$\gamma_{t_-}:=\prod_{u\in \cal{L}_t^{\not \partial}}\gamma_{u_-},\quad\pi_{t_-}:=\prod_{u\in \cal{L}_t^{\not \partial}}\pi_{u_-},\quad\forall t=1,\dots,T,$$
where $\cal{L}_t^{\not \partial}:=\cal{L}_t\cap \mathbb{T}^{\not\partial}$ denotes set of non-leaf nodes in the $t^{th}$ level,  we obtain sequences $(\gamma_{t_-})_{t=1}^T$ and $(\pi_{t_-})_{t=1}^T$ that can be used in a standard ASMC set-up. In particular, replacing $(\gamma_{u_-})_{u\in\mathbb{T}^{\not \partial}}$, $(\pi_{u_-})_{u\in\mathbb{T}^{\not \partial}}$, and $(K_u)_{u\in\mathbb{T}}$ with $(\gamma_{t_-})_{t=1}^T$, $(\pi_{t_-})_{t=1}^T$,  and $(K_t)_{t=0}^{T}$, for some proposal kernels $K_t:\bm{E}_{t_-}\times\cal{E}_{t}\to [0,1]$ mapping from $\bm{E}_{t_-}:=\prod_{u\in \cal{L}_t^{\not \partial}}\bm{E}_{\cal{C}_u}$ to $\cal{E}_{t}:=\cal{E}_{\cal{L}_{t}}$, Algorithm~\ref{alg:adacsmc} reduces to Algorithm~\ref{alg:asmc}. To keep both algorithms on as equal a footing as possible, we set the ASMC proposal kernels to be the corresponding products of the DaC-SMC ones:
\begin{equation}\label{eq:factorizedkernels}K_t(\bm{x}_{t_-},dx_{t}):=\left(\prod_{u\in\cal{L}_t^{\partial}} K_u(dx_u)\right)\left(\prod_{u\in\cal{L}_t^{\not\partial}} K_u(\bm{x}_{\cal{C}_u},dx_u)\right)\quad\forall t=0,\dots,T,\end{equation}
where $\cal{L}_t^{\partial}:=\cal{L}_t\cap \mathbb{T}^{\partial}$ denotes set of leaves in the $t^{th}$ level. 

Theoretically, Theorems~\ref{THRM:LLN}--\ref{THRM:CLT} tell us that both ASMC and DaC-SMC produce consistent estimators of $\rho_{\mathfrak{r}}$ and $\mu_{\mathfrak{r}}$, unbiased in the case of $\rho_{\mathfrak{r}}$ and with an $\cal{O}(N^{-1})$ bias in that of $\mu_{\mathfrak{r}}$, whose errors are  $\cal{O}(N^{-1/2})$ and asymptotically normal. Comparing the asymptotic variance expressions in Theorem~\ref{THRM:CLT} for both algorithms, we find that the DaC-SMC estimators are more statistically efficient than the ASMC ones: 
\begin{theorem}\label{THRM:SMCVSDAC}If Assumptions~\ref{ass:abscont}--\ref{ass:boundweights} are satisfied, $\varphi$ belongs to $\cal{B}_b(\bm{E}_{\mathfrak{r}})$, and  $\sigma^2_{\rho_{\mathfrak{r}},smc}(\varphi)$ and $\sigma^2_{\mu_{\mathfrak{r}},smc}(\varphi)$ denote the respective asymptotic variances of the ASMC estimators for $\rho_{\mathfrak{r}}(\varphi)$~and~$\mu_{\mathfrak{r}}(\varphi)$,   then
$$ \sigma^2_{\rho_{\mathfrak{r}}}(\varphi)\leq \sigma^2_{\rho_{\mathfrak{r}},smc}(\varphi),\qquad \sigma^2_{\mu_{\mathfrak{r}}}(\varphi)\leq \sigma^2_{\mu_{\mathfrak{r}},smc}(\varphi).$$
In particular, the asymptotic variance of the DaC-SMC normalizing constant estimate is bounded above by that of the SMC estimate.
\end{theorem}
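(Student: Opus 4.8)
The plan is to express the two asymptotic variances as sums indexed by the levels of $\mathbb{T}$ and then compare them level by level. Applying Theorem~\ref{THRM:CLT} to the line $\cal{L}_0\to\dots\to\cal{L}_T=\{\rf\}$ — which, by Remark~\ref{rem:smc1}, is itself an instance of DaC-SMC — yields expressions for $\sigma^2_{\rho_{\rf},smc}(\varphi)$ and $\sigma^2_{\mu_{\rf},smc}(\varphi)$ as sums over $t=0,\dots,T$ of terms built from the level-lumped objects $\pi_{\cal{L}_t}$, $\cal{Z}_{\cal{L}_t}$ and the level-lumped semigroup $\Gamma^{smc}_{\cal{L}_t,\rf}$; meanwhile, grouping the sum over $\mathbb{T}_{\rf}$ in Theorem~\ref{THRM:CLT}'s expressions for $\sigma^2_{\rho_{\rf}}(\varphi)$ and $\sigma^2_{\mu_{\rf}}(\varphi)$ according to the level of each descendant writes them as $\sum_{t=0}^T\sum_{v\in\cal{L}_t}(\cdots)$. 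Note that $\rho_{\rf}$, $\gamma_{\rf}=\gamma_{\rf_-}\otimes K_{\rf}$, $w_{\rf}$, $Z_{\rf}$ and $\mu_{\rf}$ are literally the same for both algorithms, since $\cal{L}_T=\{\rf\}$ and $K_T=\prod_{u\in\cal{L}_T}K_u=K_{\rf}$ by~\eqref{eq:factorizedkernels}, so a term-by-term comparison is meaningful.

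The crux is a structural identity between the two semigroups. Because the ASMC proposal kernels and auxiliary measures factorize within each level — this is exactly what~\eqref{eq:factorizedkernels} and the choice $\gamma_{t_-}=\prod_{u\in\cal{L}_t^{\not\partial}}\gamma_{u_-}$ provide — the level-lumped measures are product-form: $\gamma_{\cal{L}_t}=\prod_{u\in\cal{L}_t}\gamma_u$, whence $\pi_{\cal{L}_t}=\prod_{u\in\cal{L}_t}\pi_u$ and $\cal{Z}_{\cal{L}_t}=\prod_{u\in\cal{L}_t}\cal{Z}_u$. Using the recursion~\eqref{eq:Pivu1}--\eqref{eq:Pivu2} (in which, along the branch from $v$ to $\rf$, the sibling coordinates are integrated out against $\gamma_{\cal{C}_u}^{\not v}=\prod_{r\in\cal{C}_u^{\not v}}\gamma_r$) together with this factorization, I would prove by induction up the tree that, for each level $t$, each $v\in\cal{L}_t$ and each $\psi\in\cal{B}_b(\bm{E}_{\rf})$,
\[
\cal{Z}_v\,\Gamma_{v,\rf}[\psi](\bm{x}_v)=\int\Big(\prod_{r\in\cal{L}_t\setminus\{v\}}\pi_r(d\bm{x}_r)\Big)\;\cal{Z}_{\cal{L}_t}\,\Gamma^{smc}_{\cal{L}_t,\rf}[\psi]\big((\bm{x}_r)_{r\in\cal{L}_t}\big);
\]
that is, writing $H_t:=\cal{Z}_{\cal{L}_t}\Gamma^{smc}_{\cal{L}_t,\rf}[\psi]\in\cal{B}_b(\bm{E}_{\cal{L}_t})$, the DaC-SMC function $\cal{Z}_v\Gamma_{v,\rf}[\psi]$ is precisely the conditional expectation of $H_t$ onto its $v$-th coordinate under the product measure $\pi_{\cal{L}_t}$, and both have common mean $\pi_{\cal{L}_t}(H_t)=\gamma_{\rf}(\psi)$ by~\eqref{eq:Pivu}.

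Granting the identity, the comparison follows level by level. Taking $\psi=w_{\rf}\varphi$ (for the $\rho_{\rf}$ claim) or $\psi=w_{\rf}Z_{\rf}^{-1}[\varphi-\mu_{\rf}(\varphi)]$ (for the $\mu_{\rf}$ claim; this is bounded and satisfies $\gamma_{\rf}(\psi)=0$), the level-$t$ contribution to the ASMC variance equals $\pi_{\cal{L}_t}\big([H_t-\pi_{\cal{L}_t}(H_t)]^2\big)=\mathrm{Var}_{\pi_{\cal{L}_t}}(H_t)$, while the level-$t$ contribution to the DaC-SMC variance equals $\sum_{v\in\cal{L}_t}\mathrm{Var}_{\pi_v}\big(\mathbb{E}_{\pi_{\cal{L}_t}}[H_t\mid\bm{X}_v]\big)$. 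Since $\pi_{\cal{L}_t}$ is a product measure, the Hoeffding/ANOVA decomposition of $H_t$ over the coordinates indexed by $\cal{L}_t$ gives $\mathrm{Var}_{\pi_{\cal{L}_t}}(H_t)=\sum_{\emptyset\neq S\subseteq\cal{L}_t}\|(H_t)_S\|^2\ge\sum_{v\in\cal{L}_t}\|(H_t)_{\{v\}}\|^2=\sum_{v\in\cal{L}_t}\mathrm{Var}_{\pi_v}(\mathbb{E}_{\pi_{\cal{L}_t}}[H_t\mid\bm{X}_v])$; equivalently, this is the product-form efficiency statement \cite[Corollary~1]{Kuntz2021} applied to $H_t$ and the factorization $\pi_{\cal{L}_t}=\prod_{u}\pi_u$. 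Summing over $t=0,\dots,T$ yields $\sigma^2_{\rho_{\rf}}(\varphi)\le\sigma^2_{\rho_{\rf},smc}(\varphi)$ and $\sigma^2_{\mu_{\rf}}(\varphi)\le\sigma^2_{\mu_{\rf},smc}(\varphi)$, and the normalizing-constant bound is the special case $\varphi\equiv1$ via~\eqref{eq:sigZ}.

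The main obstacle I anticipate is proving the structural identity of the third paragraph: one must reconcile the DaC-SMC semigroup, which marginalizes out the sibling coordinates encountered along each branch against the \emph{unnormalized} measures $\gamma_r$, with the level-lumped ASMC semigroup, which transports all of a level's coordinates at once, while keeping careful track of the normalizing constants $\cal{Z}_\bullet$ throughout the induction — and also handling the bookkeeping for trees whose leaves lie at different depths, where at an intermediate level some nodes are freshly introduced rather than merged from children (for these one checks that $\gamma_{\cal{L}_t}=\prod_{u\in\cal{L}_t}\gamma_u$ still holds because $K_t$ in~\eqref{eq:factorizedkernels} contributes the missing $K_u$-factors). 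Once this identity and the within-level product structure are established, everything else is either a direct appeal to Theorem~\ref{THRM:CLT} or a citation of the product-form efficiency results of \cite{Kuntz2021}.
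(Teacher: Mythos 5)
Your proposal is correct, and its heart coincides with the paper's: both arguments rest on recognizing that each DaC-SMC variance term $\pi_v([\cal{Z}_v\Gamma_{v,\mathfrak{r}}\psi-\gamma_{\mathfrak{r}}(\psi)]^2)$ is the variance of a single-coordinate conditional expectation of the corresponding lumped ASMC integrand under a product measure, and then on the inequality that a sum of such ``main effect'' variances is dominated by the total variance. Where you differ is in the organization. The paper never sets up the level-lumped semigroup explicitly: it reduces the comparison to a single lumping operation (merging the children of one node $u$), proves in Proposition~\ref{prop:sameLambda} that such a lump leaves all other terms of the variance expansion unchanged and replaces the children's terms by $\pi_{\cal{C}_u}([\cal{Z}_{\cal{C}_u}w_{u_-}K_u\Gamma_{u,\mathfrak{r}}\varphi-\gamma_{\mathfrak{r}}(\varphi)]^2)$, uses the identity $\cal{Z}_v\Gamma_{v,\mathfrak{r}}\varphi=\cal{Z}_{\cal{C}_u}\pi_{\cal{C}_u}^{\not v}(w_{u_-}K_u\Gamma_{u,\mathfrak{r}}\varphi)$ (the one-step analogue of your structural identity), and proves the ANOVA-type inequality by an iterated Jensen argument pairing off children one at a time; the theorem then follows because $\mathbb{L}$ is obtained by repeatedly lumping children. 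You instead compare the tree and the line in one shot, level by level, which requires the stronger multi-level identity $\cal{Z}_v\Gamma_{v,\mathfrak{r}}\psi=\int\bigl(\prod_{r\in\cal{L}_t\setminus\{v\}}\pi_r(d\bm{x}_r)\bigr)\cal{Z}_{\cal{L}_t}\Gamma^{smc}_{\cal{L}_t,\mathfrak{r}}\psi$ together with the within-level factorization $\gamma_{\cal{L}_t}=\prod_{u\in\cal{L}_t}\gamma_u$; this identity is true (it follows by iterating the paper's one-step identity, or by the same recursion as in Proposition~\ref{prop:sameLambda}), but it is the bulk of the work and you leave it, and the uneven-depth bookkeeping, as an acknowledged sketch rather than a proof. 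What each route buys: yours makes the statistical content transparent---the DaC-SMC asymptotic variance is exactly the sum of main-effect variances of the ASMC integrands, so the bound is \cite[Corollary~1]{Kuntz2021} applied levelwise---whereas the paper's stepwise lumping avoids the global semigroup identity altogether and yields the slightly stronger intermediate fact, exploited in the Discussion, that lumping the children of any single node never decreases the asymptotic variance.
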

\begin{proof}See Appendix~\ref{app:dacvssmcproofs}.
\end{proof}

Of course, the above theorem, and this entire comparison,  applies only if we are using factorized proposal kernels in~\eqref{eq:factorizedkernels} for ASMC, a needless restriction for this algorithm and one that will often be detrimental to its performance. Indeed, unless the intermediate targets exhibit the same kind of conditional independence encoded in the factorized kernels, standard results for ASMC (e.g.\ \cite[Theorems~10.1,10.2]{chopin2020} or \cite[Proposition~2]{Guarniero2017}) tell us that these kernels will not be optimal in any usual sense.

However, for such factorized proposals, the gains in statistical efficiency can sometimes be drastic (e.g.\ exponential in the tree's degree); see Examples~2~and~4 and Section~3.4 in~\cite{Kuntz2021}. 
In exchange, DaC-SMC generally has a higher computational cost. More specifically, the mutation steps of DaC-SMC require drawing $N$ samples from $K_u$ for each $u$ in $\mathbb{T}$, while ASMC  requires drawing $N$ samples from $K_t$ for each $t=0,\dots, T$. However, $K_t$'s definition implies that drawing a sample from it is equivalent to drawing one from each $K_u$ with $u$ in $\cal{L}_t$ and concatenating them. Hence, the cost incurred by the mutation steps of both algorithms is the same. 
The cost of the correction and resampling steps, however, is not the same. In the case of ASMC, we need to sample $N$ particles from $\pi_{t_-}^N$ for each $t$ in $0,\dots, T$ which requires only $N$ evaluations of $w_{t_-}=\prod_{u\in\cal{L}_t^{\not \partial}}w_{u_-}$ per $t$. For DaC-SMC, we must instead sample $N$ particles  from $\pi_{u_-}^N$ for each $u$ in $\mathbb{T}^{\not\partial}$. In the absence of any sort of special structure in the auxiliary measures, this involves $N^{c_u}$ evaluations of $w_{u_-}$ per $u$ in $\mathbb{T}^{\not \partial}$. Hence, for general auxiliary measures, DaC-SMC's cost is $\cal{O}(N^d)$, where $d$ denotes $\mathbb{T}$'s  degree,   while ASMC's is just $\cal{O}(N)$.

However, this extra cost materializes only in nodes with more than one child and concentrates in those with $d$ children. If these nodes only feature high up in the tree, then the high cost incurred by the generation of the particles indexed by the nodes' children can outweigh the extra overhead incurred by the resampling and correction steps (and similarly if the proposal kernels for those nodes are particularly expensive to sample from). It is worth noting that in many settings of practical interest (e.g.\ \cite{Corneflos2021,ding2019,ding2018}), $d=2$ and this cost can be borne. Otherwise, as described in Section~\ref{sec:lcdacsmc}, DaC-SMCs cost may be brought down by choosing  auxiliary measures that are partially or fully factorized (or sums of such measures)---and when these differ too much from the optimal auxiliary measures the impact of this discrepancy upon the variance can be mitigated via tempering~\cite{delmoral2006sequential}. Lastly, in cases where  $d$ is not small and the preferred auxiliary measures do not factorize sufficiently, one can instead turn to the `incomplete' estimators~\eqref{eq:incomplete} similar to those successfully employed in the U-statistic literature to tackle analogous issues (cf.\ \cite{kong2021design} and references therein).
\section{Discussion}\label{sec:discussion}
In this paper, we study DaC-SMC algorithm introduced in~\cite{lindsten2017divide} and show that it is theoretically well-founded:  it possesses the same basic properties that standard SMC algorithms do (Section~\ref{sec:theory}). To achieve this, we combine well-known methods previously used to study standard SMC algorithms with novel techniques  that control the errors introduced by the product-form estimators~\eqref{eq:gammauN} embedded within DaC-SMC (see Section~\ref{sec:mop} for an overview and Appendices~\ref{app:lplln}--\ref{app:cross} for the details). 
Our analysis here can be sharpened and refined in various ways. 
The positivity requirements in Assumptions~\ref{ass:abscont} can be circumvented by emulating the use of stopping times in \cite{DelMoral2004}. The boundedness ones in Assumptions~\ref{ass:boundweights} can be avoided   by introducing families of appropriately integrable functions at each node, e.g.\ similarly as in~\cite{Chopin2004,Douc2008}. 
While the additional requirement in Theorem~\ref{THRM:BIAS} that the weight functions are lower bounded is common in the SMC literature (e.g.\ \cite[Proposition 9.5.6]{DelMoral2013} and \cite{olsson2004bootstrap}), we anticipate that it could be relaxed by considering a more restricted class of test functions (than that of all bounded and measurable ones) and dealing with the possible `extinction' of the particle cloud as done in~\cite[Theorem 7.4.3]{DelMoral2004}. 
Lastly, we do not believe that the particular lumping construction in Section~\ref{sec:dacvssmc} is crucial for the variance bounds in Theorem~\ref{THRM:SMCVSDAC}  to hold, only that the ASMC auxiliary measures and proposals are obtained by taking appropriate products of the DaC-SMC ones. In fact, in Appendix~\ref{app:locally} we actually show that lumping the children of a node never reduces the  variance of DaC-SMC estimators. However, to argue the bounds in full generality one would likely have to further  show that lumping `cousins' does not lower the variance, and we opted to focus on that particular lumping construct so not to overly complicate Section~\ref{sec:dacvssmc} and Theorem~\ref{THRM:LOCALLY}'s proof.

Our analysis sheds light on the algorithm, its limitations, and its practical use. First, one can improve the algorithm's performance by approximating optimal intermediate targets, auxiliary measures, and proposal kernels identified in Section~\ref{sec:optimalprop}  (this provides a formal explanation for the success of the ad hoc strategies taken in~\cite[Section~4.1]{Corneflos2021} and \cite[Sections~3.7.4,~3.10.5]{ding2019}). Next, whenever the auxiliary measures  and proposal kernels used for DaC-SMC are the corresponding products of those used for standard ASMC, the former proves more statistically efficient than the latter (Section~\ref{sec:dacvssmc}): using the same number of particles, DaC-SMC estimators achieve lower variances than their standard counterparts. In exchange, DaC-SMC generally carries a higher computational cost. How much higher depends on the amount of product structure present in the auxiliary measures $(\gamma_{u_-})_{u\in\mathbb{T}^{\not \partial}}$ employed (Section~\ref{sec:lcdacsmc}). If they are fully factorized (or sums of such measures), then the DaC-SMC's cost is $\cal{O}(N)$ where $N$ denotes the number of particles (the same as ASMC). If they do not factorize at all (nor break down into sums of partially-factorized functions), then DaC-SMC's cost is $\cal{O}(N^d)$ where $d$ denotes the tree's degree. For anything between these two extremes, the cost is $\cal{O}(N^k)$ where $1<k<d$ depends on the amount of factorization (the greater it is, the lower $k$ is).

At each node $u$, the algorithm approximates the auxiliary measure $\gamma_{u_-}$ with the product-form $\gamma^N_{\cal{C}_u}$ and then corrects this approximation using the weight function $w_{u_-}$~\eqref{eq:gammauN}. The corrected approximation, $\gamma_{u_-}^N$, is then extended using a mutation step, and the resulting extended approximation, $\gamma_u^N$, is corrected using the weight function $w_u$~\eqref{eq:adacests} to produce an approximation, $\rho_u^N$, of the intermediate target $\rho_u$. This means that any non-product structure in $\rho_u$ must be introduced into our approximations via these two re-weighting steps (aside from structure linking the components indexed by $u$'s children with $u$ itself, which can be introduced using the mutation kernels). Consequently, if the intermediate target $\rho_u$ possesses pronounced non-product structure, we end up with competing interests: (a) we would like to choose the auxiliary measure $\gamma_{u_-}$ as close to  factorized as possible  so that the algorithm's cost is low (Section~\ref{sec:lcdacsmc}) and the first re-weighting does not necessitate large corrections and lead to high-variance approximations of $\gamma_{u_-}$; and (b) we would like to choose the auxiliary measure $\gamma_{u_-}$ close $\rho_u$. In fact, the analysis in Section~\ref{sec:optimalprop} suggests that, purely in terms of variance and disregarding any considerations of the algorithm's cost, $\gamma_{u_-}$ should be chosen as close as possible to the corresponding marginal of $\rho_u$. How to balance these two competing considerations  is an interesting  question beyond the scope of this paper and likely best dealt with on a case-by-case basis.

Alternatively, it might also be possible to introduce non-product structure without blowing up the computational cost or the variance in the correction steps using tempering~\cite{delmoral2006sequential}, `incomplete' versions of product-form estimators~\eqref{eq:incomplete}, or generalizations of these estimators that account for more complicated dependence structures (see Appendix~\ref{app:variantex} for an example). We find each of these variants, and the question of which one should be used under what circumstances, worthy of further investigation; and we believe this can be done using extensions of the techniques we employed in Appendices~\ref{app:lplln}--\ref{app:cross}. Similarly for the other, perhaps more familiar, refinements that will also likely improve the algorithm's performance in practice (e.g.\ adaptive resampling, more efficient resampling schemes, and varying the computational power spent at each node; see Section~\ref{sec:variants}). 

More broadly, we believe that the techniques and intermediate results developed in this paper might find applications beyond DaC-SMC and its variants. Here, we have in mind recently developed Monte Carlo algorithms for inference in models defined as a merger of several submodels, and motivated  by either  a genuine interest in the submodel merger (as in~\cite{Goudie2019,Manderson2021}) or computational reasons such as facilitating distributed implementations and privacy concerns (e.g.~\cite{Dai2019,Rendell2021,Dai2021}).

Lastly, we anticipate that a variety of interesting applications of the DaC-SMC algorithm will involve its merger with other well-known Monte Carlo methodology and our results pave the way for such combinations. For instance, its merger with pseudo-marginal MCMC~\cite{andrieu2009pseudo} to tackle targets with intractable densities and that with particle-marginal Metropolis-Hastings~\cite{andrieu2010particle} to allow for Metropolis-Hastings proposals with intractable densities; two approaches  justified by the unbiasedness in Theorem~\ref{THRM:UNBIAS}. It can also be used within general particle MCMC~\cite{andrieu2010particle}, including particle Gibbs (e.g.\, pertinent for models such as Markov random fields with collections of correlated parameter variables lacking time-series structure), the formal justification of which requires the characterization of the joint law of all random variables generated during the running of the Algorithm~\ref{alg:adacsmc}. This characterization can be found in \cite[Chapter 8]{crucinio2021some} which generalizes the argument given in \cite[Appendix A.1]{lindsten2017divide} for balanced binary  trees.

In summary, DaC-SMC is a theoretically-sound algorithm and a promising extension of SMC. Its development, however, still lies in its infancy and there remain many open questions regarding its use. For instance, `how exactly should we carry out the resampling and which nodes should we allocate more computational power to?', `how do we introduce non-product structure into the approximations while balancing the algorithm's cost and the variance of the correction steps?', and `with which other Monte Carlo algorithms would it prove fruitful to merge DaC-SMC and for what type of targets should which merger be used for?'. We look forward to their resolution.

\begin{appendix}

\begin{center}
\huge Appendices
\end{center}

Appendix~\ref{app:example} contains an example illustrating several concepts discussed in Sections~\ref{sec:lcdacsmc}--\ref{sec:variants}. The other appendices contain the proofs of Theorems~\ref{THRM:LLN}--\ref{THRM:SMCVSDAC} given in the main text.
In particular, Appendix~\ref{app:notation} contains notation used throughout Appendices~\ref{app:lplln}--\ref{app:dacvssmcproofs}, Appendix~\ref{app:lplln} contains the proofs for the DaC-SMC estimators' consistency and $L^p$ inequalities (Theorems~\ref{THRM:LLN},~\ref{THRM:WEAK}, and~\ref{THRM:LP}), Appendix~\ref{app:unbias} that for the unbiasedness of the unnormalized target estimators (Theorem~\ref{THRM:UNBIAS}), Appendix~\ref{app:bias} that for the bias bound of the normalized target ones (Theorem~\ref{THRM:BIAS}),  Appendix~\ref{app:clt} that for the asymptotic normality (Theorem~\ref{THRM:CLT}), Appendix~\ref{app:cross} that for the product $L^2$ inequality (Theorem~\ref{THRM:CROSS}), Appendix~\ref{app:locally} the derivation of the locally optimal proposals and auxiliary measures (Theorem~\ref{THRM:LOCALLY}), and Appendix~\ref{app:dacvssmcproofs} the argument showing that the asymptotic variances of DaC-SMC estimators are bounded above by those of their SMC counterparts (Theorem~\ref{THRM:SMCVSDAC}).
\section{A further example}\label{app:example}
Consider the following time-varying version of the toy hierarchical model in~\cite[Section~3.1]{Kuntz2021}:
$$Y_{t,l} \sim\cal{N}(X_{t,l},1),  \enskip X_{t,l} \sim\cal{N}(0,\theta_t),\enskip \forall l\in[L], \quad\Theta_{t}\sim f(d\theta;\Theta_{t-1}), \enskip \forall t\in [T],\enskip \Theta_0\sim f(d\theta;1),$$
where $T,L>0$ denote integers,  $Y_{0:T,1:L}:=((Y_{t,l})_{l=1}^L)_{t=0}^T$ some observed variables,  $X_{0:T,1:L}$ $:=((X_{t,l})_{l=1}^L)_{t=0}^T$  some latent variables, $\Theta_{0:T}=(\Theta_t)_{t=0}^T$ some unknown parameters, and $f(d\theta;\eta)$ the exponential distribution with mean $\eta$. Suppose we observe data $y_{0:T,1:L}$ and wish to draw inferences from the smoothing distribution proportional to
\begin{align*}\rho(d\theta_{0:T},dx_{0:T,1:L}):=&f(d\theta_0;1)\left(\prod_{t=1}^Tf(d\theta_t;\theta_{t-1})\right)
\\&\times
\left(\prod_{t=0}^T\prod_{l=1}^L\cal{N}(y_{t,k};x_{t,l},1)\cal{N}(dx_{t,l};0,\theta_t)\right).\end{align*}
We consider two approaches with which we could tackle this problem:
\subsection{The DaC-SMC approach}\label{app:timevar}
One way to obtain tractable approximations of the posterior $\rho$ is to apply DaC-SMC with the following tree. Each parameter variable $\Theta_t$ has a node $t$ assigned to it (with $E_t=[0,\infty)$), each latent variable $X_{t,l}$ has a node $(t,l)$ assigned to it (with $E_{t,l}=\r$), and the tree has an extra `dummy' root node $T+1$ (which we introduce purely for notational convenience).  The nodes assigned to $\Theta_0$ and $X_{1:T,1:L}$ are the tree's leaves and, for each $t$ in $[T+1]$, the children of $t$ are $t-1$ and $(t-1,1),\dots,(t-1,L)$. For the auxiliary measure $\gamma_{t_-}$ associated with the node $t$ assigned to $\Theta_t$, we simply remove all terms in $\rho$ that involve $\theta_{s}$ and $x_{s,1},\dots,x_{s,L}$ with $s\geq t$:
\begin{align*}\gamma_{(t+1)_-}(d\theta_{0:t},dx_{0:t,1:L}):=&f(d\theta_0;1)\left(\prod_{t'=1}^{t}f(d\theta_{t'};\theta_{t'-1})\right)
\\&\times
\left(\prod_{t'=0}^{t}\prod_{l=1}^L\cal{N}(y_{t',l};x_{t',l},1)\cal{N}(dx_{t',l};0,\theta_{t'})\right)\quad\forall t=0,\dots,T,\end{align*}
where, with the usual abuse of notation, we allow $\mathcal{N}(\cdot;\mu,\Sigma)$ to denote both a normal probability measure of mean $\mu$ and variance $\Sigma$ and the Lebesgue denbsity of that measure.
(Note that $\gamma_{(T+1)_-}$ coincides with the unnormalized smoothing distribution $\rho$, so $\pi_{(T+1)_-}^N$  approximates the smoothing distribution.) For the proposal kernels we pick
\begin{align}K_0(d\theta_0)&:=f(d\theta_0;1),\quad K_t(\theta_{t-1},d\theta_t):=f(d\theta_t;\theta_{t-1})\enskip\forall t=1,\dots,T,\label{eq:exker1}\\
K_{t,k}(dx_{t,l})&:=\cal{N}(dx_{t,l};y_{t,l},1)\enskip \forall  t=0,\dots,T,\enskip l=1,\dots,L;\label{eq:exker2}\end{align}
in which case
$$w_{(t+1)_-}(\theta_{t},x_{t,1:L})=\prod_{l=1}^L\cal{N}(x_{t,l};0,\theta_{t})\quad\forall t=0,\dots,T.$$
Hence, with $g_{t,l}(\theta):=\sum_{n_l=1}^N\cal{N}(X_{t,l}^{n_l};0,\theta)$  and $g_t(\theta):=\prod_{l=1}^Lg_{t,l}(\theta)$,
\begin{align}
\frac{N^{L+1}}{\cal{Z}_{\cal{C}_t}^N}\gamma_{(t+1)_-}^N&=\sum_{n=1}^N\left(\prod_{l=1}^L \sum_{n_l=1}^N\cal{N}(X_{t,l}^{n_l};0,\Theta_{t}^{n})\delta_{X_{t,l}^{n_l}}\right)\delta_{(\Theta_{1:t}^n,X^n_{1:t-1,1:L})}\label{eq:dna8dgragbwabyfbwafaf}\\
&=\sum_{n=1}^N\left(\prod_{l=1}^L \sum_{n_l=1}^N\frac{\cal{N}(X_{t,l}^{n_l};0,\Theta_{t}^{n})}{g_{t,l}(\Theta_t^n)}\delta_{X_{t,l}^{n_l}}\right)g_t(\Theta_t^n)\delta_{(\Theta_{1:t}^n,X^n_{1:t-1,1:L})}.\nonumber
\end{align}
(Note that we are omitting the $N$ superscripts from $\Theta^n_t$ and $X^n_{t,l}$ to simplify the notation.) Setting $\omega_{t,l}^{n,n_l}:=\frac{\cal{N}(X_{t,l}^{n_l};0,\Theta_t^n)}{g_{t,l}(\Theta_t^n)}$ and $\omega_t^n:=\frac{g_t(\Theta_t^n)}{\sum_{m=1}^Ng_{t}(\Theta_t^m)}$, and normalizing, we find that
$$\pi_{(t+1)_-}^N=\sum_{n=1}^N\left(\prod_{l=1}^L \sum_{n_l=1}^N\omega_{t,l}^{n,n_l}\delta_{X_{t,l}^{n_l}}\right)\omega_t^n\delta_{(\Theta_{1:t}^n,X^n_{1:t-1,1:L})}.$$
Hence, we are able to draw $N$ samples 
from $\pi_{(t+1)_-}^N$ in $\cal{O}(N^2)$ operations by: $(a)$ evaluating $(\omega_t^n)_{n=1}^N$; $(b)$ drawing $N$ indices $(m_n)_{n=1}^N$ from $(\omega_t^n\delta_{n})_{n=1}^N$ using the alias method; $(c)$ for each $n$ and $l$, evaluating $(\omega_{t,l}^{m_n,k})_{k=1}^N$; $(d)$ for each $n$ and $l$, drawing an index $k_{n,l}$  from $(\omega_t^{m_n,k}\delta_{k})_{k=1}^N$; and $(e)$ setting 
$$\bm{X}_{(t+1)_-}^n:=(\Theta_{1:t}^{m_n},X_{1:t-1,1:L}^{m_n},X_{t,1}^{k_{n,1}},\dots,X_{t,L}^{k_{n,L}})\quad\forall n\in[N].$$
This process can be trivially sped up by  storing and re-using the weights $(\omega_t^{m_n,k})_{k=1}^N$ for indices $m_n$ drawn more than once. If there are $m_n$s drawn many times, one may be better off computing probability and alias tables for the corresponding weights $(\omega_t^{m_n,k})_{k=1}^N$ and using these to generate the corresponding $k_n$s.
\subsection{An alternative approach}\label{app:variantex}
This model has two types of direct dependencies linking its variables: $\Theta_{t+1}$ is drawn from an exponential distribution with mean $\Theta_t$ and $X_{t,l}$ is drawn from a zero-mean Gaussian distribution with variance $\Theta_t$. The first of these two is encoded into the kernels in~(\ref{eq:exker1},\ref{eq:exker2}): we propose a value for $\Theta_{t+1}^{n}$ by sampling $f(d\theta_{t+1};\Theta_t^{n})$.  The second type, however, is not:   $X_{t,l}^{n}$ are drawn independently of $\Theta_t^{n}$ by sampling from $\cal{N}(y_{t,l},1)$.

To   account for the second type of dependency, we could instead try  drawing 
$$\Theta^{n}_t\sim f(d\theta_t;\Theta_{t-1}^n)\enskip\text{(or $f(d\theta_0;1)$ if $t=0$)},\quad X_{t,1:L}^{n,m}\sim \prod_{l=1}^L\cal{N}\left(\frac{\Theta_t^n}{1+\Theta_t^n}y_{t,l},\frac{\Theta_t^n}{1+\Theta_t^n}\right),$$
for each for all $n$ and $m$ in $[N]$; resulting in a total cost of $\cal{O}(N^2)$.  We would then have to replace the product-form approximation to $\gamma_{(t+1)_-}$ in~\eqref{eq:dna8dgragbwabyfbwafaf} with the following `partially product-form' one~\cite[Section~3.2]{Kuntz2021}:
\begin{align*}\gamma_{(t+1)_-}^N:&=\frac{\cal{Z}_t^N}{N^{L+1}}\sum_{n=1}^N\left[\prod_{l=1}^L\sum_{n_l=1}^N\cal{N}(y_{t,l};0,1+\Theta_{t}^n)\delta_{X^{n,n_l}_{t,l}}\right]\delta_{(\Theta_{1:t}^n,X^n_{1:t-1,1:L})}\\
&=\frac{\cal{Z}_t^N}{N^{L+1}}\sum_{n=1}^N\left[\prod_{l=1}^L\frac{1}{N}\sum_{n_l=1}^N\delta_{X^{n,n_l}_{t,l}}\right]g_t(\Theta_t^n)\delta_{(\Theta_{1:t}^n,X^n_{1:t-1,1:L})},
\end{align*}
where $g_t(\theta):=\prod_{l=1}^L\cal{N}(y_{t,l};0,1+\theta)$ and $\cal{Z}_t^N$ denotes the mass of the previously computed $\gamma_{t_-}^N$ (with $\cal{Z}_0^N:=1$). Drawing $N$ samples  from the approximation's normalization,
$$\pi_{(t+1)_-}^N:=\frac{\gamma_{(t+1)_-}^N}{\cal{Z}_{(t+1)_-}^N}=\sum_{n=1}^N\left[\prod_{l=1}^L\frac{1}{N}\sum_{n_l=1}^N\delta_{X^{n,n_l}_{t,l}}\right]\omega_t^n\delta_{(\Theta_{1:t}^n,X^n_{1:t-1,1:L})}$$
where $\omega_t^n:=g_t(\Theta_t^n)/\sum_{m=1}^Ng_t(\Theta_t^m)$, can then be done in $\cal{O}(N)$ operations by: $(a)$ evaluating $(\omega_t^n)_{n=1}^N$; $(b)$ drawing $N$ indices $(m_n)_{n=1}^N$ from $(\omega_t^n\delta_{n})_{n=1}^N$ using the alias method; $(c)$ for each $n$ and $l$, drawing an index $k_{n,l}$ from $(N^{-1}\delta_{k})_{k=1}^N$  ; and $(d)$ setting 
$$\bm{X}_{(t+1)_-}^n:=(\Theta_{1:t}^{m_n},X_{1:t-1,1:L}^{m_n},X_{t-1,1}^{m_n,k_{n,1}},\dots,X_{t-1,L}^{m_n,k_{n,L}})\quad\forall n\in[N].$$%
\section{Important notation for the proofs}\label{app:notation}
Throughout  Appendices~\ref{app:lplln}--\ref{app:dacvssmcproofs}, we employ the notation introduced in Sections~\ref{sec:probdef}--\ref{sec:dacalg} (cf. Table~\ref{tab:notation} for a summary) and the following:
\begin{itemize}
\item Unless specified otherwise, $(\Omega,\cal{F},\mathbb{P})$ denotes the underlying probability space on which all random variables introduced in Algorithm~\ref{alg:adacsmc} (including those indexed by different $N$s) are jointly defined, and  $\mathbb{E}$ denotes expectations with respect to $\mathbb{P}$.
\item $C$ denotes a generic positive constant dependent on $u$, $v$, and $p$ but not on $\varphi$ and $N$ that may change from one line to the next. We use this notation whenever we are interested only in the existence of such a constant and its precise value is unimportant. To not overly repeat ourselves, throughout the appendices, we often introduce $C$ without mention (e.g.\ Lemma~\ref{lem:borelcantelli} below) and leave the corresponding `there exists some $C<\infty$ such that' or `for some $C<\infty$' statements implicit. 
\item Given any two subsets $A\subseteq B\subseteq\mathbb{T}$, a measure $\nu$ on $(E_A,\cal{E}_A)$, and measurable function $\psi$ on $(E_B,\cal{E}_B)$, we use $\nu(\psi)$ to denote the measurable function on $(E_{B\backslash A},\cal{E}_{B\backslash A})$ obtained by integrating the arguments of $\psi$ indexed by $A$ with respect to $\nu$:
$$\nu(\psi)(x_{B\backslash A}):=\int\psi(x_A,x_{B\backslash A})\nu(dx_A)\quad\forall x_{B\backslash A}\in E_{B\backslash A},$$
under the assumption that the integral is well-defined for all $x_{B\backslash A}$ in $E_{B\backslash A}$.
\item For any $u$ in $\mathbb{T}^{\not \partial}$ and $N>0$,  $\epsilon_u^N:=N^{-1}\sum_{n=1}^N\delta_{\bm{X}^{n,N}_{u_-}}$ denotes the empirical distribution of the resampled particles (not to be confused with $\pi_u^N$ in~\eqref{eq:actualtargetdistributions} which is empirical distribution of these particles after mutation) and $\pi_{\cal{C}_u}^N:=\prod_{v\in\cal{C}_u}\pi_v^N$ denotes the product of the particle approximations to the extended normalized auxiliary measures indexed by $u$'s children. Additionally, $\cal{F}_{\cal{C}_u}^N$ denotes the sigma algebra generated by the mutated particles $(\bm{X}_{\cal{C}_u}^{n,N})_{n=1}^N:=(\bm{X}_{u1}^{n,N},\dots,\bm{X}_{uc_u}^{n,N})_{n=1}^N$ indexed by $u$'s children, and that $\cal{F}_{u_-}^N$ that generated by the resampled particles $(\bm{X}_{u_-}^{n,N})_{n=1}^N$ indexed by $u$.
\end{itemize}
\section{Proofs of Theorems~\ref{THRM:LLN},~\ref{THRM:WEAK},~and~\ref{THRM:LP}}\label{app:lplln}
Throughout this appendix, we use the notation described in Appendix~\ref{app:notation}. The aim of the appendix is to prove  $L^p$~inequalities in Theorem~\ref{THRM:LP}. The laws   of large numbers in Theorem~\ref{THRM:LLN} follow from the inequalities and the ensuing consequence of the Borel-Cantelli lemma:
\begin{lemma}[{e.g.\ \cite[p.~17]{boustati2020generalised}}]\label{lem:borelcantelli}Let $(Z_N)_{N=1}^\infty$ be a sequence of real-valued random variables. If there exists a $p>2$ such that for all $N > 0$, 
$\Ebb{\mmag{Z_N}^p}\leq {C} {N^{-p/2}}$,
then $Z_N\to0$ with probability one as $N\to\infty$.
\end{lemma}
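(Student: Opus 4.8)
The plan is to combine Markov's inequality with the first Borel--Cantelli lemma in the usual way. First I would fix an arbitrary $\varepsilon>0$ and apply Markov's inequality to the non-negative random variable $\mmag{Z_N}^p$: this gives $\Pbb{\mmag{Z_N}>\varepsilon}\leq \varepsilon^{-p}\Ebb{\mmag{Z_N}^p}$ for every $N>0$, and the hypothesis bounds the right-hand side by $C\varepsilon^{-p}N^{-p/2}$.

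Next I would sum over $N$. Since $p>2$, the exponent $p/2$ exceeds $1$, so $\sum_{N=1}^\infty N^{-p/2}<\infty$; hence $\sum_{N=1}^\infty\Pbb{\mmag{Z_N}>\varepsilon}<\infty$. The first Borel--Cantelli lemma then yields $\Pbb{\mmag{Z_N}>\varepsilon\text{ for infinitely many }N}=0$, that is, $\limsup_{N\to\infty}\mmag{Z_N}\leq\varepsilon$ almost surely.

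Finally I would remove the dependence on $\varepsilon$ by a countable intersection: applying the previous step with $\varepsilon=1/k$ for each $k\in\n$ and using that a countable union of $\mathbb{P}$-null sets is $\mathbb{P}$-null, one obtains $\limsup_{N\to\infty}\mmag{Z_N}=0$ with probability one, which is the assertion. There is no substantive obstacle here; the only thing to be attentive to is that the moment bound is assumed to hold along the entire sequence of $N$ (not merely a subsequence), which is precisely what makes the series $\sum_N N^{-p/2}$ the relevant one and what delivers almost sure convergence rather than convergence in probability.
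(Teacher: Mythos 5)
Your argument is correct and is precisely the standard Markov--Borel--Cantelli proof that the paper has in mind (it cites this lemma as a known result and describes the method as ``the usual approach involving Chebyshev's inequality and the Borel--Cantelli lemma''). Nothing further is needed.
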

The weak convergence in Theorem~\ref{THRM:WEAK} follows directly:
\begin{proof}[Proof of Theorem~\ref{THRM:WEAK}]
Because the product of (finitely-many) Borel $\sigma$-algebras on Polish spaces coincides with the Borel $\sigma$-algebra on the product space, as the former are separable cf.\ \cite[Lemma 1.2]{Kallenberg2021}, and the latter is also Polish, this follows immediately from Theorem~\ref{THRM:LLN} and the equivalence established in \cite[Theorem 2.2]{berti2006almost}. See \cite[Section S1, Theorem 1]{Schmon2021} for details in an SMC context.
\end{proof}
To establish the $L^p$ inequalities in Theorem~\ref{THRM:LP}, we first obtain such results for $(\pi_u^N)_{u\in\mathbb{T}}$:
\begin{theorem}[$L^p$ inequalities for $(\pi_u^N)_{u\in\mathbb{T}}$]\label{thrm:lpapp}If Assumptions~\ref{ass:abscont}--\ref{ass:boundweights} hold, then
\begin{align*}
\Ebb{\mmag{\pi^N_u(\varphi)-\pi_u(\varphi)}^p}^{\frac{1}{p}}\leq\frac{C\norm{\varphi}}{N^{1/2}},\quad \forall N>0,\enskip\varphi\in\cal{B}_b(\bm{E}_u),\enskip p\geq1,\enskip u\in\mathbb{T}.
\end{align*}
\end{theorem}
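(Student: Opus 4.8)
\textbf{Proof proposal for Theorem~\ref{thrm:lpapp}.} The plan is to induct over the tree from the leaves towards the root, establishing at each node $u$ the stated $L^p$ inequality for $\pi_u^N$ together with the companion bounds $\Ebb{\mmag{\gamma_u^N(\varphi)-\gamma_u(\varphi)}^p}^{1/p}\leq C\norm{\varphi}N^{-1/2}$ and $\Ebb{\mmag{\cal{Z}_u^N-\cal{Z}_u}^q}^{1/q}\leq CN^{-1/2}$ (for every $q\geq1$), which are what the induction actually consumes. A structural fact used throughout: since the recursive calls $\text{dac\_smc}(v)$, $v\in\cal{C}_u$, draw from disjoint collections of random variables, the approximations $(\gamma_v^N)_{v\in\cal{C}_u}$, hence the errors $(\gamma_v^N-\gamma_v)_{v\in\cal{C}_u}$ and the masses $(\cal{Z}_v^N)_{v\in\cal{C}_u}$, are mutually independent. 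For the base case ($u$ a leaf), $\pi_u^N=N^{-1}\sum_{n=1}^N\delta_{\bm{X}_u^{n,N}}$ with $\bm{X}_u^{1,N},\dots,\bm{X}_u^{N,N}$ i.i.d.\ from $K_u=\gamma_u=\pi_u$, and the bound is immediate from the Marcinkiewicz--Zygmund inequality (\cite[Lemma~7.3.3]{DelMoral2004}).

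For the inductive step I would walk through the four operations at node $u$ in the order they occur. The heart is the \emph{correction} step: writing $\gamma_{u_-}^N(\varphi)=\gamma_{\cal{C}_u}^N(w_{u_-}\varphi)$ and $\gamma_{u_-}(\varphi)=\gamma_{\cal{C}_u}(w_{u_-}\varphi)$, with $\norm{w_{u_-}\varphi}\leq\norm{w_{u_-}}\,\norm{\varphi}<\infty$ by Assumption~\ref{ass:boundweights}, it suffices to control $(\gamma_{\cal{C}_u}^N-\gamma_{\cal{C}_u})(\phi)$ for bounded $\phi$ on $\bm{E}_{\cal{C}_u}$, i.e.\ to prove Lemma~\ref{lem:prodlp}. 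Ordering the children $v_1,\dots,v_{c_u}$, I would telescope $\prod_v\gamma_v^N-\prod_v\gamma_v$ into $\sum_{j}(\prod_{i<j}\gamma_{v_i})\times(\gamma_{v_j}^N-\gamma_{v_j})\times(\prod_{i>j}\gamma_{v_i}^N)$; integrating $\phi$ against the deterministic factors $\prod_{i<j}\gamma_{v_i}$ and then against $\prod_{i>j}\gamma_{v_i}^N$ leaves a bounded random test function $\chi$, measurable with respect to $\mathcal{G}_j:=\sigma(\text{subtrees }v_i,\ i>j)$, with $\norm{\chi}\leq\bigl(\prod_{i<j}\cal{Z}_{v_i}\bigr)\bigl(\prod_{i>j}\cal{Z}_{v_i}^N\bigr)\norm{\phi}$, so the $j$-th term equals $(\gamma_{v_j}^N-\gamma_{v_j})(\chi)$. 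Conditioning on $\mathcal{G}_j$ and using that the $v_j$-subtree is independent of all others, the inductive $L^p$ bound for $\gamma_{v_j}^N$ applies with $\chi$ frozen; taking the outer expectation, Hölder's inequality together with the uniform-in-$N$ boundedness of $\Ebb{(\cal{Z}_{v_i}^N)^q}$ (itself a consequence of the inductive bound on $\cal{Z}_{v_i}^N$) finishes the estimate. The re-weighting then transfers this to $\gamma_{u_-}^N$, and taking $\varphi\equiv1$ gives $\Ebb{\mmag{\cal{Z}_u^N-\cal{Z}_u}^q}^{1/q}\leq CN^{-1/2}$.

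The remaining three operations follow the classical iterative arguments of \cite{Crisan2002,miguez2013convergence,DelMoral2004}. \emph{Normalization}: with $\pi_{u_-}^N(\varphi)-\pi_{u_-}(\varphi)=(\cal{Z}_u^N)^{-1}(\gamma_{u_-}^N-\gamma_{u_-})(\varphi-\pi_{u_-}(\varphi))$, the numerator is $O(N^{-1/2})$ in $L^p$ by the previous step; to avoid stopping-time machinery I would split on $\{\cal{Z}_u^N\geq\cal{Z}_u/2\}$ (where $(\cal{Z}_u^N)^{-1}\leq2/\cal{Z}_u$) and its complement (where the crude bound $\mmag{\pi_{u_-}^N(\varphi)-\pi_{u_-}(\varphi)}\leq2\norm{\varphi}$ is combined with $\Pbb{\cal{Z}_u^N<\cal{Z}_u/2}\leq C N^{-p/2}$ from Markov's inequality and the $L^p$ bound on $\cal{Z}_u^N$). \emph{Resampling}: conditionally on the children's particles the resampled particles $\bm{X}_{u_-}^{1,N},\dots,\bm{X}_{u_-}^{N,N}$ are i.i.d.\ from $\pi_{u_-}^N$, so $\epsilon_u^N(\varphi)-\pi_{u_-}^N(\varphi)$ has conditional $L^p$ norm $\leq C\norm{\varphi}N^{-1/2}$ by Marcinkiewicz--Zygmund, which Minkowski combines with the normalization bound. \emph{Mutation}: conditionally on $\cal{F}_{u_-}^N$ the $\bm{X}_u^{n,N}=(X_u^{n,N},\bm{X}_{u_-}^{n,N})$ are independent with $\Ebb{\varphi(\bm{X}_u^{n,N})\mid\cal{F}_{u_-}^N}=(K_u\varphi)(\bm{X}_{u_-}^{n,N})$, where $(K_u\varphi)(\bm{x}_{\cal{C}_u}):=\int\varphi(x_u,\bm{x}_{\cal{C}_u})K_u(\bm{x}_{\cal{C}_u},dx_u)$; Marcinkiewicz--Zygmund bounds $\pi_u^N(\varphi)-\epsilon_u^N(K_u\varphi)$ conditionally, and since $\norm{K_u\varphi}\leq\norm{\varphi}$ and $\pi_{u_-}(K_u\varphi)=(\pi_{u_-}\otimes K_u)(\varphi)=\pi_u(\varphi)$, applying the resampling bound to $K_u\varphi$ yields the inequality for $\pi_u^N$. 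Finally $\gamma_u^N=\cal{Z}_u^N\pi_u^N$ inherits its bound from those on $\cal{Z}_u^N$ and $\pi_u^N$ via Hölder, closing the induction; Theorem~\ref{THRM:LP} then follows by the single importance-sampling correction~\eqref{eq:adacests}, exactly as in the passage from SMC to ASMC.

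The main obstacle is the correction step, i.e.\ Lemma~\ref{lem:prodlp}: bounding the $L^p$ norm of the product-measure error $\gamma_{\cal{C}_u}^N-\gamma_{\cal{C}_u}$. The telescoping-plus-conditioning device above, which rests on the independence of the subtrees and on uniform-in-$N$ moment control of the masses $\cal{Z}_v^N$, is the genuinely new ingredient; once it is in place the rest is a transcription of the standard SMC/ASMC machinery. (One should also track the constants' dependence on the number and shape of descendants of $u$, but since $\mathbb{T}$ is finite this is harmless.)
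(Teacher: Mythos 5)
Your proposal is correct, and the overall scaffolding coincides with the paper's: a Marcinkiewicz--Zygmund bound at the leaves, then an induction up the tree showing that the correction, resampling and mutation steps each preserve the $\cal{O}(N^{-1/2})$ $L^p$ bound, with subtree independence plus conditioning on a frozen random test function doing the work at the merge. Where you genuinely diverge is in the two "new" sub-steps. First, you carry the induction on the \emph{unnormalized} objects $\gamma_v^N$ and the masses $\cal{Z}_v^N$, and handle the product $\gamma_{\cal{C}_u}^N-\gamma_{\cal{C}_u}$ by a telescoping sum with exactly one error factor per term; the paper instead inducts on the normalized $\pi_v^N$ and proves a standalone product lemma (Lemma~\ref{lem:prodlp}) whose three-term decomposition contains a cross-error term $(\eta_1^N-\eta_1)\times(\eta_2^N-\eta_2)$, also dispatched by conditioning. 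Your telescoping buys you the absence of cross terms, but the price is that you must control moments of the random masses $\Ebb{(\cal{Z}_{v_i}^N)^q}$ uniformly in $N$ inside the induction, whereas the paper's use of probability measures makes every integrated test function automatically bounded by $\norm{\varphi}$ and defers all mass estimates to the proof of Theorem~\ref{THRM:LP} (where they are obtained by a multinomial expansion of $\prod_v\cal{Z}_v^N$ and independence). Second, for the normalization $\gamma_{u_-}^N\mapsto\pi_{u_-}^N$ you use the identity $\pi_{u_-}^N(\varphi)-\pi_{u_-}(\varphi)=(\cal{Z}_u^N)^{-1}(\gamma_{u_-}^N-\gamma_{u_-})(\varphi-\pi_{u_-}(\varphi))$ together with a split on $\{\cal{Z}_u^N\geq\cal{Z}_u/2\}$ and a Markov bound with a moment of order $q\geq p$; the paper (Lemma~\ref{lem:lpcor}) avoids any small-denominator event by writing $\pi_{u_-}^N(\varphi)=\pi_{\cal{C}_u}^N(w_{u_-}\varphi)/\pi_{\cal{C}_u}^N(w_{u_-})$ and bounding the ratio error with the \emph{deterministic} denominator $\pi_{\cal{C}_u}(w_{u_-})=\cal{Z}_u/\cal{Z}_{\cal{C}_u}$, exploiting $\mmag{\pi_{u_-}^N(\varphi)}\leq\norm{\varphi}$. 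Both of your devices are sound under Assumptions~\ref{ass:abscont}--\ref{ass:boundweights} (the only caveats are the routine Jensen step to pass from integer moments in the MZ lemma to general $p\geq1$, and the standard freezing argument for the random $\cal{G}_j$-measurable test function, both of which the paper also treats lightly), so your route delivers the theorem; the paper's version is slightly leaner in that it never touches tail events or mass moments during the induction, and its product lemma is stated abstractly so it can be reused verbatim for the bias bounds.
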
  
To prove the above we follow the  approach  for standard SMC in \cite{Crisan2002} adapted to cover the case of general $p$ (as in, for example, \cite{miguez2013convergence}). In particular, we derive an $L^p$ inequality for the empirical distributions of the particles indexed by the tree's leaves and show that each step of the algorithm preserves this inequality. Let's start: the  Marcinkiewicz-Zygmund-type  inequality in \cite[Lemma~7.3.3]{DelMoral2004} shows that
$$\Ebb{\mmag{\pi^N_u(\varphi)-\pi_{u}(\varphi)}^m}^{\frac{1}{m}}\leq\frac{C\norm{\varphi}}{N^{1/2}}\quad\forall N>0,\enskip \varphi\in \cal{B}_b(\bm{E}_{u}),$$
for all $u$ in $\mathbb{T}^\partial$ and  positive integers $m$. We extend the above to all $p\geq 1$ using Jensen's inequality: if $m$ is the smallest integer no smaller than $p$, then 
\begin{align}
\Ebb{\mmag{\pi^N_u(\varphi)-\pi_{u}(\varphi)}^p}^{\frac{1}{p}}&=\Ebb{\left(\mmag{\pi^N_u(\varphi)-\pi_{u}(\varphi)}^m\right)^{\frac{p}{m}}}^{\frac{1}{p}}\leq\Ebb{\mmag{\pi^N_u(\varphi)-\pi_{u}(\varphi)}^m}^{\frac{1}{m}}\nonumber\\
& \leq\frac{C\norm{\varphi}}{N^{1/2}}\quad\forall N>0,\enskip \varphi\in \cal{B}_b(\bm{E}_{u}).\label{eq:lp0}
\end{align}
Taking the product of $\pi_v^N$ over all children $v$ of a node $u$ preserves~\eqref{eq:lp0}:
\begin{lemma}[Product step]\label{lem:lpcoa}If, in addition to Assumptions~\ref{ass:abscont}--\ref{ass:boundweights},~\eqref{eq:lp0} is satisfied for each child $v$ (i.e.\ it holds with $v$ replacing $u$ therein)  of a node $u$ in $\mathbb{T}^{\not \partial}$ and some $p\geq1$, 
then
\begin{equation}\label{eq:lpcoa}
\Ebb{\mmag{\pi^N_{\cal{C}_u}(\varphi)-\pi_{\cal{C}_u}(\varphi)}^p}^{\frac{1}{p}}\leq\frac{C\norm{\varphi}}{N^{1/2}}\quad\forall N>0,\enskip \varphi\in \cal{B}_b(\bm{E}_{\cal{C}_u}).
\end{equation}
\end{lemma}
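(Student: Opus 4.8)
The plan is to combine the familiar telescoping-product identity with a conditioning argument that exploits the mutual independence of the particle systems attached to the different children of $u$. Order the children of $u$ as $\cal{C}_u=\{v_1,\dots,v_{c_u}\}$. The structural fact underpinning everything is that the recursive calls $\text{dac\_smc}(v_1),\dots,\text{dac\_smc}(v_{c_u})$ draw on disjoint sources of randomness, so the particle systems $(\bm{X}_{v_j}^{n,N})_{n\le N}$, $j=1,\dots,c_u$, are mutually independent, and hence so are the random probability measures $\pi^N_{v_1},\dots,\pi^N_{v_{c_u}}$. Writing $\cal{G}_{<j}$ for the $\sigma$-algebra generated by the first $j-1$ of these particle systems, this means $\pi^N_{v_j}$ is independent of $\cal{G}_{<j}$ while $\pi^N_{v_1},\dots,\pi^N_{v_{j-1}}$ are $\cal{G}_{<j}$-measurable.

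Next, for $\varphi\in\cal{B}_b(\bm{E}_{\cal{C}_u})$ I would apply the telescoping identity $\prod_{i=1}^{m}a_i-\prod_{i=1}^{m}b_i=\sum_{j=1}^{m}\big(\prod_{i<j}a_i\big)(a_j-b_j)\big(\prod_{i>j}b_i\big)$ with $a_i=\pi^N_{v_i}$ and $b_i=\pi_{v_i}$, which yields
\[
\pi^N_{\cal{C}_u}(\varphi)-\pi_{\cal{C}_u}(\varphi)=\sum_{j=1}^{c_u}\big(\pi^N_{v_j}-\pi_{v_j}\big)(\tilde\varphi_j),
\]
where $\tilde\varphi_j$ is the bounded function on $\bm{E}_{v_j}$ obtained from $\varphi$ by integrating out the coordinates indexed by $v_i$, $i<j$, against $\pi^N_{v_i}$ and those indexed by $v_i$, $i>j$, against $\pi_{v_i}$. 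Two observations make this useful: since every measure involved is a probability measure, $\norm{\tilde\varphi_j}\le\norm{\varphi}$; and $\tilde\varphi_j$ is $\cal{G}_{<j}$-measurable (joint measurability in its free argument being a routine Fubini matter) and does not depend on the particles of $v_j$. By Minkowski's inequality (valid since $p\ge1$) it then suffices to bound $\Ebb{\mmag{(\pi^N_{v_j}-\pi_{v_j})(\tilde\varphi_j)}^p}^{1/p}$ by $C\norm{\varphi}N^{-1/2}$ for each of the $c_u$ values of $j$.

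For the $j$-th term I would condition on $\cal{G}_{<j}$. Because $\pi^N_{v_j}$ is independent of $\cal{G}_{<j}$ whereas $\tilde\varphi_j$ is $\cal{G}_{<j}$-measurable, the ``freezing'' property of conditional expectation identifies $\Ebb{\mmag{(\pi^N_{v_j}-\pi_{v_j})(\tilde\varphi_j)}^p\mid\cal{G}_{<j}}$ with the value, at the realised $\tilde\varphi_j$, of the deterministic map $\psi\mapsto\Ebb{\mmag{(\pi^N_{v_j}-\pi_{v_j})(\psi)}^p}$ on bounded test functions $\psi$. The hypothesis~\eqref{eq:lp0} for the child $v_j$ bounds that map by $C^p\norm{\psi}^pN^{-p/2}\le C^p\norm{\varphi}^pN^{-p/2}$, so taking expectations yields $\Ebb{\mmag{(\pi^N_{v_j}-\pi_{v_j})(\tilde\varphi_j)}^p}\le C^p\norm{\varphi}^pN^{-p/2}$; summing over $j$ via Minkowski then gives~\eqref{eq:lpcoa} with $C$ replaced by $c_uC$.

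The one step needing genuine care is this freezing argument: legitimately transferring the \emph{unconditional} $L^p$ bound assumed for $\pi^N_{v_j}$ to the \emph{random} (but $\cal{G}_{<j}$-measurable, uniformly $\norm{\varphi}$-bounded) test function $\tilde\varphi_j$. This rests squarely on the independence of the children's particle systems, i.e.\ $\pi^N_{v_j}\perp\cal{G}_{<j}$; everything else — the telescoping identity, the bound $\norm{\tilde\varphi_j}\le\norm{\varphi}$ coming from all the measures being probabilities, and the Minkowski step over the constantly many children — is entirely routine. (This same telescoping-plus-conditioning template is what will be reused, with \eqref{eq:lp0} replaced by the appropriate inductive hypothesis, to push the inequality through the subsequent correction, resampling, and mutation steps.)
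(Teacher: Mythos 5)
Your proof is correct, and it takes a mildly but genuinely different route from the paper's. The paper proves this product step by first establishing a two-factor lemma (its Lemma~\ref{lem:prodlp}) based on the decomposition
\begin{equation*}
\eta^N_1\times\eta^N_2-\eta_1\times\eta_2=(\eta^N_1-\eta_1)\times(\eta^N_2-\eta_2)+(\eta^N_1-\eta_1)\times\eta_2+\eta_1\times(\eta^N_2-\eta_2),
\end{equation*}
which produces a genuine product-of-errors cross term that is then bounded (crudely, at rate $N^{-1/2}$) by the same conditioning/freezing device you use; the lemma is then iterated pairwise over the children, pairing $\pi^N_{u[k]}$ with $\pi^N_{u(k+1)}$. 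Your telescoping identity instead expands $\pi^N_{\cal{C}_u}-\pi_{\cal{C}_u}$ directly into $c_u$ single-child error terms $(\pi^N_{v_j}-\pi_{v_j})(\tilde\varphi_j)$ with no cross terms at all, and each term is dispatched by exactly the same two ingredients the paper relies on: independence of the children's particle systems and transfer of the child-level $L^p$ bound to a $\cal{G}_{<j}$-measurable test function with $\norm{\tilde\varphi_j}\le\norm{\varphi}$ (the freezing step is legitimate here, and is in fact the same manoeuvre the paper performs inside Lemma~\ref{lem:prodlp}; note also that $\tilde\varphi_j$ is an explicit finite average since the $\pi^N_{v_i}$ are atomic, so measurability is immediate). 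What your route buys is a shorter, one-shot argument for this lemma that never has to control products of errors; what the paper's route buys is a reusable two-factor lemma whose product-of-errors term mirrors the structure that becomes essential elsewhere in the analysis (the bias bound of Lemma~\ref{lem:prodbias} and, in sharpened $\cal{O}(N^{-1})$ form, the product $L^2$ inequality of Theorem~\ref{THRM:CROSS} used in the CLT), so the pairwise decomposition is the template the paper keeps recycling. Your constant $c_uC$ versus the paper's iterated constant is immaterial since $c_u$ is fixed.
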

The key to proving the above is the following lemma giving an $L^p$ inequality for  products of independent random probability measures satisfying their own individual $L^p$ inequalities:
\begin{lemma}\label{lem:prodlp}Suppose that $(\eta_1^N)_{N=1}^\infty$ and $(\eta_2^N)_{N=1}^\infty$ are independent sequences of random probability measures defined on some common probability triplet $(\Omega,\cal{F},\mathbb{P})$, respectively taking values in  some measurable spaces $(S_1,\cal{S}_1)$ and $(S_2,\cal{S}_2)$, and satisfying
\begin{equation}\label{eq:etalp}\Ebb{\mmag{\eta^N_k(\varphi)-\eta_{k}(\varphi)}^p}^{\frac{1}{p}}\leq\frac{C\norm{\varphi}}{N^{1/2}}\quad\forall N>0,\enskip \varphi\in \cal{B}_b(S_{k}),\enskip k\in\{1, 2\},\end{equation}
for some $p\geq 1$, with limits $\eta_1$ and $\eta_2$ that are also probability measures. Then,
\begin{equation}\label{eq:etaprodlp}
\Ebb{\mmag{(\eta^N_1\times\eta^N_2)(\varphi)-(\eta_1\times\eta_2)(\varphi)}^p}^{\frac{1}{p}}\leq\frac{C\norm{\varphi}}{N^{1/2}}\quad\forall N>0,\enskip \varphi\in\cal{B}_b(S_1\times S_2).
\end{equation}
\end{lemma}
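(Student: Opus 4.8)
The plan is to reduce the error of the product estimator to the two marginal errors by interpolating through $\eo\times\ent$ and then freezing the non-random marginal into the test function. First I would write, for any $\varphi\in\cal{B}_b(S_1\times S_2)$,
\[
(\eno\times\ent)(\varphi)-(\eot)(\varphi)=(\eno-\eo)(\Phi^N)+(\ent-\et)(\psi),
\]
where $\Phi^N(x_1):=\int\varphi(x_1,x_2)\,\ent(dx_2)$ and $\psi(x_2):=\int\varphi(x_1,x_2)\,\eo(dx_1)$; this identity is just Fubini's theorem applied to the finite product measures $\eno\times\ent$, $\eo\times\ent$ and $\eot$. Since $\ent$ and $\eo$ are probability measures, $\Phi^N\in\cal{B}_b(S_1)$ with $\norm{\Phi^N}\leq\norm{\varphi}$ almost surely and $\psi\in\cal{B}_b(S_2)$ with $\norm{\psi}\leq\norm{\varphi}$. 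By Minkowski's inequality it then suffices to bound the $L^p$ norm of each of the two summands by $C\norm{\varphi}N^{-1/2}$.

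The second summand is immediate: $\psi$ is a \emph{fixed} element of $\cal{B}_b(S_2)$, so applying hypothesis~\eqref{eq:etalp} with $k=2$ gives $\Ebb{\mmag{(\ent-\et)(\psi)}^p}^{1/p}\leq C\norm{\psi}N^{-1/2}\leq C\norm{\varphi}N^{-1/2}$.

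The first summand is where independence is needed, because $\Phi^N$ depends on $\ent$ and is therefore random. I would realise the two sequences on a product probability space $(\Omega_1\times\Omega_2,\cal{F}_1\otimes\cal{F}_2,\bb{P}_1\otimes\bb{P}_2)$ on which $\eno$ is a function of the first coordinate alone and $\ent$ of the second alone (possible precisely because the sequences are independent). Then, by Tonelli's theorem,
\[
\Ebb{\mmag{(\eno-\eo)(\Phi^N)}^p}=\int_{\Omega_2}\int_{\Omega_1}\mmag{(\eno(\omega_1)-\eo)(\Phi^N(\omega_2))}^p\,\bb{P}_1(d\omega_1)\,\bb{P}_2(d\omega_2).
\]
For each fixed $\omega_2$, $\Phi^N(\omega_2)$ is a deterministic member of $\cal{B}_b(S_1)$ with supremum norm at most $\norm{\varphi}$, so the inner integral is $\Ebb{\mmag{(\eno-\eo)(\phi)}^p}$ evaluated at $\phi=\Phi^N(\omega_2)$, which is at most $(C\norm{\varphi})^pN^{-p/2}$ by~\eqref{eq:etalp} with $k=1$. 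Integrating over $\Omega_2$ and taking $p$-th roots yields $\Ebb{\mmag{(\eno-\eo)(\Phi^N)}^p}^{1/p}\leq C\norm{\varphi}N^{-1/2}$, and combining with the previous paragraph gives~\eqref{eq:etaprodlp}.

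The only genuine obstacle is making this "condition on $\ent$'' step rigorous: one needs $(\omega_1,\omega_2)\mapsto(\eno(\omega_1)-\eo)(\Phi^N(\omega_2))$ to be jointly measurable so that Tonelli applies, and a measurable version of $\Phi^N$ to be available. Both are routine — the map in question equals $(\eno(\omega_1)\times\ent(\omega_2))(\varphi)-(\eo\times\ent(\omega_2))(\varphi)$, a difference of integrals of the fixed bounded measurable function $\varphi$ against kernels that are measurable in $(\omega_1,\omega_2)$, so measurability follows by a standard monotone-class argument. Everything else is a direct application of the hypotheses together with Fubini/Tonelli and Minkowski's inequality.
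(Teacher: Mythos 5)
Your proof is correct, but it follows a genuinely different decomposition from the paper's. You telescope through the single intermediate measure $\eo\times\ent$, writing the error as $(\eno-\eo)(\Phi^N)+(\ent-\et)(\psi)$ with the random test function $\Phi^N=\ent(\varphi)$ and the deterministic one $\psi=\eo(\varphi)$, and you handle the random term by conditioning on the second sequence (Tonelli on a product-space realization), which is legitimate because the hypothesis~\eqref{eq:etalp} holds with a constant uniform over test functions and $\norm{\Phi^N(\omega_2)}\leq\norm{\varphi}$ uniformly in $\omega_2$. The paper instead uses the exact expansion $\enot(\varphi)-\eot(\varphi)=(\eno-\eo)\times(\ent-\et)(\varphi)+(\eno-\eo)\times\et(\varphi)+\eo\times(\ent-\et)(\varphi)$, bounds the last two terms with the deterministic test functions $\et(\varphi)$ and $\eo(\varphi)$, and invokes the same conditioning-on-$(\ent)_N$ device only for the cross term. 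Both arguments rest on the same two ingredients (the bound $\norm{\eta(\varphi)}\leq\norm{\varphi}$ for a probability measure $\eta$, and independence used through conditioning), so your route is sound and in fact a bit shorter: two terms instead of three, and no cross term to dispatch. What the paper's three-term expansion buys is reusability: isolating the product of errors $(\eno-\eo)\times(\ent-\et)$ is exactly the structure exploited later to get the sharper $\cal{O}(N^{-1})$ estimates (Lemma~\ref{lem:prodbias}, Theorem~\ref{THRM:CROSS}) and the $o(N^{-1/2})$ remainder control in the CLT, whereas your split delivers only the $\cal{O}(N^{-1/2})$ rate, which is all that is needed for this lemma.
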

\begin{proof}Fix any $N>0$ and $\varphi$ in $\cal{B}_b(S_1\times S_2)$ and note that
\begin{align}
\label{eq:decomp}\enot(\varphi)-\eot(\varphi)=&(\eno-\eo)\times(\ent-\et)(\varphi)\\
&+(\eno-\eo)\times\et(\varphi)+\eo\times(\ent-\et)(\varphi).\nonumber
\end{align}
Hence, Minkowski's inequality  implies that
\begin{align}
\Ebb{\mmag{\enot(\varphi)-\eot(\varphi)}^p}^{\frac{1}{p}}\leq& \Ebb{\mmag{(\eno-\eo)\times(\ent-\et)(\varphi)}^p}^{\frac{1}{p}}\label{eq:nfme78wahfe78ahfa}\\
&+\Ebb{\mmag{\eno(\et(\varphi))-\eo(\et(\varphi))}^p}^{\frac{1}{p}}\nonumber\\
&+\Ebb{\mmag{\ent(\eo(\varphi))-\et(\eo(\varphi))}^p}^{\frac{1}{p}}\nonumber.
\end{align}
Because $\eta_1$ and $\eta_2$ are probability measures, $\eo(\varphi)$ and $\et(\varphi)$ are  bounded above by $\norm{\varphi}$. For this reason, the $L^p$ inequalities in~\eqref{eq:etalp} imply that 
\begin{align}
\Ebb{\mmag{\eno(\et(\varphi))-\eo(\et(\varphi))}^p}^{\frac{1}{p}}&\leq \frac{C\norm{\eta_2(\varphi)}}{N^{1/2}}\leq \frac{C\norm{\varphi}}{N^{1/2}},\\
\Ebb{\mmag{\ent(\eo(\varphi))-\et(\eo(\varphi))}^p}^{\frac{1}{p}}&\leq\frac{C\norm{\eta_1(\varphi)}}{N^{1/2}}\leq \frac{C\norm{\varphi}}{N^{1/2}}.
\end{align}
To control the remaining term in~\eqref{eq:nfme78wahfe78ahfa}, let $\mathcal{F}_2$ denote the $\sigma$-algebra generated by the $\eta_2^N$s. For each $\omega$ in $\Omega$, $x_1\mapsto \int\varphi(x_1,x_2)\eta_2^N(\omega, dx_2)-\eta_2(\varphi)(x_1)$ is a bounded function on $S_1$. Hence, \eqref{eq:etalp} and the independence of $(\eta_1^N)_{N=1}^\infty$ and $(\eta_2^N)_{N=1}^\infty$ imply that
\begin{align}
&\Ebb{\mmag{(\eno-\eo)\times(\ent-\et)(\varphi)}^p}^{\frac{1}{p}}\label{eq:fme89ahaf78haeufafe}\\
&\qquad\qquad=\Ebb{\mmag{\eno(\ent(\varphi)-\et(\varphi))-\eo(\ent(\varphi)-\et(\varphi))}^p}^{\frac{1}{p}}\nonumber\\
&\qquad\qquad=\Ebb{\Ebb{\mmag{\eno(\ent(\varphi)-\et(\varphi))-\eo(\ent(\varphi)-\et(\varphi))}^p|\cal{F}_2}}^{\frac{1}{p}}\nonumber\\
&\qquad\qquad\leq\frac{C\Ebb{\norm{\ent(\varphi)-\et(\varphi)}^p}^{\frac{1}{p}}}{N^{1/2}}\leq \frac{C\Ebb{(\norm{\ent(\varphi)}+\norm{\et(\varphi)})^p}^{\frac{1}{p}}}{N^{1/2}}\nonumber\\
&\qquad\qquad\leq \frac{C\Ebb{(2\max\{\norm{\ent(\varphi)},\norm{\et(\varphi)}\})^p}^{\frac{1}{p}}}{N^{1/2}}\leq\frac{2C\norm{\varphi}}{N^{1/2}}.\nonumber
\end{align}
Putting (\ref{eq:nfme78wahfe78ahfa}--\ref{eq:fme89ahaf78haeufafe}) together, we obtain the $L^p$ inequality~\eqref{eq:etaprodlp} for the product.
\end{proof}
\begin{proof}[Proof of Lemma~\ref{lem:lpcoa}]
Because $(\pi^N_{u1})_{N=1}^\infty,\dots,(\pi^N_{uc_u})_{N=1}^\infty$ are independent sequences of probability measures by construction, so are $(\pi^N_{u[k]})_{N=1}^\infty$ and $(\pi^N_{u(k+1)})_{N=1}^\infty$ for all $k<c_u$, where $u[k]:=\{u1,\dots,uk\}$ denotes the set  containing the first $k$ children of $u$ and $\pi^N_{u[k]}:=\prod_{v\in u[k]}\pi^N_{v}$ the corresponding product of $\pi_v^N$s. Hence, starting from the premise and repeatedly applying  Lemma~\ref{lem:prodlp}, we obtain the $L^p$ inequality for $\pi^N_{u[c_u]}=\pi_{\cal{C}_u}^N$.
\end{proof}
Emulating the approach of~\cite[Lemma 4]{Crisan2002}~and~\cite[Lemma 1]{miguez2013convergence}, we find that the correction step also respects the inequality:
\begin{lemma}[Correction step]\label{lem:lpcor}If, in addition to Assumptions~\ref{ass:abscont}--\ref{ass:boundweights},  \eqref{eq:lpcoa} is satisfied for some $u$ in $\mathbb{T}^{\not \partial}$ and $p\geq1$, then
\begin{equation}\label{eq:lpcor}\Ebb{\mmag{\pi_{u_-}^{N}(\varphi)-\pi_{u_-}(\varphi)}^p}^{\frac{1}{p}}\leq\frac{C\norm{\varphi}}{N^{1/2}}\quad\forall N>0,\enskip \varphi\in \cal{B}_b(\bm{E}_{\cal{C}_u}).\end{equation}
\end{lemma}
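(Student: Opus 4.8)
The plan is to reduce \eqref{eq:lpcor} to the hypothesis \eqref{eq:lpcoa} by means of the standard ``ratio estimator'' manipulation used to pass from an $L^p$ inequality for a normalised empirical measure to one for an importance-sampling reweighting of it, as in \cite[Lemma~4]{Crisan2002} and \cite[Lemma~1]{miguez2013convergence}.

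First I would observe that the normalising constants cancel in the correction step. By \eqref{eq:actualtargetdistributions}, $\gamma_v^N=\cal{Z}_v^N\pi_v^N$ for every child $v$ of $u$ (and, likewise, $\gamma_v=\cal{Z}_v\pi_v$ with $\cal{Z}_v=\gamma_v(\bm{E}_v)$), so \eqref{eq:gammauN} gives $\gamma_{u_-}^N=\cal{Z}_{\cal{C}_u}^N\,w_{u_-}\,\pi_{\cal{C}_u}^N$ and, in the same way, $\gamma_{u_-}=\cal{Z}_{\cal{C}_u}\,w_{u_-}\,\pi_{\cal{C}_u}$; dividing each by its total mass yields
\[
\pi_{u_-}^N(\varphi)=\frac{\pi_{\cal{C}_u}^N(w_{u_-}\varphi)}{\pi_{\cal{C}_u}^N(w_{u_-})},\qquad
\pi_{u_-}(\varphi)=\frac{\pi_{\cal{C}_u}(w_{u_-}\varphi)}{\pi_{\cal{C}_u}(w_{u_-})},
\]
where $c:=\pi_{\cal{C}_u}(w_{u_-})=\cal{Z}_u/\cal{Z}_{\cal{C}_u}\in(0,\infty)$ by Assumption~\ref{ass:abscont} and the finiteness of the measures involved, $\pi_{\cal{C}_u}$ is a probability measure (being a product of such), and $\pi_{\cal{C}_u}^N$ is a random probability measure with $\pi_{\cal{C}_u}^N(w_{u_-})>0$ almost surely (again by the positivity in Assumption~\ref{ass:abscont}). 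Setting $\bar\varphi:=w_{u_-}\,(\varphi-\pi_{u_-}(\varphi))$, a one-line computation from the two displays above gives $\pi_{\cal{C}_u}(\bar\varphi)=0$ and the key identity
\[
\pi_{\cal{C}_u}^N(w_{u_-})\,\big(\pi_{u_-}^N(\varphi)-\pi_{u_-}(\varphi)\big)=\big(\pi_{\cal{C}_u}^N-\pi_{\cal{C}_u}\big)(\bar\varphi).
\]
I would then substitute $\pi_{\cal{C}_u}^N(w_{u_-})=c+\big(\pi_{\cal{C}_u}^N-\pi_{\cal{C}_u}\big)(w_{u_-})$ on the left and rearrange, obtaining
\[
c\,\big(\pi_{u_-}^N(\varphi)-\pi_{u_-}(\varphi)\big)=\big(\pi_{\cal{C}_u}^N-\pi_{\cal{C}_u}\big)(\bar\varphi)-\big(\pi_{\cal{C}_u}^N-\pi_{\cal{C}_u}\big)(w_{u_-})\,\big(\pi_{u_-}^N(\varphi)-\pi_{u_-}(\varphi)\big).
\]

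Taking $p$-th moments and applying Minkowski's inequality, the first term on the right is controlled by \eqref{eq:lpcoa} applied to the bounded function $\bar\varphi$, whose supremum norm is at most $2\norm{w_{u_-}}\norm{\varphi}<\infty$ by Assumption~\ref{ass:boundweights} and $\mmag{\pi_{u_-}(\varphi)}\leq\norm{\varphi}$; for the second term I would use the deterministic bound $\mmag{\pi_{u_-}^N(\varphi)-\pi_{u_-}(\varphi)}\leq 2\norm{\varphi}$ (both quantities being integrals of $\varphi$ against probability measures) together with \eqref{eq:lpcoa} applied to $w_{u_-}$, which is bounded by Assumption~\ref{ass:boundweights}. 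This yields $c\,\Ebb{\mmag{\pi_{u_-}^N(\varphi)-\pi_{u_-}(\varphi)}^p}^{1/p}\leq C\norm{\varphi}N^{-1/2}$, and dividing through by $c>0$ gives \eqref{eq:lpcor}.

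The only point requiring care is the random normaliser $\pi_{\cal{C}_u}^N(w_{u_-})$, for which no deterministic lower bound is available under Assumptions~\ref{ass:abscont}--\ref{ass:boundweights}; multiplying through by it \emph{before} taking expectations (rather than dividing) circumvents this, exactly as in \cite{Crisan2002,miguez2013convergence}, and everything else is a routine application of the induction hypothesis \eqref{eq:lpcoa} and the boundedness assumptions.
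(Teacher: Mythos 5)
Your proposal is correct and is essentially the same argument as the paper's: the paper likewise writes $\pi_{u_-}^N(\varphi)=\pi^N_{\cal{C}_u}(w_{u_-}\varphi)/\pi^N_{\cal{C}_u}(w_{u_-})$, avoids the random normaliser by a deterministic bound on the normalised quantity, and concludes with Minkowski plus two applications of~\eqref{eq:lpcoa} to $w_{u_-}\varphi$ (in your case the centred $w_{u_-}[\varphi-\pi_{u_-}(\varphi)]$) and to $w_{u_-}$, yielding a constant proportional to $\norm{w_{u_-}}\cal{Z}_{\cal{C}_u}/\cal{Z}_u$. Your multiply-through-by-$\pi^N_{\cal{C}_u}(w_{u_-})$ decomposition differs only cosmetically from the paper's two-term triangle-inequality split.
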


\begin{proof}Fix any $p\geq1$, $N>0$, and $\varphi$ in $\cal{B}_b(\bm{E}_{\cal{C}_u})$. Recall the definitions in Section~\ref{sec:dacalg}:
\begin{align*}\gamma_{u_-}^N(d\bm{x}_{\cal{C}_u})=w_{u_-}(\bm{x}_{\cal{C}_u})\gamma_{\cal{C}_u}^N(d\bm{x}_{\cal{C}_u}),\quad\gamma_{\cal{C}_u}^N(d\bm{x}_{\cal{C}_u})&=\cal{Z}_{\cal{C}_u}^N\pi_{\cal{C}_u}^N(d\bm{x}_{\cal{C}_u}).
\end{align*}
It follows that 
$$\pi^N_{u_-}(\varphi)=\frac{\gamma_{u_-}^N(\varphi)}{\gamma_{u_-}^N(\bm{E}_{\cal{C}_u})}=\frac{\gamma^N_{\cal{C}_u}(w_{u_-}\varphi)}{\gamma^N_{\cal{C}_u}(w_{u_-})}=\frac{\pi^N_{\cal{C}_u}(w_{u_-}\varphi)}{\pi^N_{\cal{C}_u}(w_{u_-})}.$$
Because $w_{u_-}=d\gamma_{u_-}/d\gamma_{\cal{C}_u}$, we similarly have that 
$$\pi_{u_-}(\varphi)=\frac{\gamma_{u_-}(\varphi)}{\gamma_{u_-}(\bm{E}_{\cal{C}_u})}=\frac{\gamma_{\cal{C}_u}(w_{u_-}\varphi)}{\gamma_{\cal{C}_u}(w_{u_-})}=\frac{\pi_{\cal{C}_u}(w_{u_-}\varphi)}{\pi_{\cal{C}_u}(w_{u_-})}.$$
Hence,
\begin{align}\label{eq:mdw89a0hd7w8ahdw7nau}\mmag{\pi_{u_-}^{N}(\varphi)-\pi_{u_-}(\varphi)}\leq \mmag{\pi_{u_-}^{N}(\varphi)-\frac{\pi_{\cal{C}_u}^N(w_{u_-}\varphi)}{\pi_{\cal{C}_u}(w_{u_-})}}+\mmag{\frac{\pi_{\cal{C}_u}^N(w_{u_-}\varphi)}{\pi_{\cal{C}_u}(w_{u_-})}-\frac{\pi_{\cal{C}_u}(w_{u_-}\varphi)}{\pi_{\cal{C}_u}(w_{u_-})}}.
 \end{align}
To control the first term on the right-hand side, we use
\begin{align}\label{eq:mdw89a0hd7w8ahdw7nau2}\mmag{\pi_{u_-}^{N}(\varphi)-\frac{\pi_{\cal{C}_u}^N(w_{u_-}\varphi)}{\pi_{\cal{C}_u}(w_{u_-})}}&\leq \frac{\mmag{\pi_{u_-}^N(\varphi)}\mmag{\pi_{\cal{C}_u}(w_{u_-})-\pi_{\cal{C}_u}^N(w_{u_-})}}{\pi_{\cal{C}_u}(w_{u_-})}\\
&\leq \frac{\norm{\varphi}\mmag{\pi_{\cal{C}_u}(w_{u_-})-\pi_{\cal{C}_u}^N(w_{u_-})}}{\pi_{\cal{C}_u}(w_{u_-})}.\nonumber\end{align}
Because, $\pi_{\cal{C}_u}(w_{u_-})=\cal{Z}_{\cal{C}_u}^{-1}\gamma_{\cal{C}_u}(w_{u_-})=\cal{Z}_{\cal{C}_u}^{-1}\gamma_{u_-}(\bm{E}_{\cal{C}_u})=\cal{Z}_{\cal{C}_u}^{-1}\cal{Z}_u$, the desired $L^p$~inequality~\eqref{eq:lpcor} follows from~(\ref{eq:lpcoa},\ref{eq:mdw89a0hd7w8ahdw7nau},\ref{eq:mdw89a0hd7w8ahdw7nau2}) and Minkowski's inequality: 
\begin{align*}&\Ebb{\mmag{\pi_{u_-}^{N}(\varphi)-\pi_{u_-}(\varphi)}^p}^{\frac{1}{p}}
\\
&\qquad\qquad\leq \frac{\norm{\varphi}\Ebb{\mmag{\pi_{\cal{C}_u}(w_{u_-})-\pi_{\cal{C}_u}^N(w_{u_-})}^p}^{\frac{1}{p}}+\Ebb{\mmag{\pi_{\cal{C}_u}(w_{u_-}\varphi)-\pi_{\cal{C}_u}^N(w_{u_-}\varphi)}^p}^{\frac{1}{p}}}{\pi_{\cal{C}_u}(w_{u_-})}\\
&\qquad\qquad\leq \frac{C\norm{\varphi}\norm{w_{u_-}}+C\norm{w_{u_-}\varphi}}{N^{1/2}\pi_{\cal{C}_u}(w_{u_-})}\leq \left(\frac{2C\cal{Z}_{\cal{C}_u}\norm{w_{u_-}}}{\cal{Z}_u}\right)\frac{\norm{\varphi}}{N^{1/2}}.\end{align*}
\end{proof}
To show that the resampling step also preserves $L^p$ inequalities, we tweak~\cite[Lemma~5]{Crisan2002} (recall that $\epsilon_u^N:=N^{-1}\sum_{n=1}^N\delta_{\bm{X}^{n,N}_{u_-}}$ denotes the resampled particles'  empirical distribution):
\begin{lemma}[Resampling step]\label{lem:lpres}If, in addition to Assumptions~\ref{ass:abscont}--\ref{ass:boundweights}, the inequality~\eqref{eq:lpcor} is satisfied for some $u$ in $\mathbb{T}^{\not \partial}$ and $p\geq1$, then
\begin{equation}\label{eq:lpres}
\Ebb{\mmag{\epsilon_u^N(\varphi)-\pi_{u_-}(\varphi)}^p}^{\frac{1}{p}}\leq\frac{C\norm{\varphi}}{N^{1/2}}\quad\forall N>0,\enskip \varphi\in \cal{B}_b(\bm{E}_{\cal{C}_u}).
\end{equation}
\end{lemma}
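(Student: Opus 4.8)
The plan is to follow the standard argument for the multinomial resampling step in SMC (as in \cite[Lemma~5]{Crisan2002} and \cite[Lemma~1]{miguez2013convergence}), using a conditioning-on-the-pre-resampling-particles trick. Recall that, conditional on $\cal{F}_{\cal{C}_u}^N$ (the $\sigma$-algebra generated by the mutated particles indexed by $u$'s children), the resampled particles $\bm{X}_{u_-}^{1,N},\dots,\bm{X}_{u_-}^{N,N}$ are i.i.d.\ draws from $\pi_{u_-}^N$. Hence, writing $\varphi$ for a fixed element of $\cal{B}_b(\bm{E}_{\cal{C}_u})$ and decomposing
\begin{equation*}\epsilon_u^N(\varphi)-\pi_{u_-}(\varphi)=\left[\epsilon_u^N(\varphi)-\pi_{u_-}^N(\varphi)\right]+\left[\pi_{u_-}^N(\varphi)-\pi_{u_-}(\varphi)\right],\end{equation*}
Minkowski's inequality reduces the problem to bounding the $L^p$ norms of the two bracketed terms separately. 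The second term is exactly what \eqref{eq:lpcor} controls, giving the required $C\norm{\varphi}/N^{1/2}$ bound.

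For the first term, I would condition on $\cal{F}_{\cal{C}_u}^N$ and apply the Marcinkiewicz--Zygmund inequality (\cite[Lemma~7.3.3]{DelMoral2004}) to the i.i.d.\ centred summands $\varphi(\bm{X}_{u_-}^{n,N})-\pi_{u_-}^N(\varphi)$, each of which is bounded in absolute value by $2\norm{\varphi}$ uniformly in $\omega$. This yields, for $m$ the smallest integer no smaller than $p$ (so that Jensen's inequality upgrades the integer-$m$ bound to a real-$p$ bound, exactly as in the derivation of \eqref{eq:lp0}),
\begin{equation*}\Ebb{\mmag{\epsilon_u^N(\varphi)-\pi_{u_-}^N(\varphi)}^p\,\middle|\,\cal{F}_{\cal{C}_u}^N}^{\frac{1}{p}}\leq\frac{C\norm{\varphi}}{N^{1/2}},\end{equation*}
with $C$ not depending on $\omega$, $N$, or $\varphi$. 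Taking expectations over $\cal{F}_{\cal{C}_u}^N$ via the tower property (and using $\Ebb{\cdot}=\Ebb{\Ebb{\cdot\mid\cal{F}_{\cal{C}_u}^N}}$ together with $x\mapsto x^{1/p}$ being bounded by the $L^p$ norm) then removes the conditioning and gives the unconditional bound on the first term. Combining the two pieces via Minkowski yields \eqref{eq:lpres}.

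There is no serious obstacle here: this step is essentially identical to the conventional SMC resampling step, since resampling acts on the already-formed weighted cloud $\pi_{u_-}^N$ and the product-form structure of $\gamma_{u_-}^N$ has been fully absorbed into the hypothesis \eqref{eq:lpcor} by the preceding Correction-step lemma (Lemma~\ref{lem:lpcor}). The only minor point requiring a little care is making the conditional Marcinkiewicz--Zygmund bound genuinely uniform in $\omega$, which follows because the summands are bounded by $2\norm{\varphi}$ regardless of the realization of $\pi_{u_-}^N$; once that is noted, the tower property closes the argument cleanly.
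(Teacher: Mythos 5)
Your proposal is correct and matches the paper's own argument: decompose via $\pi_{u_-}^N$, control the resampling fluctuation by a Marcinkiewicz--Zygmund bound conditional on $\cal{F}_{\cal{C}_u}^N$ with summands bounded by $2\norm{\varphi}$, and combine with the hypothesis~\eqref{eq:lpcor} through Minkowski and the tower property. The only cosmetic difference is that the paper formalizes the conditioning step by invoking an explicitly conditional Marcinkiewicz--Zygmund inequality (its Lemma~\ref{lem:733}) rather than applying the unconditional \cite[Lemma~7.3.3]{DelMoral2004} conditionally, which is exactly the uniformity-in-$\omega$ point you flag and resolve the same way.
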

Key to proving the above is  the following conditional Marcinkiewicz-Zygmund inequality:
\begin{lemma}[{e.g.\ \cite[Theorem~3.3]{Yuan2015}}]\label{lem:733}Let $Y_1,\dots, Y_N$ denote bounded real-valued random variables on a probability space $(\Omega,\cal{F},\mathbb{P})$ that are independent when conditioned on some sigma algebra $\cal{G}\subseteq\cal{F}$ and satisfy $\Ebb{Y_n|\cal{G}}=0$ almost surely for all $n$ in $[N]$. For any given $p\geq1$, there exists a constant $C$  independent of  $Y_1,\dots, Y_N$ and $N$ such that
$$\Ebb{\left.\mmag{\sum_{n=1}^NY_n}^p\right|\cal{G}}\leq C\Ebb{\left.\left(\sum_{n=1}^NY_n^2\right)^{\frac{p}{2}}\right|\cal{G}}\quad\text{almost surely}.$$
\end{lemma}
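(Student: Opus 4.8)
The plan is to deduce the conditional inequality from the classical (unconditional) Marcinkiewicz--Zygmund inequality by disintegrating $\mathbb{P}$ along $\cal{G}$. Since $(Y_1,\dots,Y_N)$ is $\mathbb{R}^N$-valued and $\mathbb{R}^N$ is standard Borel, a regular conditional distribution of $(Y_1,\dots,Y_N)$ given $\cal{G}$ exists: a Markov kernel $\omega\mapsto\mu^\omega$ from $(\Omega,\cal{G})$ to $\mathbb{R}^N$ with $\Ebb{h(Y_1,\dots,Y_N)\mid\cal{G}}(\omega)=\int h\,d\mu^\omega$ for $\mathbb{P}$-a.e.\ $\omega$ and every bounded measurable $h$. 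First I would show that conditional independence forces $\mu^\omega$ to be a product measure $\mu^\omega_1\times\cdots\times\mu^\omega_N$ for $\mathbb{P}$-a.e.\ $\omega$, where $\mu^\omega_n$ denotes the $n$-th marginal of $\mu^\omega$ (a version of the regular conditional law of $Y_n$ given $\cal{G}$): for any box $A_1\times\cdots\times A_N$ one has $\mu^\omega(A_1\times\cdots\times A_N)=\Ebb{\prod_n\mathbf{1}_{A_n}(Y_n)\mid\cal{G}}(\omega)=\prod_n\Ebb{\mathbf{1}_{A_n}(Y_n)\mid\cal{G}}(\omega)=\prod_n\mu^\omega_n(A_n)$ for $\mathbb{P}$-a.e.\ $\omega$, and running this over a countable generating $\pi$-system of boxes and discarding a countable union of null sets gives the claim. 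In the same way $\Ebb{Y_n\mid\cal{G}}=0$ a.s.\ becomes $\int y\,\mu^\omega_n(dy)=0$ a.s., and boundedness of the $Y_n$ makes each $\mu^\omega_n$ compactly supported (in particular with all moments); all of these almost-sure statements can be arranged to hold simultaneously on a single $\mathbb{P}$-conull event $\Omega_0\in\cal{G}$.

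Then, for each fixed $\omega\in\Omega_0$, I would apply the classical Marcinkiewicz--Zygmund inequality---in the form for independent, not necessarily identically distributed, centred summands, with constant depending on $p\geq1$ only (cf.\ the inequality invoked in \cite[Lemma~7.3.3]{DelMoral2004})---to the coordinate maps $y_1,\dots,y_N$ on the probability space $(\mathbb{R}^N,\mu^\omega)$, under which they are independent (product structure) and mean zero. This yields a constant $C<\infty$ depending only on $p$ with $\int\mmag{\sum_{n=1}^N y_n}^p\,\mu^\omega(dy)\leq C\int\bigl(\sum_{n=1}^N y_n^2\bigr)^{p/2}\,\mu^\omega(dy)$. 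By the defining property of $\mu^\omega$, the left-hand side equals $\Ebb{\mmag{\sum_{n=1}^N Y_n}^p\mid\cal{G}}(\omega)$ and the right-hand side equals $C\,\Ebb{\bigl(\sum_{n=1}^N Y_n^2\bigr)^{p/2}\mid\cal{G}}(\omega)$; since this holds for every $\omega\in\Omega_0$, the stated almost-sure inequality follows, with $C$ independent of $N$ and of the particular variables $Y_1,\dots,Y_N$.

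The only step with genuine content is showing the regular conditional law is almost surely a product measure; everything else is null-set bookkeeping and an invocation of the unconditional inequality, so I expect no serious obstacle. If one prefers an argument that does not even quote the unconditional Marcinkiewicz--Zygmund inequality, I would instead construct on an enlarged space conditionally independent copies $Y_1',\dots,Y_N'$ of $Y_1,\dots,Y_N$ given $\cal{G}$ together with independent Rademacher signs $\epsilon_1,\dots,\epsilon_N$; use conditional Jensen (noting $\Ebb{\sum_n Y_n'\mid\cal{G}\vee\sigma(Y_1,\dots,Y_N)}=0$) to get $\Ebb{\mmag{\sum_n Y_n}^p\mid\cal{G}}\leq\Ebb{\mmag{\sum_n(Y_n-Y_n')}^p\mid\cal{G}}$; exploit conditional sign-symmetry of the independent symmetric increments $Y_n-Y_n'$ to replace $\sum_n(Y_n-Y_n')$ by $\sum_n\epsilon_n\mmag{Y_n-Y_n'}$ in conditional law; apply Khintchine's inequality conditionally on $\cal{G}\vee\sigma(\mmag{Y_n-Y_n'}:n\leq N)$ and then the tower property to obtain $\Ebb{\mmag{\sum_n(Y_n-Y_n')}^p\mid\cal{G}}\leq C\,\Ebb{\bigl(\sum_n(Y_n-Y_n')^2\bigr)^{p/2}\mid\cal{G}}$; and close with $(Y_n-Y_n')^2\leq 2Y_n^2+2Y_n'^2$ together with $\Ebb{\bigl(\sum_n Y_n'^2\bigr)^{p/2}\mid\cal{G}}=\Ebb{\bigl(\sum_n Y_n^2\bigr)^{p/2}\mid\cal{G}}$. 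Either route delivers a constant depending only on $p$, as required.
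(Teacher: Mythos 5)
Your proof is correct, but note that the paper does not prove this lemma at all: it is quoted from the literature (the citation to Yuan et al.), so there is no in-paper argument to compare against, and any self-contained proof is a bonus. Of your two routes, the first (disintegrating via a regular conditional distribution of $(Y_1,\dots,Y_N)$ given $\cal{G}$, checking that conditional independence forces the kernel to be a.s.\ a product measure on a countable generating $\pi$-system, and then applying the unconditional Marcinkiewicz--Zygmund inequality $\omega$-wise) is sound; the only points worth making explicit are that the disintegration identity extends from bounded to nonnegative measurable integrands (monotone convergence, or simply boundedness of the $Y_n$), and that the null sets for the countably many box identities, the centring conditions, and the two final test functions are collected into a single conull $\cal{G}$-measurable event before fixing $\omega$. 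The second route (conditionally independent copies, conditional Jensen, sign-symmetrization, conditional Khintchine, and the bound $(Y_n-Y_n')^2\leq 2Y_n^2+2Y_n'^2$) is also correct and is essentially how conditional Marcinkiewicz--Zygmund inequalities are proved in the literature, so it buys you a fully self-contained argument that never quotes the unconditional inequality; just flag that the closing identity $\Ebb{(\sum_n Y_n'^2)^{p/2}\mid\cal{G}}=\Ebb{(\sum_n Y_n^2)^{p/2}\mid\cal{G}}$ uses equality of the conditional laws of the whole vectors, which holds precisely because the $Y_n$ are conditionally independent given $\cal{G}$ by hypothesis and the $Y_n'$ are built to share those conditional marginals, and that the constants from Jensen, Khintchine, and the elementary inequality depend on $p$ only, as required.
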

Lemma~\ref{lem:lpres}'s proof is now straightforward:
\begin{proof}[Proof of Lemma~\ref{lem:lpres}] Fix any $p\geq1$, $N>0$, and $\varphi$ in $\cal{B}_b(\bm{E}_{\cal{C}_u})$. Conditioning on $\cal{F}_{\cal{C}_u}^N$ (the sigma-algebra generated by the mutated particles indexed by $u$'s children), we have that
$$\Ebb{\varphi(\bm{X}_{u_-}^{n,N})|\cal{F}_{\cal{C}_u}^N}=\pi^N_{u_-}(\varphi)\enskip\text{almost surely, for all }n\leq N.$$
Hence, applying Lemma~\ref{lem:733} with $\cal{G}:=\cal{F}_{\cal{C}_u}^N$ and $Y_n:=N^{-1}[\varphi(\bm{X}_{u_-}^{n,N})-\pi^N_{u_-}(\varphi)]$, the tower rule,  and the bound $Y_n^2\leq 4N^{-2}\norm{\varphi}^2$,  we obtain
\begin{align}\label{eq:ndsua8dnuw8andawuy}\Ebb{\mmag{\epsilon_u^N(\varphi)-\pi_{u_-}^N(\varphi)}^p}^{\frac{1}{p}}\leq\frac{C\norm{\varphi}}{N^{1/2}}.\end{align}
Inequality~\eqref{eq:lpres} then follows from the above,  Minkowski's inequality, and~\eqref{eq:lpcor}.
\end{proof}
For the mutation step, we adapt~\cite[Lemma~3]{Crisan2002}:
\begin{lemma}[Mutation step]\label{lem:lpmut}If, in addition to Assumptions~\ref{ass:abscont}--\ref{ass:boundweights},~\eqref{eq:lpres} is satisfied for some $u$ in $\mathbb{T}^{\not \partial}$ and $p\geq1$, then~\eqref{eq:lp0} is also satisfied for the same $u$ and $p$.
\end{lemma}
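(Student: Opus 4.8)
The plan is to adapt the standard SMC mutation‑step argument of \cite[Lemma~3]{Crisan2002}; nothing product‑form–specific is needed here. First I would fix $p\geq 1$, $N>0$, and $\varphi\in\cal{B}_b(\bm{E}_u)$, and introduce the bounded measurable function $K_u\varphi$ on $\bm{E}_{\cal{C}_u}$ obtained by integrating out the $E_u$-coordinate of $\varphi$ against the proposal kernel, $(K_u\varphi)(\bm{x}_{\cal{C}_u}):=\int\varphi(x_u,\bm{x}_{\cal{C}_u})\,K_u(\bm{x}_{\cal{C}_u},dx_u)$, so that $\norm{K_u\varphi}\leq\norm{\varphi}$ and, since $\pi_u=\pi_{u_-}\otimes K_u$ (by the definitions in Section~\ref{sec:dacalg}), $\pi_{u_-}(K_u\varphi)=\pi_u(\varphi)$. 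The error then splits as
\begin{align*}
\pi_u^N(\varphi)-\pi_u(\varphi)=\big(\pi_u^N(\varphi)-\epsilon_u^N(K_u\varphi)\big)+\big(\epsilon_u^N(K_u\varphi)-\pi_{u_-}(K_u\varphi)\big),
\end{align*}
where $\epsilon_u^N=N^{-1}\sum_{n=1}^N\delta_{\bm{X}_{u_-}^{n,N}}$ is the empirical measure of the resampled particles.

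For the second bracket, I would invoke the hypothesis~\eqref{eq:lpres} with the bounded test function $K_u\varphi\in\cal{B}_b(\bm{E}_{\cal{C}_u})$ to bound its $L^p$ norm by $C\norm{K_u\varphi}N^{-1/2}\leq C\norm{\varphi}N^{-1/2}$. For the first bracket, I would write it as $\sum_{n=1}^N Y_n$ with $Y_n:=N^{-1}[\varphi(\bm{X}_u^{n,N})-(K_u\varphi)(\bm{X}_{u_-}^{n,N})]$. Conditionally on $\cal{F}_{u_-}^N$, the mutated coordinates $X_u^{1,N},\dots,X_u^{N,N}$ are independent with $X_u^{n,N}\sim K_u(\bm{X}_{u_-}^{n,N},\cdot)$, so the $Y_n$ are conditionally independent, satisfy $\Ebb{Y_n\mid\cal{F}_{u_-}^N}=0$, and obey $Y_n^2\leq 4N^{-2}\norm{\varphi}^2$. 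Lemma~\ref{lem:733} applied with $\cal{G}:=\cal{F}_{u_-}^N$, together with $\sum_{n=1}^N Y_n^2\leq 4N^{-1}\norm{\varphi}^2$ and the tower property, then yields
\begin{align*}
\Ebb{\mmag{\pi_u^N(\varphi)-\epsilon_u^N(K_u\varphi)}^p}\leq C\,(4N^{-1}\norm{\varphi}^2)^{p/2}=\frac{C\norm{\varphi}^p}{N^{p/2}}.
\end{align*}
Combining the two bounds via Minkowski's inequality gives~\eqref{eq:lp0} for the same $u$ and $p$.

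I do not expect a genuine obstacle here: the argument is routine bookkeeping. The only points requiring (minor) care are identifying the conditioning $\sigma$-algebra as $\cal{F}_{u_-}^N$ — so that the mutation step is conditionally i.i.d.\ with the correct conditional mean $K_u\varphi$ — and checking that $K_u\varphi$ stays bounded so that both the hypothesis~\eqref{eq:lpres} and Lemma~\ref{lem:733} apply; everything else is identical to the conventional SMC mutation step.
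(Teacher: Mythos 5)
Your proof is correct and follows essentially the same route as the paper: the same decomposition via $\epsilon_u^N\otimes K_u$ (the paper's $\nu_u^N$, with $\nu_u^N(\varphi)=\epsilon_u^N(K_u\varphi)$), the hypothesis~\eqref{eq:lpres} applied to $K_u\varphi$ for one term, and the conditional Marcinkiewicz--Zygmund inequality (Lemma~\ref{lem:733}) with $\cal{G}=\cal{F}_{u_-}^N$ and the same $Y_n$ for the other, combined by Minkowski. No gaps.
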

\begin{proof}Fix any $p\geq1$, $N>0$, and $\varphi$ in $\cal{B}_b(\bm{E}_{\cal{C}_u})$. Minkowski's inequality implies that
\begin{align}\Ebb{\mmag{\pi^N_u(\varphi)-\pi_u(\varphi)}^p}^{\frac{1}{p}}\leq&\Ebb{\mmag{\pi^N_u(\varphi)-\nu_u^N(\varphi)}^p}^{\frac{1}{p}}+\Ebb{\mmag{\nu_u^N(\varphi)-\pi_u(\varphi)}^p}^{\frac{1}{p}},\label{eq:nds7a8fnyaehbfwa2}\end{align}
where $\nu^N_u:=\epsilon_u^N\times K_u$. Because $\pi_u=\pi_{u_-}\times K_u$ and $\norm{K_u\varphi}\leq \norm{\varphi}$ as $K_u$ is a Markov kernel,~\eqref{eq:lpres} implies that
\begin{equation}\Ebb{\mmag{\nu_u^N(\varphi)-\pi_u(\varphi)}^p}^{\frac{1}{p}}\leq \frac{C\norm{K_u\varphi}}{N^{1/2}}\leq \frac{C\norm{\varphi}}{N^{1/2}}.\end{equation}
To control the other term in~\eqref{eq:nds7a8fnyaehbfwa2}, we condition on $\cal{F}_{u_-}^N$ (the   sigma-algebra generated by the resampled particles indexed by $u$) and obtain
$$\Ebb{\varphi(\bm{X}_u^{n,N})|\cal{F}_{u_-}^N}=(K_u\varphi)(\bm{X}_{u_-}^{n,N})\quad\forall n\leq N.$$
Hence, applying Lemma~\ref{lem:733} similarly as for~\eqref{eq:ndsua8dnuw8andawuy}, only this time with $\cal{G}:=\cal{F}_{u_-}^N$ and $Y_n:=N^{-1}[\varphi(\bm{X}_{u}^{n,N})-(K_u\varphi)(\bm{X}_{u_-}^{n,N})]$, we find that
\begin{align}\Ebb{\mmag{\pi^N_u(\varphi)-\nu_u^N(\varphi)}^p}^{\frac{1}{p}}\leq  \frac{C\norm{\varphi}}{N^{1/2}}\label{eq:nds7a8fnyaehbfwa3}.\end{align}
Combining~(\ref{eq:nds7a8fnyaehbfwa2}--\ref{eq:nds7a8fnyaehbfwa3}) completes the proof.
\end{proof}
\begin{proof}[Proof of Theorem~\ref{thrm:lpapp}] The theorem follows by repeatedly applying Lemmas~\ref{lem:lpcoa},~\ref{lem:lpcor},~\ref{lem:lpres},~and~\ref{lem:lpmut}, starting  from~\eqref{eq:lp0}.\end{proof}
\begin{proof}[Proof of Theorem~\ref{THRM:LP}]Suppose that we are able to argue that
\begin{align}
\Ebb{\mmag{\cal{Z}^N_{u}-\cal{Z}_{u}}^p}^{\frac{1}{p}}&\leq\frac{C}{N^{1/2}}\quad\forall N>0,\label{eq:lpz}
\end{align}
for all $u$ in $\mathbb{T}$ and $p\geq1$. Because, for all $\varphi$ in $\cal{B}_b(\bm{E}_u)$,
\begin{align*}
\mmag{\gamma^N_u(\varphi)-\gamma_u(\varphi)}&=\mmag{\cal{Z}_{u}^N\pi^N_u(\varphi)-\cal{Z}_{u}\pi_u(\varphi)}\\
&\leq \mmag{\pi^N_u(\varphi)}\mmag{\cal{Z}_{u}^N-\cal{Z}_{u}}+\cal{Z}_{u}\mmag{\pi^N_u(\varphi)-\pi_u(\varphi)}\\
&\leq \norm{\varphi}\mmag{\cal{Z}_{u}^N-\cal{Z}_{u}}+\cal{Z}_{u}\mmag{\pi^N_u(\varphi)-\pi_u(\varphi)}
\end{align*}
we would then obtain an $L^p$ inequality for $\gamma_u^N$ from that for  $\pi_u^N$ (Theorem~\ref{thrm:lpapp}). Given that 
$$\rho_u^N(d\bm{x}_u)=w_u(\bm{x}_u)\gamma_u^N(d\bm{x}_u),\quad Z^N_u=\gamma_u^N(w_u),\quad \mu_u^N(d\bm{x}_u)=\frac{w_u(\bm{x}_u)\gamma_u^N(d\bm{x}_u)}{Z^N_u},$$
and similarly for $\gamma_u$, $\rho_u$, $Z_u$, and $\mu_u$, the inequalities in Theorem~\ref{THRM:LP} would then follow using arguments of the type in Lemma~\ref{lem:lpcor}'s proof. 

Fix any $p\geq 1$ and $N>0$. In the case of a leaf $u$,~\eqref{eq:lpz} is trivially satisfied because $\cal{Z}_u^N=\cal{Z}_u=1$ by definition. 
Suppose, instead, that $u$ is not a leaf and that the $L^p$ inequality holds for each of its children (i.e.\ \eqref{eq:lpz} holds with $v$ replacing $u$ therein, for each $v$ in $\cal{C}_u$). 
Because $\cal{Z}_{u1}^N,\dots,\cal{Z}_{uc_u}^N$ are independent by definition, and using the following multinomial expansion
$$\cal{Z}_{\cal{C}_u}^N=\prod_{v\in \cal{C}_u}\cal{Z}_{v}^N=\prod_{v\in \cal{C}_u} [(\cal{Z}_{v}^N - \cal{Z}_{v}) + \cal{Z}_{v}] =\sum_{A\subseteq \cal{C}_u}\left(\prod_{v\in A}(\cal{Z}_{v}^N-\cal{Z}_{v})\right)\cal{Z}_{\cal{C}_u}^{\not A}$$
where $\cal{Z}_{\cal{C}_u}^{\not A}:=\cal{Z}_{\cal{C}_u\backslash A}$, Minkowski's inequality  implies that
\begin{align*}\Ebb{\mmag{\cal{Z}_{\cal{C}_u}^N-\cal{Z}_{\cal{C}_u}}^p}^\frac{1}{p}&\leq \sum_{\emptyset\neq A\subseteq\cal{C}_u}\Ebb{\mmag{\prod_{v\in A}(\cal{Z}_{v}^N-\cal{Z}_{v})}^p}^\frac{1}{p}\cal{Z}_{\cal{C}_u}^{\not A}\\
&=\sum_{\emptyset\neq A\subseteq\cal{C}_u}\cal{Z}_{\cal{C}_u}^{\not A}\prod_{v\in A}\Ebb{\mmag{\cal{Z}_{v}^N-\cal{Z}_{v}}^{p}}^\frac{1}{p}\leq  \frac{\sum_{\emptyset\neq A\subseteq\cal{C}_u}\cal{Z}_{\cal{C}_u}^{\not A}C}{N^{1/2}}.
\end{align*}
Given the above and
\begin{align*}
\mmag{\cal{Z}^N_u-\cal{Z}_u}&=\mmag{\cal{Z}_{\cal{C}_u}^N\pi_{\cal{C}_u}^N(w_{u_-})-\cal{Z}_{\cal{C}_u}\pi_{\cal{C}_u}(w_{u_-})}\\
&\leq \mmag{\pi_{\cal{C}_u}^N(w_{u_-})}\mmag{\cal{Z}_{\cal{C}_u}^N-\cal{Z}_{\cal{C}_u}}+\cal{Z}_{\cal{C}_u}\mmag{\pi_{\cal{C}_u}^N(w_{u_-})-\pi_{\cal{C}_u}(w_{u_-})}\\
&\leq \norm{w_{u_-}}\mmag{\cal{Z}_{\cal{C}_u}^N-\cal{Z}_{\cal{C}_u}}+\cal{Z}_{\cal{C}_u}\mmag{\pi_{\cal{C}_u}^N(w_{u_-})-\pi_{\cal{C}_u}(w_{u_-})},
\end{align*}
\eqref{eq:lpz} follows from the $L^p$ inequality for $\pi_{\cal{C}_u}^N$ in~\eqref{eq:lpcoa} (obtained in the proof of Theorem~\ref{thrm:lpapp}) and Minkowski's inequality. Hence, starting from the nodes whose children are leaves and working our way inductively up the tree, we obtain~\eqref{eq:lpz} for all $u$ in $\mathbb{T}$.
\end{proof}
\section{Proof of Theorem~\ref{THRM:UNBIAS}}\label{app:unbias}
Throughout this appendix, we use the notation described in Appendix~\ref{app:notation}. 
Because $\rho_u^N(\varphi)=\gamma_u^N(w_u\varphi)$ and $w_u$ is bounded (Assumption~\ref{ass:boundweights}), we need only show that $\gamma_u^N$ is unbiased. That is, for all $u$ in $\mathbb{T}$,
\begin{equation}\label{eq:fmwuanfwuafa}
\Ebb{\gamma^N_u(\varphi)}=\gamma_u(\varphi)\quad\forall N>0, \enskip \varphi\in\cal{B}_b(\bm{E}_u).
\end{equation}
If $u$ is a leaf, the above holds trivially. For any other $u$, note that $\bm{X}_u^{n,N}$, by definition, has law $\pi_{u_-}^N\times K_u$ when conditioned on $\cal{F}_{\cal{C}_u}^N$. Because $\cal{Z}_u^N$ is $\cal{F}_{\cal{C}_u}^N$-measurable, it follows that
\begin{align*}
\Ebb{\gamma_u^N(\varphi)}&=\frac{1}{N}\sum_{n=1}^N\Ebb{\cal{Z}_u^N\varphi(\bm{X}_u^{n,N})}=\frac{1}{N}\sum_{n=1}^N\Ebb{\Ebb{\cal{Z}_u^N\varphi(\bm{X}_u^{n,N})|\cal{F}_{\cal{C}_u}^N}}\\
&=\frac{1}{N}\sum_{n=1}^N\Ebb{\cal{Z}_u^N\Ebb{\varphi(\bm{X}_u^{n,N})|\cal{F}_{\cal{C}_u}^N}}=\frac{1}{N}\sum_{n=1}^N\Ebb{\cal{Z}_u^N\pi_{u_-}^N(K_u\varphi)}=\Ebb{\gamma_{u_-}^N(K_u\varphi)}
\end{align*}
for all $N>0$ and $\varphi$ and $\cal{B}_b(\bm{E}_u)$. For this reason, and because $\gamma_u=\gamma_{u_-}\times K_u$, we need only show that $\gamma_{u_-}^N(\varphi)$ is unbiased to argue~\eqref{eq:fmwuanfwuafa}. We achieve this by extending  the arguments used in~\cite[Section~16.4.1]{chopin2020} to establish the analogous result for SMC:
\begin{theorem}For any $u$ in $\mathbb{T}^{\not \partial}$, 
\begin{equation}\label{eq:fmwuanfwuafa2}
\Ebb{\gamma^N_{u_-}(\varphi)}=\gamma_{u_-}(\varphi)\quad\forall N>0, \enskip \varphi\in\cal{B}_b(\bm{E}_{\cal{C}_u}).
\end{equation}
\end{theorem}

\begin{proof}We argue the result inductively, starting from the nodes whose children are leaves and recursively moving our way  up the tree.   
For the base case ($u$'s children are leaves), note that the independence of $(\bm{X}_{u1}^{n,N})_{n=1}^N,\dots,(\bm{X}_{uc_u}^{n,N})_{n=1}^N$ and the fact that $\bm{X}_{v}^{1,N},\dots,\bm{X}_{v}^{N,N}\sim K_v$ for each $v$ in $\cal{C}_u$ imply that
\begin{align*}
\Ebb{\gamma^N_{u_-}(\varphi)}&=\frac{1}{N^{c_u}}\sum_{\bm{n}\in [N]^{c_u}}\Ebb{w_{u_-}(\bm{X}^{\bm{n},N}_{\cal{C}_u})\varphi(\bm{X}^{\bm{n},N}_{\cal{C}_u})}=\frac{1}{N^{c_u}}\sum_{\bm{n}\in [N]^{c_u}}\left(\prod_{v\in\cal{C}_u}K_v\right)(w_{u_-}\varphi)\\
&=\left(\prod_{v\in\cal{C}_u}K_v\right)(w_{u_-}\varphi)=\gamma_{\cal{C}_u}(w_{u_-}\varphi)=\gamma_{u_-}(\varphi)\quad\forall N>0, \enskip \varphi\in\cal{B}_b(\bm{E}_{\cal{C}_u}).
\end{align*}

For the inductive step, suppose instead that~\eqref{eq:fmwuanfwuafa2} holds for each of $u$'s non-leaf children:
\begin{equation}
\label{eq:fmwuanfwuafa3}
\Ebb{\gamma^N_{v_-}(\varphi)}=\gamma_{v_-}(\varphi),\quad\forall N>0, \enskip \varphi\in\cal{B}_b(\bm{E}_{\cal{C}_v}),\enskip v\in\cal{C}_u^{\not\partial}.
\end{equation}
We need to show that~\eqref{eq:fmwuanfwuafa2} also holds for $u$ itself. To do so, note that $(\bm{X}_{u1}^{n,N})_{n=1}^N,\dots,(\bm{X}_{uc_u}^{n,N})_{n=1}^N$ are independent and that, for each $v$ in $\cal{C}_u^{\not \partial}$, $\bm{X}_{v}^{1,N},\dots,\bm{X}_{v}^{N,N}$ have law $\pi_{v_-}^N\times K_v$ conditioned on $\cal{F}_{\cal{C}_v}^N$. Hence,
\begin{align*}
\Ebb{\gamma^N_{u_-}(\varphi)}&=\frac{1}{N^{c_u}}\sum_{\bm{n}\in [N]^{c_u}}\Ebb{\cal{Z}_{\cal{C}_u}^Nw_{u_-}(\bm{X}^{\bm{n},N}_{\cal{C}_u})\varphi(X^{\bm{n},N}_{\cal{C}_u})}\\
&=\frac{1}{N^{c_u}}\sum_{\bm{n}\in [N]^{c_u}}\Ebb{\Ebb{\left(\prod_{v\in\cal{C}_u^{\not\partial}}\cal{Z}_v^N\right)w_{u_-}(\bm{X}^{\bm{n},N}_{\cal{C}_u})\varphi(X^{\bm{n},N}_{\cal{C}_u})\left|\bigvee_{v\in\cal{C}_u^{\not\partial}}\cal{F}_{\cal{C}_v}^N\right.}}\\
&=\frac{1}{N^{c_u}}\sum_{\bm{n}\in [N]^{c_u}}\Ebb{\prod_{v\in\cal{C}_u^{\not\partial}}\cal{Z}_v^N\Ebb{w_{u_-}(\bm{X}^{\bm{n},N}_{\cal{C}_u})\varphi(X^{\bm{n},N}_{\cal{C}_u})\left|\bigvee_{v\in\cal{C}_u^{\not\partial}}\cal{F}_{\cal{C}_v}^N\right.}}\\
&=\frac{1}{N^{c_u}}\sum_{\bm{n}\in [N]^{c_u}}\Ebb{\prod_{v\in\cal{C}_u^{\not\partial}}\cal{Z}_v^N\left(\prod_{v\in\cal{C}_u^{\not\partial}}\pi_{v_-}^N\times K_v\right)\left(\left(\prod_{v\in\cal{C}_u^\partial}K_v\right)(w_{u_-}\varphi)\right)}\\
&=\Ebb{\left(\prod_{v\in\cal{C}_u^{\not\partial}}\gamma_{v_-}^N\right)(K_{\cal{C}_u}[w_{u_-}\varphi])}=\Ebb{\left(\prod_{v\in\cal{C}_u^{\not\partial}}\gamma_{v_-}\right)(K_{\cal{C}_u}[w_{u_-}\varphi])}\\
&=\gamma_{\cal{C}_u}(w_{u_-}\varphi)=\gamma_{u_-}(\varphi),
\end{align*}
where the third-to-last equality follows from applying~\eqref{eq:fmwuanfwuafa3} and the tower rule, and $K_{\cal{C}_u}$ denotes the kernel defined by
\begin{equation}\label{eq:Kcu}K_{\cal{C}_u}(\bm{x}_{\cal{C}_{u_-}},dx_{\cal{C}_u}):=\left(\prod_{v\in\cal{C}_u^{\partial}}K_v(dx_v)\right)\left(\prod_{v\in\cal{C}_u^{\not \partial}}K_v(\bm{x}_{\cal{C}_v},dx_v)\right)\quad\forall \bm{x}_{\cal{C}_{u_-}}\in \prod_{v\in\cal{C}_u^{\not \partial}}\bm{E}_{\cal{C}_v}.\end{equation}
\end{proof}
\section{Proof of Theorem~\ref{THRM:BIAS}}\label{app:bias}
Throughout this appendix, we use the notation described in Appendix~\ref{app:notation}. 
By definition, 
$$\mu_u^N(\varphi)=\frac{\rho_u^N(\varphi)}{\rho_u^N(\bm{E}_u)}=\frac{\gamma_u^N(w_u\varphi)}{\gamma_u^N(w_u)}=\frac{\pi_u^N(w_u\varphi)}{\pi_u^N(w_u)},\quad\mu_u(\varphi)=\frac{\rho_u(\varphi)}{\rho_u(\bm{E}_u)}=\frac{\gamma_u(w_u\varphi)}{\gamma_u(w_u)}=\frac{\pi_u(w_u\varphi)}{\pi_u(w_u)},$$
for all $\varphi$ in $\cal{B}_b(\bm{E}_u)$ and $u$ in $\mathbb{T}$. For this reason, and because $w_u$ is bounded above and below by theorem's premise, the bias bounds for $(\mu_u^N)_{u\in\mathbb{T}}$ in Theorem~\ref{THRM:BIAS} follow using standard arguments (see~\cite[p.~35]{liu2001monte} or (\ref{eq:bias_decomp_dac}--\ref{eq:pimbias}) below) from those for $(\pi_u^N)_{u\in\mathbb{T}}$ in Theorem~\ref{THRM:BIASapp} below.
\begin{theorem}[Bias estimates for   $(\pi_u^N)_{u\in\mathbb{T}}$]\label{THRM:BIASapp} If Assumptions~\ref{ass:abscont}--\ref{ass:boundweights} hold, the weight functions are bounded below (i.e.\, for every $u$ in $\mathbb{T}^{\not\partial}$, $w_{u_-}\geq \beta_u$  for some constant $\beta_u>0$), then for all  $u$ in $\mathbb{T}$,
\begin{equation}\label{eq:pibiasbound}
\mmag{\Ebb{\pi^N_u(\varphi)}-\pi_u(\varphi)}\leq\frac{C\norm{\varphi}}{N}\quad\forall N>0,\enskip \varphi\in\cal{B}_b(\bm{E}_{u}).
\end{equation}
\end{theorem}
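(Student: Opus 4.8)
The plan is to replicate the inductive architecture of Appendix~\ref{app:lplln}: establish the $\cal{O}(N^{-1})$ bound~\eqref{eq:pibiasbound} for the empirical distributions indexed by $\mathbb{T}$'s leaves and then show, operation by operation, that Algorithm~\ref{alg:adacsmc} preserves it --- through the product $\pi^N_{\cal{C}_u}=\prod_{v\in\cal{C}_u}\pi_v^N$, the correction $\pi^N_{u_-}$, the resampling $\epsilon^N_u$, and the mutation $\pi^N_u$ --- working from the nodes whose children are leaves up to the root. For a leaf $u$ the particles $\bm{X}_u^{1,N},\dots,\bm{X}_u^{N,N}$ are i.i.d.\ draws from $K_u=\gamma_u=\pi_u$, so $\Ebb{\pi_u^N(\varphi)}=\pi_u(\varphi)$ exactly and the base case is trivial. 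The resampling and mutation steps are almost as easy: conditioning on $\cal{F}_{\cal{C}_u}^N$ gives $\Ebb{\epsilon_u^N(\varphi)\mid\cal{F}_{\cal{C}_u}^N}=\pi_{u_-}^N(\varphi)$, so the tower property transfers the bias bound from $\pi^N_{u_-}$ to $\epsilon_u^N$ verbatim; and conditioning on $\cal{F}_{u_-}^N$ gives $\Ebb{\pi_u^N(\varphi)\mid\cal{F}_{u_-}^N}=\epsilon_u^N(K_u\varphi)$, which combined with $\pi_u=\pi_{u_-}\otimes K_u$ and $\norm{K_u\varphi}\leq\norm{\varphi}$ (as $K_u$ is Markov) reduces the mutation step to the bound already in hand for $\epsilon_u^N$.

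The two substantive steps are the product step and the correction step. For the product step I would isolate a two-factor lemma (Lemma~\ref{lem:prodbias}), the bias analogue of Lemma~\ref{lem:prodlp}: if $(\eta_1^N)$ and $(\eta_2^N)$ are independent sequences of random probability measures whose biases are $\cal{O}(N^{-1})$ relative to probability-measure limits $\eta_1,\eta_2$, then so is the bias of $\eta_1^N\times\eta_2^N$. The starting point is the decomposition from~\eqref{eq:decomp},
\[(\enot)(\varphi)-(\eot)(\varphi)=(\eno-\eo)\times(\ent-\et)(\varphi)+\eno(\et(\varphi))-\eo(\et(\varphi))+\ent(\eo(\varphi))-\et(\eo(\varphi)).\]
Taking expectations, the second and third terms are $\cal{O}(N^{-1})$ directly from the individual bias bounds, since $\norm{\et(\varphi)},\norm{\eo(\varphi)}\leq\norm{\varphi}$. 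For the cross term, let $h^N(x_1):=\ent(\varphi)(x_1)-\et(\varphi)(x_1)$, a $\cal{F}_2$-measurable function with $\norm{h^N}\leq 2\norm{\varphi}$, where $\cal{F}_2$ is the $\sigma$-algebra generated by the $\eta_2^N$'s; then $(\eno-\eo)\times(\ent-\et)(\varphi)=(\eno-\eo)(h^N)$, and independence together with the bias bound for $\eno$ yields $\mmag{\Ebb{(\eno-\eo)(h^N)\mid\cal{F}_2}}\leq C\norm{h^N}/N\leq 2C\norm{\varphi}/N$ almost surely, whence the tower property closes the estimate. Iterating Lemma~\ref{lem:prodbias} over $u$'s children as in the proof of Lemma~\ref{lem:lpcoa} --- using that $\prod_{j\leq k}\pi_{uj}^N$ and $\pi_{u(k+1)}^N$ are independent and have probability-measure limits --- gives the $\cal{O}(N^{-1})$ bias for $\pi_{\cal{C}_u}^N$.

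For the correction step I would follow~\cite{olsson2004bootstrap}. Abbreviating $a^N:=\pi_{\cal{C}_u}^N(w_{u_-}\varphi)$, $b^N:=\pi_{\cal{C}_u}^N(w_{u_-})$ and $a:=\pi_{\cal{C}_u}(w_{u_-}\varphi)$, $b:=\pi_{\cal{C}_u}(w_{u_-})$, one has $\pi_{u_-}^N(\varphi)=a^N/b^N$, $\pi_{u_-}(\varphi)=a/b$, together with the identity
\[\frac{a^N}{b^N}-\frac{a}{b}=\frac{a^N-a}{b}-\frac{a(b^N-b)}{b^2}-\frac{(a^N-a)(b^N-b)}{b\,b^N}+\frac{a(b^N-b)^2}{b^2\,b^N}.\]
Here the hypothesis $w_{u_-}\geq\beta_u$ is essential: it forces $b\geq\beta_u$ and $b^N\geq\beta_u$ \emph{deterministically} (since $\pi_{\cal{C}_u}^N$ is a probability measure), so $1/b$, $1/b^N$ and $a/b^{2}\leq\norm{w_{u_-}}\norm{\varphi}/\beta_u^{2}$ are uniformly bounded. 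The first two terms on the right are then $\cal{O}(N^{-1})$ in expectation by the bias bound for $\pi_{\cal{C}_u}^N$ just proved, while the last two are $\cal{O}(N^{-1})$ by Cauchy--Schwarz and the $L^2$ inequality for $\pi_{\cal{C}_u}^N$ from Theorem~\ref{thrm:lpapp} (equivalently~\eqref{eq:lpcoa}). Chaining the product, correction, resampling and mutation steps up $\mathbb{T}$ then delivers~\eqref{eq:pibiasbound} for every $u$. The main obstacle is precisely the cross term in the product step: a crude $L^2$ argument only shows $\mmag{\Ebb{(\eno-\eo)\times(\ent-\et)(\varphi)}}$ is $\cal{O}(N^{-1/2})$, whereas we need $\cal{O}(N^{-1})$; the resolution is to condition on $\cal{F}_2$ and spend the genuine $\cal{O}(N^{-1})$ \emph{bias} of $\eno$ against the fixed bounded function $h^N$, paying only the sup-norm bound $\norm{h^N}\leq 2\norm{\varphi}$ for the other factor.
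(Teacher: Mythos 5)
Your proposal is correct and, in its overall architecture, matches the paper's proof exactly: the trivial base case at the leaves, the two-factor product lemma (your Lemma~\ref{lem:prodbias} is the paper's, including the key move of conditioning on $\cal{F}_2$ and spending the genuine $\cal{O}(N^{-1})$ bias of $\eno$ against the frozen bounded function $\ent(\varphi)-\et(\varphi)$, which is precisely how the paper beats the naive $\cal{O}(N^{-1/2})$ Cauchy--Schwarz bound), iteration over the children using independence, and the transfer of the bound through resampling and mutation by conditioning and the tower property (the paper does these two steps in one conditioning on $\cal{F}_{\cal{C}_u}^N$, which is immaterial). The one place you genuinely diverge is the correction step: the paper follows \cite{olsson2004bootstrap} and performs a bivariate Taylor expansion with Lagrange remainder of $(y_1,y_2)\mapsto y_1/y_2$ centred at $\bigl(\Ebb{\pi_{\cal{C}_u}^N(w_{u_-}\varphi)},\Ebb{\pi_{\cal{C}_u}^N(w_{u_-})}\bigr)$, then controls the remainder via a separate joint-moment lemma (Lemma~\ref{lemma:2.2olsson}) and bounds the random point $\theta_2\geq\beta$; you instead use the exact algebraic identity for $a^N/b^N-a/b$ centred at the limits $(a,b)$, whose two quadratic terms are handled directly by Cauchy--Schwarz and the $L^2$ bound~\eqref{eq:lpcoa}, and whose two linear terms use the bias bound for $\pi_{\cal{C}_u}^N$ applied to $w_{u_-}\varphi$ and $w_{u_-}$. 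Your identity checks out, the deterministic bounds $b,b^N\geq\beta_u$ are exactly where the lower-bound hypothesis enters in both arguments, and your route is marginally more elementary in that it dispenses with the Lagrange remainder and the auxiliary moment lemma; the inputs consumed (the $\cal{O}(N^{-1})$ bias and $\cal{O}(N^{-1/2})$ $L^2$ error of $\pi_{\cal{C}_u}^N$) are the same, so the two proofs are of equal strength.
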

To argue the above, we follow an approach similar to the one we took in Appendix~\ref{app:lplln} for the $L^p$ inequalities: we show that the bias bound holds for the leaves, prove that each step of the algorithm preserves the bound, and inductively work our way up the tree. To this end, we require Lemma~\ref{lem:prodbias}, our main innovation in this appendix, showing that the product operation respects bias bounds.
\begin{lemma}
\label{lem:prodbias}Suppose that $(\eta_1^N)_{N=1}^\infty$ and $(\eta_2^N)_{N=1}^\infty$ are independent sequences of random probability measures defined on some common probability triplet $(\Omega,\cal{F},\mathbb{P})$, respectively taking values in  some measurable spaces $(S_1,\cal{S}_1)$ and $(S_2,\cal{S}_2)$, and satisfying bias estimates:
\begin{equation}\label{eq:etabias}\mmag{\Ebb{\eta^N_k(\varphi_{k})}-\eta_{k}(\varphi_{k})}\leq\frac{C\norm{\varphi_{k}}}{N}\quad\forall N>0,\enskip \varphi_{k}\in \cal{B}_b(S_{k}),\enskip k\in\{1, 2\},\end{equation}
with limits $\eta_1$ and $\eta_2$ that are also probability measures. The sequence of products $(\eta^N_1\times\eta^N_2)_{N=1}^\infty$ also satisfies a bias estimate: 
\begin{equation}\label{eq:etaprodbias}
\mmag{\Ebb{(\eta^N_1\times\eta^N_2)(\varphi)}-(\eta_1\times\eta_2)(\varphi)}\leq\frac{C\norm{\varphi}}{N}\quad\forall N>0,\enskip \varphi\in \cal{B}_b(S_1\times S_2).
\end{equation}
\end{lemma}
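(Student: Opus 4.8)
The plan is to decompose the error $(\eta^N_1\times\eta^N_2)(\varphi)-(\eta_1\times\eta_2)(\varphi)$ in the same way as in the proof of Lemma~\ref{lem:prodlp}, i.e.\ using the identity~\eqref{eq:decomp}:
\begin{align*}
(\eta^N_1\times\eta^N_2)(\varphi)-(\eta_1\times\eta_2)(\varphi)=&(\eta^N_1-\eta_1)\times(\eta^N_2-\eta_2)(\varphi)\\
&+(\eta^N_1-\eta_1)\times\eta_2(\varphi)+\eta_1\times(\eta^N_2-\eta_2)(\varphi).
\end{align*}
Taking expectations and applying the triangle inequality reduces the task to bounding the expectation of each of the three terms on the right-hand side. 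Since $\eta_2(\varphi)$ is a bounded function of the first coordinate (with $\norm{\eta_2(\varphi)}\leq\norm{\varphi}$ as $\eta_2$ is a probability measure), the bias estimate~\eqref{eq:etabias} for $(\eta_1^N)$ applied to the test function $\eta_2(\varphi)$ immediately gives $\mmag{\Ebb{(\eta^N_1-\eta_1)\times\eta_2(\varphi)}}=\mmag{\Ebb{\eta^N_1(\eta_2(\varphi))}-\eta_1(\eta_2(\varphi))}\leq C\norm{\varphi}/N$; symmetrically for the third term using~\eqref{eq:etabias} for $(\eta_2^N)$.

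The main work is the cross term $\Ebb{(\eta^N_1-\eta_1)\times(\eta^N_2-\eta_2)(\varphi)}$. Here I would exploit independence: conditioning on the $\sigma$-algebra $\cal{F}_2$ generated by the $\eta_2^N$'s, the function $\psi^N(x_1):=\eta_2^N(\varphi)(x_1)-\eta_2(\varphi)(x_1)=\int\varphi(x_1,x_2)\eta_2^N(dx_2)-\eta_2(\varphi)(x_1)$ is a fixed bounded measurable function on $S_1$ (bounded by $2\norm{\varphi}$), and $(\eta_1^N)$ is independent of $\cal{F}_2$. Hence
\begin{align*}
\Ebb{(\eta^N_1-\eta_1)\times(\eta^N_2-\eta_2)(\varphi)}&=\Ebb{\eta^N_1(\psi^N)-\eta_1(\psi^N)}=\Ebb{\Ebb{\eta^N_1(\psi^N)-\eta_1(\psi^N)\,\big|\,\cal{F}_2}}.
\end{align*}
Conditionally on $\cal{F}_2$, the inner expectation is the bias of $\eta_1^N$ evaluated at the fixed test function $\psi^N$; by~\eqref{eq:etabias} this is bounded in absolute value by $C\norm{\psi^N}/N\leq 2C\norm{\varphi}/N$ almost surely, and taking the outer expectation preserves this bound. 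Putting the three pieces together with the triangle inequality yields~\eqref{eq:etaprodbias}.

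One subtlety to address carefully is that the bias estimate~\eqref{eq:etabias} for $(\eta_1^N)$ must be applicable with a \emph{random} test function $\psi^N$ that depends on $\cal{F}_2$; this is legitimate precisely because, conditionally on $\cal{F}_2$, the law of $(\eta_1^N)$ is unchanged (by independence) and $\psi^N$ is $\cal{F}_2$-measurable, so for $\mathbb{P}$-a.e.\ realization of $\cal{F}_2$ the hypothesis applies verbatim to the now-deterministic function $\psi^N$, with the constant $C$ and the bound $\norm{\psi^N}\leq 2\norm{\varphi}$ being uniform over these realizations. This conditioning-on-$\cal{F}_2$ device is the only real obstacle, and it is essentially identical to the one already used in the display~\eqref{eq:fme89ahaf78haeufafe} of Lemma~\ref{lem:prodlp}'s proof; I anticipate no further difficulty. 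As in Lemma~\ref{lem:lpcoa}, an immediate consequence (obtained by iterating over the children of a node) is that the $c_u$-fold product $\prod_{v\in\cal{C}_u}\eta_v^N$ inherits an $\cal{O}(N^{-1})$ bias estimate whenever each factor does.
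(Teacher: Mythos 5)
Your proposal is correct and follows essentially the same route as the paper's proof: the decomposition~\eqref{eq:decomp}, the bias estimates applied to the bounded functions $\eta_2(\varphi)$ and $\eta_1(\varphi)$ for the two single-error terms, and conditioning on $\cal{F}_2$ together with independence to bound the cross term via the bias estimate applied to the (conditionally deterministic) function bounded by $2\norm{\varphi}$. The subtlety you flag about the random test function is handled in the paper exactly as you describe, so there is nothing to add.
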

\begin{proof}[Proof of Lemma~\ref{lem:prodbias}] Fix any $\varphi$ in $\cal{B}_b(S_1\times S_2)$ and $N>0$, and recall the decomposition in~\eqref{eq:decomp}. Because $\eta_1,\eta_2$ are probability measures, $\norm{\eta_1(\varphi)}$ and $\norm{\eta_2(\varphi)}$ are both bounded above by $\norm{\varphi}$ and we can apply the bias estimates in~\eqref{eq:etabias} to control the expected value of the rightmost two terms in~\eqref{eq:decomp}:
\begin{align}
\mmag{\Ebb{\eno(\et(\varphi))}-\eo(\et(\varphi))}&\leq \frac{C\norm{\eta_2(\varphi)}}{N}\leq \frac{C\norm{\varphi}}{N},\label{eq:bias_one}\\
\mmag{\Ebb{\ent(\eo(\varphi))}-\et(\eo(\varphi))}&\leq\frac{C\norm{\eta_1(\varphi)}}{N}\leq \frac{C\norm{\varphi}}{N}.\label{eq:bias_two}
\end{align}
To control the expected value of the remaining term in the right-hand side of~\eqref{eq:decomp}, let $\mathcal{F}_2$ denote the $\sigma$-algebra generated by the $\eta_2^N$s. Because, for each $\omega$ in $\Omega$,  $x_1\mapsto \int\varphi(x_1,x_2)\eta_2^N(\omega, dx_2)-\eta_2(\varphi)(x_1)$ is a bounded function on $S_1$, \eqref{eq:etabias} and the independence of $(\eta_1^N)_{N=1}^\infty$ and $(\eta_2^N)_{N=1}^\infty$ imply that
\begin{align*}
&\mmag{\Ebb{(\eno-\eo)\times(\ent-\et)(\varphi)}}\\
&\qquad\qquad=\mmag{\Ebb{\eno(\ent(\varphi)-\et(\varphi))-\eo(\ent(\varphi)-\et(\varphi))}}\\
&\qquad\qquad=\mmag{\Ebb{\Ebb{\eno(\ent(\varphi)-\et(\varphi))-\eo(\ent(\varphi)-\et(\varphi))|\cal{F}_2}}}\nonumber\\
&\qquad\qquad\leq\Ebb{\mmag{\Ebb{\eno(\ent(\varphi)-\et(\varphi))-\eo(\ent(\varphi)-\et(\varphi))|\cal{F}_2}}}\nonumber\\
&\qquad\qquad\leq\frac{C\Ebb{\norm{\ent(\varphi)-\et(\varphi)}}}{N}\leq\frac{2C\norm{\varphi}}{N}.
\end{align*}
Putting the above and~(\ref{eq:bias_one},\ref{eq:bias_two}) together, we obtain the bias estimate for the product in~\eqref{eq:etaprodbias}.
\end{proof}
Armed with Lemma~\ref{lem:prodbias}, the remainder of Theorem~\ref{THRM:BIASapp}'s proof follows the approach of~\cite{olsson2004bootstrap}:
\begin{proof}[Proof of Theorem~\ref{THRM:BIASapp}]In the case of a leaf $u$, the bound~\eqref{eq:pibiasbound} holds trivially:
\begin{align}
\label{eq:bias_base}
\Ebb{\pi^N_u(\varphi)}=\frac{1}{N}\sum_{n=1}^N\Ebb{\varphi(\bm{X}_u^{n,N})}=K_u(\varphi)=\pi_u(\varphi)\quad\forall \varphi\in\cal{B}_b(\bm{E}_u),\end{align}
because $\cal{Z}_u^N=1$ by definition and $\bm{X}_u^{1,N},\dots,\bm{X}_u^{N,N}$ are drawn directly from $K_u$.

For all other nodes, we argue the result inductively, starting from those whose children are leaves and recursively moving our way  up the tree.   
For the base case ($u$'s children are leaves), note that $(\pi^N_{u1})_{N=1}^\infty,\dots,(\pi^N_{uc_u})_{N=1}^\infty$ are independent sequences of probability measures by construction;  consequently, for any $k<c_u$, $(\pi^N_{u[k]})_{N=1}^\infty$ and $(\pi^N_{u(k+1)})_{N=1}^\infty$  are also independent sequences (where $u[k]:=\{u1,\dots,uk\}$ and $\pi^N_{u[k]}:=\prod_{v\in u[k]}\pi^N_{v}$). For this reason, and because the bound~\eqref{eq:pibiasbound} holds for all of $u$'s children, repeated applications of Lemma~\ref{lem:prodbias}  yield 
\begin{align}
\label{eq:bias_lambda}
\mmag{\Ebb{\pi_{\cal{C}_u}^N(\varphi)}-\pi_{\cal{C}_u}(\varphi)} \leq \frac{C\norm{\varphi}}{N}\quad \forall N>0, \enskip \varphi\in\cal{B}_b(\bm{E}_{\cal{C}_u}).
\end{align}

To proceed, fix any $\varphi$ in $\cal{B}_b(\bm{E}_{\cal{C}_u})$ and $N>0$. As shown at the start of Lemma~\ref{lem:lpcor}'s proof,
$$\pi_{u_-}(\varphi)=\frac{\pi_{\cal{C}_u}(w_{u_-}\varphi)}{\pi_{\cal{C}_u}(w_{u_-})},\quad \pi_{u_-}^N(\varphi)=\frac{\pi_{\cal{C}_u}^N(w_{u_-}\varphi)}{\pi_{\cal{C}_u}^N(w_{u_-})}.$$
Hence, applying the triangle inequality, we find that
\begin{align}
\label{eq:bias_decomp_dac}
\mmag{\Ebb{\pi_{u_-}^N(\varphi)}-\pi_{u_-}(\varphi)} \leq& \mmag{\Ebb{\frac{\pi_{\cal{C}_u}^N(w_{u_-}\varphi)}{\pi_{\cal{C}_u}^N(w_{u_-})}}- \frac{\Ebb{\pi_{\cal{C}_u}^N(w_{u_-}\varphi)}}{\Ebb{\pi_{\cal{C}_u}^N(w_{u_-})}}} \\
&+ \mmag{\frac{\Ebb{\pi_{\cal{C}_u}^N(w_{u_-}\varphi)}}{\Ebb{\pi_{\cal{C}_u}^N(w_{u_-})}}-\frac{\pi_{\cal{C}_u}(w_{u_-}\varphi)}{\pi_{\cal{C}_u}(w_{u_-})}}.\nonumber
\end{align}
To control the first term, we apply a bivariate Taylor expansion with Lagrange remainder of $(y_1,y_2)\mapsto y_1/y_2$ at $(\Ebb{\pi_{\cal{C}_u}^N(w_{u_-}\varphi)},\Ebb{\pi_{\cal{C}_u}^N(w_{u_-})})$ and obtain
\begin{align}
\label{eq:biaslagrange}
\frac{\pi_{\cal{C}_u}^N(w_{u_-}\varphi)}{\pi_{\cal{C}_u}^N(w_{u_-})} =&\frac{\Ebb{\pi_{\cal{C}_u}^N(w_{u_-}\varphi)}}{\Ebb{\pi_{\cal{C}_u}^N(w_{u_-})}} + \frac{\pi_{\cal{C}_u}^N(w_{u_-}\varphi) - \Ebb{\pi_{\cal{C}_u}^N(w_{u_-}\varphi)}}{\Ebb{\pi_{\cal{C}_u}^N(w_{u_-})}} \\
&-\frac{\Ebb{\pi_{\cal{C}_u}^N(w_{u_-}\varphi)}}{\Ebb{\pi_{\cal{C}_u}^N(w_{u_-})}^2}\left( \pi_{\cal{C}_u}^N(w_{u_-}) - \Ebb{\pi_{\cal{C}_u}^N(w_{u_-})}\right) + R^N,\notag
\end{align}
where
\begin{align*}
R^N :=& \frac{\theta_1}{\theta_2^3}\left( \pi_{\cal{C}_u}^N(w_{u_-}) - \Ebb{\pi_{\cal{C}_u}^N(w_{u_-})}\right)^2 \\
&- \frac{1}{\theta_2^2}\left( \pi_{\cal{C}_u}^N(w_{u_-}) - \Ebb{\pi_{\cal{C}_u}^N(w_{u_-})}\right) \left( \pi_{\cal{C}_u}^N(w_{u_-}\varphi) - \Ebb{\pi_{\cal{C}_u}^N(w_{u_-}\varphi)}\right),
\end{align*}
for some random point $(\theta_1, \theta_2)$ on the line segment joining $(\pi_{\cal{C}_u}^N(w_{u_-}\varphi), \pi_{\cal{C}_u}^N(w_{u_-}))$ and $\left(\Ebb{\pi_{\cal{C}_u}^N(w_{u_-}\varphi)}, \Ebb{\pi_{\cal{C}_u}^N(w_{u_-})}\right)$. 
Taking expectations of~\eqref{eq:biaslagrange}, we find that
\begin{align}\label{eq:fme8uwafneyau8feya}
&\mmag{\Ebb{\frac{\pi_{\cal{C}_u}^N(w_{u_-}\varphi)}{\pi_{\cal{C}_u}^N(w_{u_-})}}- \frac{\Ebb{\pi_{\cal{C}_u}^N(w_{u_-}\varphi)}}{\Ebb{\pi_{\cal{C}_u}^N(w_{u_-})}}} \leq \Ebb{\mmag{\frac{\theta_1}{\theta_2^3}\left( \pi_{\cal{C}_u}^N(w_{u_-}) - \Ebb{\pi_{\cal{C}_u}^N(w_{u_-})}\right)^2 }}\\
&\qquad+\Ebb{\mmag{ \frac{1}{\theta_2^2}\left( \pi_{\cal{C}_u}^N(w_{u_-}) - \Ebb{\pi_{\cal{C}_u}^N(w_{u_-})}\right) \left( \pi_{\cal{C}_u}^N(w_{u_-}\varphi) - \Ebb{\pi_{\cal{C}_u}^N(w_{u_-}\varphi)}\right)}}.\nonumber
\end{align}
As we will show in Lemma~\ref{lemma:2.2olsson} below, for all non-negative integers satisfying $l+k\geq 1$,
\begin{align}\label{eq:olsons22}
&\Ebb{\mmag{\pi_{\cal{C}_u}^M(\psi_1)-\Ebb{\pi_{\cal{C}_u}^M(\psi_1)}}^l\mmag{\pi_{\cal{C}_u}^M(\psi_2)-\Ebb{\pi_{\cal{C}_u}^M(\psi_2)}}^k}\leq\frac{C\norm{\psi_1}^l\norm{\psi_2}^k}{M^{(l+k)/2}}
\end{align}
for all $M>0$ and $\psi_1,\psi_2$ in $\cal{B}_b(\bm{E}_{\cal{C}_u})$. Given our assumption that $w_{u_-}$ is bounded below by some constant $\beta>0$, we have that
\begin{align*}
\theta_2\geq\min\{\pi_{\cal{C}_u}^N(w_{u_-}),\Ebb{\pi_{\cal{C}_u}^N(w_{u_-})}\}\geq\min\{\beta\pi_{\cal{C}_u}^N(\bm{E}_{\cal{C}_u}),\Ebb{\beta\pi_{\cal{C}_u}^N(\bm{E}_{\cal{C}_u})}\}= \beta.
\end{align*}
Similarly,   $\theta_1\leq \norm{w_{u_-}}\norm{\varphi}$ and it follows from (\ref{eq:fme8uwafneyau8feya},\ref{eq:olsons22}) that
\begin{align}
\label{eq:bias1}
\mmag{\Ebb{\frac{\pi_{\cal{C}_u}^N(w_{u_-}\varphi)}{\pi_{\cal{C}_u}^N(w_{u_-})}}- \frac{\Ebb{\pi_{\cal{C}_u}^N(w_{u_-}\varphi)}}{\Ebb{\pi_{\cal{C}_u}^N(w_{u_-})}}} &\leq \left(\frac{\norm{w_{u_-}}^3\norm{\varphi}}{\beta^3}+\frac{\norm{w_{u_-}}^2\norm{\varphi}}{\beta^2}\right)\frac{C}{N}.
\end{align}
To deal with the second term in~\eqref{eq:bias_decomp_dac}, we apply the triangle inequality:
\begin{align*}
&\mmag{\frac{\Ebb{\pi_{\cal{C}_u}^N(w_{u_-}\varphi)}}{\Ebb{\pi_{\cal{C}_u}^N(w_{u_-})}}-\frac{\pi_{\cal{C}_u}(w_{u_-}\varphi)}{\pi_{\cal{C}_u}(w_{u_-})}}\\
&\leq \frac{\mmag{\Ebb{\pi_{\cal{C}_u}^N(w_{u_-}\varphi)}-\pi_{\cal{C}_u}(w_{u_-}\varphi)}}{\Ebb{\pi_{\cal{C}_u}^N(w_{u_-})}}+\frac{\mmag{\pi_{\cal{C}_u}(w_{u_-}\varphi)}\mmag{\pi_{\cal{C}_u}(w_{u_-})-\Ebb{\pi_{\cal{C}_u}^N(w_{u_-})}}}{\pi_{\cal{C}_u}(w_{u_-})\Ebb{\pi_{\cal{C}_u}^N(w_{u_-})}} \notag\\
&\leq \frac{\mmag{\Ebb{\pi_{\cal{C}_u}^N(w_{u_-}\varphi)}-\pi_{\cal{C}_u}(w_{u_-}\varphi)}}{\beta}+\frac{\norm{\varphi}}{\beta}\mmag{\pi_{\cal{C}_u}(w_{u_-})-\Ebb{\pi_{\cal{C}_u}^N(w_{u_-})}},
\end{align*}
where, in the final inequality, we used $\mmag{\pi_{\cal{C}_u}(w_{u_-}\varphi)/\pi_{\cal{C}_u}(w_{u_-})}=\mmag{\pi_{u_-}(\varphi)}\leq \norm{\varphi}$. Applying the bias estimate for $\pi_{\cal{C}_u}^N$ in~\eqref{eq:bias_lambda}, we then find that
$$
\mmag{\frac{\Ebb{\pi_{\cal{C}_u}^N(w_{u_-}\varphi)}}{\Ebb{\pi_{\cal{C}_u}^N(w_{u_-})}}-\frac{\pi_{\cal{C}_u}(w_{u_-}\varphi)}{\pi_{\cal{C}_u}(w_{u_-})}}\leq\frac{2\norm{w_{u_-}}\norm{\varphi}C}{\beta N}.
$$
Combining the above with~(\ref{eq:bias_decomp_dac},\ref{eq:bias1}), we obtain a bias bound for $\pi_{u_-}^N$:
\begin{equation}\label{eq:pimbias}
\mmag{\Ebb{\pi^N_{u_-}(\varphi)}-\pi_{u_-}(\varphi)} \leq \frac{C\norm{\varphi}}{N}\quad \forall N>0, \enskip \varphi\in\cal{B}_b(\bm{E}_{\cal{C}_u}).
\end{equation}

Lastly, given that $\bm{X}_{u}^{1,N},\dots,\bm{X}_{u}^{N,N}$ have law $\pi_{u_-}^N\times K_v$ when conditioned on $\cal{F}_{\cal{C}_u}^N$,
\begin{align*}
\Ebb{\pi^N_u(\varphi)}&=\frac{1}{N}\sum_{n=1}^N\Ebb{\varphi(\bm{X}_u^{n,N})}=\frac{1}{N}\sum_{n=1}^N\Ebb{\Ebb{\varphi(\bm{X}_u^{n,N})|\cal{F}_{\cal{C}_u}^N}}=\Ebb{\pi_{u_-}^N(K_u\varphi)}
\end{align*}
for all $N>0$ and $\varphi$ in $\cal{B}_b(\bm{E}_u)$. Because $\pi_u=\pi_{u_-}\times K_u$, the desired bias estimate~\eqref{eq:pibiasbound} for $\pi_u^N$ follows from that for $\pi^N_{u_-}$ in~\eqref{eq:pimbias} for all non-leaf nodes $u$ whose children are leaves.

For the inductive step, take any non-leaf node $u$, assume that the bound~\eqref{eq:pibiasbound} holds for all of its non-leaf children, and repeat the exact same argument given above for the base case.
\end{proof}
We have one loose end left to tie up:
\begin{lemma}\label{lemma:2.2olsson}If Assumptions~\ref{ass:abscont}-\ref{ass:boundweights} are satisfied and $l,k$ are non-negative integers satisfying $l+k\geq 1$, then~\eqref{eq:olsons22} holds for all $M>0$ and $\psi_1,\psi_2$ in $\cal{B}_b(\bm{E}_{\cal{C}_u})$.
\end{lemma}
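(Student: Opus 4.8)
The plan is to reduce the bivariate centered-moment bound to a univariate one and then to obtain the latter from the $L^p$ inequality for $\pi_{\cal{C}_u}^N$ that is already available from Appendix~\ref{app:lplln}. If $l=0$ or $k=0$ the asserted bound is just a single-variable statement, so it is harmless to treat the general case at once: setting $A_i^M:=\pi_{\cal{C}_u}^M(\psi_i)-\Ebb{\pi_{\cal{C}_u}^M(\psi_i)}$ for $i\in\{1,2\}$, Hölder's inequality with conjugate exponents $(l+k)/l$ and $(l+k)/k$ (interpreting a zero exponent as simply discarding that factor) gives
\[\Ebb{\mmag{A_1^M}^l\mmag{A_2^M}^k}\leq\Ebb{\mmag{A_1^M}^{l+k}}^{\frac{l}{l+k}}\Ebb{\mmag{A_2^M}^{l+k}}^{\frac{k}{l+k}}.\]
Hence it suffices to show that, for each integer $j\geq1$, there is a constant $C$ with
\[\Ebb{\mmag{\pi_{\cal{C}_u}^M(\psi)-\Ebb{\pi_{\cal{C}_u}^M(\psi)}}^j}^{1/j}\leq\frac{C\norm{\psi}}{M^{1/2}}\quad\forall M>0,\ \psi\in\cal{B}_b(\bm{E}_{\cal{C}_u}),\]
and then to substitute this back into the display above with $j=l+k$.

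For the univariate estimate I would apply Minkowski's inequality to peel off the mean,
\[\Ebb{\mmag{\pi_{\cal{C}_u}^M(\psi)-\Ebb{\pi_{\cal{C}_u}^M(\psi)}}^j}^{1/j}\leq\Ebb{\mmag{\pi_{\cal{C}_u}^M(\psi)-\pi_{\cal{C}_u}(\psi)}^j}^{1/j}+\mmag{\pi_{\cal{C}_u}(\psi)-\Ebb{\pi_{\cal{C}_u}^M(\psi)}},\]
and bound each of the two terms by $C\norm{\psi}/M^{1/2}$. The first term is exactly the $L^p$ inequality for $\pi_{\cal{C}_u}^N$ in~\eqref{eq:lpcoa} (with $p=j$), which is established unconditionally in the proof of Theorem~\ref{thrm:lpapp} because the premise~\eqref{eq:lp0} holds there for every child of $u$. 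For the second term, Jensen's inequality yields $\mmag{\pi_{\cal{C}_u}(\psi)-\Ebb{\pi_{\cal{C}_u}^M(\psi)}}=\mmag{\Ebb{\pi_{\cal{C}_u}(\psi)-\pi_{\cal{C}_u}^M(\psi)}}\leq\Ebb{\mmag{\pi_{\cal{C}_u}^M(\psi)-\pi_{\cal{C}_u}(\psi)}^j}^{1/j}$, which is again controlled by~\eqref{eq:lpcoa}. Adding the two bounds and relabelling the constant finishes the univariate estimate, and with it the lemma.

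I do not expect any genuine difficulty here: the estimate is a soft consequence of the $L^p$ inequality of Appendix~\ref{app:lplln}, and in particular there is no circularity with the bias argument of Theorem~\ref{THRM:BIASapp}, since~\eqref{eq:lpcoa} is proved independently of any bias estimate. The only points that require a little care are the bookkeeping of the Hölder exponents and the degenerate cases $l=0$ and $k=0$, both of which are routine.
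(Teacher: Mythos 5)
Your argument is correct and follows essentially the same route as the paper's own proof, which likewise reduces~\eqref{eq:olsons22} to the $L^p$ inequality~\eqref{eq:lpcoa} through elementary moment inequalities (the paper invokes Cauchy--Schwarz and Jensen following Olsson and Ryd\'en, whereas you use H\"older with exponents $(l+k)/l$ and $(l+k)/k$ together with Minkowski and Jensen, an immaterial difference). Your observations that~\eqref{eq:lpcoa} holds unconditionally for every non-leaf $u$ and that no circularity with the bias bound arises are both accurate.
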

\begin{proof}Given the $L^p$ inequality for $\pi_{\cal{C}_u}^N$ in~\eqref{eq:lpcoa} (obtained in the proof of Theorem~\ref{thrm:lpapp}), this proof consists of applying Jensen's inequality and the Cauchy-Schwarz inequality as was done in the proof of \cite[Lemma 2.2]{olsson2004bootstrap}. 
\end{proof}
\section{Proof of Theorem~\ref{THRM:CLT}}\label{app:clt}
Throughout this appendix, we use the notation described in Appendix~\ref{app:notation}.  The aim of the appendix is to establish the CLTs in Theorem~\ref{THRM:CLT}. The meat of the matter entails deriving a CLT for the estimator $\gamma_{u}^N$ of the extended auxiliary measure $\gamma_{u}$:
\begin{theorem}[CLT for the unnormalized flow]\label{THRM:CLTun}\hspace{-2pt}If Assumptions~\ref{ass:abscont}--\ref{ass:boundweights} hold and $u$ lies in $\mathbb{T}$,
$$N^{1/2}\left(\gamma_{u}^N(\varphi)-\gamma_{u}(\varphi)\right)\Rightarrow\cal{N}(0,\sigma^2_{\gamma_u}(\varphi))\enskip\text{as}\enskip N\to\infty,\quad\forall\varphi\in \cal{B}_b(\bm{E}_{\cal{C}_u})$$
where $\sigma^2_{\gamma_u}(\varphi):=\sum_{v\in\mathbb{T}_u}\pi_v([\cal{Z}_{v}\Gamma_{v,u}\varphi-\gamma_u(\varphi)]^2)$.
\end{theorem}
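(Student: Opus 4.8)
## Proof Plan for Theorem~\ref{THRM:CLTun}

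The plan is to follow the inductive, node-by-node strategy laid out in Section~\ref{sec:mop}, mirroring the Feynman--Kac arguments of \cite[Chapter~9]{DelMoral2004} but carrying them through the tree rather than along a line. Concretely, I would prove a CLT for each of the empirical-measure approximations produced by Algorithm~\ref{alg:adacsmc} --- the leaf empirical distributions, the products $\pi_{\cal{C}_u}^N$, the corrected measures $\gamma_{u_-}^N$, the resampled empirical measures $\epsilon_u^N$, and the mutated measures $\gamma_u^N$ --- and show that each step either preserves asymptotic normality or introduces a new, independent, asymptotically normal local error term. The base case is the standard multivariate CLT for i.i.d.\ samples applied to each leaf $u\in\mathbb{T}^\partial$: $N^{1/2}(\pi_u^N(\varphi)-K_u(\varphi))\Rightarrow\cal{N}(0,K_u((\varphi-K_u\varphi)^2))$, which matches the claimed formula since $\mathbb{T}_u=\{u\}$, $\cal{Z}_u=1$, $\Gamma_{u,u}=\mathrm{id}$, and $\pi_u=K_u$ for a leaf.

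For the inductive step at a non-leaf node $u$, I would decompose the total error $N^{1/2}(\gamma_u^N(\varphi)-\gamma_u(\varphi))$ into a telescoping sum over the nodes $v\in\mathbb{T}_u$, where the $v$-th summand is the "local error" made at $v$ propagated up to $u$ via the semigroup kernel $\Gamma_{v,u}$ of \eqref{eq:Pivu1}--\eqref{eq:Pivu2}. The key point --- and the reason the variance is a \emph{sum} of $\pi_v$-integrals with no cross terms --- is that the local errors at nodes lying on separate branches are asymptotically independent (they are generated by disjoint collections of random variables), while those on the same branch are handled exactly as in standard SMC via the tower property and \cite[Corollary~9.3.1]{DelMoral2004}-type arguments. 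The product step requires the extra work: writing $\pi_{\cal{C}_u}^N-\pi_{\cal{C}_u}$ using the decomposition \eqref{eq:decomp} generalized to $c_u$ factors, the "diagonal" cross-terms involving products of two or more local errors must be shown to be $o(N^{-1/2})$ so they vanish in the limit; this is precisely Theorem~\ref{THRM:CROSS} (the product $L^2$ inequality), which gives $\Ebb{((\gamma_u^N-\gamma_u)\times(\gamma_v^N-\gamma_v)(\varphi))^2}^{1/2}=\cal{O}(N^{-1})$ for $u,v$ on separate branches. With those terms killed, the surviving part of $N^{1/2}(\pi_{\cal{C}_u}^N-\pi_{\cal{C}_u})(\varphi)$ is an (asymptotically independent) sum of the inherited CLT terms from each child, integrated against the appropriate marginal of the remaining children's limits.

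The remaining steps are routine propagations. The correction step $\gamma_{u_-}^N(d\bm{x})=w_{u_-}(\bm{x})\gamma_{\cal{C}_u}^N(d\bm{x})$ is linear in the measure, so it simply transports the CLT with $\varphi\mapsto w_{u_-}\varphi$; the normalization $\pi_{u_-}^N=\gamma_{u_-}^N/\cal{Z}_u^N$ is handled by a delta-method / Slutsky argument using $\cal{Z}_u^N\to\cal{Z}_u$ a.s.\ (Theorem~\ref{THRM:LLN}). The resampling step adds an independent local error $N^{1/2}(\epsilon_u^N-\pi_{u_-}^N)(\varphi)$ which, conditional on $\cal{F}_{\cal{C}_u}^N$, is a normalized sum of i.i.d.\ terms and hence CLT by a conditional Lindeberg argument exactly as in \cite[Section~9.4]{DelMoral2004}; its asymptotic variance is $\pi_{u_-}(\varphi^2)-\pi_{u_-}(\varphi)^2$, which after rescaling by $\cal{Z}_u$ and noting $\pi_u=\pi_{u_-}\otimes K_u$ is the $v=u$ term in $\sigma^2_{\gamma_u}$. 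The mutation step $\gamma_u^N(d\bm{x}_u)=\cal{Z}_u^N\,(\epsilon_u^N\otimes K_u)$-then-resample adds a further conditional-CLT term with variance $\pi_{u_-}(K_u(\varphi^2)-(K_u\varphi)^2)$, which combines with the resampling term to give $\pi_u(\varphi^2)-\pi_u(\varphi)^2=\pi_u((\varphi-\pi_u\varphi)^2)$, i.e.\ exactly $\pi_u([\cal{Z}_u\Gamma_{u,u}\varphi-\gamma_u(\varphi)]^2)$ after rescaling. Collecting the inherited terms (via the semigroup identity $\Gamma_{v,r}\Gamma_{r,u}=\Gamma_{v,u}$ and $\gamma_v(\Gamma_{v,u}\varphi)=\gamma_u(\varphi)$) with the two new ones yields the stated formula. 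Theorem~\ref{THRM:CLT} then follows from Theorem~\ref{THRM:CLTun} by the usual importance-sampling / ratio delta-method step applied to \eqref{eq:adacests}, exactly as ASMC results are extracted from SMC ones.

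The main obstacle, as flagged in Section~\ref{sec:mop}, is controlling the combinatorially many products-of-local-errors terms that appear when expanding $\pi_{\cal{C}_u}^N-\pi_{\cal{C}_u}$ over all $c_u$ children and then recursing down the whole subtree $\mathbb{T}_u$: one must verify that \emph{every} term involving at least two independent local errors is $o(N^{-1/2})$. This is where Theorem~\ref{THRM:CROSS} does the heavy lifting, and establishing it (by the inductive extension of \cite[Lemma~1]{Kuntz2021} described before its statement) is the genuinely new technical content; once it is in hand, the CLT assembly is a careful but standard bookkeeping exercise with Slutsky's theorem, the continuous mapping theorem, and the fact that independent normals sum to a normal.
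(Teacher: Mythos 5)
Your plan is correct and follows essentially the same route as the paper's proof: a joint CLT for the local errors (one per node, asymptotically independent, obtained by conditional martingale/Lindeberg arguments in the spirit of Del Moral), a decomposition of $\gamma_u^N-\gamma_u$ into these local errors propagated by the kernels $\Gamma_{v,u}$ plus cross-product remainders, the product $L^2$ inequality of Theorem~\ref{THRM:CROSS} to show those remainders are $o(N^{-1/2})$, and assembly via Slutsky and the continuous mapping theorem. The only cosmetic differences are that the paper treats resampling and mutation as a single conditional sampling step from $\pi_{u_-}^N\otimes K_u$ (rather than two increments whose variances add) and works with the unnormalized measures so that no delta-method step is needed for $\pi_{u_-}^N$ inside the recursion.
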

Theorem~\ref{THRM:CLT} follows from Theorem~\ref{THRM:CLTun} using standard arguments:
\begin{proof}[Proof of Theorem~\ref{THRM:CLT}]Fix any $u$ in $\mathbb{T}$ and  $\varphi$ in $\cal{B}_b(\bm{E}_{u})$. Because $\rho_u^N(\varphi)=\gamma_u^N(w_u\varphi)$, the CLT for $\rho_u^N$ follows immediately from that for $\gamma_u^N$ and
\begin{align*}
\sigma^2_{\gamma_u}(w_u\varphi)=\sum_{v\in\mathbb{T}_u}\pi_v([\cal{Z}_{v}\Gamma_{v,u}\left[w_u\varphi\right]-\gamma_u(w_u\varphi)]^2)=\sum_{v\in\mathbb{T}_u}\pi_v([\cal{Z}_{v}\Gamma_{v,u}\left[w_u\varphi\right]-\rho_u(\varphi)]^2).
\end{align*}
For $\mu_u^N(\varphi)$ note that, just as in the standard SMC case \cite[Section 9.4.2]{DelMoral2004}, 
\begin{equation}\label{eq:fne8aungfyawengw}\mu_u^N(\varphi)-\mu_u(\varphi)=\frac{Z_{u}}{Z_{u}^N}\rho_u^N\left(\frac{\varphi-\mu_u(\varphi)}{Z_{u}}\right).\end{equation}
Theorem~\ref{THRM:LLN} tells us that  $Z_{u}^N$ converges almost surely to $Z_{u}$ as $N$ tends to infinity. Moreover,
$$\rho_u\left(\frac{\varphi-\mu_u(\varphi)}{Z_{u}}\right)=\mu_u(\varphi-\mu_u(\varphi))=0.$$
Hence, we obtain the CLT for $\mu_u^N$ applying Slutsky's theorem to~\eqref{eq:fne8aungfyawengw}:
$$N^{1/2}(\mu_u^N(\varphi)-\mu_u(\varphi))\Rightarrow\cal{N}(0,\sigma^2_{\rho_u}(Z_{u}^{-1}[\varphi-\mu_u(\varphi)]))\enskip\text{as}\enskip N\to\infty.$$

To complete the proof, note that
\begin{align*}
\sigma^2_{Z_u}=\sigma^2_{\rho_u}(1)=\sum_{v\in\mathbb{T}_u}\pi_v([\cal{Z}_{v}\Gamma_{v,u}w_u-Z_u]^2).
\end{align*}
But, for any $v$ in $\mathbb{T}_u$ and $\psi$ in $\cal{B}_b(\bm{E}_v)$,
\begin{align*}
\pi_v(\cal{Z}_{v}\Gamma_{v,u}w_u\psi)&=\gamma_v(\Gamma_{v,u}w_u\psi)=\gamma_u(w_u\psi)=\rho_u(\psi)=Z_u\mu_u(\psi)=Z_u\mu_u^v(\psi).
\end{align*}
In other words, $\cal{Z}_{v}\Gamma_{v,u}w_u=Z_ud\mu_u^v/d\pi_v$ and~\eqref{eq:sigZ} follows.
\end{proof}
For the proof of Theorem~\ref{THRM:CLTun}, we adapt the approach taken in \cite[Chapter~9]{DelMoral2004} to establish the analogous result for standard SMC. In particular, we first show  that the \emph{local errors},  $\pi^N_v-\pi_{v_-}^N\times K_v$, are $\cal{O}(N^{-1/2})$, asymptotically normal, and independent (Lemma~\ref{lem:Vs}). Next, we break down the \emph{global error} $\gamma_{u}^N-\gamma_{u}$ into a sum of the local errors and their products, both `propagated up the tree' via $(\Gamma_{v,u})_{v\in\mathbb{T}_u}$ (Lemma~\ref{lem:wdecomp}~and~\eqref{eq:gamlocaldecomp}). Using the $L^2$ product inequality in Lemma~\ref{THRM:CROSS}, we show that the products are  $o(N^{-1/2})$ and, hence, vanish in the rescaled limit. An application of the continuous mapping theorem then completes the proof. Let's begin: emulating~\cite[Theorem~10.6.1]{DelMoral2013}'s proof (or, similarly, those of \cite[Theorem~9.31; Corollary~9.3.1]{DelMoral2004}), we obtain the following:
\begin{lemma}\label{lem:Vs} Fix any $u$ in $\mathbb{T}$ and a collection $(\varphi_v)_{v\in\mathbb{T}_u}$ with $\varphi_v$ belonging to $\cal{B}_b(\bm{E}_v)$ for each $v$ in $\mathbb{T}_u$. Let 
\begin{equation}\label{eq:VvN}V^N_v(\varphi_v):=\left\{\begin{array}{lc}N^{1/2}(\pi^N_v(\varphi_v)-K_v(\varphi_v))&\text{if }v\text{ is a leaf}\\
N^{1/2}(\pi^N_v(\varphi_v)-\pi_{v_-}^N(K_v\varphi_v))&otherwise\end{array}\right.\quad\forall N>0,\enskip v\in\mathbb{T}_u.\end{equation}
Under Assumptions~\ref{ass:abscont}--\ref{ass:boundweights},  the collection $(V^N_v(\varphi_v))_{v\in\mathbb{T}_u}$ converges weakly  as $N\to\infty$ to a collection $(V_v(\varphi_v))_{v\in\mathbb{T}_u}$ of independent zero mean random variables with Gaussian laws:
\begin{equation}\label{eq:Vv}V_v(\varphi_v)\sim\cal{N}\left(0,\pi_v([\varphi_v-\pi_v(\varphi_v)]^2)\right)\quad\forall v\in\mathbb{T}_u.\end{equation}
\end{lemma}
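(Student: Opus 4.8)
The plan is to mimic the standard SMC argument of \cite[Chapter~9]{DelMoral2004} (or \cite[Theorem~10.6.1]{DelMoral2013}) but taking care of the extra structure coming from the fact that a non-leaf node may have several children. The key observation is that, for a \emph{fixed} $v$ in $\mathbb{T}_u$, the variable $V_v^N(\varphi_v)$ is an empirical-process-type fluctuation of exactly the same form as arises in conventional SMC: it is an $N^{-1/2}$-scaled sum of the mean-zero increments $N^{-1}[\varphi_v(\bm{X}_v^{n,N})-\mathbb{E}[\varphi_v(\bm{X}_v^{n,N})\mid \cal{G}_v^N]]$, where $\cal{G}_v^N$ is the $\sigma$-algebra generated by whatever precedes the final step at node $v$ (the proposal draws from $K_v$ if $v$ is a leaf; the resampled particles $\cal{F}_{v_-}^N$ otherwise). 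So the single-node marginal CLT, with the asymptotic variance $\pi_v([\varphi_v-\pi_v(\varphi_v)]^2)$, follows from a conditional Lindeberg/central-limit theorem for triangular arrays of conditionally i.i.d.\ bounded variables exactly as in \cite[Corollary~9.3.1]{DelMoral2004}; the only thing one needs is that the conditional variance $\pi_{v_-}^N(K_v[\varphi_v^2])-\pi_{v_-}^N(K_v\varphi_v)^2$ (resp.\ $K_v(\varphi_v^2)-K_v(\varphi_v)^2$ for leaves) converges in probability to $\pi_v(\varphi_v^2)-\pi_v(\varphi_v)^2$, which is immediate from the laws of large numbers in Theorem~\ref{THRM:LLN} applied to $\pi_{v_-}^N$ (equivalently, from the $L^p$ inequalities in Theorem~\ref{thrm:lpapp}).

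The substantive new step is joint convergence and \emph{independence} of the limits across $v\in\mathbb{T}_u$. I would establish this by a martingale-array argument: enumerate the nodes of $\mathbb{T}_u$ in an order $v_1,v_2,\dots,v_{|\mathbb{T}_u|}$ compatible with the tree's partial order (leaves first, each node after all its descendants), which is precisely the order in which Algorithm~\ref{alg:adacsmc} generates the corresponding particle systems. Then for any fixed linear combination $\sum_{v\in\mathbb{T}_u}a_v$ one forms the triangular array obtained by concatenating, in this order, the increments $a_{v_j} N^{-1/2}[\varphi_{v_j}(\bm{X}_{v_j}^{n,N})-\mathbb{E}[\cdot\mid \cal{G}_{v_j}^N]]$, $n=1,\dots,N$, blockwise over $j$. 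Within each block the increments are conditionally mean zero given the $\sigma$-algebra generated by everything produced up to and including block $j-1$ together with the first $n-1$ draws of block $j$; hence the whole concatenation is a martingale-difference array with respect to this natural filtration. The conditional Lindeberg condition holds by boundedness of the $\varphi_v$'s (and of the weight functions, Assumption~\ref{ass:boundweights}, which enter through $\pi_{v_-}^N$), and the predictable quadratic variation converges in probability to $\sum_{v}a_v^2\,\pi_v([\varphi_v-\pi_v(\varphi_v)]^2)$ because the cross-block contributions vanish (different blocks sit in different martingale increments, so there are no cross terms) and each within-block term converges as in the single-node case. The martingale CLT then gives that $\sum_v a_v V_v^N(\varphi_v)$ converges to a centred Gaussian with that variance; since this holds for every vector $(a_v)$, the Cram\'er--Wold device yields joint convergence of $(V_v^N(\varphi_v))_{v\in\mathbb{T}_u}$ to a jointly Gaussian vector whose covariance is diagonal --- i.e.\ the limits are independent with the stated marginals.

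The main obstacle, and the only place where the multi-child (as opposed to line) structure really matters, is verifying that the predictable quadratic variation of the concatenated array has \emph{no} surviving cross terms between blocks associated with distinct nodes --- in particular between sibling nodes lying on separate branches. This is where one must be careful that the filtration is chosen so that, when we reach block $j$, all particle systems at nodes incomparable to $v_j$ (its ``cousins'') have already been generated and are therefore measurable, so that their fluctuations contribute nothing random to the conditional variance computed at block $j$; conversely, block $j$'s increments are conditionally centred given that same filtration. Picking the enumeration to respect the recursion in Algorithm~\ref{alg:adacsmc} makes this automatic, but it is the step that needs to be spelled out, since it is exactly the phenomenon --- local errors on separate branches being (asymptotically) independent --- that later lets us discard the product-of-errors terms in the global-error decomposition via Theorem~\ref{THRM:CROSS}. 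Everything else (the single-node conditional CLT, the Lindeberg bound, the in-probability convergence of conditional variances) is routine and borrowed wholesale from \cite[Chapter~9]{DelMoral2004} and \cite[Chapter~10]{DelMoral2013}.
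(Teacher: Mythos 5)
Your overall route is the same as the paper's: linearize $\mathbb{T}_u$ into an order compatible with the tree's partial order (descendants before ancestors), view the per-particle, per-node fluctuations as a martingale-difference array with respect to the induced filtration, and conclude via the martingale CLT and Cram\'er--Wold. The paper does exactly this: after noting the almost-sure convergence $\pi_{v_-}^N(K_v\psi)\to\pi_v(\psi)$ needed for the quadratic-variation limits, it defines a tree-structure-preserving bijection of the index set and then invokes the argument of \cite[Theorem~10.6.1]{DelMoral2013} verbatim with suitably redefined increments.

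The one point that needs correcting is your choice of conditioning $\sigma$-algebra at non-leaf nodes. If you literally condition on the resampled particles $\cal{F}_{v_-}^N$, then $\Ebb{\varphi_v(\bm{X}_v^{n,N})\mid\cal{F}_{v_-}^N}=(K_v\varphi_v)(\bm{X}_{v_-}^{n,N})$, so your increments sum to $N^{1/2}\big(\pi_v^N(\varphi_v)-\epsilon_v^N(K_v\varphi_v)\big)$, which is \emph{not} $V_v^N(\varphi_v)$: it omits the resampling fluctuation $N^{1/2}\big(\epsilon_v^N(K_v\varphi_v)-\pi_{v_-}^N(K_v\varphi_v)\big)$, which is $O_{\Pb}(1)$, and the limiting variance of what you do control would be $\pi_{v_-}\big(K_v\varphi_v^2-(K_v\varphi_v)^2\big)$, in general strictly smaller than $\pi_v([\varphi_v-\pi_v(\varphi_v)]^2)$. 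The conditional variance you actually write down, $\pi_{v_-}^N(K_v\varphi_v^2)-\pi_{v_-}^N(K_v\varphi_v)^2$, corresponds to conditioning on $\cal{F}_{\cal{C}_v}^N$ (the mutated particles of $v$'s children), under which resampling and mutation are merged into a single step producing conditionally i.i.d.\ draws from $\pi_{v_-}^N\otimes K_v$; with that choice the conditional mean is exactly $\pi_{v_-}^N(K_v\varphi_v)$, the increments do sum to $V_v^N(\varphi_v)$, and the rest of your argument goes through --- this is precisely the construction used by the paper and by Del Moral (alternatively, split each node into separate resampling and mutation sub-blocks and add the two variance contributions). A second, harmless, imprecision: the enumeration need only place descendants before ancestors; nodes on separate branches may appear on either side, because conditional centring of the block-$j$ increments holds by independence across branches whether or not the ``cousin'' systems are already in the filtration, so nothing special needs to be enforced for them.
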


\begin{proof}Because $(K_v)_{v\in\mathbb{T}_u}$ are Markov kernels,  Lemma~\ref{lem:borelcantelli} and the $L^p$ inequality for $\pi_{v_-}^N$ in~\eqref{eq:lpcor} (obtained in the proof of Theorem~\ref{thrm:lpapp}) imply that
$$\lim_{N\to\infty}\pi_{v_-}^N(K_v\psi)=\pi_{v_-}(K_v\psi)=\pi_v(\psi)\quad\text{almost surely},\quad\forall\psi\in\cal{B}_b(\bm{E}_v),\enskip v\in\mathbb{T}_u^{\not \partial}.$$
Armed with the above, this proof is entirely analogous to that of \cite[Theorem~10.6.1]{DelMoral2013}, we only need to tweak the definition of the $U^N_k(\varphi)$s therein.  
In order to do this, we define a bijection  $b=(b_1,b_2)$ from the one-dimensional index set $\{1,2,\dots,N\mmag{\mathbb{T}_u}\}$ to the two-dimensional index set $\mathbb{T}_u\times \{1,2,\dots,N\}$  that `preserves $\mathbb{T}_u$'s structure' in the sense that:
$$b^{-1}(v',n')\leq b^{-1}(v,n)\quad\forall v'\in\mathbb{T}_{v},\enskip n',n\leq N,\enskip v\in\mathbb{T}_u.$$
That is, $(\bm{X}_{b_1(s)}^{b_2(s),N})_{s=1}^{r}$ contains all particles sub-indexed by descendants of $b_1(r)$ and no particles sub-indexed by ancestors of $b_1(r)$. 
This mapping imposes an order on the contributions of the local errors, allowing us to introduce the following sequence of martingale increments (defined with respect to an appropriate filtration):
$$U_k^N(\varphi):=\frac{1}{N^{1/2}}\left[\varphi_{b_1(k)}(\bm{X}_{b_1(k)}^{b_2(k),N})-\pi^N_{b_1(k)}(K_{b_1(k)}\varphi_{b_1(k)}))\right];$$
and \cite[Theorem~10.6.1]{DelMoral2013}'s proof applies as is.
\end{proof}
To obtain the aforementioned local error decomposition for $\gamma_u^N-\gamma_u$,  it is easiest to first derive an analogous result for $\gamma_{\cal{C}_u}^N-\gamma_{\cal{C}_u}$. In particular, we express $\gamma_{\cal{C}_u}^N-\gamma_{\cal{C}_u}$ in terms of the local errors and an $\cal{O}(N^{-1})$ remainder term that is a linear combination of their products. When doing so, we find it convenient to introduce the kernel $\Upsilon_{v,u}:\bm{E}_{\cal{C}_v}\times\bm{\cal{E}}_{\cal{C}_u}\to[0,\infty)$ mapping $\gamma_{\cal{C}_v}$ to $\gamma_{\cal{C}_u}$ (i.e.\  $\gamma_{\cal{C}_v}(\Upsilon_{v,u}\varphi)=\gamma_{\cal{C}_u}(\varphi)$ for all $ \varphi$ in $\cal{B}_b(\bm{E}_{\cal{C}_u})$): $\Upsilon_{u,u}(\bm{x}_{\cal{C}_u},d\bm{y}_{\cal{C}_u})=\delta_{\bm{x}_{\cal{C}_u}}(d\bm{y}_{\cal{C}_u})$ and 
\begin{align}\label{eq:Uvu}\Upsilon_{r,u}(\bm{x}_{\cal{C}_r},d\bm{y}_{\cal{C}_u})=\delta_{\bm{x}_{\cal{C}_r}}(d\bm{y}_{\cal{C}_r})w_{v_-}(\bm{y}_{\cal{C}_r})K_v(\bm{y}_{\cal{C}_r},dy_r)\Gamma_{r,v}(\bm{y}_r,d\bm{y}_v)\gamma_{\cal{C}_u}^{\not v}(d\bm{x}_{\cal{C}_u}^{\not v})\end{align}
for all $r\neq u$ in $\mathbb{T}_u^{\not \partial}$, where $v$ denotes the child of $u$ that is an ancestor of $r$ (with $v=r$ if $r$ itself is a child of $u$) and $\Gamma_{r,v}$ is as in~(\ref{eq:Pivu}--\ref{eq:Pivu2}). It is straightforward to check that
\begin{align}
\Upsilon_{r,u}\varphi&=\Upsilon_{r,v}\Upsilon_{v,u}\varphi\quad\forall \varphi\in\cal{B}_b(\bm{E}_u),\enskip r\in\mathbb{T}_v^{\not\partial},\enskip v\in\mathbb{T}_u^{\not\partial},\enskip u\in\mathbb{T}^{\not\partial}, \label{eq:gamprop1}\\
\Gamma_{r,u}\varphi &=\gamma_{\cal{C}_v}^{\not r}(\Upsilon_{v,u}[w_{u_-}K_u\varphi])\quad\forall  \varphi\in\cal{B}_b(\bm{E}_u),\enskip r\in\cal{C}_v,\enskip v\in\mathbb{T}_u^{\not\partial},\enskip u\in\mathbb{T}^{\not \partial}, \label{eq:gamprop2}
\end{align}
two properties that will be of use in the following proofs.
\begin{lemma}\label{lem:wdecomp} If Assumptions~\ref{ass:abscont}--\ref{ass:boundweights} hold, $u$ lies in $\mathbb{T}^{\not \partial}$, and $\varphi$ in $\cal{B}_b(\bm{E}_{\cal{C}_u})$, then
\begin{align}\gamma_{\cal{C}_u}^N(\varphi)-\gamma_{\cal{C}_u}(\varphi)=\frac{1}{N^{1/2}}\sum_{v\in\mathbb{T}_u^{\not \partial}}\sum_{r\in\cal{C}_v}\cal{Z}_r^NV_r^N(\gamma_{\cal{C}_v}^{\not r}(\Upsilon_{v,u}\varphi))+R_u^N(\varphi),\label{eq:wun}\end{align}
where $(V_v^N(\Upsilon_{v,u}\varphi))_{v\in\mathbb{T}_u^{\not u}}$ is as in~\eqref{eq:VvN}  and  $R_u^N(\varphi)$ 
satisfies $N^{1/2}R_u^N(\varphi)\Rightarrow0$ as $N\to\infty$.
\end{lemma}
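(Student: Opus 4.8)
\textbf{Proof proposal for Lemma~\ref{lem:wdecomp}.} The plan is to prove~\eqref{eq:wun} by induction on the height of $u$ in the tree, running through the algorithm's operations step by step, exactly mirroring the inductive scheme used for the $L^p$ inequalities, the unbiasedness, and the bias bounds in Appendices~\ref{app:lplln}--\ref{app:bias}. The base case consists of nodes $u$ all of whose children are leaves. There, $\gamma_v^N = \gamma_v^N$ is simply the empirical distribution of i.i.d.\ samples from $K_v = \gamma_v$ for each child $v$, so $\gamma_v^N - \gamma_v = N^{-1/2} V_v^N(\cdot)$ with $\cal{Z}_v^N = \cal{Z}_v = 1$, and the only work is to expand the product $\gamma_{\cal{C}_u}^N = \prod_{v\in\cal{C}_u}\gamma_v^N$ via the multinomial-type identity already used in the proof of Theorem~\ref{THRM:LP} (i.e.\ write each factor as $\gamma_v + (\gamma_v^N - \gamma_v)$ and expand): the singleton terms give $\sum_{v\in\cal{C}_u}(\gamma_v^N-\gamma_v)\times\prod_{r\neq v}\gamma_r = N^{-1/2}\sum_{v\in\cal{C}_u}V_v^N(\gamma_{\cal{C}_u}^{\not v}(\varphi))$ (here $\Upsilon_{u,u}=\mathrm{id}$ and $\mathbb{T}_u^{\not\partial}=\{u\}$ in the base case, so this matches the right side of~\eqref{eq:wun} with $v=u$), and all terms involving two or more error factors are collected into $R_u^N(\varphi)$; that $N^{1/2}R_u^N(\varphi)\Rightarrow 0$ is precisely where the product $L^2$ inequality (Theorem~\ref{THRM:CROSS}) is invoked, since each such term is a product of at least two independent $\cal{O}(N^{-1/2})$ errors taken over nodes on separate branches, hence $\cal{O}(N^{-1})$ in $L^2$ (and remaining factors, being bounded test/weight functions, are uniformly controlled), so the whole remainder is $o(N^{-1/2})$ by Chebyshev/Slutsky.

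For the inductive step I would assume~\eqref{eq:wun} holds for each non-leaf child $v$ of $u$ --- equivalently, that an analogous decomposition holds for $\gamma_v^N - \gamma_v$ in terms of the local errors $V_r^N$ indexed by descendants $r$ of $v$, propagated up via $\Gamma_{r,v}$ and $\Upsilon$ --- and then push this through the three remaining operations at node $u$: (i) forming the product $\gamma_{\cal{C}_u}^N = \prod_{v\in\cal{C}_u}\gamma_v^N$; (ii) the correction/re-weighting $\gamma_{u_-}^N = w_{u_-}\gamma_{\cal{C}_u}^N$; and (iii) resampling + mutation to obtain $\gamma_u^N = \cal{Z}_u^N \pi_{u_-}^N\otimes K_u$. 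Step (i) again uses the multinomial expansion: the singleton terms contribute $\sum_{v\in\cal{C}_u}(\gamma_v^N-\gamma_v)\times\gamma_{\cal{C}_u}^{\not v}$, and substituting the inductive decomposition of each $\gamma_v^N-\gamma_v$ produces, after regrouping, the sum over $w\in\mathbb{T}_v^{\not\partial}$, $r\in\cal{C}_w$ of $N^{-1/2}\cal{Z}_r^N V_r^N(\gamma_{\cal{C}_w}^{\not r}(\Upsilon_{w,v}\gamma_{\cal{C}_u}^{\not v}(\cdot)))$; the composition identity~\eqref{eq:gamprop1}, $\Upsilon_{w,u} = \Upsilon_{w,v}\Upsilon_{v,u}$ together with $\Upsilon_{v,u}$ encoding exactly the $\gamma_{\cal{C}_u}^{\not v}$ tensor factor (cf.~\eqref{eq:Uvu}), is what collapses this into the claimed form indexed by $w\in\mathbb{T}_u^{\not\partial}\setminus\{u\}$. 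The new $v=u$ terms in~\eqref{eq:wun} arise from the subsequent correction and mutation steps. Throughout, every product of two or more error factors (whether from the multinomial expansion at $u$, from cross-terms between the inductive remainders of different children, or from the nonlinearity in the correction step handled by a Taylor expansion as in Appendix~\ref{app:bias}) gets absorbed into $R_u^N(\varphi)$ and shown to be $o(N^{-1/2})$ via Theorem~\ref{THRM:CROSS} and the $L^p$ bounds of Theorem~\ref{THRM:LP}.

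For steps (ii)--(iii) I would argue as in the standard SMC semigroup analysis (\cite[Chapter~9]{DelMoral2004}): the correction step introduces the weight $w_{u_-}$ and, after a first-order expansion of the normalization, contributes a fresh local-error-free deterministic transformation plus an $\cal{O}(N^{-1})$ remainder; the resampling step introduces the new local error $V_u^N$ (with the $\cal{O}(N^{-1/2})$, asymptotically-normal, independent behaviour guaranteed by Lemma~\ref{lem:Vs}), conditionally centred given $\cal{F}_{\cal{C}_u}^N$, multiplied by $\cal{Z}_u^N$; and the mutation step composes with $K_u$. Tracking the kernels, one checks that the propagation of the child-indexed errors from $\bm{E}_{\cal{C}_u}$ up to $\bm{E}_u$ is given precisely by $w_{u_-}K_u$ composed with the recursion~\eqref{eq:Pivu1}, i.e.\ by $\Gamma_{r,u}$, and identity~\eqref{eq:gamprop2}, $\Gamma_{r,u}\varphi = \gamma_{\cal{C}_v}^{\not r}(\Upsilon_{v,u}[w_{u_-}K_u\varphi])$, is exactly the bookkeeping needed to match the statement.

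\textbf{Main obstacle.} I expect the bookkeeping of the kernel identities --- verifying that the multinomial expansion of the product of children's approximations, after substituting each child's inductive decomposition, reassembles into a single sum over all descendants $v\in\mathbb{T}_u^{\not\partial}$ with the correct propagator $\Upsilon_{v,u}$ --- to be the most delicate part, since it is here that the recursive definitions~\eqref{eq:Pivu1}--\eqref{eq:Pivu2} and~\eqref{eq:Uvu} and the composition properties~\eqref{eq:gamprop1}--\eqref{eq:gamprop2} must all interlock correctly; the analytic content (that products of local errors are $o(N^{-1/2})$) is comparatively routine once Theorem~\ref{THRM:CROSS} is in hand. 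Note also that~\eqref{eq:wun} is the $\gamma_{\cal{C}_u}^N$-level statement; the companion $\gamma_u^N$-level decomposition~\eqref{eq:gamlocaldecomp} follows from it by applying the single correction--mutation pass as in steps (ii)--(iii), and the CLT in Theorem~\ref{THRM:CLTun} then follows by combining Lemma~\ref{lem:Vs} (joint asymptotic normality and independence of the $V_v^N$), Slutsky's theorem (to replace $\cal{Z}_r^N$ by $\cal{Z}_r$ and the random integrands $\gamma_{\cal{C}_v}^{\not r}(\Upsilon_{v,u}\varphi)$-type functions by their limits, using the a.s.\ convergence from Theorem~\ref{THRM:LLN}), and the continuous mapping theorem, with the remainder $R_u^N$ vanishing in the rescaled limit.
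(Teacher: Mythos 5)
Your overall skeleton coincides with the paper's proof: expand $\gamma^N_{\cal{C}_u}-\gamma_{\cal{C}_u}=\sum_{\emptyset\neq A\subseteq\cal{C}_u}\bigl(\prod_{v\in A}(\gamma^N_v-\gamma_v)\bigr)\times\gamma_{\cal{C}_u}^{\not A}$, kill every term with $\mmag{A}\geq 2$ using Theorem~\ref{THRM:CROSS} together with Markov's inequality (bounding all but two error factors uniformly), and treat the singleton terms by induction up the tree, with the composition identity~\eqref{eq:gamprop1} doing the collapsing. That is exactly what the paper does, and your base case is handled correctly.

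There is, however, a genuine confusion in your inductive step. The lemma is a statement about $\gamma^N_{\cal{C}_u}$ only, so the correction, resampling and mutation \emph{at $u$} (your steps (ii)--(iii)) play no role in its proof; performing them would produce the companion decomposition~\eqref{eq:gamlocaldecomp} for $\gamma_u^N-\gamma_u$, including an extra $V_u^N$ term that does not appear in~\eqref{eq:wun}. Consequently your claim that ``the new $v=u$ terms in~\eqref{eq:wun} arise from the subsequent correction and mutation steps'' is wrong, and it contradicts your own base case, where you correctly extracted the $v=u$ terms from the singleton terms of the product expansion. In general the $v=u$ terms come from the correction/resampling/mutation carried out at the \emph{children} of $u$: for each non-leaf child $r$ one has the exact, linear identity
\begin{equation*}
(\gamma_r^N-\gamma_r)\bigl(\gamma_{\cal{C}_u}^{\not r}(\varphi)\bigr)
=\frac{\cal{Z}_r^N}{N^{1/2}}\,V_r^N\bigl(\gamma_{\cal{C}_u}^{\not r}(\varphi)\bigr)
+\gamma_{\cal{C}_r}^N(\Upsilon_{r,u}\varphi)-\gamma_{\cal{C}_r}(\Upsilon_{r,u}\varphi),
\end{equation*}
whose first term is precisely the $(v,r)=(u,r)$ summand of~\eqref{eq:wun}, while the inductive hypothesis (applied at the $\gamma_{\cal{C}_r}$ level to the bounded test function $\Upsilon_{r,u}\varphi$) handles the difference $\gamma_{\cal{C}_r}^N(\Upsilon_{r,u}\varphi)-\gamma_{\cal{C}_r}(\Upsilon_{r,u}\varphi)$ and supplies the terms indexed by $\mathbb{T}_r^{\not\partial}$, with remainder $R_r^N(\Upsilon_{r,u}\varphi)$ absorbed into $R_u^N(\varphi)$. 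Equivalently, if you insist on an inductive hypothesis at the $\gamma_r$ level, that decomposition already contains the local error $V_r^N$, so after substitution the regrouped sum runs over all of $\mathbb{T}_u^{\not\partial}$, not over $\mathbb{T}_u^{\not\partial}\setminus\{u\}$ as you state. Relatedly, nothing in this lemma requires a ``first-order expansion of the normalization'' or a Taylor argument as in Appendix~\ref{app:bias}: the unnormalized flow is corrected by plain multiplication with $w_{u_-}$ and the displayed identity is exact, which is precisely why the paper works with $\gamma$'s rather than $\pi$'s here. With the bookkeeping repaired as above, the rest of your argument (Theorem~\ref{THRM:CROSS} for the higher-order terms, Lemma~\ref{lem:Vs}, Slutsky and continuous mapping for the CLT downstream) goes through as in the paper.
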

\begin{proof}
Fix any $u$ in $\mathbb{T}^{\not \partial}$ and $\varphi$ in $\cal{B}_b(\bm{E}_{\cal{C}_u})$. Note that, similarly to \eqref{eq:lem10},
\begin{equation}\label{eq:mfd89afnau8wfwa2}\gamma^N_{\cal{C}_u}-\gamma_{\cal{C}_u}=\prod_{v\in \cal{C}_u}[\gamma^N_{v}-\gamma_{v}+\gamma_{v}]-\gamma_{\cal{C}_u}=\sum_{\emptyset\neq A\subseteq\cal{C}_u}\Delta_{A}^N\times\gamma_{\cal{C}_u}^{\not A},\end{equation}
where $\Delta_{A}^N:=\prod_{v\in A}(\gamma_v^N-\gamma_v)$ and $\gamma_{\cal{C}_u}^{\not{A}}:=\gamma_{\cal{C}_u\backslash A}$ for all subsets $A$ of $\cal{C}_u$. 
The same arguments as those in~\eqref{eq:sameargsss}, only with the inequality in~\eqref{eq:crossprop} replaced by that in Theorem~\ref{THRM:CROSS}, show that 
\begin{align*}\Ebb{\Delta_{A}^N(\gamma_{\cal{C}_u}^{\not A}(\varphi))^2}^{\frac{1}{2}}\leq N^{-1}C\norm{\gamma_{\cal{C}_u}^{\not A}(\varphi)}\leq N^{-1}C\cal{Z}_{\cal{C}_u}^{\not A}\norm{\varphi}
\end{align*}
if $\mmag{A}>1$. Markov's inequality then implies that 
$$\Pbb{\left\{\mmag{\Delta_{A}^N(\gamma_{\cal{C}_u}^{\not A}(\varphi))}\geq\varepsilon\right\}}\leq \varepsilon^{-2}\Ebb{\Delta_{A}^N(\gamma_{\cal{C}_u}^{\not A}(\varphi))^2}\leq (\varepsilon N)^{-2}C\cal{Z}_{\cal{C}_u}^{\not A}\norm{\varphi}\quad\forall\varepsilon>0;$$
from which it follows that
\begin{equation}\label{eq:ddn0}N^{1/2}\Delta_{A}^N(\gamma_{\cal{C}_u}^{\not A}(\varphi))\Rightarrow0\enskip\text{as }N\to\infty\quad\forall A\subseteq\cal{C}_u:\mmag{A}>1.\end{equation}

We now focus on the case $\mmag{A} =1$ where $A$ is a singleton $\lbrace v\rbrace$. Comparing with~\eqref{eq:Uvu}, 
\begin{align*}
\gamma_v^N(\gamma_{\cal{C}_u}^{\not v}(\varphi))&=\cal{Z}_v^N\pi_v^N(\gamma_{\cal{C}_u}^{\not v}(\varphi))=N^{-1/2}\cal{Z}_v^NV_v^N(\gamma_{\cal{C}_u}^{\not v}(\varphi))+\cal{Z}_v^N\pi_{v_-}^N(K_v\gamma_{\cal{C}_u}^{\not v}(\varphi))\\
&=N^{-1/2}\cal{Z}_v^NV_v^N(\gamma_{\cal{C}_u}^{\not v}(\varphi))+\gamma_{v_-}^N(K_v\gamma_{\cal{C}_u}^{\not v}(\varphi))\\
&=N^{-1/2}\cal{Z}_v^NV_v^N(\gamma_{\cal{C}_u}^{\not v}(\varphi))+\gamma_{\cal{C}_v}^N(\Upsilon_{v,u}\varphi),\\
\gamma_v(\gamma_{\cal{C}_u}^{\not v}(\varphi))&=\gamma_{v_-}(K_v\gamma_{\cal{C}_u}^{\not v}(\varphi))=\gamma_{\cal{C}_v}(\Upsilon_{v,u}\varphi),
\end{align*}
for all non-leaf children $v$ of $u$. For these reasons,  
\begin{align}
(\Delta_{v}^N\times\gamma_{\cal{C}_u}^{\not v})(\varphi)&=\frac{\cal{Z}_v^NV_v^N(\gamma_{\cal{C}_u}^{\not v}(\varphi))}{N^{1/2}}\quad\forall v\in\cal{C}_u^\partial\label{eq:gnr7ea9gnr8yanguiar0}\\
\label{eq:gnr7ea9gnr8yanguiar}
(\Delta_{v}^N\times\gamma_{\cal{C}_u}^{\not v})(\varphi)&=\gamma_v^N(\gamma_{\cal{C}_u}^{\not v}(\varphi))-\gamma_v(\gamma_{\cal{C}_u}^{\not v}(\varphi))\\
&=\cal{Z}_v^NV_v^N(\gamma_{\cal{C}_u}^{\not v}(\varphi))N^{-1/2}+\gamma_{\cal{C}_v}^N(\Upsilon_{v,u}\varphi)-\gamma_{\cal{C}_v}(\Upsilon_{v,u}\varphi)\quad\forall v\in\cal{C}_u^{\not \partial}.\nonumber
\end{align}
If all of $u$'s children are leaves, then~\eqref{eq:wun} follows by setting $R_u^N(\varphi):=\sum_{A\subseteq \cal{C}_u:\mmag{A}>1}(\Delta_{A}^N\times\gamma_{\cal{C}_u}^{\not A})(\varphi)$ and combining (\ref{eq:mfd89afnau8wfwa2}--\ref{eq:gnr7ea9gnr8yanguiar0}). 
For all other $u$, we argue the claim inductively: suppose that, for each non-leaf child $v$ of $u$ and $\psi$ in $\cal{B}_b(\bm{E}_{\cal{C}_v})$,~\eqref{eq:wun} (with $\psi$ replacing $\varphi$) holds for some $R_v^N(\psi)$ satisfying $N^{1/2}R_v^N(\psi)\Rightarrow0$ as $N\to\infty$. Given~\eqref{eq:gamprop1}, we can re-write~\eqref{eq:gnr7ea9gnr8yanguiar} as 
\begin{align*}&N^{1/2}(\Delta_{v}^N\times\gamma_{\cal{C}_u}^{\not v})(\varphi)\\
&=\cal{Z}_v^NV_v^N(\gamma_{\cal{C}_u}^{\not v}(\varphi))+\sum_{v'\in\mathbb{T}_v^{\not \partial}}\sum_{r\in\cal{C}_{v'}}\cal{Z}_r^NV_r^N(\gamma_{\cal{C}_{v'}}^{\not r}(\Upsilon_{v',u}\varphi))+N^{1/2}R_v^N(\Upsilon_{v,u}\varphi)\quad\forall v\in\cal{C}_u^{\not\partial}.\nonumber
\end{align*}
Because $\mathbb{T}_u^{\not \partial}=\{u\}\bigcup_{v\in\cal{C}_u}\mathbb{T}_v^{\not \partial}$, we obtain~\eqref{eq:wun} by combining the above with (\ref{eq:mfd89afnau8wfwa2}--\ref{eq:gnr7ea9gnr8yanguiar0}) and setting
$$R_u^N(\varphi):=\sum_{v\in\cal{C}_u^{\not \partial}}R_v^N(\Upsilon_{v,u}\varphi)+\sum_{A\subseteq\cal{C}_u:\mmag{A}>1}(\Delta_{A}^N\times\gamma_{\cal{C}_u}^{\not A})(\varphi).$$ 
\end{proof}
Theorem~\ref{THRM:CLTun} follows from Lemmas~\ref{lem:Vs}--\ref{lem:wdecomp} and the continuous mapping theorem:
\begin{proof}[Proof of Theorem~\ref{THRM:CLTun}] If $u$ is a leaf, then the  CLT reduces to that for i.i.d. sequences. For any other $u$, fix  $\varphi$ in $\cal{B}_b(\bm{E}_{u})$ and note that, by definition,
\begin{align}\label{eq:gamlocaldecomp}\gamma_u^N(\varphi)-\gamma_u(\varphi)&=\gamma_u^N(\varphi)-\gamma_{u_-}^N(K_u\varphi)+\gamma_{u_-}^N(K_u\varphi)-\gamma_{u_-}(K_u\varphi)\\
&=N^{-1/2}\cal{Z}_u^NV_u(\varphi)+\gamma_{\cal{C}_u}^N(w_{u_-}K_u\varphi)-\gamma_{\cal{C}_u}(w_{u_-}K_u\varphi)\nonumber.\end{align}
Using Lemma~\ref{lem:wdecomp} and~\eqref{eq:gamprop2}, we then find that 
$$N^{1/2}(\gamma_u^N(\varphi)-\gamma_u(\varphi))=f((\cal{Z}_{v}^N)_{v\in\mathbb{T}_u},(V^N_v(\varphi_v))_{v\in\mathbb{T}_u},N^{1/2}R^N_u(w_{u_-}K_u\varphi))$$
where $\varphi_v$ denotes $\Gamma_{v,u}\varphi$ and
$$f(\bm{z},\bm{\nu},R):=\sum_{v\in\mathbb{T}_u}z_v\nu_v+R\quad\forall \bm{z},\bm{\nu}\in\r^{\mmag{\mathbb{T}_u}},\enskip R\in\r.$$
The $L^p$ inequalities for $(\cal{Z}_{v}^N)_{v\in\mathbb{T}_u}$ together with Lemma~\ref{lem:borelcantelli} imply that  $(\cal{Z}_{v}^N)_{v\in\mathbb{T}_u}$ converges almost surely to $(\cal{Z}_{v})_{v\in\mathbb{T}_u}$ as $N$ tends to infinity. Moreover, Lemmas~\ref{lem:Vs}--\ref{lem:wdecomp} show that $N^{1/2}R^N_u(w_{u_-}K_u\varphi)$ converges weakly to zero and $(V^N_v(\varphi_v))_{v\in\mathbb{T}_u}$ converges to the  family $(V_v(\varphi_v))_{v\in\mathbb{T}_u}$  of independent Gaussian random variables in~\eqref{eq:Vv}. For these reasons,  the continuous mapping theorem tells us that
$$N^{1/2}(\gamma_u^N(\varphi)-\gamma_u(\varphi))\Rightarrow\sum_{v\in\mathbb{T}_u}\cal{Z}_{v}V_v(\varphi_v)=\sum_{v\in\mathbb{T}_u}\cal{Z}_{v}V_v(\Gamma_{v,u}\varphi)\enskip\text{as}\enskip N\to\infty.$$
The right-hand side is a sum of zero mean independent Gaussian random variables and, so, also a zero mean Gaussian random with variance
\begin{align*}
    \text{Var}\left(\sum_{v\in\mathbb{T}_u}\cal{Z}_{v}V_v(\Gamma_{v,u}\varphi)\right)&=\sum_{v\in\mathbb{T}_u}\cal{Z}_{v}^2\text{Var}(V_v(\Gamma_{v,u}\varphi))=\sum_{v\in\mathbb{T}_u}\cal{Z}_{v}^2\pi_v([\Gamma_{v,u}\varphi-\pi_v(\Gamma_{v,u}\varphi)]^2)\\
    &=\sum_{v\in\mathbb{T}_u}\pi_v([\cal{Z}_{v}\Gamma_{v,u}\varphi-\gamma_v(\Gamma_{v,u}\varphi)]^2)\\
    &=\sum_{v\in\mathbb{T}_u}\pi_v([\cal{Z}_{v}\Gamma_{v,u}\varphi-\gamma_u(\varphi)]^2).
\end{align*}
\end{proof}
\section{Proof of the  Theorem~\ref{THRM:CROSS}}  \label{app:cross}
Throughout this appendix, we use the notation described in Appendix~\ref{app:notation}. The aim of the appendix is to prove the product $L^2$ inequality for DaC-SMC in Theorem~\ref{THRM:CROSS}. Similarly to the argument for the $L^p$ inequality in Appendix~\ref{app:lplln}, we prove the above by  obtaining a product $L^2$ inequality for the empirical distributions of the particles indexed by the tree's leaves and showing that each step of the algorithm preserves this inequality. To start, note that, if $u$ and $v$ are leaves, then
\cite[Lemma~1]{Kuntz2021}  implies that 
\begin{equation}\label{eq:crossprop}\Ebb{(\pi_u^N-\pi_u)\times(\pi_v^N-\pi_v)(\varphi)^2}^\frac{1}{2}\leq \frac{C\norm{\varphi}}{N}\quad\forall N>0,\enskip \varphi\in\cal{B}_b(\bm{E}_u\times \bm{E}_v).\end{equation}
Taking the products of $\pi_{u'}^N$ and $\pi_{v'}^N$  over all children $u'$ and $v'$ of nodes $u$ and $v$ to obtain $\pi_{\cal{C}_u}^N$ and $\pi_{\cal{C}_v}^N$  preserves~\eqref{eq:crossprop}:
\begin{lemma}[Product step]\label{lem:crosscoal}If, in addition to Assumptions~\ref{ass:abscont}--\ref{ass:boundweights}, \eqref{eq:crossprop} is satisfied for each child $u'$ and $v'$ (i.e.\ it holds with $u'$ and $v'$ replacing $u$ and $v$ therein) of two non-leaf nodes $u$ and $v$ lying on separate branches (i.e.\ $u\not\in\mathbb{T}_v$ and $v\not\in\mathbb{T}_u$), then 
\begin{equation}\label{eq:crosscoal}\Ebb{(\pi_{\cal{C}_u}^N-\pi_{\cal{C}_u})\times(\pi_{\cal{C}_v}^N-\pi_{\cal{C}_u})(\varphi)^2}^{\frac{1}{2}}\leq \frac{C\norm{\varphi}}{N}\quad\forall N>0,\enskip \varphi\in \cal{B}_b(\bm{E}_{\cal{C}_u}\times\bm{E}_{\cal{C}_v}).\end{equation}
\end{lemma}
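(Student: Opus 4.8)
The plan is to follow the same template as the proof of Lemma~\ref{lem:lpcoa}: expand each of the two product errors over the subsets of the corresponding children set, control each resulting cross term separately, and recombine via Minkowski's inequality. For $A\subseteq\cal{C}_u$ write $D_A^N:=\prod_{u'\in A}(\pi_{u'}^N-\pi_{u'})$ and $\pi_{\cal{C}_u}^{\not A}:=\prod_{u'\in\cal{C}_u\setminus A}\pi_{u'}$, and define $E_B^N$ and $\pi_{\cal{C}_v}^{\not B}$ analogously for $B\subseteq\cal{C}_v$. Since $\pi_{\cal{C}_u}^N=\prod_{u'\in\cal{C}_u}[(\pi_{u'}^N-\pi_{u'})+\pi_{u'}]$, expanding the product (as in the proof of Theorem~\ref{THRM:LP}) gives $\pi_{\cal{C}_u}^N-\pi_{\cal{C}_u}=\sum_{\emptyset\neq A\subseteq\cal{C}_u}D_A^N\times\pi_{\cal{C}_u}^{\not A}$, and likewise on the $v$ side, so that
\[
(\pi_{\cal{C}_u}^N-\pi_{\cal{C}_u})\times(\pi_{\cal{C}_v}^N-\pi_{\cal{C}_v})(\varphi)=\sum_{\emptyset\neq A\subseteq\cal{C}_u}\ \sum_{\emptyset\neq B\subseteq\cal{C}_v}(D_A^N\times\pi_{\cal{C}_u}^{\not A})\times(E_B^N\times\pi_{\cal{C}_v}^{\not B})(\varphi).
\]
As this is a finite sum (over $(2^{c_u}-1)(2^{c_v}-1)$ index pairs), Minkowski's inequality reduces the claim to showing that each summand has $L^2$ norm at most $C\norm{\varphi}/N$.

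To estimate the summand indexed by nonempty $A\subseteq\cal{C}_u$ and nonempty $B\subseteq\cal{C}_v$, fix $u^*\in A$ and $v^*\in B$, and let $\cal{G}$ be the $\sigma$-algebra generated by $(\pi_{u'}^N)_{u'\in\cal{C}_u\setminus\{u^*\}}$ together with $(\pi_{v'}^N)_{v'\in\cal{C}_v\setminus\{v^*\}}$. By Fubini's theorem for finite signed measures, integrating $\varphi$ first against all factors of the summand other than $\pi_{u^*}^N-\pi_{u^*}$ and $\pi_{v^*}^N-\pi_{v^*}$ produces a $\cal{G}$-measurable bounded measurable function $\Phi^{\cal{G}}$ on $\bm{E}_{u^*}\times\bm{E}_{v^*}$ with $(D_A^N\times\pi_{\cal{C}_u}^{\not A})\times(E_B^N\times\pi_{\cal{C}_v}^{\not B})(\varphi)=(\pi_{u^*}^N-\pi_{u^*})\times(\pi_{v^*}^N-\pi_{v^*})(\Phi^{\cal{G}})$; moreover, since each of the $(\mmag{A}-1)+(\mmag{B}-1)$ ``difference'' factors absorbed into $\Phi^{\cal{G}}$ is a difference of probability measures (total variation at most $2$) and the remaining absorbed factors are probability measures, $\norm{\Phi^{\cal{G}}}\leq 2^{\mmag{A}+\mmag{B}-2}\norm{\varphi}$. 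Because the subtrees $\mathbb{T}_{u^*}$, $\mathbb{T}_{v^*}$ and the $\mathbb{T}_{u'}$, $\mathbb{T}_{v'}$ appearing in $\cal{G}$ are pairwise disjoint, the recursive structure of Algorithm~\ref{alg:adacsmc} implies that $(\pi_{u^*}^N)_N$ and $(\pi_{v^*}^N)_N$ are jointly independent of $\cal{G}$; freezing $\cal{G}$ and applying the hypothesis \eqref{eq:crossprop} for the pair $(u^*,v^*)$ (which lies in separate branches) to the now-deterministic test function $\Phi^{\cal{G}}$ yields $\Ebb{[(\pi_{u^*}^N-\pi_{u^*})\times(\pi_{v^*}^N-\pi_{v^*})(\Phi^{\cal{G}})]^2\mid\cal{G}}\leq C\norm{\Phi^{\cal{G}}}^2/N^2\leq C\,2^{2(\mmag{A}+\mmag{B}-2)}\norm{\varphi}^2/N^2$ almost surely. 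Taking expectations and then square roots bounds the summand by $C\norm{\varphi}/N$ with $C$ depending only on $c_u$, $c_v$ and the constant in \eqref{eq:crossprop}, and summing over the finitely many pairs $(A,B)$ gives \eqref{eq:crosscoal}.

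The main obstacle is the last ``freezing'' step: one must justify that the randomness defining $\pi_{u^*}^N$ and $\pi_{v^*}^N$ is independent of everything generating $\cal{G}$ (this is where the disjointness of the relevant subtrees and the structure of the algorithm enter) and then invoke the base cross inequality \eqref{eq:crossprop} conditionally, treating the random but $\cal{G}$-measurable $\Phi^{\cal{G}}$ as a fixed function. The remaining ingredients — the subset expansion, the Fubini bookkeeping for signed product measures, the total-variation bounds on partial integrals, and the final Minkowski recombination — are routine and parallel the proofs of Lemmas~\ref{lem:prodlp} and~\ref{lem:lpcoa}. It is worth noting that the argument uses neither the $L^p$ inequalities for the individual children nor any bias estimate: the $N^{-1}$ rate is generated entirely by the single cross pair $(u^*,v^*)$, while the extra difference factors contribute only harmless constants, which is why (in contrast with Theorem~\ref{THRM:BIAS}) Theorem~\ref{THRM:CROSS} requires no lower bound on the weight functions.
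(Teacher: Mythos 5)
Your proposal is correct and follows essentially the same route as the paper's proof: the same subset expansion of both product errors, the same reduction via Minkowski's inequality to a per-$(A,B)$ bound, the same device of singling out one cross pair $(u^*,v^*)$ and conditioning on the $\sigma$-algebra generated by the remaining (independent, disjoint-subtree) particle systems so that the base inequality~\eqref{eq:crossprop} can be applied to the frozen bounded function, with the extra difference factors only inflating the constant. The only differences are cosmetic (your $\cal{G}$ is generated by the other children's empirical measures rather than the particles themselves, and your explicit constant $2^{\mmag{A}+\mmag{B}-2}$ versus the paper's), so nothing further is needed.
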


\begin{proof}Fix any $N>0$ and $\varphi$ in $\cal{B}_b(\bm{E}_{\cal{C}_u}\times\bm{E}_{\cal{C}_v})$. Using a multinomial expansion as in the proof of Theorem~\ref{thrm:lpapp}, we have that
\begin{equation*}
\pi^N_{\cal{C}_u}=\prod_{r\in \cal{C}_u}[\pi^N_{r}-\pi_{r}+\pi_{r}]=\sum_{A\subseteq{\cal{C}_u}}\Delta_{A}^N\times\pi_{\cal{C}_u}^{\not{A}},\qquad\pi^N_{\cal{C}_v}=\dots =\sum_{A\subseteq{\cal{C}_v}}\Delta_{A}^N\times\pi_{\cal{C}_v}^{\not{A}},
\end{equation*}
where $\Delta_{A}^N:=\prod_{r\in A}(\pi_r^N-\pi_r)$ and  $\pi_{A}^{\not{B}}:=\pi_{A\backslash B}$ for all $B\subseteq A\subseteq \mathbb{T}$. Hence,
\begin{equation}
    \label{eq:lem10}
\pi_{\cal{C}_u}^N-\pi_{\cal{C}_u}=\sum_{\emptyset\neq A\subseteq\cal{C}_u}\Delta_{A}^N\times\pi_{\cal{C}_u}^{\not{A}},\qquad \pi_{\cal{C}_v}^N-\pi_{\cal{C}_u}=\sum_{\emptyset\neq A\subseteq\cal{C}_v}\Delta_{A}^N\times\pi_{\cal{C}_v}^{\not{A}},
\end{equation}
and it follows that
\begin{align*}
(\pi_{\cal{C}_u}^N-\pi_{\cal{C}_u})\times(\pi_{\cal{C}_v}^N-\pi_{\cal{C}_v})(\varphi)=\sum_{\emptyset \neq A\subseteq\cal{C}_u}\sum_{\emptyset \neq B\subseteq\cal{C}_v}\Delta_{A\cup B}^N(\pi_{\cal{C}_u\cup \cal{C}_v}^{\cancel{A\cup B}}(\varphi)).
\end{align*}
Given Minkowski's inequality, we need only to show that
$$\Ebb{\Delta_{A\cup B}^N(\pi_{\cal{C}_u\cup \cal{C}_v}^{\cancel{A\cup B}}(\varphi))^2}^{\frac{1}{2}}\leq\frac{C\norm{\varphi}}{N}\quad\forall N>0,\enskip \emptyset \neq A\subseteq\cal{C}_u,\enskip \emptyset \neq B\subseteq\cal{C}_v.$$
To do so, pick any nodes $u',v'$ respectively in any two subsets $A,B$ of $\cal{C}_u,\cal{C}_v$,  and note that
$$\Ebb{\Delta_{A\cup B}^N(\pi_{\cal{C}_u\cup \cal{C}_v}^{\cancel{A\cup B}}(\varphi))^2}^{\frac{1}{2}}=\Ebb{(\pi_{u'}^N-\pi_{u'})\times(\pi_{v'}^N-\pi_{v'})(\psi)^2}^{\frac{1}{2}},$$
where $\psi:=\Delta_{A\cup B\backslash\{u',v'\}}^N(\pi_{\cal{C}_u\cup \cal{C}_v}^{\cancel{A\cup B}}(\varphi))$.  
Because $u$ and $v$ lie on separate branches, and particles indexed by distinct nodes in $\cal{C}_u$ (or $\cal{C}_v$) are independent,  $(\bm{X}^{n,N}_{r})_{n=1}^N$ and  $(\bm{X}^{n,N}_{r'})_{n=1}^N$ are independent collections of particles for all $r$ and $r'$ in $A\cup B$ satisfying $r\neq r'$. Hence, $\pi_{u'}^N$ and $\pi_{v'}^N$ are independent of the sigma-algebra $\cal{G}$ generated by $((\bm{X}^{n,N}_{r})_{n=1}^N)_{r\in A\cup B\backslash\{u',v'\}}$ and the desired inequality follows from our assumption that~\eqref{eq:crossprop} holds for $u'$ and $v'$:
\begin{align}\nonumber\Ebb{\Delta_{A\cup B}^N(\pi_{\cal{C}_u\cup \cal{C}_v}^{\cancel{A\cup B}}(\varphi))^2}^\frac{1}{2}&=\Ebb{\Ebb{(\pi_{u'}^N-\pi_{u'})\times(\pi_{v'}^N-\pi_{v'})(\psi)^2|\cal{G}}}^\frac{1}{2}\leq\frac{C\Ebb{\norm{\psi}^2}^{\frac{1}{2}}}{N}\\
 \label{eq:sameargsss} &\leq \frac{C2^{(\mmag{A}+\mmag{B}-2)/2}\norm{\pi_{\cal{C}_u\cup \cal{C}_v}^{\cancel{A\cup B}}(\varphi)}}{N}\leq \frac{C2^{(\mmag{A}+\mmag{B}-2)/2}\norm{\varphi}}{N}.\end{align}
\end{proof}
The correction step also preserves these inequalities:
\begin{lemma}[Correction step]\label{lem:crosscorrect}If, in addition to Assumptions~\ref{ass:abscont}--\ref{ass:boundweights},~\eqref{eq:crosscoal} is satisfied for some non-leaf nodes $u$ and $v$ lying in separate branches (i.e.\ $u\not\in\mathbb{T}_v$ and $v\not\in\mathbb{T}_u$), then 
\begin{equation}\label{eq:crosscorrect}\Ebb{(\pi_{u_-}^N-\pi_{u_-})\times(\pi_{v_-}^N-\pi_{v_-})(\varphi)^2}^{\frac{1}{2}}\leq \frac{C\norm{\varphi}}{N}\quad\forall N>0,\enskip \varphi\in\cal{B}_b(\bm{E}_{\cal{C}_u}\times\bm{E}_{\cal{C}_v}).\end{equation}
\end{lemma}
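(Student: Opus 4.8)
The plan is to mirror the correction step of the $L^p$ argument (Lemma~\ref{lem:lpcor}) for the bilinear functional, while carrying along the two independent branches. Set $e_u^N:=\pi^N_{\cal{C}_u}-\pi_{\cal{C}_u}$ and $e_v^N:=\pi^N_{\cal{C}_v}-\pi_{\cal{C}_v}$, and recall from the opening of Lemma~\ref{lem:lpcor}'s proof that $\pi^N_{u_-}(\psi)=\pi^N_{\cal{C}_u}(w_{u_-}\psi)/\pi^N_{\cal{C}_u}(w_{u_-})$, $\pi_{u_-}(\psi)=\pi_{\cal{C}_u}(w_{u_-}\psi)/\pi_{\cal{C}_u}(w_{u_-})$, with $\pi_{\cal{C}_u}(w_{u_-})=\cal{Z}_u/\cal{Z}_{\cal{C}_u}=:a_u>0$ (Assumption~\ref{ass:abscont}), and analogously for $v$. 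Subtracting these two displays and rearranging gives the elementary identity
\begin{equation}\label{eq:corrid}
a_u\big(\pi^N_{u_-}(\psi)-\pi_{u_-}(\psi)\big)=e_u^N(w_{u_-}\psi)-\pi^N_{u_-}(\psi)\,e_u^N(w_{u_-}),
\end{equation}
valid for every $\psi\in\cal{B}_b(\bm{E}_{\cal{C}_u})$ and, applied pathwise, for random such $\psi$; it is the bilinear analogue of the decomposition used in the proof of Lemma~\ref{lem:lpcor}.

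Next I would feed~\eqref{eq:corrid} into the product one factor at a time. Fixing $\varphi\in\cal{B}_b(\bm{E}_{\cal{C}_u}\times\bm{E}_{\cal{C}_v})$ and $N>0$, writing $(\pi^N_{u_-}-\pi_{u_-})\times(\pi^N_{v_-}-\pi_{v_-})(\varphi)$ as the outer error applied to $\bm{x}_{\cal{C}_u}\mapsto(\pi^N_{v_-}-\pi_{v_-})(\varphi(\bm{x}_{\cal{C}_u},\cdot))$, and applying~\eqref{eq:corrid} first for $v$ and then for $u$ (with Fubini for finite signed measures) yields
\begin{align}\label{eq:crossdecomp}
a_ua_v\,(\pi^N_{u_-}-\pi_{u_-})\times(\pi^N_{v_-}-\pi_{v_-})(\varphi)={}&(e_u^N\times e_v^N)\big((w_{u_-}\otimes w_{v_-})\varphi\big)\\
&-e_v^N(w_{v_-})\,e_u^N\big(w_{u_-}\,\pi^N_{v_-}(\varphi)\big)\nonumber\\
&-a_v\,e_u^N(w_{u_-})\,(\pi^N_{v_-}-\pi_{v_-})\big(\pi^N_{u_-}(\varphi)\big),\nonumber
\end{align}
where $(w_{u_-}\otimes w_{v_-})(\bm{x}_{\cal{C}_u},\bm{x}_{\cal{C}_v}):=w_{u_-}(\bm{x}_{\cal{C}_u})w_{v_-}(\bm{x}_{\cal{C}_v})$, $\pi^N_{v_-}(\varphi)$ is the random function on $\bm{E}_{\cal{C}_u}$ obtained by integrating out the $\bm{E}_{\cal{C}_v}$-argument against $\pi^N_{v_-}$, and $\pi^N_{u_-}(\varphi)$ is the analogous function on $\bm{E}_{\cal{C}_v}$; by Assumption~\ref{ass:boundweights} these obey $\norm{(w_{u_-}\otimes w_{v_-})\varphi}\leq\norm{w_{u_-}}\norm{w_{v_-}}\norm{\varphi}$ and $\norm{\pi^N_{v_-}(\varphi)},\norm{\pi^N_{u_-}(\varphi)}\leq\norm{\varphi}$.

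It then suffices to show each of the three terms on the right of~\eqref{eq:crossdecomp} has $L^2$ norm $\cal{O}(N^{-1}\norm{\varphi})$, after which~\eqref{eq:crosscorrect} follows from Minkowski's inequality and $a_u,a_v>0$. The first term is a genuine cross term and is controlled directly by the hypothesis~\eqref{eq:crosscoal}. For the remaining two I would use that, since $u$ and $v$ lie on separate branches, the particle systems built by the recursive calls on $\mathbb{T}_u$ and on $\mathbb{T}_v$ draw on disjoint, hence independent, randomness; in particular $\pi^N_{\cal{C}_u},\pi^N_{u_-}$ are independent of $\pi^N_{\cal{C}_v},\pi^N_{v_-}$. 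For the second term, conditioning on the $\sigma$-algebra $\cal{G}_v$ generated by the $\mathbb{T}_v$-particles freezes $e_v^N(w_{v_-})$ and $\pi^N_{v_-}(\varphi)$, so the conditional bound $\Ebb{e_u^N(w_{u_-}\pi^N_{v_-}(\varphi))^2\mid\cal{G}_v}\leq C\norm{w_{u_-}}^2\norm{\varphi}^2/N$ follows from the $p=2$ case of the $L^p$ inequality for $\pi^N_{\cal{C}_u}$ in~\eqref{eq:lpcoa} (valid conditionally on $\cal{G}_v$ by independence and a standard freezing argument); multiplying through by $e_v^N(w_{v_-})^2$, taking expectations, and bounding $\Ebb{e_v^N(w_{v_-})^2}\leq C\norm{w_{v_-}}^2/N$ via the $L^2$ inequality for $\pi^N_{\cal{C}_v}$ gives the wanted $\cal{O}(N^{-2})$ second-moment bound. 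The third term is symmetric: conditioning on the $\sigma$-algebra of the $\mathbb{T}_u$-particles freezes $e_u^N(w_{u_-})$ and $\pi^N_{u_-}(\varphi)$, and one combines the $p=2$ case of the $L^p$ inequality for $\pi^N_{v_-}$ in~\eqref{eq:lpcor} with $\Ebb{e_u^N(w_{u_-})^2}\leq C\norm{w_{u_-}}^2/N$.

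The one subtle point I anticipate is making these conditional estimates rigorous: one must verify that the $L^p$ inequalities of Appendix~\ref{app:lplln} persist when the test function is random but measurable with respect to a $\sigma$-algebra independent of the relevant particle system (the usual freezing/independence lemma), and that the branch-independence assertion is precisely what the hypothesis $u\notin\mathbb{T}_v$, $v\notin\mathbb{T}_u$ supplies; everything else is bookkeeping patterned on Lemma~\ref{lem:lpcor}. Granted Lemma~\ref{lem:crosscorrect}, the resampling and mutation steps and the concluding induction establishing Theorem~\ref{THRM:CROSS} would then proceed exactly as in Appendix~\ref{app:lplln}, using the conditional Marcinkiewicz--Zygmund inequality (Lemma~\ref{lem:733}) in the resampling and mutation steps and~\eqref{eq:crossprop} (via \cite[Lemma~1]{Kuntz2021}) for the base case at the leaves.
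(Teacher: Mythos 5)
Your argument is correct and follows essentially the same route as the paper: the identity $a_u(\pi^N_{u_-}(\psi)-\pi_{u_-}(\psi))=e_u^N(w_{u_-}\psi)-\pi^N_{u_-}(\psi)e_u^N(w_{u_-})$ is exactly the paper's ratio decomposition, your three-term expansion is the paper's four-term one with its last two terms merged (leaving $\pi^N_{v_-}-\pi_{v_-}$ unexpanded), and the bounds proceed identically — hypothesis~\eqref{eq:crosscoal} for the genuine cross term, and conditioning on the independent branch together with the $L^2$ cases of~\eqref{eq:lpcoa}/\eqref{eq:lpcor} for the mixed terms. The "freezing" step you flag is handled in the paper in just the way you describe, via independence of $\cal{F}_{\cal{C}_u}^N$ and $\cal{F}_{\cal{C}_v}^N$.
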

\begin{proof}Fix any $N>0$ and $\varphi$ in $\cal{B}_b(\bm{E}_{\cal{C}_u}\times\bm{E}_{\cal{C}_v})$, and let
\begin{align*}\varrho_r^N(d\bm{x}_{\cal{C}_r}):=w_{r_-}(\bm{x}_{\cal{C}_r})\pi_{\cal{C}_r}^N(d\bm{x}_{\cal{C}_r}),\quad
\varrho_{r}(d\bm{x}_{\cal{C}_r}):=w_{r_-}(\bm{x}_{\cal{C}_r})\pi_{\cal{C}_r}(d\bm{x}_{\cal{C}_r}),\quad\forall r=v,u.\end{align*}
Because $\pi_{u_-}^N=\varrho_u^N/\pi_{\cal{C}_u}^N(w_{u_-})$ and $\pi_{u_-}=\varrho_u/\pi_{\cal{C}_u}(w_{u_-})$,
\begin{align*}\pi_{u_-}^N-\pi_{u_-}=\pi_{u_-}^N-\frac{\varrho_u}{\pi_{\cal{C}_u}(w_{u_-})}
=\frac{\varrho_u^N-\varrho_u-(\pi_{\cal{C}_u}^N(w_{u_-})-\pi_{\cal{C}_u}(w_{u_-}))\pi_{u_-}^N}{\pi_{\cal{C}_u}(w_{u_-})},\end{align*}
and similarly for $v$. Hence,
\begin{align}\pi_{\cal{C}_u}(w_{u_-})&\pi_{\cal{C}_v}(w_{v_-})(\pi_{u_-}^N-\pi_{u_-})\times(\pi_{v_-}^N-\pi_{v_-})(\varphi)\label{eq:mfe7w8agnfay8nga}\\
=&(\varrho_u^N-\varrho_u)\times(\varrho_v^N-\varrho_v)(\varphi)\nonumber\\
&-(\pi_{\cal{C}_v}^N(w_{v_-})-\pi_{\cal{C}_v}(w_{v_-}))(\varrho_u^N-\varrho_u)(\pi_{v_-}^N(\varphi))\nonumber\\
&-(\pi_{\cal{C}_u}^N(w_{u_-})-\pi_{\cal{C}_u}(w_{u_-}))(\varrho_v^N-\varrho_v)(\pi_{u_-}^N(\varphi))\nonumber\\
&+(\pi_{\cal{C}_u}^N(w_{u_-})-\pi_{\cal{C}_u}(w_{u_-}))(\pi_{\cal{C}_v}^N(w_{v_-})-\pi_{\cal{C}_v}(w_{v_-}))(\pi_{u_-}^N\times\pi_{v_-}^N)(\varphi).\nonumber
\end{align}
Because $\varrho^N_u(\varphi)=\pi_{\cal{C}_u}^N(w_{u_-}\varphi)$, $\varrho_u(\varphi)=\pi_{\cal{C}_u}(w_{u_-}\varphi)$, and similarly for $v$,~\eqref{eq:crosscoal} implies that
\begin{align}
\Ebb{\mmag{(\varrho_u^N-\varrho_u)\times(\varrho_v^N-\varrho_v)^2}}^{\frac{1}{2}}&=\Ebb{\mmag{(\pi_{\cal{C}_u}^N-\pi_{\cal{C}_u})\times(\pi_{\cal{C}_v}^N-\pi_{\cal{C}_v})(w_{u_-}w_{v_-}\varphi)^2}}^{\frac{1}{2}}\\
&\leq \frac{C\norm{w_{u_-}w_{v_-}\varphi}}{N}\leq \frac{C\norm{w_{u_-}}\norm{w_{v_-}}\norm{\varphi}}{N}.\nonumber
\end{align}
Next, because $u$ and $v$ lie in different branches, $\cal{F}_{\cal{C}_u}^N$ and $\cal{F}_{\cal{C}_v}^N$ are independent. By definition, $\pi_{\cal{C}_u}^N(w_{u_-})$ is $\cal{F}_{\cal{C}_u}^N$-measurable and $\pi_{\cal{C}_v}^N(w_{u_-})$ is $\cal{F}_{\cal{C}_v}^N$-measurable. Hence, the   $L^p$ inequalities for $\pi_{\cal{C}_u}^N$ and $\pi_{\cal{C}_v}^N$ in~\eqref{eq:lpcoa} (obtained in the proof of Theorem~\ref{thrm:lpapp}) imply that
\begin{align}
&\Ebb{(\pi_{\cal{C}_v}^N(w_{v_-})-\pi_{\cal{C}_v}(w_{v_-}))^2(\varrho_u^N-\varrho_u)(\pi_{v_-}^N(\varphi))^2}^{\frac{1}{2}}\\
&\qquad\qquad\qquad\qquad=\Ebb{(\pi_{\cal{C}_v}^N(w_{v_-})-\pi_{\cal{C}_v}(w_{v_-}))^2\Ebb{(\varrho_u^N-\varrho_u)(\pi_{v_-}^N(\varphi))^2|\cal{F}_v^N}}^{\frac{1}{2}}\nonumber\\
&\qquad\qquad\qquad\qquad\leq \Ebb{(\pi_{\cal{C}_v}^N(w_{v_-})-\pi_{\cal{C}_v}(w_{v_-}))^2\frac{C^2\norm{w_{u_-}\pi_{v_-}^N(\varphi)}^2}{N}}^{\frac{1}{2}}\nonumber\\
&\qquad\qquad\qquad\qquad\leq \frac{C\norm{w_{u_-}}\norm{\varphi}}{N^{1/2}}\Ebb{(\pi_{\cal{C}_v}^N(w_{v_-})-\pi_{\cal{C}_v}(w_{v_-}))^2}^{\frac{1}{2}}\nonumber\\
&\qquad\qquad\qquad\qquad\leq \frac{C\norm{w_{u_-}}\norm{w_{v_-}}\norm{\varphi}}{N}.\nonumber
\end{align}
Similarly, we find that 
\begin{align}&\Ebb{(\pi_{\cal{C}_u}^N(w_{u_-})-\pi_{\cal{C}_u}(w_{u_-}))^2(\varrho_v^N-\varrho_v)(\pi_{u_-}^N(\varphi))^2}^{\frac{1}{2}}\leq \frac{C\norm{w_{v_-}}\norm{w_{u_-}}\norm{\varphi}}{N},\\
&\Ebb{(\pi_{\cal{C}_u}^N(w_{u_-})-\pi_{\cal{C}_u}(w_{u_-}))^2(\pi_{\cal{C}_v}^N(w_{v_-})-\pi_{\cal{C}_v}(w_{v_-}))^2(\pi_{u_-}^N\times\pi_{v_-}^N)(\varphi)^2}^{\frac{1}{2}}\label{eq:mfe7w8agnfay8nga2}\\
&\qquad\leq \Ebb{(\pi_{\cal{C}_u}^N(w_{u_-})-\pi_{\cal{C}_u}(w_{u_-}))^2}^{\frac{1}{2}}\Ebb{(\pi_{\cal{C}_v}^N(w_{v_-})-\pi_{\cal{C}_v}(w_{v_-}))^2}^{\frac{1}{2}}\norm{\varphi}\nonumber\\
&\qquad\leq\frac{C\norm{w_{u_-}}\norm{w_{v_-}}\norm{\varphi}}{N}.\nonumber
\end{align}
Combining~(\ref{eq:mfe7w8agnfay8nga}--\ref{eq:mfe7w8agnfay8nga2}) and applying Minkowski's inequality then completes the proof.
\end{proof}
For the resampling step, recall that $\epsilon_u^N:=N^{-1}\sum_{n=1}^N\delta_{\bm{X}^{n,N}_{u_-}}$ denotes the empirical distribution of the resampled particles. We then have the following:
\begin{lemma}[Resampling step]\label{lem:crossres}If, in addition to Assumptions~\ref{ass:abscont}--\ref{ass:boundweights}, \eqref{eq:crosscorrect} is satisfied for some non-leaf nodes $u$ and $v$  lying in separate branches (i.e.\ $u\not\in\mathbb{T}_v$ and $v\not\in\mathbb{T}_u$), then
\begin{equation}\label{eq:crossres}\Ebb{(\epsilon_u^N-\pi_{u_-})\times(\epsilon_v^N-\pi_{v_-})(\varphi)^2}^{\frac{1}{2}}\leq \frac{C\norm{\varphi}}{N}\quad\forall N>0,\enskip \varphi\in\cal{B}_b(\bm{E}_{\cal{C}_u}\times\bm{E}_{v}).\end{equation}
\end{lemma}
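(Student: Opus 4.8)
The plan is to mirror the resampling-step argument for the single $L^p$ inequality (Lemma~\ref{lem:lpres}) and the proof of \cite[Lemma~1]{Kuntz2021}, exploiting that, conditionally on $\cal{F}_{\cal{C}_u}^N\vee\cal{F}_{\cal{C}_v}^N$, the resampled particles $(\bm{X}_{u_-}^{n,N})_{n=1}^N$ and $(\bm{X}_{v_-}^{n,N})_{n=1}^N$ are independent collections, each drawn i.i.d.\ from $\pi_{u_-}^N$ and $\pi_{v_-}^N$ respectively (the independence of the two collections uses that $u$ and $v$ lie on disjoint subtrees, just as in Lemma~\ref{lem:crosscorrect}'s proof). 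Fixing $N$ and $\varphi$, writing $\cal{H}:=\cal{F}_{\cal{C}_u}^N\vee\cal{F}_{\cal{C}_v}^N$, I would first insert $\pm\pi_{u_-}^N$ and $\pm\pi_{v_-}^N$ to split
\begin{align*}
(\epsilon_u^N-\pi_{u_-})\times(\epsilon_v^N-\pi_{v_-})(\varphi)=&\,(\epsilon_u^N-\pi_{u_-}^N)\times(\epsilon_v^N-\pi_{v_-}^N)(\varphi)+(\epsilon_u^N-\pi_{u_-}^N)\times(\pi_{v_-}^N-\pi_{v_-})(\varphi)\\
&+(\pi_{u_-}^N-\pi_{u_-})\times(\epsilon_v^N-\pi_{v_-}^N)(\varphi)+(\pi_{u_-}^N-\pi_{u_-})\times(\pi_{v_-}^N-\pi_{v_-})(\varphi),
\end{align*}
and then bound the $L^2$ norm of each of the four summands by $C\norm{\varphi}/N$, the last one being exactly the hypothesis~\eqref{eq:crosscorrect}; Minkowski's inequality then closes the argument.

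For the first summand I would note that, conditionally on $\cal{H}$, it equals $N^{-2}\sum_{n,m}Y_{n,m}$ with the "doubly centred" quantity $Y_{n,m}:=\varphi(\bm{X}_{u_-}^{n,N},\bm{X}_{v_-}^{m,N})-\pi_{u_-}^N(\varphi(\cdot,\bm{X}_{v_-}^{m,N}))-\pi_{v_-}^N(\varphi(\bm{X}_{u_-}^{n,N},\cdot))+(\pi_{u_-}^N\times\pi_{v_-}^N)(\varphi)$, which is bounded by $4\norm{\varphi}$. Using the conditional independence of the two particle families together with the fact that integrating $Y_{n,m}$ over $\bm{X}_{u_-}^{n,N}$ alone (resp.\ over $\bm{X}_{v_-}^{m,N}$ alone) yields zero, one gets $\Ebb{Y_{n,m}Y_{n',m'}\mid\cal{H}}=0$ unless $n=n'$ and $m=m'$, so only the $N^2$ diagonal terms survive and $\Ebb{(\,\cdot\,)^2\mid\cal{H}}\leq 16\norm{\varphi}^2/N^2$; taking expectations gives the required $O(N^{-1})$ $L^2$ bound.

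The hard part will be the two mixed summands, because the crude estimate $|(\pi_{v_-}^N-\pi_{v_-})(\varphi(\bm{x}_{\cal{C}_u},\cdot))|\leq 2\norm{\varphi}$ only produces $O(N^{-1/2})$ — one must squeeze a second $N^{-1/2}$ out of the independence of $\cal{F}_{\cal{C}_u}^N$ and $\cal{F}_{\cal{C}_v}^N$. For the second summand I would set $h(\bm{x}_{\cal{C}_u}):=(\pi_{v_-}^N-\pi_{v_-})(\varphi(\bm{x}_{\cal{C}_u},\cdot))$, which is $\cal{H}$-measurable, so that the summand is $(\epsilon_u^N-\pi_{u_-}^N)(h)$; since $(\bm{X}_{u_-}^{n,N})_n$ are conditionally i.i.d.\ $\pi_{u_-}^N$ given $\cal{H}$, Lemma~\ref{lem:733} yields $\Ebb{(\epsilon_u^N-\pi_{u_-}^N)(h)^2\mid\cal{H}}\leq C N^{-1}\pi_{u_-}^N(h^2)$. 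Taking expectations, conditioning on $\cal{F}_{\cal{C}_u}^N$ (under which $\pi_{u_-}^N$ is frozen while $\pi_{v_-}^N$ keeps its unconditional law by independence), pulling the expectation inside the $\pi_{u_-}^N$-integral, and applying the single-measure $L^2$ inequality~\eqref{eq:lpcor} to the bounded function $\varphi(\bm{x}_{\cal{C}_u},\cdot)$ for each fixed $\bm{x}_{\cal{C}_u}$ gives $\Ebb{\pi_{u_-}^N(h^2)}\leq C\norm{\varphi}^2/N$, hence an $L^2$ bound of order $N^{-1}$ for the second summand; the third is identical with the roles of $u$ and $v$ swapped. Assembling the four estimates via Minkowski's inequality then completes the proof.
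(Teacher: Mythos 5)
Your proposal is correct, and it rests on exactly the same probabilistic ingredients as the paper's proof --- conditioning on $\cal{F}_{\cal{C}_u}^N\vee\cal{F}_{\cal{C}_v}^N$, the conditional independence and conditional i.i.d.\ structure of the two resampled families (which uses $u\not\in\mathbb{T}_v$, $v\not\in\mathbb{T}_u$), the hypothesis~\eqref{eq:crosscorrect}, and the marginal $L^2$ inequality~\eqref{eq:lpcor} --- but it organizes them differently. You telescope around the empirical measures, splitting $(\epsilon_u^N-\pi_{u_-})\times(\epsilon_v^N-\pi_{v_-})$ into four bilinear error terms: the pure resampling-noise term killed by the double-centring orthogonality (only the $N^2$ fully diagonal terms survive), the two mixed terms handled by the conditional Marcinkiewicz--Zygmund bound of Lemma~\ref{lem:733} plus a Fubini step and~\eqref{eq:lpcor} applied to $\varphi(\bm{x}_{\cal{C}_u},\cdot)$ uniformly in $\bm{x}_{\cal{C}_u}$, and the last term which is~\eqref{eq:crosscorrect} verbatim. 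The paper instead keeps $\epsilon_u^N-\pi_{u_-}$ and $\epsilon_v^N-\pi_{v_-}$ intact, centres the test function at the \emph{limit} measures via $\psi:=\varphi-\pi_{v_-}(\varphi)-\pi_{u_-}(\varphi)+(\pi_{u_-}\times\pi_{v_-})(\varphi)$, writes everything as the single double sum $S_N=\sum_{i,j}\psi(\bm{X}_{u_-}^{i,N},\bm{X}_{v_-}^{j,N})$, and does a case analysis over index pairs: the fully off-diagonal conditional moments reduce to $(\pi_{u_-}^N\times\pi_{v_-}^N)(\psi)^2=\big((\pi_{u_-}^N-\pi_{u_-})\times(\pi_{v_-}^N-\pi_{v_-})(\varphi)\big)^2$ and are bounded by~\eqref{eq:crosscorrect}, the partially diagonal ones by the marginal $L^p$ inequality, and the diagonal trivially; counting terms gives $\Ebb{S_N^2}\leq CN^2\norm{\varphi}^2$. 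Your mixed-term treatment plays the role of the paper's partially diagonal cases, and your term-by-term route is arguably more modular (each of the four pieces is an explicit $O(N^{-1})$ object), at the price of the extra telescoping and an appeal to Lemma~\ref{lem:733} that the paper's version of this particular step does not need; the paper's single moment computation is more compact but less transparent about where each factor of $N^{-1/2}$ comes from. Either way the assembled bound is $C\norm{\varphi}/N$, as required.
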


\begin{proof}Fix any $N>0$ and $\varphi$ in $\cal{B}_b(\bm{E}_{\cal{C}_u}\times\bm{E}_{v})$. Note that
$$S_N:=N^2(\epsilon_u^N-\pi_{u_-})\times(\epsilon_v^N-\pi_{v_-})(\varphi)=\sum_{i=1}^N\sum_{j=1}^N\psi(\bm{X}^{i,N}_{u_-},\bm{X}^{j,N}_{v_-}),$$
where $\psi:=\varphi-\pi_{v_-}(\varphi)-\pi_{u_-}(\varphi)+(\pi_{u_-}\times\pi_{v_-})(\varphi)$. 
In other words,
$$S_N=\sum_{i=1}^N\sum_{j=1}^NY^N_{i,j}\quad\text{where}\quad Y^N_{i,j}:=\psi(\bm{X}^{i,N}_{u_-},\bm{X}^{j,N}_{v_-})\quad\forall i,j\leq N.$$
Conditioned on ${\cal{G}}:=\cal{F}_{\cal{C}_u}^N\vee\cal{F}_{\cal{C}_v}^N$, the resampled particles $\bm{X}^{1,N}_{u_-},\dots,\bm{X}^{N,N}_{u_-},$ $\bm{X}^{1,N}_{v_-},\dots,\bm{X}^{N,N}_{v_-}$ are independent with
$$\bm{X}^{n,N}_{u_-}\sim \pi_{u_-}^N,\quad\bm{X}^{n,N}_{v_-}\sim \pi_{v_-}^N,\quad\forall n\leq N.$$
We proceed by cases:
\begin{itemize}
\item \textit{(${i_1\neq i_2}$ and ${j_1\neq j_2}$).} With probability one,
\begin{align*}\Ebb{Y^N_{i_1,j_1}Y^N_{i_2,j_2}|{\cal{G}}}&=\Ebb{\psi(\bm{X}^{i_1,N}_{u_-},\bm{X}^{j_1,N}_{v_-})\psi(\bm{X}^{i_2,N}_{u_-},\bm{X}^{j_2,N}_{v_-})|{\cal{G}}}\\
&=\Ebb{\psi(\bm{X}^{i_1,N}_{u_-},\bm{X}^{j_1,N}_{v_-})|{\cal{G}}}\Ebb{\psi(\bm{X}^{i_2,N}_{u_-},\bm{X}^{j_2,N}_{v_-})|{\cal{G}}}\\
&=\pi_{u_-}^N\times \pi_{v_-}^N(\psi)^2.\end{align*}
But $\psi$'s definition implies that $\pi_{u_-}(\psi)=\pi_{v_-}(\psi)=0$ and $\norm{\psi}\leq 4\norm{\varphi}$. Hence, by~\eqref{eq:crosscorrect},
\begin{equation}\label{eq:dnmwu98abfha8wyfba}\Ebb{Y^N_{i_1,j_1}Y^N_{i_2,j_2}}\leq \frac{C^2\norm{\psi}^2}{N^2}\leq \frac{16C^2\norm{\varphi}^2}{N^2}\quad\text{if }i_1\neq i_2,\enskip j_1\neq j_2.\end{equation}

\item \textit{(${i_1\neq i_2}$ and ${j_1= j_2}$).} Let $\cal{G}_{j_1}:=\cal{G}\vee\sigma(\bm{X}^{j_1,N}_{v_-})$ so that
$$\Ebb{Y^N_{i_1,j_1}Y^N_{i_2,j_2}|\cal{G}_{j_1}}=\pi_{u_-}^N(\psi)(\bm{X}^{j_1,N}_{v_-})^2\quad\text{almost surely}.$$
Because $\pi_{u_-}(\psi)=0$ by definition, independence and the $L^p$ inequality for $\pi_{u_-}^N$ in~\eqref{eq:lpcor} (obtained in the proof of Theorem~\ref{thrm:lpapp}) then imply that%
\begin{equation}\label{eq:dnmwu98abfha8wyfba2}\Ebb{Y^N_{i_1,j_1}Y^N_{i_2,j_2}}\leq \frac{C^2\norm{\psi}^2}{N}\leq \frac{16C^2\norm{\varphi}^2}{N}\quad\text{if }i_1\neq i_2,\enskip j_1=j_2.\end{equation}
\item \textit{(${i_1= i_2}$ and ${j_1\neq j_2}$).} Similarly, 
\begin{equation}\label{eq:dnmwu98abfha8wyfba3}\Ebb{Y^N_{i_1,j_1}Y^N_{i_2,j_2}}\leq \frac{C^2\norm{\psi}^2}{N}\leq \frac{16C^2\norm{\varphi}^2}{N}\quad\text{if }i_1= i_2,\enskip j_1\neq j_2.\end{equation}
\item \textit{(${i_1= i_2}$ and ${j_1= j_2}$).} Clearly,
\begin{equation}\label{eq:dnmwu98abfha8wyfba4}\Ebb{Y^N_{i_1,j_1}Y^N_{i_2,j_2}}\leq\norm{\psi}^2\leq 16\norm{\varphi}^2\quad\text{if }i_1=i_2,\enskip j_1= j_2.\end{equation}
\end{itemize}
In the sum
$$\Ebb{S_{N}^2}=\sum_{i_1=1}^N\sum_{j_1=1}^N\sum_{i_2=1}^N\sum_{j_2=1}^N\Ebb{Y^N_{i_1,j_1}Y^N_{i_2,j_2}},$$
there are $N^2(N-1)^2$ terms of the type in~\eqref{eq:dnmwu98abfha8wyfba}, $2 N^2 (N-1)$ of the type in (\ref{eq:dnmwu98abfha8wyfba2},\ref{eq:dnmwu98abfha8wyfba3}), and $N^2$ of the type in \eqref{eq:dnmwu98abfha8wyfba4}. Hence, (\ref{eq:dnmwu98abfha8wyfba}--\ref{eq:dnmwu98abfha8wyfba4}) imply that $\Ebb{S_{N}^2}\leq N^2C\norm{\varphi}^2$ and the result follows.
\end{proof}
Similarly for the mutation step:
\begin{lemma}[Mutation step]\label{lem:crossmut}If, in addition to Assumptions~\ref{ass:abscont}--\ref{ass:boundweights}, $u$ and $v$ lie in separate branches (i.e.\ $u\not\in\mathbb{T}_v$ and $v\not\in\mathbb{T}_u$) and either (a) neither $u$ or $v$ is a leaf and \eqref{eq:crossres} holds or (b) $u$ or $v$ is a leaf, then~\eqref{eq:crossprop} also holds for $u$ and $v$.
\end{lemma}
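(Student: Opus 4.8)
The plan is to run the same kind of inductive, step-by-step argument used in Lemmas~\ref{lem:crosscoal}--\ref{lem:crossres}, reducing everything to two ingredients that are available by this point: the resampling-step cross inequality~\eqref{eq:crossres} and the ordinary $L^2$ inequalities of Appendix~\ref{app:lplln} (namely~\eqref{eq:lp0} and~\eqref{eq:lpres}). For a non-leaf node I would write $\pi_u^N-\pi_u=(\pi_u^N-\nu_u^N)+(\nu_u^N-\pi_u)$ with $\nu_u^N:=\epsilon_u^N\otimes K_u$, adopting the convention $\nu_u^N:=K_u=\pi_u$ when $u$ is a leaf (so the ``resampling-error'' piece $\nu_u^N-\pi_u=(\epsilon_u^N-\pi_{u_-})\otimes K_u$ is simply absent at a leaf). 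Doing the same for $v$ and using bilinearity of $(\mu,\nu)\mapsto(\mu\times\nu)(\varphi)$ together with Minkowski's inequality, the claim reduces to an $\cal{O}(N^{-1})$ $L^2$ bound on each of the four cross-products arising from these two decompositions.

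Two of these corners are routine. The ``resampling$\times$resampling'' term equals $(\epsilon_u^N-\pi_{u_-})\times(\epsilon_v^N-\pi_{v_-})(\tilde\varphi)$ with $\tilde\varphi:=(K_u\otimes K_v)\varphi$ and $\norm{\tilde\varphi}\le\norm{\varphi}$ (as $K_u,K_v$ are Markov), so it is $\cal{O}(N^{-1})$ by~\eqref{eq:crossres}, and it vanishes in case (b). The ``mutation$\times$mutation'' term, $(\pi_u^N-\nu_u^N)\times(\pi_v^N-\nu_v^N)(\varphi)$, I would handle exactly as in the proof of Lemma~\ref{lem:crossres}: condition on $\cal{G}:=\cal{F}_{u_-}^N\vee\cal{F}_{v_-}^N$ (a trivial conditioning on the leaf side, if any), under which the mutated particles $(\bm{X}_u^{i,N})_i$ and $(\bm{X}_v^{j,N})_j$ are mutually independent and, within each family, the increments $\delta_{\bm{X}_u^{i,N}}-\delta_{\bm{X}_{u_-}^{i,N}}\otimes K_u(\bm{X}_{u_-}^{i,N},\cdot)$ (resp.\ for $v$) are conditionally centred; writing $N^2$ times the cross-product as $\sum_{i,j}c^N_{i,j}$ with $c^N_{i,j}$ doubly centred and bounded in supremum by $4\norm{\varphi}$, the four-way case analysis on $(i_1,i_2,j_1,j_2)$ leaves only the $N^2$ ``diagonal'' terms $i_1=i_2,\,j_1=j_2$ (the centring in the $u$-direction kills $i_1\ne i_2$, that in the $v$-direction kills $j_1\ne j_2$), whence $\Ebb{(\text{term})^2}\le 16\norm{\varphi}^2/N^2$.

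The two ``mixed'' corners — and, in essentially the same guise, case (b) — are the one place where neither a prior cross-inequality nor pure conditional centring alone suffices, and this is the main obstacle. For, say, $(\pi_u^N-\nu_u^N)\times(\nu_v^N-\pi_v)(\varphi)$, I would push $K_v$ through to rewrite it as $(\pi_u^N-\nu_u^N)(d\bm{x}_u)\,(\epsilon_v^N-\pi_{v_-})(d\bm{x}_{\cal{C}_v})\,\psi$ with $\norm{\psi}\le\norm{\varphi}$, condition on $\cal{G}:=\cal{F}_{u_-}^N\vee\cal{F}_{v_-}^N$, and express it as $N^{-1}\sum_i\xi_i$ with $\xi_i:=(\epsilon_v^N-\pi_{v_-})(g_i)$, where $g_i$ is the integral of $\psi$ against the centred $u$-increment at particle $i$, so that $\norm{g_i}\le 2\norm{\psi}$ and $\Ebb{\xi_i\mid\cal{G}}=0$. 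Applying the conditional Marcinkiewicz--Zygmund inequality (Lemma~\ref{lem:733}) in the $u$-direction gives $\Ebb{(\text{term})^2}\le CN^{-2}\sum_i\Ebb{\xi_i^2}$; and here I would exploit that, since $u$ and $v$ lie on disjoint subtrees, the randomness defining $g_i$ is independent of that defining $\epsilon_v^N$, so that, freezing the former, the $L^2$ inequality~\eqref{eq:lpres} for $\epsilon_v^N$ yields $\Ebb{\xi_i^2}\le C\norm{\psi}^2/N$ and hence an overall $\cal{O}(N^{-1})$ bound. The symmetric corner is identical with $u,v$ swapped, and case (b) (say $u$ a leaf) reduces to this same computation with $\pi_u^N-\pi_u$, the i.i.d.\ empirical error, in place of $\pi_u^N-\nu_u^N$ and~\eqref{eq:lp0} in place of~\eqref{eq:lpres} (with no kernel push-through needed at the leaf). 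Combining the four bounds through Minkowski's inequality then gives~\eqref{eq:crossprop}, closing the induction set up in Lemmas~\ref{lem:crosscoal}--\ref{lem:crossres}.
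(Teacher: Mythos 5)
Your proposal is correct and follows essentially the same route as the paper: the same four-corner decomposition via $\nu_u^N=\epsilon_u^N\otimes K_u$, $\nu_v^N=\epsilon_v^N\otimes K_v$, with the resampling$\times$resampling corner dispatched by~\eqref{eq:crossres}, the mutation$\times$mutation corner by the doubly-centred double sum, and the mixed corners by conditional centring in the mutation direction combined with cross-branch independence and previously established one-measure $L^2$ bounds. The only (minor, cosmetic) difference is in the mixed corners, where you sum over a single particle index and apply the conditional Marcinkiewicz--Zygmund bound followed by~\eqref{eq:lpres} (or~\eqref{eq:lp0} at a leaf), whereas the paper runs an explicit fourfold case analysis over the double index using the $L^p$ inequality for $\pi_{v_-}^N$; both yield the same $\cal{O}(N^{-1})$ estimate.
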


\begin{proof} For the sake of brevity, we focus on the case that neither $u$ nor $v$ are leaves and skip the analogous, albeit simpler, case that one of the two is a leaf. 
Fix any $N>0$ and $\varphi$ in $\cal{B}_b(\bm{E}_{u}\times\bm{E}_{v})$, and note that
\begin{align}\nonumber(\pi_u^N-\pi_u)\times(\pi_v^N-\pi_v)(\varphi)=&(\pi_u^N-\nu_u^N+\nu_u^N-\pi_u)\times(\pi_v^N-\nu_v^N+\nu_v^N-\pi_v)(\varphi)\\
=&(\pi_u^N-\nu_u^N)\times(\pi_v^N-\nu_v^N)(\varphi)+(\pi_u^N-\nu_u^N)\times(\nu_v^N-\pi_v)(\varphi)\label{eq:fmnw78a0hfw8yfaw0}\\
&+ (\nu_u^N-\pi_u)\times(\pi_v^N-\nu_v^N)(\varphi)+(\nu_u^N-\pi_u)\times(\nu_v^N-\pi_v)(\varphi),\nonumber\end{align}
where $\nu_u^N:=\epsilon_u^N\times K_u$ and $\nu_v^N:=\epsilon_v^N\times K_v$. Because $K_u$ and $K_v$ are Markov kernels, 
the inequality~\eqref{eq:crossres} bounds the fourth term in~\eqref{eq:fmnw78a0hfw8yfaw0}:
\begin{align}\label{eq:fmnw78a0hfw8yfaw}
\Ebb{(\nu_u^N-\pi_u)\times(\nu_v^N-\pi_v)(\varphi)^2}^{\frac{1}{2}}&=\Ebb{(\epsilon_u^N-\pi_{u_-})\times(\epsilon_v^N-\pi_{v_-})(K_uK_v\varphi)^2}^{\frac{1}{2}}\\
&\leq \frac{C\norm{K_uK_v\varphi}}{N}\leq \frac{C\norm{\varphi}}{N}.\nonumber
\end{align}

To control the other three terms in~\eqref{eq:fmnw78a0hfw8yfaw0}, we emulate the approach in Lemma~\ref{lem:crossres}'s proof. For the first term, we set
$$S_N:=N^2(\pi_u^N-\nu_u^N)\times(\pi_v^N-\nu_v^N)(\varphi)=\sum_{i=1}^N\sum_{j=1}^NY^N_{i,j},$$
where  $Y^N_{i,j}:=\psi(\bm{X}^{i,N}_{u},\bm{X}^{j,N}_{v})$  with  $\psi:=\varphi-K_v\varphi-K_u\varphi+K_uK_v\varphi$.
Suppose that $i_1\neq i_2$ and set $\cal{G}:=\sigma(\bm{X}^{j_1,N}_v,\bm{X}^{j_2,N}_v)\vee\cal{F}_{u_-}^N$.  Because $u$ and $v$ lie in different branches, $\bm{X}^{j_1,N}_v$~and~$\bm{X}^{j_2,N}_v$ are independent of $\cal{F}_{u_-}^N$. Moreover, conditioned on $\cal{F}_{u_-}^N$, $X^{i_1,N}_u$ and $X^{i_2,N}_u$ are independent with respective laws $K_u(\bm{X}^{i_1,N}_{u_-},\cdot)$ and $K_u(\bm{X}^{i_2,N}_{u_-},\cdot)$. 
For these reasons,
\begin{align*}\Ebb{Y^N_{i_1,j_1}Y^N_{i_2,j_2}|\cal{G}}&=\Ebb{\psi(\bm{X}^{i_1,N}_u,\bm{X}^{j_1,N}_v)\psi(\bm{X}^{i_2,N}_u,\bm{X}^{j_2,N}_v)|\cal{G}}\\
&=\Ebb{\psi(\bm{X}^{i_1,N}_u,\bm{X}^{j_1,N}_v)|\cal{G}}\Ebb{\psi(\bm{X}^{i_2,N}_u,\bm{X}^{j_2,N}_v)|\cal{G}}\\
&=(K_u\psi)(\bm{X}^{i_1,N}_{u_-},\bm{X}^{j_1,N}_{v})(K_u\psi)(\bm{X}^{i_2,N}_{u_-},\bm{X}^{j_2,N}_{v})\quad\text{almost surely}.\end{align*}
But $\psi$'s definition implies that $K_u\psi=0$ and we have that $\Ebb{Y^N_{i_1,j_1}Y^N_{i_2,j_2}}=0$ if $i_1\neq i_2$. The same argument shows that this is also the case if $j_1\neq j_2$, and it follows that
$$\Ebb{S_{N}^2}=\sum_{i_1=1}^N\sum_{j_1=1}^N\sum_{i_2=1}^N\sum_{j_2=1}^N\Ebb{Y^N_{i_1,j_1}Y^N_{i_2,j_2}}=\sum_{i=1}^N\sum_{j=1}^N\Ebb{(Y^N_{i,j})^2}\leq N^2\norm{\psi}^2.$$
Because $K_u$ and $K_v$ are Markov kernels, $\norm{\psi}\leq 4\norm{\varphi}$ and, so,
\begin{equation}\label{eq:fmnw78a0hfw8yfaw2}
\Ebb{(\pi_u^N-\nu_u^N)\times(\pi_v^N-\nu_v^N)(\varphi)^2}^{\frac{1}{2}}=\frac{\Ebb{S_{N}^2}^{\frac{1}{2}}}{N^2}\leq \frac{4\norm{\varphi}}{N}.
\end{equation}

For the remaining two terms in~\eqref{eq:fmnw78a0hfw8yfaw0}, set 
$$S_N:=N^2(\pi_u^N-\nu_u^N)\times(\nu_v^N-\pi_v)(\varphi)=\sum_{i=1}^N\sum_{j=1}^NY^N_{i,j},$$
where $Y^N_{i,j}:=\psi(\bm{X}^{i,N}_u,\bm{X}^{j,N}_{v_-})$  with  $\psi:=K_v\varphi-\pi_{v_-}(K_v\varphi)-K_uK_v\varphi+K_u\pi_{v_-}(K_v\varphi)$. We proceed by cases:
\begin{itemize}
\item \textit{(${i_1\neq i_2}$).} Because $u$ and $v$ lie in separate branches, conditional on $\cal{G}:=\cal{F}_{u_-}^N\vee\cal{F}_{v_-}^N$, $X^{i_1,N}_u$~and~$X^{i_2,N}_u$ are independent with respective laws $K_u(\bm{X}^{i_1,N}_{u_-},\cdot)$ and $K_u(\bm{X}^{i_2,N}_{u_-},\cdot)$. For this reason,
\begin{align*}\Ebb{Y^N_{i_1,j_1}Y^N_{i_2,j_2}|{\cal{G}}}&=\Ebb{\psi(\bm{X}^{i_1,N}_u,\bm{X}^{j_1,N}_{v_-})\psi(\bm{X}^{i_2,N}_u,\bm{X}^{j_2,N}_{v_-})|{\cal{G}}}\\
&=\Ebb{\psi(\bm{X}^{i_1,N}_u,\bm{X}^{j_1,N}_{v_-})|{\cal{G}}}\Ebb{\psi(\bm{X}^{i_2,N}_u,\bm{X}^{j_2,N}_{v_-})|{\cal{G}}}\\
&=(K_u\psi)(\bm{X}^{i_1,N}_{u_-},\bm{X}^{j_1,N}_{v_-})(K_u\psi)(\bm{X}^{i_2,N}_{u_-},\bm{X}^{j_2,N}_{v_-})\quad\text{almost surely}.\end{align*}
But $\psi$'s definition implies that $K_u\psi=0$ and so
\begin{equation}\label{eq:fnueaw9ngfeu9anguwaege}\Ebb{Y^N_{i_1,j_1}Y^N_{i_2,j_2}}=0\quad\text{if }i_1\neq i_2.
\end{equation}

\item \textit{(${i_1=i_2}$ and ${j_1\neq j_2}$).} If $\cal{G}:=\sigma(\bm{X}^{i_1,N}_u)\vee\cal{F}_{\cal{C}_v}^N$, then 
\begin{align*}\Ebb{Y^N_{i_1,j_1}Y^N_{i_2,j_2}|\cal{G}}&=\Ebb{\psi(\bm{X}^{i_1,N}_u,\bm{X}^{j_1,N}_{v_-})\psi(\bm{X}^{i_1,N}_u,\bm{X}^{j_2,N}_{v_-})|\cal{G}}\\
&=\Ebb{\psi(\bm{X}^{i_1,N}_u,\bm{X}^{j_1,N}_{v_-})|\cal{G}}\Ebb{\psi(\bm{X}^{i_1,N}_u,\bm{X}^{j_2,N}_{v_-})|\cal{G}}\\
&=\pi_{v_-}^N(\psi)(\bm{X}^{i_1,N}_u)^2\quad\textrm{almost surely}.\end{align*}
But, by definition, $\pi_{v_-}(\psi)=0$ and the $L^p$ inequality for $\pi_{u_-}^N$ in Theorem~\ref{THRM:LP}  shows that 
\begin{align}\label{eq:fnueaw9ngfeu9anguwaege2}\Ebb{Y^N_{i_1,j_1}Y^N_{i_2,j_2}}&\leq \frac{16C^2\norm{\varphi}^2}{N}\quad\text{if }i_1=i_2,\enskip j_1\neq j_2.
\end{align}
\item \textit{(${i_1= i_2}$ and ${j_1= j_2}$).} Clearly,
\begin{equation}\label{eq:fnueaw9ngfeu9anguwaege3}\Ebb{Y^N_{i_1,j_1}Y^N_{i_2,j_2}}\leq\norm{\psi}^2\leq16\norm{\varphi}^2\quad\text{if }i_1=i_2,\enskip j_1= j_2.\end{equation}
\end{itemize}
In the sum
$$\Ebb{S_{N}^2}=\sum_{i_1=1}^N\sum_{j_1=1}^N\sum_{i_2=1}^N\sum_{j_2=1}^N\Ebb{Y^N_{i_1,j_1}Y^N_{i_2,j_2}},$$
there are   $N^2(N-1)$ terms of the type in~\eqref{eq:fnueaw9ngfeu9anguwaege2} and $N^2$ of that in \eqref{eq:fnueaw9ngfeu9anguwaege3}. Hence, (\ref{eq:fnueaw9ngfeu9anguwaege}--\ref{eq:fnueaw9ngfeu9anguwaege3}) imply that $\Ebb{S_{N}^2}\leq N^2C^2\norm{\varphi}^2$ and we find that
$$
\Ebb{(\pi_u^N-\nu_u^N)\times(\nu_v^N-\pi_v)(\varphi)^2}^{\frac{1}{2}}\leq \frac{C\norm{\varphi}}{N}.
$$
For the same reasons, we also have that
$$\Ebb{(\nu_u^N-\pi_u)\times(\pi_v^N-\nu_v^N)(\varphi)^2}^{\frac{1}{2}}\leq \frac{C\norm{\varphi}}{N}.$$
Combining (\ref{eq:fmnw78a0hfw8yfaw0}--\ref{eq:fmnw78a0hfw8yfaw2}), the above, and Minkowski's inequality, completes the proof.
\end{proof}
\begin{proof}[Proof of Theorem~\ref{THRM:CROSS}]Repeatedly applying Lemmas~\ref{lem:crosscorrect}--\ref{lem:crossmut} starting from~\eqref{eq:crossprop}, we find that, for any  $u$ and $v$ lying in separate branches (i.e.\ $u\not\in\mathbb{T}_v$ and $v\not\in\mathbb{T}_u$), 
\begin{equation}\label{eq:fmeu9agney7a8gbeyaw}
\Ebb{(\pi^N_u-\pi_u)\times(\pi_v^N-\pi_v)(\varphi)^2}^{\frac{1}{2}}\leq \frac{C\norm{\varphi}}{N}\quad\forall N>0,\enskip \varphi\in\cal{B}_b(\bm{E}_u\times\bm{E}_v).\end{equation}
The remainder of this proof is quite similar to that of Lemma~\ref{lem:crosscorrect} and we only sketch it: 
\begin{align*}
(\gamma_u^N-\gamma_u)\times(\gamma_v^N-\gamma_v)&(\varphi)\\
=&(\gamma_u^N-\cal{Z}_u\pi_u^N+\cal{Z}_u\pi_u^N-\gamma_u)\times(\gamma_v^N-\cal{Z}_v\pi_v^N+\cal{Z}_v\pi_v^N-\gamma_v)(\varphi)\\
=&(\cal{Z}_u^N-\cal{Z}_u)(\cal{Z}_v^N-\cal{Z}_v)\pi_u^N\times \pi_v^N(\varphi)\\
&-(\cal{Z}_u^N-\cal{Z}_u)\cal{Z}_v(\pi_v^N-\pi_v)(\pi^N_u(\varphi))\\
&-(\cal{Z}_v^N-\cal{Z}_v)\cal{Z}_u(\pi_u^N-\pi_u)(\pi^N_v(\varphi))\\
&+\cal{Z}_u\cal{Z}_v(\pi_u^N-\pi_u)\times (\pi_v^N-\pi_v)(\varphi)\enskip\forall N>0,\enskip \varphi\in\cal{B}_b(\bm{E}_u\times\bm{E}_v).
\end{align*}
To control the  $L^2$ norm of the fourth term, use~\eqref{eq:fmeu9agney7a8gbeyaw}. For the first three, use independence, the $L^p$ inequalities for $\pi_u^N$ and $\pi_v^N$ in Theorem~\ref{thrm:lpapp}, and those for $\cal{Z}_u^N$ and $\cal{Z}_v^N$ in~\eqref{eq:lpz}. Then, apply Minkowski's inequality.
\end{proof}
\section{Proof of Theorem~\ref{THRM:LOCALLY}}\label{app:locally}
Throughout this appendix, we use the notation described in Appendix~\ref{app:notation}.  
The aim of the appendix is to prove (\ref{eq:locally11}--\ref{eq:locally2}) in Theorem~\ref{THRM:LOCALLY}. Before we do so, we take a moment to verify that all of the terms in the theorem's statement are well-defined. First off, note that $\omega_u(K_u)$ therein is well-defined, and can be chosen to be positive everywhere, because, by $\cal{S}$'s definition and Assumption~\ref{ass:abscont}, 
$$\rho_u\sim\gamma_{u_-}\times K_u\sim\gamma_{\cal{C}_u}\times K_u\sim\rho_{\cal{C}_u}\times K_u\quad\forall  (\gamma_{u_-},K_u)\in  \cal{S}.$$ 
The above further implies that $C\sqrt{K_u\omega_u^2}\rho_{\cal{C}_u}\sim\rho_{\cal{C}_u}$. Hence, if $(\gamma_{u_-},K_u)$ belongs to $\cal{S}$,
\begin{align*}&C\sqrt{K_u\omega_u^2}\rho_{\cal{C}_u}\times K_u\sim\rho_{\cal{C}_u}\times K_u\sim\gamma_{\cal{C}_u}\times K_u\times \gamma_{u_-}\times K_u\sim\rho_u,\\
&\gamma_{u_-}\times K_u\sim\rho_u\sim\mu_u\sim\rho_u^{u_-}\times M_u\Rightarrow\gamma_{u_-}\sim \mu_u^{u_-}\Rightarrow \gamma_{u_-}\times M_u\sim \mu_u^{u_-}\times M_u=\mu_u\sim\rho_u.\end{align*}
In other words, both $(C\sqrt{K_u\omega_u^2}\rho_{\cal{C}_u},K_u)$ and $(\gamma_{u_-},M_u)$ belong to $\cal{S}$ whenever $(\gamma_{u_-},K_u)$ does and all terms in the statement are well-defined.

To prove (\ref{eq:locally11}--\ref{eq:locally2}), we  need only argue~(\ref{eq:locally11}--\ref{eq:locally12}) as~\eqref{eq:locally2} follows immediately:
\begin{align*}
f(\gamma_{u_-},K_u)\geq f(\gamma_{u_-},M_u)\geq f(\sqrt{M_u\omega_u(M_u)^2}\rho_{\cal{C}_u},M_u)\quad\forall (\gamma_{u_-},K_u)\in\cal{S},
\end{align*}
and, given that $\rho_u=Z_u\mu_u=Z_u\mu_u^{u_-}\times M_u=\rho_u^{u_-}\times M_u$, 
\begin{align}
\omega_u(K_u)&=\frac{d\rho_u}{d\rho_{\cal{C}_u}\times K_u}=\frac{d\rho_u^{u_-}\times M_u}{d\rho_{\cal{C}_u}\times K_u}=\frac{d\rho_u^{u_-}}{d\rho_{\cal{C}_u}}\frac{dM_u}{dK_u}\nonumber\\
\Rightarrow
K_u\omega_u(K_u)^2&=\left(\frac{d\rho_u^{u_-}}{d\rho_{\cal{C}_u}}\right)^2K_u\left[\frac{dM_u}{dK_u}\right]^2\label{eq:ndwan7dwu8hda8}
\Rightarrow \sqrt{K_u\omega_u(M_u)^2}\rho_{\cal{C}_u}=\frac{d\rho_u^{u_-}}{d\rho_{\cal{C}_u}}\rho_{\cal{C}_u}=\rho_u^{u_-}.
\end{align}

To argue~(\ref{eq:locally11}--\ref{eq:locally12}), we express $Z_u^N$'s variance as 
\begin{equation}\label{eq:dsadsadfsafafwa}\text{Var}(Z_u^N)=A+\Ebb{Bg(X_{\cal{C}_u}^{N},\gamma_{u_-},K_u)}\end{equation}
where $A$ is a constant and $B$ is a random variable, neither dependent on $(\gamma_{u_-},K_u)$, $\bm{X}_{\cal{C}_u}^{N}:=(\bm{X}_{\cal{C}_u}^{\bm{n},N})_{\bm{n}\in[N]^{c_u}}^N$, and $g$ is a  real-valued function on $\bm{E}_{\cal{C}_u}^{N^{c_u}}\times\cal{S}$. We will then show that
\begin{align}\label{eq:dnwa9ndwu9nfasdauwaw1}g(\bm{x}_{\cal{C}_u}^N,\gamma_{u_-},K_u)&\geq g(\bm{x}_{\cal{C}_u}^N,\sqrt{K_u\omega_u(K_u)^2}\rho_{\cal{C}_u},K_u)\quad\forall \bm{x}_{\cal{C}_u}^N\in \bm{E}_{\cal{C}_u}^{N^{c_u}},\\
 \label{eq:dnwa9ndwu9nfasdauwaw2}g(\bm{x}_{\cal{C}_u}^N,\gamma_{u_-},K_u)&\geq g(\bm{x}_{\cal{C}_u}^N,\gamma_{u_-},M_u)\quad\forall \bm{x}_{\cal{C}_u}^N\in \bm{E}_{\cal{C}_u}^{N^{c_u}},
\end{align}
for all $(\gamma_{u_-},K_u)$ in $\cal{S}$. Given~\eqref{eq:dsadsadfsafafwa}, (\ref{eq:locally11}--\ref{eq:locally12}) then follow from the above.

To carry out these steps, we'll need several identities that we collect here (throughout the following, recall that $\gamma_u$ depends on $\gamma_{u_{-}}$ and $K_u$ via $\gamma_u=\gamma_{u_-}\times K_u$): 
$$\gamma_{u_-}([K_uw_u]\varphi)=\gamma_{u_-}(K_u[w_u\varphi])=\gamma_{u}(w_u\varphi)=\rho_u(\varphi)=\rho_u^{u_-}(\varphi)\quad \forall \varphi\in \cal{B}(\bm{E}_{\cal{C}_u}),$$
where $\rho_u^{u_-}$ denotes the $\bm{E}_{\cal{C}_u}$-marginal of $\rho_u$, from which it follows that 
\begin{align}
K_uw_u&=\frac{d\rho_u^{u_-}}{d\gamma_{u_-}}\nonumber\\
\Rightarrow w_{u_-}K_uw_u&=\frac{d\gamma_{u_-}}{d\gamma_{\cal{C}_u}}\frac{d\rho_u^{u_-}}{d\gamma_{u_-}}=\frac{d\rho_u^{u_-}}{d\gamma_{\cal{C}_u}},\label{eq:identity2}\\
w_u&=\frac{d\rho_u}{d\gamma_u}=\frac{d\rho_{\cal{C}_u}\times K_u}{d\gamma_u}\frac{d\rho_u}{d\rho_{\cal{C}_u}\times K_u}=\frac{d\rho_{\cal{C}_u}\times K_u}{d\gamma_{u_-}\times K_u}\omega_u(K_u)\nonumber\\
&=\frac{d\rho_{\cal{C}_u}}{d\gamma_{u_-}}\omega_u(K_u)\nonumber\\
\label{eq:identity4}\Rightarrow w_{u_-}K_uw_u^2&=\frac{d\gamma_{u_-}}{d\gamma_{\cal{C}_u}}\left(\frac{d\rho_{\cal{C}_u}}{d\gamma_{u_-}}\right)^2K_u\omega_u(K_u)^2=\frac{d\rho_{\cal{C}_u}}{d\gamma_{\cal{C}_u}}\frac{d\rho_{\cal{C}_u}}{d\gamma_{u_-}}K_u\omega_u(K_u)^2\\
&=\frac{w_{\cal{C}_u}^2}{w_{u_-}}K_u\omega_u(K_u)^2,\quad\text{where }w_{\cal{C}_u}:=\frac{d\rho_{\cal{C}_u}}{d\gamma_{\cal{C}_u}}.\nonumber
\end{align}

Let's start in earnest: by the law of total variance,
\begin{align}\label{eq:dsadasda}
\textrm{Var}(Z_u^N)=\textrm{Var}(\Ebb{Z_u^N|\cal{F}_{\cal{C}_u}^N})+\Ebb{\textrm{Var}(Z_u^N|\cal{F}_{\cal{C}_u}^N)}.
\end{align}
Using~\eqref{eq:identity2}, we find that
\begin{align*}\Ebb{Z_u^N|\cal{F}_{\cal{C}_u}^N}&=\frac{\cal{Z}_u^N}{N}\sum_{n=1}^N\Ebb{w_u(\bm{X}_u^{n,N})|\cal{F}_{\cal{C}_u}^N}=\cal{Z}_u^N\pi_{u_-}^N(K_uw_u)=\gamma_{u_-}^N(K_uw_u)\\
&=\frac{\cal{Z}_{\cal{C}_u}^N}{N^{c_u}}\sum_{\bm{n}\in [N]^{c_u}}w_{u_-}(\bm{X}_{\cal{C}_u}^{\bm{n},N})(K_uw_u)(\bm{X}_{\cal{C}_u}^{\bm{n},N})
=\frac{\cal{Z}_{\cal{C}_u}^N}{N^{c_u}}\sum_{\bm{n}\in [N]^{c_u}}\frac{d\rho_u^{u_-}}{d\gamma_{\cal{C}_u}}(\bm{X}_{\cal{C}_u}^{\bm{n},N}).
\end{align*}
It follows that the first term in~\eqref{eq:dsadasda} does not depend on $(\gamma_{u_-},K_u)$ and we absorb it into $A$ in~\eqref{eq:dsadsadfsafafwa}. Similarly,
\begin{align*}
\text{Var}(Z_u^N|\cal{F}_{\cal{C}_u}^N)&=\text{Var}(\gamma_u^N(w_u)|\cal{F}_{\cal{C}_u}^N)=[N^{-1}\cal{Z}_u^N]^2\sum_{n=1}^N\text{Var}(w_u(\bm{X}_u^{n,N})|\cal{F}_{\cal{C}_u}^N)\\
&=N^{-1}[\cal{Z}_u^N]^2[\pi_{u_-}^N(K_uw_u^2)-\pi_{u_-}^N(K_uw_u)^2]\\
&=N^{-1}\cal{Z}_u^N\gamma_{u_-}^N(K_uw_u^2)-N^{-1}\gamma_{u_-}^N(K_uw_u)^2\quad\text{a.s.},
\end{align*}
and, using again~\eqref{eq:identity2}, we obtain
\begin{align*}
\gamma_{u_-}^N(K_uw_u)=\frac{\cal{Z}_{\cal{C}_u}^N}{N^{c_u}}\sum_{\bm{n}\in [N]^{c_u}}w_{u_-}(\bm{X}_{\cal{C}_u}^{\bm{n},N})(K_uw_u)(\bm{X}_{\cal{C}_u}^{\bm{n},N})=\frac{\cal{Z}_{\cal{C}_u}^N}{N^{c_u}}\sum_{\bm{n}\in [N]^{c_u}}\frac{d\rho_u^{u_-}}{d\gamma_{\cal{C}_u}}(\bm{X}_{\cal{C}_u}^{\bm{n},N}),
\end{align*}
another term that does not depend on $(\gamma_{u_-},K_u)$ and that we absorb into $A$ in~\eqref{eq:dsadsadfsafafwa}. Next,
\begin{align*}
\frac{\cal{Z}_u^N}{N}\gamma_{u_-}^N(K_uw_u^2)=\frac{[\cal{Z}_{\cal{C}_u}^N]^2}{N^{2c_u+1}}\left(\sum_{\bm{m}\in [N]^{c_u}}w_{u_-}(\bm{X}_{\cal{C}_u}^{\bm{m},N})\right)\sum_{\bm{n}\in [N]^{c_u}}w_{u_-}(\bm{X}_{\cal{C}_u}^{\bm{n},N})(K_uw_u^2)(\bm{X}_{\cal{C}_u}^{\bm{n},N}),
\end{align*}
and~\eqref{eq:dsadsadfsafafwa} follows with $A$ as above, $B:=N^{-1}[\cal{Z}_{\cal{C}_u}^N]^2$, and 
\begin{align}g(\bm{x}_{\cal{C}_u}^N,\gamma_{u_-},K_u):&=\frac{1}{N^{2c_u}}\left(\sum_{\bm{m}\in [N]^{c_u}}w_{u_-}(\bm{x}_{\cal{C}_u}^{\bm{m},N})\right)\sum_{\bm{n}\in [N]^{c_u}}w_{u_-}(\bm{x}_{\cal{C}_u}^{\bm{n}})(K_uw_u^2)(\bm{x}_{\cal{C}_u}^{\bm{n}})\nonumber\\
&=\frac{1}{N^{2c_u}}\sum_{\bm{n}\in [N]^{c_u}}\frac{w_{\cal{C}_u}(\bm{x}_{\cal{C}_u}^{\bm{n}})^2}{\bar{w}_{u_-}(\bm{x}_{\cal{C}_u}^{\bm{n}})}(K_u\omega_u(K_u)^2)(\bm{x}_{\cal{C}_u}^{\bm{n},N})\label{eq:gdef},\end{align}
where $\bar{w}_{u_-}(\bm{x}_{\cal{C}_u}^{\bm{n}}):=\frac{w_{u_-}^n(\bm{x}_{\cal{C}_u}^{\bm{n},N})}{\sum_{\bm{m}\in [N]^{c_u}}w_{u_-}(\bm{x}_{\cal{C}_u}^{\bm{m}})}$ and the second equality follows from~\eqref{eq:identity4}. 

Because Jensen's inequality implies that
$$K_u\left[\frac{dM_u}{dK_u}\right]^2\geq \left[K_u\frac{dM_u}{dK_u}\right]^2=1=M_u\left[\frac{dM_u}{dM_u}\right]^2,$$
\eqref{eq:ndwan7dwu8hda8} implies that $K_u\omega_u(K_u)^2$ is bounded below by $M_u\omega_u(M_u)^2$, and \eqref{eq:dnwa9ndwu9nfasdauwaw2} follows from \eqref{eq:gdef}. Lastly, note that the only terms in~\eqref{eq:gdef} that depend on $\gamma_{u_-}$ are $(\bar{w}_{u_-}(\bm{x}_{\cal{C}_u}^{\bm{n}}))_{\bm{n}\in[N]^{c_u}}$. Treating these as free variables and using a Lagrange multiplier like in~\cite[p.153]{Asmussen2007} to minimize~\eqref{eq:gdef} subject to the constraint `$\sum_{\bm{n}\in[N]^{c_u}}\bar{w}_{u_-}(\bm{x}_{\cal{C}_u}^{\bm{n}})=1$', we obtain~\eqref{eq:dnwa9ndwu9nfasdauwaw2}. 
\section{Proof of Theorem~\ref{THRM:SMCVSDAC}}\label{app:dacvssmcproofs}
As shown in the proof of Theorem~\ref{THRM:CLT} in Appendix~\ref{app:clt}, for all $\varphi$ in $\cal{B}_b(\bm{E}_{\mathfrak{r}})$,
$$\sigma^2_{\rho_{\mathfrak{r}}}(\varphi)=\sigma^2_{\gamma_{\mathfrak{r}}}(w_{\mathfrak{r}}\varphi),\quad \sigma^2_{\mu_{\mathfrak{r}}}(\varphi)=\sigma^2_{\gamma_{\mathfrak{r}}}(w_{\mathfrak{r}}Z_{\mathfrak{r}}^{-1}[\varphi-\mu_{\mathfrak{r}}(\varphi)]),$$
where $\sigma^2_{\gamma_{\mathfrak{r}}}(\varphi)$ denotes the asymptotic variance of $\gamma_{\mathfrak{r}}^N(\varphi)$ (cf.\ Theorem~\ref{THRM:CLTun}). Because, as explained in the main text, Algorithm~\ref{alg:adacsmc} reduces to Algorithm~\ref{alg:asmc} if we replace $\mathbb{T}$ with $\mathbb{L}$, the same equations hold if we replace $(\sigma^2_{\mu_{\mathfrak{r}}},\sigma^2_{\rho_{\mathfrak{r}}},\sigma^2_{\gamma_{\mathfrak{r}}})$ with $(\sigma^2_{\mu_{\mathfrak{r}},smc},\sigma^2_{\rho_{\mathfrak{r}},smc},\sigma^2_{\gamma_{\mathfrak{r}},smc})$ where, for any $\varphi$ in $\cal{B}_b(\bm{E}_{\mathfrak{r}})$, $\sigma^2_{\gamma_{\mathfrak{r}},smc}(\varphi)$ denotes  the asymptotic variance of $\gamma_{\mathfrak{r}}^N(\varphi)$ using $\mathbb{L}$ instead of $\mathbb{T}$.  For these reasons, we only need to show that $\sigma^2_{\gamma_{\mathfrak{r}}}(\varphi)\leq \sigma^2_{\gamma_{\mathfrak{r}},smc}(\varphi)$ for all $\varphi$ in $\cal{B}_b(\bm{E}_{\mathfrak{r}})$.

Because we obtain $\mathbb{L}$ by repeatedly lumping the children of $\mathbb{T}$'s nodes, it suffices to show that this lumping does not lower the asymptotic variance. Proposition~\ref{prop:sameLambda} below shows that lumping of the children of a node $u$ only affects the terms in the asymptotic variance sums corresponding to those children, and it follows that
\begin{equation}\label{eq:mgfeu9agn7u9aenguaa}
\sigma_{\gamma_{\mathfrak{r}},L}^2(\varphi)-\sigma_{\gamma_{\mathfrak{r}}}^2(\varphi)=\pi_{\cal{C}_u}([\cal{Z}_{\cal{C}_u}w_{u_-}K_u\Gamma_{u,\mathfrak{r}}\varphi-\gamma_{\mathfrak{r}}(\varphi)]^2)-\sum_{v\in\cal{C}_u}\pi_v([\cal{Z}_{v}\Gamma_{v,\mathfrak{r}}\varphi-\gamma_{\mathfrak{r}}(\varphi)]^2),\end{equation}
where $\sigma_{\gamma_{\mathfrak{r}},L}^2(\varphi)$ denotes the asymptotic variance of the estimator for $\gamma_{\mathfrak{r}}(\varphi)$  obtained after lumping $u$'s children together. But,
\begin{align*}
\gamma_{\mathfrak{r}}(\varphi)&=\gamma_u(\Gamma_{u,\mathfrak{r}}\varphi)=\gamma_{u_-}(K_u\Gamma_{u,\mathfrak{r}}\varphi)=\gamma_{\cal{C}_u}(w_{u_-}K_u\Gamma_{u,\mathfrak{r}}\varphi)=\cal{Z}_{\cal{C}_u}\pi_{\cal{C}_u}(w_{u_-}K_u\Gamma_{u,\mathfrak{r}}\varphi),\\
\cal{Z}_{v}\Gamma_{v,\mathfrak{r}}\varphi&=\cal{Z}_{v}\Gamma_{v,u}\Gamma_{u,\mathfrak{r}}\varphi=\cal{Z}_{v}\gamma_{\cal{C}_u}^{\not v}(w_{u_-}K_u\Gamma_{u,\mathfrak{r}}\varphi)=\cal{Z}_{\cal{C}_u}\pi_{\cal{C}_u}^{\not v}(w_{u_-}K_u\Gamma_{u,\mathfrak{r}}\varphi)\quad\forall v\in\cal{C}_u;
\end{align*}
and it follows that
\begin{align*}
\cal{Z}_{\cal{C}_u}^{-2}[\sigma_{\gamma_{\mathfrak{r}},L}^2(\varphi)-\sigma_{\gamma_{\mathfrak{r}}}^2(\varphi)]=\pi_{\cal{C}_u}([\phi-\pi_{\cal{C}_u}(\phi)]^2)-\sum_{v\in\cal{C}_u}\pi_v(\pi_{\cal{C}_u}^{\not v}(\phi-\pi_v(\phi))^2),
\end{align*}
where $\phi:=w_{u_-}K_u\Gamma_{u,\mathfrak{r}}\varphi$. However,
\begin{align*}
\pi_{u1}(\pi_{\cal{C}_u}^{\not {u1}}(\phi-\pi_{u1} (\phi))^2)&=\pi_{u1}(\pi_{u_2}(\psi-\pi_{u1} (\psi))^2),\\
\pi_{u2}(\pi_{\cal{C}_u}^{\not {u2}}(\phi-\pi_{u2} (\phi))^2)&=\pi_{u2}(\pi_{u_1}(\psi-\pi_{u2} (\psi))^2),
\end{align*}
where $\psi:=\pi_{u[3:c_u]}(\phi)$ and $v[3:c_u]:=\{u3,\dots,uc_u\}$.  Jensen's inequality implies that 
$$\pi_{u2}(\pi_{u_1}(\psi-\pi_{u2} (\psi))^2)  \leq \pi_{u2}(\pi_{u_1}([\psi-\pi_{u2} (\psi)]^2)) ,$$
and it follows that 
\begin{align}
&\pi_{u1}(\pi_{\cal{C}_u}^{\not {u1}}(\phi-\pi_{u1} (\phi))^2)+\pi_{u2}(\pi_{\cal{C}_u}^{\not {u2}}(\phi-\pi_{u2} (\phi))^2)\nonumber\\
&\qquad\qquad\qquad\qquad\qquad\leq \pi_{u1}(\pi_{u_2}(\psi-\pi_{u1} (\psi))^2)+\pi_{u2}(\pi_{u_1}([\psi-\pi_{u2} (\psi)]^2))\nonumber\\
&\qquad\qquad\qquad\qquad\qquad=\pi_{u1}([\pi_{u_2}(\psi)-\pi_{u[2]} (\psi)]^2)+\pi_{u_1}(\pi_{u2}([\psi-\pi_{u2} (\psi)]^2))\nonumber\\
&\qquad\qquad\qquad\qquad\qquad=\pi_{u1}(\pi_{u_2}(\psi)^2)-\pi_{u[2]} (\psi)^2+\pi_{u_1}(\pi_{u2}(\psi^2)-\pi_{u2} (\psi)^2)\nonumber\\
&\qquad\qquad\qquad\qquad\qquad=\pi_{u[2]} (\psi^2)-\pi_{u[2]} (\psi)^2=\pi_{u[2]}([\psi-\pi_{u[2]}(\psi)]^2).\nonumber
\end{align}
Setting now $\psi:=\pi_{u[4:c_u]}(\phi)$, we have that
\begin{align*}\sum_{k=1}^2\pi_{uk}(\pi_{\cal{C}_u}^{\not {uk}}(\phi-\pi_{uk} (\phi))^2)&\leq \pi_{u[2]}(\pi_{u3}(\psi-\pi_{u[2]}(\psi))^2),\\
 \pi_{u3}(\pi_{\cal{C}_u}^{\not {u3}}(\phi-\pi_{u3} (\phi))^2)&=\pi_{u3}(\pi_{u[2]}(\psi-\pi_{u3}(\psi))^2).\end{align*}
Hence, applying Jensen's inequality again, we obtain
$$\sum_{k=1}^3\pi_{uk}(\pi_{\cal{C}_u}^{\not {uk}}(\phi-\pi_{uk} (\phi))^2)\leq \pi_{u[3]}([\psi-\pi_{u[3]}(\psi)]^2).$$
Iterating this argument gives us the desired bound:
\begin{align*}
\sum_{v\in\cal{C}_u}\pi_v(\pi_{\cal{C}_u}^{\not v}(\phi-\pi_v (\phi))^2)&=\sum_{k=1}^{c_u}\pi_{uk}(\pi_{\cal{C}_u}^{\not {uk}}(\phi-\pi_{uk} (\phi))^2)\leq \pi_{u[K]}(\pi_{\emptyset}(\phi-\pi_{u[K]}(\phi))^2)\\
&=\pi_{\cal{C}_u}([\phi-\pi_{\cal{C}_u}(\phi)]^2).
\end{align*}
We have one loose end to tie off: proving~\eqref{eq:mgfeu9agn7u9aenguaa}.
\begin{proposition}\label{prop:sameLambda}If Assumptions~\ref{ass:abscont}--\ref{ass:boundweights} are satisfied, then~\eqref{eq:mgfeu9agn7u9aenguaa} holds for all $\varphi$ in $\cal{B}_b(\bm{E}_{\mathfrak{r}})$.
\end{proposition}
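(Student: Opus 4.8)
The plan is to build the tree $\mathbb{T}'$ obtained by lumping $u$'s children, pin down which quantities entering the variance formula of Theorem~\ref{THRM:CLTun} are unaffected by this operation, and then read off~\eqref{eq:mgfeu9agn7u9aenguaa} by comparing the two variance sums summand by summand.

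\emph{Set-up.} In $\mathbb{T}'$ the nodes $\cal{C}_u=\{u1,\dots,uc_u\}$ of $\mathbb{T}$ are replaced by a single node $\tilde u$ with $E_{\tilde u}:=E_{\cal{C}_u}=\prod_{v\in\cal{C}_u}E_v$ and children $\cal{C}_{\tilde u}:=\bigcup_{v\in\cal{C}_u}\cal{C}_v$, so that $\bm{E}_{\tilde u}=\bm{E}_{\cal{C}_u}$ and $\mathbb{T}'_{\tilde u}=\{\tilde u\}\cup\bigcup_{v\in\cal{C}_u}(\mathbb{T}_v\setminus\{v\})$; all other nodes, with their spaces, auxiliary measures, weights, and kernels, are left untouched. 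Equip $\tilde u$ with $\gamma_{\tilde u_-}:=\prod_{v\in\cal{C}_u}\gamma_{v_-}$ and $K_{\tilde u}(\bm{x}_{\cal{C}_{\tilde u}},dx_{\tilde u}):=\prod_{v\in\cal{C}_u}K_v(\bm{x}_{\cal{C}_v},dx_v)$, so $w_{\tilde u_-}=\prod_{v\in\cal{C}_u}w_{v_-}$ and Assumption~\ref{ass:abscont} still holds; and keep $\gamma_{u_-},K_u,w_{u_-}$ at $u$, now viewed on $\bm{E}_{\tilde u}=\bm{E}_{\cal{C}_u}$. By~\eqref{eq:otimes}, $\gamma_{\tilde u}=\gamma_{\tilde u_-}\otimes K_{\tilde u}=\prod_{v\in\cal{C}_u}(\gamma_{v_-}\otimes K_v)=\prod_{v\in\cal{C}_u}\gamma_v=\gamma_{\cal{C}_u}$, while $\cal{Z}_{\tilde u}=\gamma_{\tilde u_-}(\bm{E}_{\cal{C}_{\tilde u}})=\prod_{v\in\cal{C}_u}\cal{Z}_v=\cal{Z}_{\cal{C}_u}$ and hence $\pi_{\tilde u}=\prod_{v\in\cal{C}_u}\pi_v=\pi_{\cal{C}_u}$.

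\emph{Invariance away from $\tilde u$.} Since the measure fed into node $u$ in $\mathbb{T}'$ is $\gamma_{\cal{C}_u^{\mathbb{T}'}}=\gamma_{\tilde u}=\gamma_{\cal{C}_u}$, exactly as in $\mathbb{T}$, and $w_{u_-},K_u$ are unchanged, \eqref{eq:gammauN}--\eqref{eq:piumn} and~\eqref{eq:actualtargetdistributions} give that $\gamma_{u_-},\pi_{u_-},\cal{Z}_u,\gamma_u$ are unchanged, and by induction so are $\gamma_w,\pi_w,\cal{Z}_w$ for every ancestor $w$ of $u$; in particular $\gamma_{\mathfrak{r}}^{\mathbb{T}'}=\gamma_{\mathfrak{r}}$. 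For nodes on other branches and for proper descendants of the $v\in\cal{C}_u$ these quantities are unchanged because the relevant subtrees coincide. Turning to the kernels $\Gamma_{w,\mathfrak{r}}$: when $w$ equals $u$, is an ancestor of $u$, or lies on another branch, the branch joining $w$ to $\mathfrak{r}$ avoids $\cal{C}_u$ and the recursion~\eqref{eq:Pivu1}--\eqref{eq:Pivu2} involves only auxiliary measures, weights, and kernels just shown to be unchanged, so $\Gamma_{w,\mathfrak{r}}^{\mathbb{T}'}=\Gamma_{w,\mathfrak{r}}$. When $w$ is a proper descendant of some $v\in\cal{C}_u$ it suffices, the portion of the path above $u$ being identical, to check $\Gamma_{w,u}^{\mathbb{T}'}=\Gamma_{w,u}$: expanding $\Gamma_{w,u}^{\mathbb{T}'}=\Gamma_{w,\tilde u}^{\mathbb{T}'}\Gamma_{\tilde u,u}^{\mathbb{T}'}$ via~\eqref{eq:Pivu1} and the product forms of $\gamma_{\cal{C}_{\tilde u}},w_{\tilde u_-},K_{\tilde u}$, the factors indexed by $v$ reassemble into $\Gamma_{w,v}^{\mathbb{T}}$, while the factors indexed by $v'\in\cal{C}_u\setminus\{v\}$, after using $w_{v'_-}\gamma_{\cal{C}_{v'}}=\gamma_{v'_-}$ and~\eqref{eq:otimes}, reassemble into $\gamma_{\cal{C}_u}^{\not v}=\prod_{v'\neq v}\gamma_{v'}$; since $\cal{C}_u^{\mathbb{T}'}=\{\tilde u\}$, the kernel $\Gamma_{\tilde u,u}^{\mathbb{T}'}$ reduces to $\delta_{\bm{x}_{\cal{C}_u}}(d\bm{y}_{\cal{C}_u})\,w_{u_-}(\bm{y}_{\cal{C}_u})\,K_u(\bm{y}_{\cal{C}_u},dy_u)$, and composing recovers precisely $\Gamma_{w,v}^{\mathbb{T}}\Gamma_{v,u}^{\mathbb{T}}=\Gamma_{w,u}$. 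Hence every summand of $\sigma_{\gamma_{\mathfrak{r}},L}^2(\varphi)$ indexed by a node $w\neq\tilde u$ equals the summand of $\sigma_{\gamma_{\mathfrak{r}}}^2(\varphi)$ indexed by $w$.

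\emph{Conclusion.} The same expansion gives $\cal{Z}_{\tilde u}\Gamma_{\tilde u,\mathfrak{r}}^{\mathbb{T}'}\varphi=\cal{Z}_{\tilde u}\,\Gamma_{\tilde u,u}^{\mathbb{T}'}(\Gamma_{u,\mathfrak{r}}\varphi)=\cal{Z}_{\cal{C}_u}w_{u_-}K_u\Gamma_{u,\mathfrak{r}}\varphi$, so, using $\pi_{\tilde u}=\pi_{\cal{C}_u}$, the $\tilde u$-summand of $\sigma_{\gamma_{\mathfrak{r}},L}^2(\varphi)$ equals $\pi_{\cal{C}_u}([\cal{Z}_{\cal{C}_u}w_{u_-}K_u\Gamma_{u,\mathfrak{r}}\varphi-\gamma_{\mathfrak{r}}(\varphi)]^2)$. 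Subtracting $\sigma_{\gamma_{\mathfrak{r}}}^2(\varphi)=\sum_{w\in\mathbb{T}}\pi_w([\cal{Z}_w\Gamma_{w,\mathfrak{r}}\varphi-\gamma_{\mathfrak{r}}(\varphi)]^2)$ from $\sigma_{\gamma_{\mathfrak{r}},L}^2(\varphi)=\sum_{w\in\mathbb{T}'}\pi_w^{\mathbb{T}'}([\cal{Z}_w^{\mathbb{T}'}\Gamma_{w,\mathfrak{r}}^{\mathbb{T}'}\varphi-\gamma_{\mathfrak{r}}(\varphi)]^2)$, the summands over $w\in\mathbb{T}\setminus\cal{C}_u$ cancel and~\eqref{eq:mgfeu9agn7u9aenguaa} follows. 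The one genuinely fiddly point is the kernel computation in the second step: one must track carefully which coordinate each factor in~\eqref{eq:Pivu1} acts on to see that the $\delta$-kernels and the product structures of $\gamma_{\tilde u_-}$ and $K_{\tilde u}$ recombine to reproduce $\Gamma_{w,u}$ exactly.
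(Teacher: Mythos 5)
Your proposal is correct and follows essentially the same route as the paper's proof: identify the lumped node's auxiliary quantities as $\gamma_{\cal{C}_u}$, $\cal{Z}_{\cal{C}_u}$, $\pi_{\cal{C}_u}$, show the propagation kernels $\Gamma_{w,\mathfrak{r}}$ are unchanged for all $w$ outside $\cal{C}_u$ (with the key computation being that, for descendants of $u$'s children, the product structure of the lumped $\gamma_{\tilde u_-}$ and $K_{\tilde u}$ recombines to give $\Gamma_{w,v}\Gamma_{v,u}$), and that the lumped node's kernel composed with $\Gamma_{u,\mathfrak{r}}$ is $w_{u_-}K_u\Gamma_{u,\mathfrak{r}}$, so the two variance sums differ only in the terms appearing in~\eqref{eq:mgfeu9agn7u9aenguaa}. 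The explicit verification that the unnormalized flows and normalizing constants at $u$ and its ancestors are unaffected is stated more briefly in the paper but is the same observation.
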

\begin{proof}Fix any $\varphi$ in $\cal{B}_b(\bm{E}_{\mathfrak{r}})$. By its definition in Theorem~\ref{THRM:CLTun},
\begin{align}\sigma_{\gamma_{\mathfrak{r}}}^2(\varphi)=&\sum_{v\in\cal{C}_u}\pi_v([\cal{Z}_{v}\Gamma_{v,\mathfrak{r}}\varphi-\gamma_{\mathfrak{r}}(\varphi)]^2)+\sum_{v\in\mathbb{T}\backslash\cal{C}_u}\pi_v([\cal{Z}_{v}\Gamma_{v,\mathfrak{r}}\varphi-\gamma_{\mathfrak{r}}(\varphi)]^2)
\label{eq:dnw78abfyw8abfya8w1}\end{align}
where we have singled out the terms corresponding to $u$'s children. Because the lumping only affects the terms indexed by $u$'s children, we have that
\begin{align}\sigma_{\gamma_{\mathfrak{r}},L}^2(\varphi)=&\pi_{\cal{C}_u}([\cal{Z}_{\cal{C}_u}\Gamma_{\cal{C}_u,\mathfrak{r}}^L\varphi-\gamma_{\mathfrak{r}}(\varphi)]^2)+\sum_{v\in\mathbb{T}\backslash\cal{C}_u}\pi_v([\cal{Z}_{v}\Gamma_{v,\mathfrak{r}}^L\varphi-\gamma_{\mathfrak{r}}(\varphi)]^2),\label{eq:dnw78abfyw8abfya8w2}\end{align}
where $\Gamma_{v,\mathfrak{r}}^L$ denotes the kernel defined analogously to $\Gamma_{v,\mathfrak{r}}$ in~(\ref{eq:Pivu}--\ref{eq:Pivu2}) except that $u$'s children  have been lumped together into a single node which, with an abuse of notation, we denote $\cal{C}_u$ (with $\gamma_{{\cal{C}_u}_-}$ and $K_{\cal{C}_u}$ correspondingly set to $\prod_{v\in\cal{C}_u}\gamma_{v_-}$ and $\prod_{v\in\cal{C}_u}K_v$, cf.\ \eqref{eq:Kcu} for the latter). Given~(\ref{eq:dnw78abfyw8abfya8w1},\ref{eq:dnw78abfyw8abfya8w2}), we   need only show that
\begin{align}\label{eq:proofobj1}\Gamma_{r,\mathfrak{r}}^L\varphi&=\Gamma_{r,\mathfrak{r}}\varphi\quad\forall r\not\in  \cal{C}_u,\quad \Gamma_{\cal{C}_u,\mathfrak{r}}^L\varphi=w_{u_-}K_u\Gamma_{u,\mathfrak{r}}\varphi.
\end{align}
However, the lumping leaves $(\pi_{v_-})_{v\not\in\cal{C}_u}$ and $(K_{v})_{v\not\in\cal{C}_u}$ unchanged, and so,~\eqref{eq:Pivu1} implies that
$$\Gamma_{s,r}^L=\Gamma_{s,r}\quad\forall s\in\cal{C}_{r},\enskip r\in\mathbb{T}\backslash(\{u\}\cup\cal{C}_u).$$
Hence, if $a$ is an ancestor of $u$ ($a\in\mathbb{T}$ s.t.\ $u\in\mathbb{T}_a$),   $r$ a grandchild of $u$ ($r\in\cup_{v\in\cal{C}_u}\cal{C}_v$), and $s$ a descendant of $r$ ($s\in\mathbb{T}_r$), it follows from~\eqref{eq:Pivu2} that
\begin{equation}\Gamma_{u,r}^L=\Gamma_{u,r},\quad\Gamma_{a,\mathfrak{r}}^L=\Gamma_{a,\mathfrak{r}}.\label{eq:fndua8fbnya8bgfnaey8wegnau9}\end{equation}
By definition $\Gamma_{\cal{C}_u,u}^L=w_{u_-}K_u$ and the rightmost equation in~\eqref{eq:proofobj1} follows from that in~\eqref{eq:fndua8fbnya8bgfnaey8wegnau9}:
$$\Gamma_{\cal{C}_u,\mathfrak{r}}^L\varphi=\Gamma_{\cal{C}_u,u}^L\Gamma_{u,\mathfrak{r}}^L\varphi=w_{u_-}K_u\Gamma_{u,\mathfrak{r}}\varphi.$$
Moreover, for any grandchild $r$ of $u$ with parent $v$ ($v\in\cal{C}_u$ and $r\in\cal{C}_v$),
\begin{align*}\Gamma_{r,\cal{C}_u}^L\varphi&=\gamma^{\not r}_{\cup_{v'\in\cal{C}_u}\cal{C}_{v'}}(w_{{\cal{C}_u}_-}K_{\cal{C}_u}\varphi)=\gamma^{\not r}_{\cal{C}_v}\left(w_{v_-}\left(\prod_{v'\in\cal{C}_u^{\not v}}\gamma_{\cal{C}_{v'}}\right)\left(\left[\prod_{v'\in\cal{C}_u^{\not v}}w_{v'_-}\right]K_{\cal{C}_u}\varphi\right)\right)\\
&=\gamma^{\not r}_{\cal{C}_v}\left(w_{v_-}K_v\left(\prod_{v'\in\cal{C}_u^{\not v}}\gamma_{v'_-}\times K_{v'}\right)(\varphi)\right)=\gamma^{\not r}_{\cal{C}_v}(w_{v_-}K_v\gamma_{\cal{C}_u^{\not v}}(\varphi)).\end{align*}
Thus,
\begin{align*}\Gamma_{r,u}^L\varphi&=\Gamma_{r,\cal{C}_u}^L\Gamma_{\cal{C}_u,u}^L\varphi=\gamma^{\not r}_{\cal{C}_v}(w_{v_-}K_v\gamma_{\cal{C}_u^{\not v}}(w_{u_-}K_u\varphi))=\gamma^{\not r}_{\cal{C}_v}(w_{v_-}K_v\Gamma_{v,u}\varphi)\\
&=\Gamma_{r,v}\Gamma_{v,u}\varphi=\Gamma_{v,u}\varphi,\end{align*}
and the leftmost equation in~\eqref{eq:proofobj1} also follows from~\eqref{eq:fndua8fbnya8bgfnaey8wegnau9}.
\end{proof}
\end{appendix}

\begin{acks}[Data Access Statement]
No new data were generated or analysed during this study.
\end{acks}
\begin{funding}
JK and AMJ acknowledge support from the EPSRC (grant \# EP/T004134/1) and the Lloyd's Register Foundation Programme on Data-Centric Engineering at the Alan Turing Institute. FRC acknowledges support from the EPSRC and the MRC OXWASP Centre for Doctoral Training (grant \# EP/L016710/1). FRC and AMJ acknowledge further support  from the EPSRC (grant \#  EP/R034710/1).
\end{funding}



\bibliographystyle{imsart-number} 
\bibliography{dac_biblio}       


\end{document}